\documentclass[acmsmall,screen,authorversion]{acmart} 

\acmPrice{}
\acmDOI{10.1145/3371136}
\acmYear{2020}
\copyrightyear{2020}
\acmJournal{PACMPL}
\acmVolume{4}
\acmNumber{POPL}
\acmArticle{68}
\acmArticleSeq{0}
\acmMonth{1}

\setcopyright{rightsretained}

\bibliographystyle{ACM-Reference-Format}
\citestyle{acmauthoryear}

\settopmatter{printccs=false}

\acmBadgeR{}




\usepackage[utf8x]{inputenc}
\usepackage[american]{babel}
\usepackage[T1]{fontenc}
\usepackage{microtype}

\usepackage{booktabs}
\usepackage{subcaption}
\usepackage{paralist}
\usepackage{enumitem}
\usepackage{xcolor}
\usepackage{tabularx}
\usepackage{multirow}
\usepackage{wasysym}
\usepackage{mathpartir}
\usepackage{xspace}
\usepackage{pifont}
\usepackage{relsize}

\usepackage{wrapfig}
\setlength{\intextsep}{5pt}
\newcommand{\setalignspacingtofiguremode}{%
  \abovedisplayskip=0pt%
  \belowdisplayskip=-2mm%
}


\usepackage[disable]{todonotes}

\usepackage{tikz}
\usetikzlibrary{automata,arrows,patterns,shapes,decorations,calc}
\usetikzlibrary{shadows.blur}
\tikzset{
    cell/.style={
        draw,
        rectangle split,
        rectangle split horizontal,
        rectangle split parts=2,
        rectangle split part align=base,
        align=center,
        rounded corners,
        text width=4mm,
        node distance=2cm,
    },
    next/.style={
    	circle,
    	minimum size=1.5mm,
    	inner sep=0pt,
    	fill=black,
    	draw=black,
    	node distance=3mm,
    },
    ptrbox/.style={
    	draw,
    	rectangle,
        text width=1.5mm,
        text height=1.5mm,
        node distance=1cm
    },
    ptr/.style={
    	circle,
    	minimum size=1.5mm,
    	inner sep=0pt,
    	fill=black,
    	draw=black,
    	node distance=0mm,
    },
    mybox/.style={
      rounded corners,
      blur shadow={shadow blur steps=5},
      shade,
      top color=white,
      bottom color=white,
      minimum width=1.9cm,
      minimum height=.85cm,
      anchor=center,
      font=\small
    },
    myboxarrow/.style={
      draw,
      line width=1mm,
      color=gray!80,
    },
    state/.style={
      circle,
      draw,
      inner sep=1pt,
      minimum size=.75cm,
      fill=none,
    },
}
\usetikzlibrary{external}
\usepackage{smartdiagram}

\newcounter{ObserverStateCounter}
\makeatletter
\newcommand{\observerstatelabel}[2]{%
   \protected@write \@auxout {}{\string \newlabel {#1}{{#2}{\thepage}{#2}{#1}{}} }%
   \hypertarget{#1}{}
}
\makeatother
\newcommand{\mkstatename}[1]{\stepcounter{ObserverStateCounter}$\apersitentlocation_{\arabic{ObserverStateCounter}}$\observerstatelabel{#1}{\ensuremath{\apersitentlocation_{\arabic{ObserverStateCounter}}}}}

\usepackage{listings}
\lstdefinestyle{condensed}{
  language=Java,
  basicstyle=\footnotesize\ttfamily,
  keywordstyle=\color{teal}\ttfamily,
  commentstyle=\color{gray}\ttfamily,
  tabsize=2,
  numberbychapter=false,
  morekeywords={NULL,CAS,shared,threadlocal,struct,atomic,delete},
  moredelim=**[is][\bfseries]{@}{@},
  moredelim=**[is][\bfseries\tiny]{§}{§},
  keepspaces=true,
  numbers=left,
  numberstyle=\scriptsize,
  numbersep=6pt,
  firstnumber=last,
  mathescape=true,
  xleftmargin=13pt,
  belowskip=-2mm,
}
\lstdefinestyle{inline}{
  basicstyle=\small\ttfamily,
  keywords={},
  mathescape=true,
}
\lstset{style=condensed}

\lstdefinestyle{typing}{
  delim=**[is][\color{teal}]{`}{`},
  escapebegin=\color{teal},
  keywords={}
}
\lstdefinestyle{typingnolines}{
  delim=**[is][\color{teal}]{`}{`},
  escapebegin=\color{teal},
  numbers=none,
  xleftmargin=0pt,
  keywords={}
}
\lstdefinestyle{condensedsimple}{
  delim=**[is][\color{teal}]{`}{`},
}
\newcommand{\code}[2][]{\lstinline[style=inline,#1]!#2!}
\newcommand{\mcode}[1]{\ensuremath{\text{\lstinline[style=inline]!#1!}}}

\usepackage{tcolorbox}
\tcbset{
  colback=black!3,
  boxrule=.5pt,
  boxsep=5pt,
  top=0pt,
  bottom=0pt,
  right=0pt,
  left=0pt,
  sharp corners,
}

\usepackage{amsmath,stackengine}
\stackMath
\newcommand\xxrightarrow[1]{\raisebox{-.75pt}{\ensuremath{\smash{\mathrel{%
  \setbox2=\hbox{\stackon{\scriptstyle#1}{\scriptstyle#1}}%
  \stackon[-3.5pt]{%
    \xrightarrow{\makebox[\dimexpr\wd2\relax]{}}%
  }{%
   \scriptstyle#1\,%
  }%
}}}}}

\newtheorem{assumption}{Assumption}
\newtheorem{remark}{Remark}

\usepackage{cleveref} 
\crefname{line}{line}{lines}
\Crefname{line}{Line}{Lines}
\crefname{rule}{rule}{rules}
\Crefname{rule}{rule}{Rules}
\crefname{assumption}{assumption}{assumptions}
\Crefname{assumption}{Assumption}{Assumptions}
\crefname{case}{case}{cases}
\crefname{property}{property}{properties}
\crefname{aux}{auxiliary}{auxiliaries}
\creflabelformat{aux}{#2(#1)#3}

\makeatletter
\newcommand{\rulelabel}[2]{%
   \protected@write \@auxout {}{\string \newlabel {#1}{{#2}{\thepage}{#2}{#1}{}} }%
   \hypertarget{#1}{}
}
\makeatother

\makeatletter
\providecommand*{\dashV}{%
  \mathrel{%
    \mathpalette\@dashV\Vdash
  }%
}
\newcommand*{\@dashV}[2]{%
  \reflectbox{$\m@th#1#2$}%
}
\makeatother

\newenvironment{casedistinction}{%
   
   \description[style=nextline,leftmargin=1pt,itemsep=1ex]
}{%
   \enddescription
}

\newcommand{\ad}[1]{\underline{\textit{Ad #1.}}}


\newcounter{auxpropcounter}[section]%
\newcounter{auxpropcounterSave}[section]%
\makeatletter%
\newenvironment{auxiliary}{%
  \setcounter{auxpropcounterSave}{\value{equation}}%
  \setcounter{equation}{\value{auxpropcounter}}%
  \align%
}{%
  \endalign%
  \setcounter{auxpropcounter}{\value{equation}}%
  \setcounter{equation}{\value{auxpropcounterSave}}%
}%
\makeatother



\newcommand{\mymathtt}[1]{\mathchoice
	{\text{\small\ttfamily#1}}
	{\text{\small\ttfamily#1}}
	{\text{\relscale{.9}\ttfamily#1}}
	{\text{\relscale{.9}\ttfamily#1}}
}

\newcommand{\set}[1]{\{#1\}}
\newcommand{\powersetof}[1]{2^{#1}}
\newcommand{\setcond}[2]{\{#1\mid\,#2\}}
\newcommand{\rangeof}[1]{\mathit{range}(#1)}
\newcommand{\domof}[1]{\mathit{dom}(#1)}
\newcommand{\restrict}[2]{#1|_{#2}}
\newcommand{\factorize}[2]{#1/_{#2}}
\newcommand{\project}[2]{{#1}\mkern-1.5mu\downarrow_{#2}}
\newcommand{\cardof}[1]{\,\mid\!#1\!\mid\,}
\newcommand{\cardnodistof}[1]{\mid\!#1\!\mid}
\renewcommand{\emptyset}{\varnothing}

\newcommand{\mkrulelabel}[1]{\rulelabel{rule:#1}{\textsc{(#1)}}}
\newcommand{\infrule}[3]{\inferrule[\textsc{(#1)}\mkrulelabel{#1}]{#2}{#3}}

\newcommand{\rconclusion}[2]{%
	\ifthenelse{#1 = 0}{\AxiomC{}\UnaryInfC{\( #2 \)}}{
	\ifthenelse{#1 = 1}{\UnaryInfC{\( #2 \)}}{%
	\ifthenelse{#1 = 2}{\BinaryInfC{\( #2 \)}}{%
	\ifthenelse{#1 = 3}{\TrinaryInfC{\( #2 \)}}{%
	\ifthenelse{#1 = 4}{\QuaternaryInfC{\( #2 \)}}{%
	\ifthenelse{#1 = 5}{\QuinaryInfC{\( #2 \)}}{%
		\textbf{ERROR}%
	}}}}}}%
}


\newcommand{\dom}{\ensuremath{\mathit{Dom}}}

\newcommand{\adr}{\mathit{Adr}}

\newcommand{\bnf}{\:\mid\:}
\newcommand{\acom}{\mathit{com}}

\newcommand{\aprog}{P}

\newcommand{\vars}{\mathit{Var}}
\newcommand{\svars}{\mathit{shared}}
\newcommand{\lvars}[1]{\mathit{local}_{#1}}
\newcommand{\pvars}{\mathit{PVar}}
\newcommand{\apavar}{\mathit{x}}
\newcommand{\apavarp}{\mathit{y}}
\newcommand{\apvar}{\mathit{p}}
\newcommand{\apvarp}{\mathit{q}}
\newcommand{\dvars}{\mathit{DVar}}
\newcommand{\advar}{\mathit{u}}
\newcommand{\advarp}{\mathit{u}'}
\newcommand{\avar}{\mathit{var}}
\newcommand{\lhs}{\mathit{lhs}}
\newcommand{\rhs}{\mathit{rhs}}
\newcommand{\psels}{\mathit{PSel}}
\newcommand{\dsels}{\mathit{DSel}}
\newcommand{\pexp}{\mathit{PExp}}
\newcommand{\dexp}{\mathit{DExp}}
\newcommand{\apexp}{\mathit{pexp}}
\newcommand{\apexpp}{\mathit{qexp}}

\newcommand{\anadr}{a}
\newcommand{\anadrp}{b}
\newcommand{\anadrpp}{c}
\newcommand{\advalue}{d}
\newcommand{\segval}{\mymathtt{seg}}

\newcommand{\aheap}{\mathit{m}}
\newcommand{\aheapp}{\mathit{m}'}

\newcommand{\athread}{t}
\newcommand{\athreadp}{t'}
\newcommand{\athreadpp}{t''}
\newcommand{\validof}[1]{\mathit{valid}_{#1}}
\newcommand{\heapcomput}[1]{\mem{#1}}
\newcommand{\heapcomputof}[2]{\memof{#1}{#2}}

\newcommand{\mem}[1]{\aheap_{#1}}
\newcommand{\memof}[2]{\mem{#1}({#2})}


\newcommand{\sem}[1]{\llbracket#1\rrbracket}
\newcommand{\asem}[3][\aprog]{\sem{#1}_{#2}^{#3}}
\newcommand{\asemobs}[4][\aprog]{#4\sem{#1}_{#2}^{#3}}
\newcommand{\allsem}[1][\aprog]{\asem[#1]{\adr}{\adr}}
\newcommand{\freesem}[1][\aprog]{\asem[#1]{\adr}{\emptyset}}
\newcommand{\nosem}[1][\aprog]{\asem[#1]{\emptyset}{\emptyset}}

\newcommand{\allsemobs}[1][\aprog]{\asemobs[#1]{\adr}{\adr}{\anobs}}
\newcommand{\freesemobs}[1][\aprog]{\asemobs[#1]{\adr}{\emptyset}{\anobs}}

\newcommand{\adrof}[1]{\mathit{adr}(#1)}
\newcommand{\vadrof}[1]{\adrof{\restrict{\heapcomput{#1}}{\validof{#1}}}}
\newcommand{\step}[1][]{\dashrightarrow_{#1}} 

\newcommand{\freeableof}[3][\anobs]{\mathcal{F}_{#1}(#2,\mkern+2mu#3)}

\newcommand{\retire}{\mymathtt{retire}}
\newcommand{\retireof}[1]{\retire(#1)}
\newcommand{\guard}{\mymathtt{protect}}

\newcommand{\leaveQ}{\mymathtt{leaveQ}}
\newcommand{\enterQ}{\mymathtt{enterQ}}

\newcommand{\malloc}{\mymathtt{malloc}}

\newcommand{\assert}{\mymathtt{assert}}
\newcommand{\assertof}[1]{\assert\ #1}
\newcommand{\invariant}{\texttt{@}\mymathtt{inv}}
\newcommand{\invariantof}[1]{\invariant\ #1}
\newcommand{\activeof}[1]{\mymathtt{active}(#1)}

\newcommand{\assume}{\mymathtt{assume}}
\newcommand{\assumeof}[1]{\assume\ #1}

\newcommand{\psel}[1]{#1.\mymathtt{next}}
\newcommand{\dsel}[1]{#1.\mymathtt{data}}

\newcommand{\op}{\mymathtt{op}}
\newcommand{\opof}[1]{\op(#1)}
\newcommand{\pred}{\mymathtt{pred}}
\newcommand{\predof}[1]{\pred(#1)}
\newcommand{\acond}{\mathit{cond}}

\newcommand{\enter}{\mymathtt{enter}}
\newcommand{\enterof}[1]{\enter\:#1}
\newcommand{\exit}{\mymathtt{exit}}
\newcommand{\exitof}[1]{\exit\:#1}
\newcommand{\cskip}{\mymathtt{skip}}

\newcommand{\choice}{\oplus}

\newcommand{\vecof}[1]{\bar{#1}}
\newcommand{\trans}[1]{\xxrightarrow{#1}}
\newcommand{\translab}[2]{\ensuremath{#1,\,#2}}
\newcommand{\translabbr}[2]{\ensuremath{#1},\\\ensuremath{#2}}
\newcommand{\evt}[2]{#1(#2)}
\newcommand{\afunc}{\mathit{func}}
\newcommand{\afuncof}[1]{\afunc(#1)}

\newcommand{\aval}[1][]{\mathit{v_{#1}}}
\newcommand{\history}{\mathcal{H}}

\newcommand{\historyof}[2][\!]{\history_{#1}({#2})}
\newcommand{\specof}[1]{\mathcal{S}({#1}\mkern+.5mu)}
\newcommand{\anevent}{\mathit{evt}}

\newcommand{\aguard}{\mathit{g}}
\newcommand{\apersitentlocation}{\mathit{L}} 
\newcommand{\initlocation}{\mathit{l}_{\mathit{init}}}
\newcommand{\alocation}{\mathit{l}}

\newcommand{\alocationp}{\alocation'}

\newcommand{\astate}{\mathit{s}}
\newcommand{\astatep}{\astate'}
\newcommand{\anobs}[1][]{\mathcal{O}_{#1}}
\newcommand{\baseobs}{\anobs[\mkern-1mu\mathit{Base}]}
\newcommand{\ebrobs}{\anobs[\mkern-1mu\mathit{EBR}]}
\newcommand{\hpobs}{\anobs[\mkern-1mu\mathit{HP}]}
\newcommand{\implobs}{\anobs[\mkern-1mu\mathit{Impl}]}
\newcommand{\smrobs}[1][\mkern-1mu\mathit{SMR}]{\anobs[#1]}
\newcommand{\ahist}{h}
\newcommand{\ahistp}{h'}

\newcommand{\controlof}[2][]{\mathit{ctrl}_{#1}(#2)}

\newcommand{\absevt}[2]{#1(#2)}
\newcommand{\hconcat}{\mkern+1mu.\mkern+2.5mu}

\newcommand{\adsraw}{\mathit{D}}
\newcommand{\ads}[1][]{\adsraw}
\newcommand{\retiredof}[1]{\mathit{retired}({#1})}


\newcommand{\anact}{\mathit{act}}

\newcommand{\anup}{\mathit{up}}

\newcommand{\anexp}{\mathit{e}}
\newcommand{\anexpp}{\mathit{e'}}
\newcommand{\comof}[1]{\acom({#1})}

\newcommand{\invalidof}[1]{\mathit{invalid}_{#1}}


\newcommand{\computequiv}{\sim}

\newcommand{\obsrel}[1][\!]{\lessdot_{#1}}

\newcommand{\free}{\mymathtt{free}}
\newcommand{\freeof}[1]{\free(#1)}
\newcommand{\freedof}[1]{\mathit{freed}({#1})}
\newcommand{\freshof}[1]{\mathit{fresh}({#1})}

\newcommand{\renamingof}[3]{#1[#2 / #3]}

\newcommand{\computrel}{\prec}

\newcommand{\astmt}{\mathit{stmt}}
\newcommand{\astmtof}[1]{\astmt(#1)}
\newcommand{\astmtp}{\mathit{stmt}'}

\newcommand{\threadof}[1]{\mathit{thrd}({#1})}

\newcommand{\return}[1][]{\mymathtt{return}}

\newcommand{\apc}{\mathit{pc}}
\newcommand{\pcinit}{\mathit{pc}_\mathit{init}}
\newcommand{\apcp}{\mathit{pc}'}

\newcommand{\freesof}[1]{\mathit{frees}(#1)}

\newcommand{\typerule}{\vdash}
\newcommand{\typecom}[4][]{\{\mkern+2mu#2\mkern+2mu\}\:#3\:\{\mkern+2mu#4\mkern+2mu\}}
\newcommand{\typestmt}[4][]{\typecom[#1]{#2}{#3}{#4}}
\newcommand{\TYPECOM}[4][]{\,\typerule_{#1}\mkern-1mu\typecom[#1]{#2}{#3}{#4}} 
\newcommand{\TYPESTMT}[4][]{\,\typerule_{#1}\mkern-1mu\typestmt[#1]{#2}{#3}{#4}} 
\newcommand{\typejudge}[4][]{\,\typerule_{#1}\mkern-1mu\typestmt[#1]{#2}{#3}{#4}}

\newcommand{\typechecks}[2][]{\,\vdash_{#1}\!#2}

\newcommand{\atype}{T}
\newcommand{\atypep}{T'}
\newcommand{\atypepp}{T''}
\newcommand{\typeof}[2]{#1\,{:}\,#2}
\newcommand{\typeofany}[1]{#1}

\newcommand{\glocal}{\mathbb{L}}
\newcommand{\gcustom}[1]{\mathbb{E}_{#1}}
\newcommand{\gactive}{\mathbb{A}}
\newcommand{\env}{\Gamma}
\newcommand{\envp}{\Gamma'}
\newcommand{\envpp}{\Gamma''}
\newcommand{\envof}[1]{\Gamma(#1)}
\newcommand{\envpof}[1]{\Gamma'(#1)}
\newcommand{\envinit}{\Gamma_\mathit{\!init}}
\newcommand{\envinitt}[1]{\renameof{\Gamma_{\mathit{\!init}}}{#1}}

\newcommand{\safecallof}[2]{\mathit{safeEnter}(#1,#2)}

\newcommand{\checkoftype}[5][]{#2,#3,#4\leadsto_{#1}#5}
\newcommand{\checkof}[4][]{#2,#3\leadsto_{#1}#4}
\newcommand{\checknocomof}[4][]{#2\leadsto_{#1}#4}
\newcommand{\rmtransientof}[1]{\mathit{rm}(#1)}
\newcommand{\isvalidof}[1]{\mathit{isValid}({#1})}
\newcommand{\invholdsof}[2][]{\mathit{inv}_{#1}(#2)}
\newcommand{\flatof}[2]{\mathit{flat}_{#1}(#2)}
\newcommand{\noaliasof}[2]{\mathit{noalias}_{#1}(#2)}
\newcommand{\renameof}[2]{#1^{[#2]}}
\newcommand{\renamevarof}[2]{#1_{#2}}
\newcommand{\atomicbegin}{\mymathtt{beginAtomic}}
\newcommand{\atomicend}{\mymathtt{endAtomic}}
\newcommand{\locksetof}[1]{\mathit{lock}(#1)}

\newcommand{\lreach}{\mathit{reach}}
\newcommand{\lreachof}[4][\anobs]{\lreach_{#1,#2,#3}(#4)}
\newcommand{\lpost}{\mathit{post}}
\newcommand{\lpostof}[3]{\lpost_{#1,#2}(#3)}
\newcommand{\alocset}{L}
\newcommand{\alocsetp}{L'}

\newcommand{\ghost}{\texttt{@}\mymathtt{inv}}
\newcommand{\ghostof}[1]{\ghost\ #1}
\newcommand{\chooseof}[1]{\mymathtt{angel}\ #1}
\newcommand{\containsof}[2]{#1\ \mymathtt{in}\ #2}
\newcommand{\aghostvar}{\mathit{r}}
\newcommand{\aghostvarp}{\mathit{r}'}

\newcommand{\denotation}[1]{\mathit{repr}_{\!#1}}
\newcommand{\denotationof}[2]{\mathit{repr}_{\!#1}(#2)}
\newcommand{\activeofcomp}[1]{\mathit{active}({#1})}
\newcommand{\gvars}{\mathit{AVar}}
\newcommand{\gsafeaccess}{\mathbb{S}}

\newcommand{\havocof}[1]{\mymathtt{havoc}(#1)}
\newcommand{\btrue}{\mymathtt{true}}
\newcommand{\bfalse}{\mymathtt{false}}
\newcommand{\instrumentation}{\mathit{inst}}
\newcommand{\instrumentationof}[1]{\instrumentation(#1)}
\newcommand{\issafeof}[1]{\mathit{safe}(#1)}
\newcommand{\update}[1]{#1}

\newcommand{\applyupdate}[2]{#1[#2]}

\newcommand{\alltypes}{\mathit{Types}}
\newcommand{\actypes}{\mathit{AntiChainTypes}}
\newcommand{\allvars}{\mathit{Vars}}
\newcommand{\allenvs}{\mathit{Envs}}
\newcommand{\fail}{\top}
\newcommand{\allenvsbot}{\allenvs_{\fail}}
\newcommand{\premiseof}[1]{\mathit{pre}_{#1}}
\newcommand{\updateof}[1]{\mathit{up}_{#1}}

\newcommand{\cvars}{\mathit{CVars}}
\newcommand{\IReq}{\mathit{Req}}
\newcommand{\IDep}{\mathit{Dep}}

\newcommand{\eqsys}{\mathit{\Phi}}
\newcommand{\eqsysof}[1]{\eqsys(#1)}
\newcommand{\spof}[1]{\mathit{sp}(#1)}

\newcommand{\lsol}{\mathit{lsol}}
\newcommand{\lsolof}[1]{\lsol(#1)}

\newcommand{\locsof}[1]{\mathit{Loc}({#1})}

\newcommand{\threadvar}{z_{t}}
\newcommand{\adrvar}{z_a}

\newcommand{\illush}{\mathit{shared}}
\newcommand{\gilluinv}{\gcustom{\mathit{inv}}}
\newcommand{\gilluisu}{\gcustom{\mathit{isu}}}
\newcommand{\gilluebr}{\gcustom{\mathit{acc}}}
\newcommand{\ccpp}{\code{C/C++}\xspace}
\newcommand{\cpp}{\code{C++}\xspace}
\newcommand{\cppeleven}{\code{C/C++11}\xspace}
\newcommand{\segfault}{\code{segfault}\xspace}
\newcommand{\cavetool}{\textsc{cave}\xspace}

\newcommand{\presection}{}
\newcommand{\thetool}{\textsc{seal}\xspace}

\newcommand{\rawsymbolYes}{{\color[RGB]{0,155,85}\smash{\ding{51}}}}
\newcommand{\rawsymbolNo}{{\color[RGB]{200,20,40}\smash{\ding{55}}}}
\newcommand{\rawsymbolTO}{{\smash{\scalebox{.13}{%
  \begin{tikzpicture}[line cap=rect,line width=3pt]
    \definecolor{clockColor}{RGB} {247,146,29}
    \draw [draw=clockColor] (0,0) circle [radius=1cm];
    \foreach \angle [count=\xi] in {0,180,270,60,30,...,-240}
    {
      \draw[clockColor,line width=2pt] (\angle:.8cm) -- (\angle:1cm);
    }
    \draw[fill=clockColor,draw=none] (0,0) -- (90:1cm) arc (90:-40:1cm) -- cycle;
  \end{tikzpicture}%
}}}}
\newcommand{\symbolYes}{\rawsymbolYes\xspace}
\newcommand{\symbolNo}{\rawsymbolNo\xspace}
\newcommand{\symbolTO}{\raisebox{-.75pt}{\rawsymbolTO}\xspace}

\begin{document}

	\title{Pointer Life Cycle Types for Lock-Free Data Structures with Memory Reclamation}
	\subtitle{Extended Version}

	\author{Roland Meyer}
	\orcid{0000-0001-8495-671X}
	\affiliation{%
		\institution{TU Braunschweig}
		\country{Germany}
	}
	\email{roland.meyer@tu-bs.de}
	\author{Sebastian Wolff}
	\orcid{0000-0002-3974-7713}
	\affiliation{%
		\institution{TU Braunschweig}
		\country{Germany}
	}
	\email{sebastian.wolff@tu-bs.de}


\begin{abstract}
	We consider the verification of lock-free data structures that manually manage their memory with the help of a safe memory reclamation (SMR) algorithm. 
	Our first contribution is a type system that checks whether a program properly manages its memory.
	If the type check succeeds, 
	it is safe to ignore the SMR algorithm and consider the program under garbage collection.
	Intuitively, our types track the protection of pointers as guaranteed by the SMR algorithm.
	There are two design decisions.
	The type system does not track any shape information, which makes it extremely lightweight.
	Instead, we rely on invariant annotations that postulate a protection by the SMR.
	To this end, we introduce angels, ghost variables with an angelic semantics.
	Moreover, the SMR algorithm is not hard-coded but a parameter of the type system definition.
	To achieve this, we rely on a recent specification language for SMR algorithms.
	Our second contribution is to automate the type inference and the invariant check.
	For the type inference, we show a quadratic-time algorithm.
	For the invariant check, we give a source-to-source translation that links our programs to off-the-shelf verification tools.
	It compiles away the angelic semantics.
	This allows us to infer appropriate annotations automatically in a guess-and-check manner.
	To demonstrate the effectiveness of our type-based verification approach, we check linearizability for various list and set implementations from the literature with both hazard pointers and epoch-based memory reclamation.
	For many of the examples, this is the first time they are verified automatically.
	For the ones where there is a competitor, we obtain a speed-up of up to two orders of magnitude.

\end{abstract}

	\maketitle


\section{Introduction}
\label{sec:introduction}

In the last decade we have experienced an upsurge in massive parallelization being available even in commodity hardware.
To keep up with this trend, popular programming languages include in their standard libraries features to make parallelization available to everyone.
At the heart of this effort are concurrent (thread-safe) data structures.
Consequently, efficient implementations are in high demand.
In practice, lock-free data structures are particularly efficient.

Unfortunately, lock-free data structures are also particularly hard to get correct.
The reason is the absence of traditional synchronization using locks and mutexes in favor of low-level synchronization using hardware instructions.
This calls for formal verification of such implementations.
In this context, the de-facto standard correctness property is linearizability \cite{DBLP:journals/toplas/HerlihyW90}.
It requires, intuitively, that each operation of a data structure implementation appears to execute atomically somewhen between its invocation and return.
For users of lock-free data structures, linearizability is appealing.
It provides the illusion of atomicity---they can use the data structure as if they were using it in a sequential setting.

Proving lock-free data structures linearizable has received a lot of attention (cf. \Cref{sec:related_work}).
\citet{DBLP:conf/forte/DohertyGLM04}, for instance, give a mechanized proof of a practical lock-free queue.
Such proofs require plenty of manual work and take a considerable amount of time.
Moreover, they require an understanding of the proof method and the data structure under consideration. 
To overcome this drawback, we are interested in automated verification.
The \cavetool tool by \citet{DBLP:conf/cav/Vafeiadis10,DBLP:conf/vmcai/Vafeiadis10}, for example, is able to establish linearizability for singly-linked data structures fully~automatically.

The problem with automated verification for lock-free data structures is its limited applicability.
Most techniques are restricted to implementations that assume a garbage collector (GC).
This assumption, however, does not apply to all programming languages. 
Take \ccpp as an example.
It does not provide an automatic garbage collector that is running in the background.
Instead, it is the programmer's obligation to avoid memory leaks by reclaiming memory that is no longer in use (using \code{delete}). 
In lock-free data structures, this task is much harder than it may seem at first glance.
The root of the problem is that threads typically traverse the data structure without synchronization.
Hence, there may be threads holding pointers to records that have already been removed from the structure.
If records are reclaimed immediately after the removal, those threads are in danger of accessing deleted memory.
Such accesses are considered unsafe (undefined behavior in \ccpp~\cite{iso-cppeleven}) and are a common cause for system crashes due to a \segfault.
The solution to this problem are so-called \emph{safe memory reclamation (SMR)} algorithms.
Their task is to provide lock-free means for deferring the reclamation/deletion until all unsynchronized threads have finished their accesses.
Typically, this is done by replacing explicit deletions with calls to a function \code{retire} provided by the SMR algorithm which defers the deletion.
Coming up with efficient and practical SMR implementations is difficult and an active field of research (cf. \Cref{sec:related_work}).

The use of SMR algorithms to manage manually the memory of lock-free data structures hinders verification, both manual and automated.
This is due to the high complexity of such algorithms.
As hinted before, an SMR implementation needs to be lock-free in order not to spoil the lock-free guarantee of the data structure using it.
In fact, SMR algorithms are quite similar to lock-free data structures implementation-wise.
This added complexity could not be handled by automatic verifiers up until recently.
\citet{DBLP:journals/pacmpl/MeyerW19} were the first to present a practical approach.
Their key insight is that the data structure can be verified as if it was relying on a garbage collector rather than an SMR algorithm, provided the data structure does not perform unsafe memory operations.
Since data structures from the literature are usually memory safe, the above insight is a powerful tool for verification.
Nevertheless, it leaves us with a hard task: establishing that all memory operations are safe in the presence of memory reclamation.
\citet{DBLP:journals/pacmpl/MeyerW19} were not able to conduct this check under GC.
Instead, they explore the entire state space of the data structure with SMR, restricting reallocations to a single address, to prove ABAs harmless (a criterion they require for soundness).
Unfortunately, their state space exploration does not scale well.

In the present paper we tackle the challenge of proving a lock-free data structure memory safe.
We present a type system to address this task.
That is, we present a syntax-centric approach to establish the semantic property of memory safety.
In particular, we no longer need expensive state space explorations that can handle SMR and memory reuse in order to prove memory safety.
This allows us to utilize the full potential of the above result:
if our type check succeeds, we remove the SMR code from the data structure and verify the resulting implementation using an off-the-shelf GC verifier.
The idea behind our type system is a life cycle common to lock-free data structures with manual memory management via SMR \cite{DBLP:conf/podc/Brown15}.
The life cycle, depicted in \Cref{fig:memory-life-cycle}, has four stages:
\begin{inparaenum}[(i)]
	\item local,
	\item active,
	\item retired, and
	\item not allocated.
\end{inparaenum}
Newly allocated records are in the local stage.
The record is known only to the allocating thread; it has exclusive read/write access.
The goal of the local stage is to prepare records for being published, i.e., added to the shared state of the data structure.
When a record is published, it enters the active stage.
In this stage, accesses to the record are safe because it is guaranteed to be allocated.
However, no thread has exclusive access and thus must fear interference by others.
It is worth pointing out that a publication is irreversible.
Once a record becomes active it cannot become local again.
A thread, even if it removes the active record from the shared structures, must account for other threads that have already acquired a pointer to that record.
To avoid memory leaks, removed records eventually become retired.
In this stage, threads may still be able to access the record safely.
Whether or not they can do so depends on the SMR algorithm used.
Finally, the SMR algorithm detects that the retired record is no longer in use and reclaims it.
Then, the memory can be reused and the life cycle begins anew.

\begin{wrapfigure}{r}{5.6cm}
	\vspace{.5mm}
	\begin{tcolorbox}
		\center
		\begin{tikzpicture}
			\newcommand{\VD}{1.55}
			\newcommand{\HD}{1.55}
			\node[mybox, bottom color=red!40, top color=red!5] (unalloc) at (0,\VD) {Not Allocated};
			\node[mybox, bottom color=green!40, top color=green!5] (active) at (0,-\VD) {Active};
			\node[mybox, bottom color=green!40, top color=green!5] (local) at (\HD,0) {Local};
			\node[mybox, bottom color=cyan!40, top color=cyan!5] (retired) at (-\HD,0) {Retired};
			\path[myboxarrow,out=0,in=90] (unalloc.east) -- ([xshift=4mm]unalloc.east) edge ([yshift=4mm]local.50);
			\path[myboxarrow,-stealth] ([yshift=4mm]local.50) -> (local.50);
			\path[myboxarrow,out=270,in=0] (local.310) -- ([yshift=-4mm]local.310) edge ([xshift=4mm]active.east);
			\path[myboxarrow,-stealth] ([xshift=4mm]active.east) -> (active.east);
			\path[myboxarrow,out=180,in=270] (active.west) -- ([xshift=-4mm]active.west) edge ([yshift=-4mm]retired.230);
			\path[myboxarrow,-stealth] ([yshift=-4mm]retired.230) -> (retired.230);
			\path[myboxarrow,out=90,in=180] (retired.130) -- ([yshift=4mm]retired.130) edge ([xshift=-4mm]unalloc.west);
			\path[myboxarrow,-stealth] ([xshift=-4mm]unalloc.west) -> (unalloc.west);
		\end{tikzpicture}
		\caption{%
			Memory life cycle of records in lock-free data structures using SMR.
		}
		\label{fig:memory-life-cycle}
	\end{tcolorbox}
\end{wrapfigure}
The main challenge our type system has to address wrt. the above memory life cycle is the transition from the active to the retired stage.
Due to the lack of synchronization, this can happen without a thread noticing.
Programmers are aware of the problem.
They protect records while they are active such that the SMR guarantees safe access even though the record is retired.
To cope with this, our types integrate knowledge about the SMR algorithm.
A core aspect of our development is that the actual SMR algorithm is an input to our type system---it is not tailored towards a specific SMR algorithm.

An additional challenge arises from the type system performing a thread-local analysis, it considers the program code as if it was sequential.
This means the type system is not aware of the actual interference among threads, unlike state space explorations.
To address this, we use types that are stable under the actions of interfering threads \cite{DBLP:journals/acta/OwickiG76}.

In practice, protecting a record while it is active is non-trivial.
Between acquiring a pointer to the record and the subsequent SMR protection call, an interferer may retire the record, in which case the protection has no effect.
SMR algorithms usually offer no means to check whether a protection was successful.
Instead, programmers exploit intricate data structure invariants to perform this check.
A common such invariant, for instance, is \emph{all shared reachable records are active}.
A type system typically cannot detect such data structure shape invariants.
We turn this weakness into a strength.
We deliberately do not track shape invariants nor alias information.
Instead, we use simple annotations to mark pointers that point to active records.
To relieve the programmer from arguing about their correctness, we show how to discharge annotations automatically.
Interestingly, this can be done with off-the-shelf GC verifiers.
It is worth pointing out that the ability to automatically discharge invariants allows for an automated guess-and-check approach for placing invariant~annotations.

To increase the applicability of our type system, we use the theory of movers \cite{Lipton75} as an enabling technique.
Movers are a standard approach to transform a program into a \emph{more atomic} version while retaining its behavior.
That the resulting program is more atomic is beneficial for verification.
The transformations are practical:
\citet{DBLP:conf/popl/ElmasQT09}, for example, automate them.

To demonstrate the usefulness of our approach, we implemented a linearizability checker which realizes the techniques presented in this paper.
That is, our tool
\begin{inparaenum}[(i)]
	\item performs a type inference to establish memory safety relying on invariant annotations,
	\item discharges the annotations under GC using \cavetool as a back-end, and
	\item verifies linearizability under GC using \cavetool.
\end{inparaenum}
Additionally, we implemented a prototype for automatically inserting annotations and applying movers.
These program transformations are performed on demand, guided by a failed type inference.
Our tool is able to establish linearizability for lock-free data structures from the literature, like Michael\&Scott's lock-free queue \cite{DBLP:conf/podc/MichaelS96}, the Vechev\&Yahav CAS set \cite{DBLP:conf/pldi/VechevY08}, the Vechev\&Yahav DCAS set \cite{DBLP:conf/pldi/VechevY08}, and the ORVYY set \cite{DBLP:conf/podc/OHearnRVYY10}, for the well-known hazard pointer method \cite{DBLP:conf/podc/Michael02} as well as epoch-base reclamation \cite{DBLP:phd/ethos/Fraser04}.
We stress that our approach is not limited to \cavetool as a back-end but can use any verifier for garbage collection.
To the best of our knowledge, we are the first to automatically verify lock-free set implementations that use SMR.

We summarize our contributions and the outline of the paper:
\begin{compactitem}
	\item[§\ref{sec:type_system}]
		presents our type system for proving lock-free data structures memory safe wrt. a user-specified SMR algorithm,
	\item[§\ref{sec:type_checking}]
		presents an efficient type inference algorithm,
	\item[§\ref{sec:invariant-check}]
		presents an instrumentation of the data structure under scrutiny to discharge invariant annotations fully automatically with the help of a GC verifier, and
	\item[§\ref{sec:evaluation}]
		evaluates our approach on well-known lock-free data structures from the literature.
\end{compactitem}
We illustrate our contribution in §\ref{sec:illustration}, introduce the programming model in §\ref{sec:preliminaries}, discuss preliminary results in §\ref{sec:prf}, give a comprehensive example in §\ref{sec:example}, and discuss related work in §\ref{sec:related_work}.


\section{The Contribution on an Example}
\label{sec:illustration}


\begin{figure}
	\begin{tcolorbox}
	\center%
\begin{lstlisting}[style=condensed,belowskip=-4mm,aboveskip=0pt]
struct Node { data_t data; Node* next; };
shared Node* Head, Tail;
\end{lstlisting}%
	\begin{minipage}[t]{.47\textwidth}%
\begin{lstlisting}[style=condensed]

atomic init() {
	Head = Tail = new Node();
	Head->next = NULL;
}

void enqueue(data_t input) {
§E§	@leaveQ();@
	Node* node = new Node();
	node->data = input;
	node->next = NULL;
	while (true) {
		Node* tail = Tail;
§H§		@protect(tail, 0);@ $\label[line]{code:michaelscott:protect-tail}$
§H§		@if (tail != Tail) continue;@
		Node* next = tail->next;
		if (tail != Tail) continue;
		if (next != NULL) {
			CAS(&Tail, tail, next);
			continue;
		}
		if (CAS(&tail->next, next, node)) {
			CAS(&Tail, tail, node);
			break;
		}
	}
§E§	@enterQ();@
}
\end{lstlisting}%
	\end{minipage}%
	\hfill%
	\begin{minipage}[t]{.465\textwidth}%
\begin{lstlisting}[style=condensed]
data_t dequeue() { $\label[line]{code:michaelscott:dequeue}$
§E§	@leaveQ();@
	while (true) {
		Node* head = Head; $\label[line]{code:michaelscott:read-head}$
§H§		@protect(head, 0);@ $\label[line]{code:michaelscott:protect-head}$
§H§		@if (head != Head) continue;@ $\label[line]{code:michaelscott:check-head}$
		Node* tail = Tail; $\label[line]{code:michaelscott:read-tail}$
		Node* next = head->next; $\label[line]{code:michaelscott:read-next}$
§H§		@protect(next, 1);@ $\label[line]{code:michaelscott:protect-next}$
		if (head != Head) continue; $\label[line]{code:michaelscott:check-head-again}$
		if (next == NULL) {
§E§			@enterQ();@
			return EMPTY;
		}
		if (head == tail) {
			CAS(&Tail, tail, next);
			continue;
		} else {
			data_t output = next->data;
			if (CAS(&Head, head, next)) { $\label[line]{code:michaelscott:cas}$
				// delete head; $\label[line]{code:michaelscott:free}$
§HE§			@retire(head);@ $\label[line]{code:michaelscott:retire}$
§E§				@enterQ();@
				return output; $\label[line]{code:michaelscott:dequeue-return}$
			}
		}
	}
}
\end{lstlisting}%
	\end{minipage}%
	\caption{%
		Michael\&Scott's lock-free queue \cite{DBLP:conf/podc/MichaelS96} with two different safe memory reclamation techniques: epoch-based reclamation (EBR) \cite{DBLP:phd/ethos/Fraser04} and hazard pointers (HP) \cite{DBLP:conf/podc/Michael02}.
		The modifications needed to use EBR are marked with \code{E} and the modifications needed to use HP are marked with \code{H}.
		For HP, we assume two hazard pointers per thread.
	}
	\label{fig:michaelscott}
	\end{tcolorbox}
\end{figure}

We illustrate our approach on Micheal\&Scott's lock-free queue \cite{DBLP:conf/podc/MichaelS96}, \Cref{fig:michaelscott}, which is used, for example, as Java's \href{https://docs.oracle.com/javase/9/docs/api/java/util/concurrent/ConcurrentLinkedQueue.html}{\code{ConcurrentLinkedQueue}} and as C++ Boost's \href{https://www.boost.org/doc/libs/1_70_0/boost/lockfree/queue.hpp}{\code{lockfree:$\!$:queue}}.
The queue is organized as a \code{NULL}-terminated singly-linked list of nodes.
The \code{enqueue} operation appends new nodes to the end of the list.
To do so, an enqueuer first moves \code{Tail} to the last node as it may lack behind.
Then, the new node is appended by pointing \code{Tail->next} to it.
Last, the enqueuer tries to move \code{Tail} to the end of the list.
This can fail as another thread may already have moved \code{Tail} to avoid waiting for the enqueuer.
The \code{dequeue} operation removes the first node from the list.
Since the first node is a dummy node, \code{dequeue} reads out the data value of the second node in the list and then moves the \code{Head} to that node.
Additionally, \code{dequeue} maintains the property that \code{Head} does not overtake \code{Tail}.
This is done by moving \code{Tail} towards the end of the list if necessary.
(There is an optimized version due to \citet{DBLP:conf/forte/DohertyGLM04} which avoids this step.)
Note that updates to the shared list of nodes are performed exclusively with single-word atomic compare-and-swap (CAS).

So far, the discussed implementation assumes a garbage collector.
The nodes allocated by \code{enqueue} are not reclaimed explicitly after being removed from the shared list by \code{dequeue}: the queue leaks memory.
Unfortunately, there is no simple solution to this problem.
Uncommenting the explicit deletion from \Cref{code:michaelscott:free} avoids the leak.
However, it leads to use-after-free bugs.
Due to the lack of synchronization, threads may still hold and dereference pointers to the now deleted node.
A dereference of such a dangling pointer, however, is unsafe.
In \ccpp, for example, dereferencing a dangling pointer has \emph{undefined behavior} \cite{iso-cppeleven} and may make the system crash with~a~\segfault.

To solve the problem, programmers employ safe memory reclamation (SMR) algorithms.
Two well-known examples are epoch-based reclamation (EBR) \cite{DBLP:phd/ethos/Fraser04} and hazard pointers (HP) \citet{DBLP:conf/podc/Michael02}.
They offer a function \code{retire} that replaces the ordinary \code{delete}.
The difference is that \code{retire} does not immediately delete nodes.
Instead, it defers the deletion until it is safe.
In order to discover whether a deletion is safe, threads need to declare which nodes they access.
How this is done depends on the~SMR~algorithm.

Epoch-based reclamation offers two additional functions \code{leaveQ} and \code{enterQ}.
Threads use the former to announce that they are going to access the data structure and use the latter to announce that they have finished the access.
The function names, in particular the \code{Q}, refer to the fact that the threads are \emph{quiescent} \cite{McKenney1998ReadcopyUU} between \code{enterQ} and \code{leaveQ}, meaning they do not modify the data structure.
During the non-quiescent period, EBR guarantees that the shared reachable nodes are not reclaimed, even if they are removed from the data structure and retired.
To use EBR, the programmer simply replaces \code{delete} statements with calls to \code{retire} and adds calls to \code{leaveQ} (\code{enterQ}) at the beginning (end) of data structure operations.
Consider \Cref{fig:michaelscott} for an example; the lines marked by \code{E} are the modifications required to use EBR.
While easy to use, EBR implementations usually stop reclaiming memory altogether upon thread failure.
Hazard pointers do not suffer from this problem.

The hazard pointer method requires threads to declare which nodes they access in a per-node fashion.
To that end, HP offers an additional function: \code{protect}.
It signals that a deletion of the received node should be deferred.
To be precise, HP guarantees that the deletion of a node is deferred if it has been continuously protected since before it was retired \cite{DBLP:conf/esop/GotsmanRY13}.
While this method is conceptually simple, it is non-trivial to apply.

To use hazard pointers with Michael\&Scott's queue requires to add the code marked by \code{H} in \Cref{fig:michaelscott}.
As for EBR, \code{delete} statements are replaced with \code{retire}.
Moreover, pointers that are accessed need protection to defer their deletion.
Simply calling \code{protect} is usually insufficient as the \code{protect} may be too late.
A common pattern for protecting pointers is to first protect them and then check that they have not been retired since.
In Michael\&Scott's queue this is done by testing whether the protected nodes are still shared reachable---the queue maintains the invariant that nodes reachable from the shared pointers are never retired.
To make this precise, consider \Cref{code:michaelscott:read-head,code:michaelscott:protect-head,code:michaelscott:check-head}.
\Cref{code:michaelscott:read-head} reads in \code{head} from the shared pointer \code{Head}.
The dequeue operation will access (dereference) \code{head}.
Hence, it has to make sure that the referenced node remains allocated.
To do so, a protection of \code{head} is issued in \Cref{code:michaelscott:protect-head}.
However, the node pointed to by \code{head} may have been dequeue and retired since \code{head} was read.
To ensure that the protection is successful, that is, not too late, \Cref{code:michaelscott:check-head} restarts the dequeue operation in case \code{head} no longer coincides with \code{Head}.
The remaining protections in the code follow the same principle.

Our contribution is a method for verifying lock-free data structures which use an SMR algorithm, like Michael\&Scott's queue with EBR/HP from \Cref{fig:michaelscott}.
At the heart of our method lies a type system which proves safe all pointer operations in the data structure.
In the case of hazard pointers, for instance, this requires to prove all pointer accesses appropriately protected.
Once this property is established, we show that the actual verification does not need to consider the SMR algorithm: it suffices to verify the data structure under garbage collection; the SMR function invocations can be removed altogether.
This allows the use of off-the-shelf GC verifiers.

\presection
\subsection{A Type System to Simplify Verification}
\label{sec:illustration:types}

Our main contribution is a type system a successful type check of which proves a given program free from unsafe memory operations. 
The type assigned to a pointer specifies if it is safe to access that pointer. 
The types are influenced by both the memory life cycle from \Cref{sec:introduction} and the SMR algorithm used.
In the case of hazard pointers, a pointer may be protected and thus guaranteed not to be deleted. 
Hence, the protected pointer can be accessed without precautions.
For an unprotected pointer, on the other hand, threads may need to take additional steps to guarantee that the pointer is not dangling, for instance, by establishing that it (to be precise, its address) is in the active stage.

%
\begin{wrapfigure}{r}{5.9cm}
\vspace{1mm}
\begin{tcolorbox}
\newcommand{\mklineref}[1]{\textcolor{black}{\text{\textnormal{\smaller(\ref*{#1})}}}\!}
\begin{lstlisting}[style=typingnolines,aboveskip=0pt]
   `{ $\illush$:$\gactive$ }`
$\mklineref{code:michaelscott:read-head}$   Node* head = Head;
   `{ $\illush$:$\gactive$,$\!\!$ head:$\emptyset$ }`
$\mklineref{code:michaelscott:protect-head}$   protect(head, 0);
   `{ $\illush$:$\gactive$,$\!\!$ head:$\gilluisu$ }`
$\mklineref{code:michaelscott:check-head}$   assume(head == Head);
   `{ $\illush$:$\gactive$,$\!\!$ head:$\gilluisu\wedge\gactive$ }`
   `{ $\illush$:$\gactive$,$\!\!$ head:$\gsafeaccess$ }`
$\mklineref{code:michaelscott:read-next}$   Node* next = head->next;
   `{ $\illush$:$\gactive$,$\!\!$ head:$\gsafeaccess$,$\!\!$ next:$\emptyset$ }`
$\mklineref{code:michaelscott:protect-next}$   protect(next, 1);
   `{ $\illush$:$\gactive$,$\!\!$ head:$\gsafeaccess$,$\!\!$ next:$\gilluisu$ }`
$\mklineref{code:michaelscott:check-head-again}$   assume(head == Head);
   `{ $\illush$:$\gactive$,$\!\!$ head:$\gsafeaccess$,$\!\!$ next:$\gsafeaccess$ }`
\end{lstlisting}%
   \caption{%
      Idealized typing for the non-retrying branch of \Cref{code:michaelscott:read-head,code:michaelscott:protect-head,code:michaelscott:check-head,code:michaelscott:read-tail,code:michaelscott:read-next,code:michaelscott:protect-next,code:michaelscott:check-head-again}.
   }
   \label{fig:typing-illustration-simple}
\end{tcolorbox}
\end{wrapfigure}
We illustrate our type system on the \code{dequeue} operation of Michael\&Scott's queue.
The interesting part is the typing of the local pointers \code{head} and \code{next} in \Cref{code:michaelscott:read-head,code:michaelscott:protect-head,code:michaelscott:check-head,code:michaelscott:read-tail,code:michaelscott:read-next,code:michaelscott:protect-next,code:michaelscott:check-head-again}.
The type derivation is depicted in \Cref{fig:typing-illustration-simple}.
Let us assume for the moment that the shared pointers and the nodes reachable through them are in the active stage.
We denote this by $\typeof{\illush}{\gactive}$.
It is the only type binding at the beginning of the operation.
The first assignment, \Cref{code:michaelscott:read-head}, adds a type binding for \code{head} to the type environment.
The type for \code{head} is copied from the source pointer, \code{Head}.
However, we remove $\gactive$ immediately after the assignment so that the actual type of \code{head} is $\emptyset$.
The reason for this are interfering threads: as discussed above, an interferer can dequeue and retire the node pointed to by \code{head}.
As a consequence, we cannot guarantee that \code{head} is active; we indeed need to remove $\gactive$.
Next, \Cref{code:michaelscott:protect-head} protects \code{head}.
We set the type of \code{head} to $\gilluisu$, encoding that a protection has been issued.
Remembering that \code{head} is protected is crucial for the subsequent conditional.
\Cref{code:michaelscott:check-head} tests whether \code{Head} has changed since it was read into \code{head}.
If it has not, denoted by \code{assume(head == Head)} in \Cref{fig:typing-illustration-simple}, we join the type of \code{head} with the type of \code{Head}.
That is, \code{head} receives $\gactive$.
Now, we know that the protection has been issued before the node pointed to by \code{head} has been retired.
So the hazard pointer method guarantees that the node is not deleted.
The subsequent code can access \code{head} without precautions.
We incorporate this fact into the type of \code{head} by updating it to $\gsafeaccess$, indicating that accesses are safe.
(We skip the assignment to \code{tail} from \Cref{code:michaelscott:read-tail}, it does not affect the type check.)
Next, \Cref{code:michaelscott:read-next} dereferences \code{head}.
This dereference is safe since \code{head} has type $\gsafeaccess$, it is guaranteed to be allocated.
The type assigned to \code{next} is $\emptyset$ because we do not assign types to pointers within nodes, like~\code{head->next}.
Hence, \code{next} cannot obtain any guarantees from the assignment.
\Cref{code:michaelscott:protect-next} then protects \code{next}.
Similarly to the above, we set its type to $\gilluisu$.
The following conditional, \Cref{code:michaelscott:check-head-again}, tests again if \code{Head} has changed since the beginning of the operation.
Consider the case it has not.
If we remember that \code{next} is the successor of \code{head}, we know that \code{next} references a shared reachable node.
Hence, we can assign type $\gactive$ to \code{next}.
As in the case for \code{head}, this allows us to lift the type to $\gsafeaccess$.
That is, using \code{next} in the following code becomes safe due to the conditional guaranteeing its activeness.
The remainder of the type check is then straightforward since only protected and/or shared pointers are used.

We stress that the actual SMR algorithm is a parameter to our type system---it is not limited to analyzing programs using hazard pointers.

\subsection{Data Structure Invariants in the Type System}
\label{sec:illustration:annotations}

The type check as illustrated in \Cref{sec:illustration:types} is idealized.
We assumed that we maintain type $\gactive$ for shared pointers and the nodes reachable through them.
Moreover, we assumed that \code{next} remains the successor of \code{head} during an execution of \code{dequeue}.
Such invariants of the data structure are notoriously hard to derive. 
Typically, it requires a state-space exploration of all thread interleavings to find invariants of lock-free data structures.
A major challenge in exploring the state space is the need for an effective (symbolic) way of tracking the data structure shape \cite{DBLP:conf/tacas/AbdullaHHJR13,DBLP:conf/csl/OHearnRY01,DBLP:conf/lics/Reynolds02separationlogic,DBLP:conf/concur/OHearn04,DBLP:conf/concur/Brookes04}.

We tackle the above problem as follows: we do not track the data structure shape at all, not even pointer aliases.
Instead, we require the programmer to annotate which pointers/nodes are active.
This allows the type check to rely on data structure invariants which typically cannot be found by a type system.
To free the programmer from manually proving the correctness of such annotations, we automate the correctness check.
We give an instrumentation of the program under scrutiny such that an ordinary GC verifier can discharge the invariants.
A thing to note is that the simple nature of active annotations and the ability to automatically discharge them makes it possible to find appropriate annotations fully automatically (guided by a failed type check).

Revisiting the previous example, the type environments never contain $\typeof{\illush}{\gactive}$.
To arrive at type $\gsafeaccess$ for \code{head} in \Cref{code:michaelscott:check-head} nevertheless, we annotate the \code{assume(head == Head)} statement with an invariant stating that \code{head} is active.
Then, the type derivation for \Cref{code:michaelscott:check-head} remains the same as before.
We argue that, provided the queue implementation is memory safe, there must be a code location between the protection in \Cref{code:michaelscott:protect-head} and the subsequent dereference in \Cref{code:michaelscott:read-next} where an active annotation can be placed.
To see this, assume there is no such code location.
This means \code{head} is not active in \Cref{code:michaelscott:protect-head,code:michaelscott:check-head,code:michaelscott:read-tail,code:michaelscott:read-next}.
That is, it must have been retired before the protection in \Cref{code:michaelscott:protect-head}, rendering the protect unsuccessful.
Hence, the dereference in \Cref{code:michaelscott:read-next} is unsafe, contradicting our assumption of memory safety.
For pointer \code{next}, we proceed similarly and add an active annotation to the second assumption (\Cref{code:michaelscott:check-head-again}).

With the above annotations our type system can rely on aspects of the dynamic behavior without requiring the programmer to manually take over parts of the verification.
We believe that having annotations makes the type system more versatile (compared to having none) in the sense that data structures need not satisfy implicit invariants like \emph{all shared pointers and nodes are active}. 
Moreover, relying on annotations rather than shape invariants allows for a much simpler type system.

\subsection{Supporting Different SMR Algorithms}
\label{sec:illustration:parametric}

The above illustration focuses on hazard pointers.
The actual type system we develop in \Cref{sec:type_system} does not---it is not tailored towards a specific SMR algorithm.
To achieve this degree of freedom, our type system takes as a parameter a formal description of the SMR algorithm being used.
We rely on a recent specification language for SMR algorithms \cite{DBLP:journals/pacmpl/MeyerW19}: SMR automata.
Then, our types capture the locations of the SMR automaton that can be reached after having seen a sequence of SMR calls in the program (control-flow sensitive type system).
This allows the types to track the relevant sequences of SMR calls.
Moreover, it allows them to detect when the deletion of a node is guaranteed to be deferred in order to infer type $\gsafeaccess$.


\section{Programming Model}
\label{sec:preliminaries}

We introduce concurrent shared-memory programs that employ a library for safe memory reclamation (SMR) and are annotated by invariants.
A programming construct that is new to our model are angels, ghost variables with an angelic semantics.
Angels are second-order pointers holding sets of addresses.
When typing (cf. Section~\ref{sec:type_system}), angels will help us track the protected nodes.


\subsection{Programs}
We define a core language for concurrent shared-memory programs.
Invocations to a library for safe memory reclamation and invariant annotations will be added below.
Programs $\aprog$ are comprised of statements defined by
\begin{align*}
	\astmt\;&::=\phantom{\bnf}
		\astmt; \astmt
		\bnf
		\astmt \choice \astmt
		\bnf
		\astmt^*
		\bnf
		\acom
	\\
	\acom\;&::=\phantom{\bnf}
		\apvar:=\apvarp
		\bnf
		\apvar:=\psel{\apvarp}
		\bnf
		\psel{\apvar}:=\apvarp
		\bnf
		\advar:=\dsel{\apvarp}
		\bnf
		\dsel{\apvar}:=\advar
		\bnf
		\advar:=\opof{\vecof{\mkern-1mu\advar}}
		\\&\phantom{::=}
				\bnf
		\apvar:=\malloc
		\bnf
		\assumeof{\acond}
		\bnf
		\atomicbegin
		\bnf
		\atomicend
	\\
	\acond\;&::=\phantom{\bnf}
		\apvar = \apvarp
		\bnf
		\apvar \neq \apvarp
		\bnf
		\predof{\vecof{\mkern-2mu\advar}}
	\ .
\end{align*}
We assume a strict typing that distinguishes between data variables $\advar,\advarp\in\dvars$ and pointer variables 
$\apvar,\apvarp\in\pvars$. 
Notation $\vecof{\mkern-2mu\advar}$ is short for $\advar_1,\dots,\advar_n$.
The language includes sequential composition, non-deterministic choice, and Kleene iteration.
The primitive commands include assignments, memory accesses, memory allocations, assumptions, and atomic blocks.
They have the usual meaning. 
We make the semantics of commands precise in a moment.

\paragraph{Memory}
Programs operate over addresses from $\adr$ that are assigned to pointer expressions~$\pexp$.
A pointer expression is either a pointer variable from $\pvars$ or a pointer selector $\psel{\anadr}\in\psels$.
The set of shared pointer variables accessible by every thread is $\svars\subseteq\pvars$.
Additionally, we allow pointer expressions to hold the special value $\segval\notin\adr$ denoting undefined/uninitialized pointers.
There is also an underlying data domain $\dom$ to which data expressions $\dexp=\dvars\uplus\dsels$ with $\dsel{\anadr}\in\dsels$ evaluate.
A generalization of our development to further selectors is straightforward.

The memory is a partial function $\aheap:\pexp\uplus\dexp\nrightarrow\adr\uplus\set{\segval}\uplus\dom$ that respects the typing.
The initial memory is $\heapcomput{\epsilon}$.
Pointer variables $\apvar$ are uninitialized, $\heapcomputof{\epsilon}{\apvar}=\segval$.
Data variables $\advar$ have a default value, $\heapcomputof{\epsilon}{\advar}=0$.
We modify the memory with updates $\anup$ of the form $\update{\anexp\mapsto\aval}$.
Applied to a memory $\aheap$, the result is the memory $\aheapp=\applyupdate{\aheap}{\anexp\mapsto\aval}$ defined by $\aheapp(\anexp)=\aval$ and $\aheapp(\anexpp)=\aheap(\anexpp)$ for all $\anexpp\neq\anexp$.
Below, we define computations $\tau$ which give rise to sequences of updates.
We write $\heapcomput{\tau}$ for the memory resulting from the initial memory $\heapcomput{\varepsilon}$ when applying the sequence of updates~in~$\tau$.

\paragraph{Liberal Semantics}
We define a semantics where program $\aprog$ is executed by a possibly unbounded number of threads.
In this semantics some addresses may be freed non-deterministically by the runtime environment.
This behavior will be constrained by a memory reclamation algorithm in a moment.
Formally, the liberal semantics of program $\aprog$ is the set of computations $\asem[\aprog]{X}{Y}$.
It is defined relative to two sets $Y\subseteq X\subseteq\adr$ of addresses allowed to be reallocated and freed, respectively.
A computation is a sequence $\tau$ of actions of the form $\anact=(\athread,\acom,\anup)$.
The action indicates that thread $\athread$ executes command $\acom$ that results in the memory update $\anup$.
The definition of the liberal semantics is by induction.
The empty computation is always contained, $\epsilon\in\asem[\aprog]{X}{Y}$.
Then, action $\anact$ can be appended to computation $\tau$, denoted $\tau.\anact\in\asem[\aprog]{X}{Y}$, if $\tau\in\asem[\aprog]{X}{Y}$, $\anact$ respects the control flow of $\aprog$, and one of the following holds.
\begin{description}
	\item[(Assign)]
		If $\anact=(\athread,\psel{\apvar}:=\apvarp,\update{\psel{\anadr}\mapsto\anadrp})$ then $\heapcomputof{\tau}{\apvar}=\anadr$ and $\heapcomputof{\tau}{\apvarp}=\anadrp$.
		There are similar conditions for the remaining assignments.
	\item[(Assume)]
		If $\anact=(\athread,\assumeof{\lhs=\rhs},\emptyset)$ then $\heapcomputof{\tau}{\lhs}=\heapcomputof{\tau}{\rhs}$.
		There are similar conditions for the remaining assumptions.
	\item[(Malloc)]
		If $\anact=(\athread,\apvar:=\malloc, \anup)$, then $\anup$ has the form $\update{\apvar\mapsto\anadr,\psel{\anadr}\mapsto\segval,\dsel{\anadr}\mapsto\advalue}$ so that $\anadr\in\freshof{\tau}$ or $\anadr\in\freedof{\tau}\cap Y$, and $\advalue\in\dom$.
	\item[(Free)]
		If $\anact=(\bot,\freeof{\anadr},\emptyset)$ then $\anadr\in X$.
	\item[(Atomic)]
		If $\anact=(\athread,\atomicbegin,\emptyset)$ or $\anact=(\athread,\atomicend,\emptyset)$.
\end{description}

Note that Rule (Free) may spontaneously emit $\freeof{\anadr}$, although there is no $\free$ command in the programming language.
Indeed, the $\free$ command will be issued by the memory reclamation algorithm defined in the next section (it is not part of $\aprog$).
The rule allows us to define the set of allocatable addresses for rule (Malloc) as addresses that have never been allocated in the computation, denoted by $\freshof{\tau}$, and addresses which have been freed since their last allocation,~$\freedof{\tau}$.


\subsection{Safe Memory Reclamation} 
We consider programs that manage their memory with the help of a safe memory reclamation~(SMR) algorithm.
In this setting, threads do not free their memory themselves (no explicit $\free$ command), but request the SMR algorithm to do so.
The SMR algorithm will have means of understanding whether an address is still accessed by other threads, and only execute the $\free$ when it is safe to do so.
As a consequence, the semantics of the program depends on the SMR algorithm it invokes.

The means of detecting whether an address can be freed safely depend on the SMR algorithm.
Despite the variety of techniques, it was recently observed that the behavior of major SMR algorithms can be captured by a common specification language~\cite{DBLP:journals/pacmpl/MeyerW19}: SMR~automata.\footnote{Working on compositional verification, \citet{DBLP:journals/pacmpl/MeyerW19} call them \emph{observers}.}
Intuitively, the SMR automaton models the protection protocol of its SMR algorithm, while abstracting from implementation details.
We recall SMR automata and use them to restrict the liberal semantics to the frees performed by the SMR algorithm.

\paragraph{SMR Automata}
An SMR algorithm offers a set of functions $f(\vecof{r})$ for the programmer to provide information about the intended access to the data structure, like \code{leaveQ}, \code{enterQ}, and \code{retire} in the case of EBR (cf. \Cref{sec:illustration}).
An SMR automaton, as depicted in \Cref{fig:ebrobserver}, is a finite control structure the transitions of which are labeled with these function symbols.
Additionally, each transition comes with a guard.
The guard influences the flow of control in the SMR automaton based on the actual parameters of function calls.
To distinguish the parameters, the automaton maintains a finite set of local variables storing thread identifiers and addresses.
Guards may then compare the actual parameters with the values of variables.

What makes SMR automata a useful modeling language is their compactness: complex SMR algorithms can be captured by fairly small SMR automata.
This is achieved by an interesting definition of the semantics.
SMR automata accept bad behavior, $\free$ commands that should not be executed after a sequence of SMR function calls protecting the address.

What makes SMR automata interesting for automated verification are two technical restrictions that limit their expressiveness.
First, the variable values are chosen only once, in the beginning of the computation, and never changed.
This choice is non-deterministic.
The idea is that the automaton picks some protection to track.
Second, transition guards can only compare for equality.
That this is sufficient to properly model the behavior of SMR algorithms can be explained by the fact that SMR algorithms are designed to work with very different data structures, from stacks to queues to trees.
Hence, there is no point for the SMR algorithm to store information about the data structure more specific than the equality of pointers.

Syntactically, an SMR automaton $\anobs$ is a tuple consisting of a finite set of locations, a finite set of variables, and a finite set of transitions.
There is a dedicated initial location and a number of accepting locations.
Transitions are of the form $\alocation\trans{\translab{f(\vecof{r})}{\aguard}}\alocationp$   with locations $\alocation,\alocationp$, event $f(\vecof{r})$, and guard~$\aguard$.
Events $\absevt{f}{\vecof{r}}$ consist of a type $f$ and parameters $\vecof{r}=r_1,\dots,r_n$.
The guard is a Boolean formula over equalities of variables and parameters $\vecof{r}$.

Semantically, a (runtime) state $\astate$ of the SMR automaton is a tuple $(\alocation,\varphi)$ where $\alocation$ is a location and $\varphi$ maps variables to values.
Such a state is initial if $\alocation$ is initial, and similarly accepting if $\alocation$ is accepting.
Then, $(\alocation,\varphi)\trans{f(\vecof{v})}(\alocationp,\varphi)$ is an SMR step if $\alocation\trans{\translab{f(\vecof{r})}{\aguard}}\alocationp$ is a transition and $\varphi(\aguard[\vecof{r}\mapsto\vecof{v}])$ evaluates to $\mathit{true}$.
By $\varphi(\aguard[\vecof{r}\mapsto\vecof{v}])$ we mean $\aguard$ with the variables replaced by their $\varphi$-mapped values and the formal parameters $\vecof{r}$ replaced by the actual values $\vecof{v}$.
As mentioned before, the valuation $\varphi$ is chosen non-deterministically in the beginning; it is not changed by steps.
A \emph{history} $\ahist=f_1(\vecof{v}_1)\ldots f_n(\vecof{v}_n)$ is a sequence of events.
If there are SMR steps $\astate\trans{f_1(\vecof{v}_1)}\cdots\trans{f_n(\vecof{v}_n)}\astatep$, we write $\astate\trans{\ahist}\astatep$.
If $\astatep$ is accepting, we say that $\ahist$ is accepted by $\astate$.

Acceptance in SMR automata characterizes \emph{bad behavior}, and a history $\ahist$ is said to violate $\anobs$ if there is an initial state $\astate$ and an accepting state $\astatep$ such that $\astate\trans{\ahist}\astatep$.
The specification of $\anobs$ is the set of histories that are not accepted:
\begin{align*}
	\specof{\anobs}:=\setcond{\ahist}{\forall\astate,\astatep.~\astate\trans{\ahist}\astatep\wedge\astate\text{ initial}\implies\astatep\text{ not accepting}}\; .
\end{align*}
We also use a restriction of the specification. 
The set $\freeableof{\ahist}{\anadr}$ contains those continuations $\ahistp$ of $\ahist$ so that $\ahist.\ahistp\in\specof{\anobs}$ and moreover at most address $\anadr$ is freed in $\ahistp$.
As bad behavior means executing a forbidden free, we assume accepting states can only be reached by transitions labeled with $\free$ and cannot be left.


\begin{figure}
	\begin{tcolorbox}
	\begin{subfigure}[t]{\textwidth}
		\center
		\begin{tikzpicture}[->,>=stealth',shorten >=1pt,auto,node distance=4.7cm,thick,initial text={}]
			\node [xshift=-1.6cm,yshift=.4cm,draw,thin] {$\baseobs$};
			\tikzstyle{every state}=[minimum size=1.5em]
			\tikzset{every edge/.append style={font=\footnotesize}}
			\node[accepting,state]      (C)               {\mkstatename{obs:base:final}};
			\node[initial above, state] (A) [right of=C]  {\mkstatename{obs:base:init}};
			\node[state]                (B) [right of=A]  {\mkstatename{obs:base:retired}};
			\path
				([yshift=1mm]B.west) edge node [above]{\translab{\freeof{\anadr}}{\anadr=\adrvar}} ([yshift=1mm]A.east)
				([yshift=-1mm]A.east) edge node [below]{\translab{\evt{\enterof{\retire}}{\athread,\anadr}}{\anadr=\adrvar}} ([yshift=-1mm]B.west)
				(A) edge node [above]{\translab{\freeof{\anadr}}{\anadr=\adrvar}} (C)
				;
		\end{tikzpicture}
		\subcaption{%
			SMR automaton specifying that address $\adrvar$ may be freed only if it has been retired and not freed since.
			The automaton uses one variable $\adrvar$.
		}
		\label{fig:baseobs}
	\end{subfigure}
	\\[2mm]
	\begin{subfigure}[t]{\textwidth}
		\center
		\begin{tikzpicture}[->,>=stealth',shorten >=1pt,auto,node distance=3.8cm,thick,initial text={}]
			\node [xshift=-1.2cm,yshift=.6cm,draw,thin] {$\ebrobs$};
			\node[initial,state]    (A)              {\mkstatename{obs:ebr:init}};
			\node[state]            (E) [right of=A] {\mkstatename{obs:ebr:protected}};
			\node[state]            (C) [right of=E] {\mkstatename{obs:ebr:retired}};
			\node[accepting,state]  (D) [right of=C] {\mkstatename{obs:ebr:final}};
			\coordinate             [below of=A, yshift=+2.9cm]  (X)  {};
			\coordinate             [below of=E, yshift=+2.9cm]  (Y)  {};
			\coordinate             [below of=B, yshift=+2.9cm]  (Z)  {};
			\path
				(A) edge node[align=center] {\translabbr{\evt{\exitof{\leaveQ}}{\athread}}{\athread=\threadvar}} (E)
				(E) edge node[align=center] {\translabbr{\evt{\enterof{\retire}}{\athread,\anadr}}{\anadr=\adrvar}} (C)
				(C) edge node[align=center] {\translabbr{\freeof{\anadr}}{\anadr=\adrvar}} (D)
				(Y) edge[-,shorten >=0pt] node {
						\translab{\evt{\enterof{\enterQ}}{\athread}}{\athread=\threadvar}
					} ([xshift=-1mm]X)
				(E) edge[-,shorten >=0pt] ([yshift=1.5mm]Y.north)
				(C.south west) edge[-,shorten >=0pt] (Y)
				([xshift=-1mm]X) edge ([xshift=-1mm]A.south)
				([yshift=1.5mm]Y) edge[-,shorten >=0pt] ([xshift=1mm,yshift=1.5mm]X)
				([xshift=1mm,yshift=1.5mm]X) edge ([xshift=1mm]A.south)
				;
		\end{tikzpicture}
		\subcaption{%
			SMR automaton characterizing when EBR defers frees, using two variables $\threadvar$ and $\adrvar$.
			It states that address $\adrvar$ must not be freed if it was retired while $\threadvar$ is in-between $\leaveQ$ and $\enterQ$ calls.
		}
		\label{fig:ebrobs}
	\end{subfigure}
	\caption{%
		Epoch-based reclamation (EBR) is specified by the SMR automaton $\baseobs\times\ebrobs$.
		For legibility, we omit self loops on all locations for the events that are not given.
	}
	\label{fig:ebrobserver}
	\end{tcolorbox}
\end{figure}

To give an example, consider the SMR automaton $\baseobs\times\ebrobs$ from \Cref{fig:ebrobserver}.
It formalizes the informal specification of EBR from \Cref{sec:illustration}.
Automaton $\baseobs$, \Cref{fig:baseobs}, forbids an EBR implementation to free addresses that have not yet been retired or have not been retired since their last free.
Put differently, it forbids spurious frees and double-frees.
Automaton $\ebrobs$, \Cref{fig:ebrobs}, requires the EBR implementation to defer the $\free$ of retired nodes which could still be accessed by some thread.
A thread can still access the retired node if it has acquired a pointer to the node before it was retired (following the usage policy of EBR discussed in \Cref{sec:illustration}).
This is the case if the thread started accessing the data structure before the $\retire$, which it announces via a~call~to~$\leaveQ$.

While every SMR implementation has its own SMR automaton, the practically relevant SMR automata are products\footnote{%
	The product operation on SMR automata is defined as expected and leads to an intersection of the specifications.
} of $\baseobs$ with further SMR automata~\cite{DBLP:journals/pacmpl/MeyerW19}, like for EBR in the above example.
Our development relies on this.

We also assume that the SMR automaton has two distinguished variables $\threadvar$ and $\adrvar$. 
Intuitively, variable $\threadvar$ will store the thread for which the SMR automaton tracks the protection of the address stored in~$\adrvar$. 
All SMR algorithms we know can be specified with only two variables. 
A possible explanation is that SMR algorithms do not seem to use helping~\cite{HS08} to protect pointers. 
We are not aware of an SMR algorithm where the protection of an address would be inferred from communication with another address or, more ambitiously, another thread.

Moreover, we inherit from \cite{DBLP:journals/pacmpl/MeyerW19} the natural requirement that SMR algorithms do not remember addresses that have been freed in order to detect (and react to) reallocations.
Formally, an SMR automaton \emph{supports elision} if for all histories $\ahist$ the behavior on address $\anadr$ after~$\ahist$
\begin{inparaenum}[(i)]
	\item
		is not affected by a free of another address~$\anadrp$, $\freeableof{\ahist.\freeof{\anadrp}}{\anadr}=\freeableof{\ahist}{\anadr}$,
	\item
		is not affected by renaming another two addresses $\anadrp$ and~$\anadrpp$, $\freeableof{\ahist}{\anadr}=\freeableof{\renamingof{\ahist}{\anadrp}{\anadrpp}}{\anadr}$,
	\item
		is included in the behavior on $\anadr$ after another history $\ahistp$ provided $\anadr$ is fresh after $\ahistp$, $\freeableof{\ahist}{\anadr}\subseteq\freeableof{\ahistp}{\anadr}$, and
	\item
		contains the behavior on $\anadr$ after $\ahist.\freeof{\anadr}$, $\freeableof{\ahist.\freeof{\anadr}}{\anadr}\subseteq\freeableof{\ahist}{\anadr}$.
		To understand (iv), note that the task of the SMR algorithm is to protect addresses from being freed.
		Hence it is safe to delay frees.
\end{inparaenum}

For convenience, we summarize our assumptions on SMR automata. 
All SMR automata we encountered, including the ones from \cite{DBLP:journals/pacmpl/MeyerW19}, satisfy them. 
\begin{assumption}
	\label{assumption:observers-accepting-state}
	\label{Assumption:Product}
	\label{Assumption:Elision}
	SMR automata
	\begin{inparaenum}[(i)]
		\item reach accepting states only with \textnormal{$\free$} and do not leave them, 
		\item are products with $\baseobs$,
		\item have distinguished variables $\threadvar$ and $\adrvar$, and
		\item support elision.
	\end{inparaenum}
\end{assumption}


It will be convenient to have a post-image $\lpostof{\apvar}{\acom}{\alocset}$ on the locations of SMR automata. 
The post-image yields a set of locations $\alocsetp$ reachable by taking a $\acom$-labeled transition from~$\alocset$.
The considered transition is restricted in two ways.
First, its guard $\aguard$ must allow $\threadvar$ to track thread~$\athread$ executing $\acom$.
Second, if $\apvar$ appears as a parameter in~$\acom$, then guard $\aguard$ must allow $\adrvar$ to track $\apvar$.
Formally, these requirements translate to the satisfiability of $\aguard\wedge\athread=\threadvar$ and $\aguard\wedge\apvar=\adrvar$, respectively.
The parameterization in $\apvar$ makes the post-image precise. 
For an example, consider $\baseobs$ and the command $\acom=\enterof{\retireof{\apvar}}$.
We expect the post-image of $\ref{obs:base:init}$ wrt. $\acom$ and $\apvar$ to be $\lpostof{\apvar}{\acom}{\ref{obs:base:init}}=\set{\ref{obs:base:retired}}$. 
The address has definitely been retired. 
Without the parametrization in $\apvar$, we would get $\set{\ref{obs:base:init},\ref{obs:base:retired}}$. The transition could choose not to track $\apvar$.

\paragraph{SMR Semantics}
To incorporate SMR automata into our programming model, we generalize the set of program commands $\acom$ to include calls to and returns from SMR functions:
\begin{align*}
		\acom\;&::=\phantom{\bnf}\acom\bnf \enterof{\afuncof{\vecof{\apvar},\vecof{\advar}}} \bnf \exitof{\afunc}
		\ .
\end{align*}
We add corresponding actions to the liberal semantics $\asem[\aprog]{X}{Y}$.
They make visible the function call/return but do not lead to memory updates.
\\[1mm]\begin{minipage}{\textwidth}
	\begin{minipage}{.48\textwidth}
		\begin{description}
			\item[(Enter)]
				$\anact=(\athread,\enterof{\afuncof{\vecof{\apvar},\vecof{\advar}}},\emptyset)$.
		\end{description}
	\end{minipage}
	\hfill
	\begin{minipage}{.48\textwidth}
		\begin{description}
			\item[(Exit)]
				$\anact=(\athread,\exitof{\afunc},\emptyset)$.
		\end{description}
	\end{minipage}
\end{minipage}\vspace{1mm}

To use SMR automata in the context of computations, we convert a computation~$\tau$ into a history~$\ahist$ by projecting $\tau$ to the $\enter$, $\exit$, and $\free$ commands and replacing the formal parameters with the actual values.
To be precise, we use as events the function names offered by the SMR algorithm plus $\free$.
The parameters to an event are the values of the actual parameters as well as the executing thread.
In the case of $\exit$ events, we drop the actual parameters and in case of $\free$ events we drop the executing thread.
For example, $\historyof{\tau.(\athread,\enterof{\afuncof{\apvar}},\emptyset)}=\historyof{\tau}.\afuncof{\athread,\heapcomputof{\tau}{\apvar}}$.

The SMR semantics of a program is the restriction of the liberal semantics to the specification of the SMR automaton of interest.
More precisely, given an SMR automaton $\anobs$ and sets $Y\subseteq X \subseteq \adr$ of reallocatable and freeable addresses, the SMR semantics induced by $\anobs, X, Y$ of program $\aprog$ is \[
	\anobs\asem[\aprog]{X}{Y}\;:=\;\setcond{\tau}{\tau\in \asem[\aprog]{X}{Y}\wedge \historyof{\tau}\in \specof{\anobs}}
	\ .
\]

SMR algorithms only restrict the execution of free commands, their functions can always be invoked by the program.
SMR automata mimic this by including in their specification all histories that do not respect the control flow.
In particular, we have the following property.
In the absence of frees, the SMR automaton does not play a role.
The resulting semantics, $\nosem$, is garbage~collection~(GC).
\begin{lemma}
	\label{Lemma:GC}
	$\anobs\nosem=\nosem$ for every SMR automaton $\anobs$.
\end{lemma}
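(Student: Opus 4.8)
The plan is to prove the two inclusions separately, with essentially all the content in one of them. The inclusion $\anobs\nosem\subseteq\nosem$ is immediate from the definition: $\anobs\nosem=\setcond{\tau}{\tau\in\nosem\wedge\historyof{\tau}\in\specof{\anobs}}$ is carved out of $\nosem$ by an extra side condition, hence a subset. So the real work is to establish $\nosem\subseteq\anobs\nosem$, i.e. that the side condition $\historyof{\tau}\in\specof{\anobs}$ holds automatically for every $\tau\in\nosem$.

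First I would observe that computations in $\nosem=\asem{\emptyset}{\emptyset}$ contain no $\free$ actions. The only rule that can append a $\free$ action is \textbf{(Free)}, whose premise requires the freed address to lie in the set $X$ of freeable addresses; here $X=\emptyset$, so that premise is never satisfiable, and no $\free$ action can ever be appended. Since the liberal semantics is defined by induction on the length of $\tau$ and \textbf{(Free)} is the only rule that introduces a $\free$, a straightforward induction gives that no $\tau\in\nosem$ contains a $\free$ action. Because $\historyof{\cdot}$ is obtained by projecting a computation onto its $\enter$, $\exit$, and $\free$ events, it follows that $\historyof{\tau}$ contains no $\free$ event.

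Second, I would argue that any history with no $\free$ event lies in $\specof{\anobs}$. Unfolding the definition of the specification, this amounts to showing that for every run $\astate\trans{\historyof{\tau}}\astatep$ from an initial state $\astate$, the reached state $\astatep$ is non-accepting. Here I invoke \Cref{assumption:observers-accepting-state}(i): accepting states are entered only by $\free$-labelled transitions and are never left. If the run is non-empty, its last transition is labelled by an $\enter$ or $\exit$ event (never $\free$), so the target $\astatep$ cannot be accepting. If the run is empty, then $\astatep=\astate$ is the initial state, and one needs that the initial location is non-accepting. Either way $\astatep$ is non-accepting, whence $\historyof{\tau}\in\specof{\anobs}$ and therefore $\tau\in\anobs\nosem$.

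I expect the only delicate point to be the empty-history corner case, where the run is trivial and $\astatep$ coincides with the initial state: the lemma genuinely requires the initial location to be non-accepting (otherwise $\epsilon\in\nosem$ but $\historyof{\epsilon}=\epsilon\notin\specof{\anobs}$ would already break it). I would read this off \Cref{assumption:observers-accepting-state}(i) as well, interpreting ``accepting states can only be reached by $\free$-transitions'' to exclude the initial state, which is reached by no transition at all; the example automata $\baseobs$ and $\ebrobs$ both confirm that the initial location is non-accepting. Apart from pinning down this edge case, the argument is a direct unfolding of the definitions of $\nosem$, of $\historyof{\cdot}$, and of $\specof{\anobs}$, so I do not anticipate any further obstacle.
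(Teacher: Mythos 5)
Your proposal is correct and follows essentially the same route as the paper's own two-line argument: computations in $\nosem$ contain no $\free$ actions (since $X=\emptyset$ blocks rule (Free)), and by \Cref{assumption:observers-accepting-state}(i) accepting states can only be reached via $\free$-labelled transitions, so no history of $\nosem$ is accepted and no computation is ruled out. Your additional care about the empty-history corner case (the initial location being non-accepting) is a reasonable sharpening of a point the paper leaves implicit, but it does not change the substance of the argument.
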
 
To see the lemma, note that only accepting states in $\anobs$ may rule out computations from $\nosem$.
By \Cref{assumption:observers-accepting-state}, only events $\freeof{\anadr}$ may lead to such accepting states.


Reconsider the SMR automaton $\baseobs$. 
For this automaton to properly restrict the frees in a program, the program should not perform \emph{double retires}, that is, not retire an address again before it is freed.
The point is that SMR algorithms typically misbehave after a double retire (perform double frees), which is not reflected in $\baseobs$ (it does not allow for double frees after a double retire).
Our type system will establish the absence of double retires for a given program. 


\subsection{Angels}
Angels can be understood as ghost variables with an angelic semantics. 
Like for ghosts, their purpose is verification: angels store information about the computation that can be used in invariants but that cannot be used to influence the control flow. 
This information is a set of addresses, which means angels are second-order pointers. 
The set of addresses is determined by an angelic choice, a non-deterministic assignment that is beneficial for the future of the computation. 

The idea behind angels is the following. 
When typing, some invariants of the runtime behavior may not be deducible by the type system. 
Angels allow the programmer to make them explicit in the program and thus available to the type check.
Consider EBR's \code{leaveQ} function.
It guarantees that all currently active addresses remain allocated, i.e., will not be reclaimed even if they are retired.
An angelic choice is convenient for selecting the set.
Subsequent dereferences can then use invariant annotations to ensure that the dereferenced pointer holds an address in the set captured by the angel.
With this, our type system is able to detect that the access is safe.

To incorporate angels and invariant annotations into our programming model, we generalize the set of commands as follows \[
	\acom\;::=\phantom{\bnf}\acom\bnf
	\ghostof{\chooseof{\aghostvar}}
	\bnf
	\invariantof{\apvar=\apvarp}
	\bnf
	\invariantof{\containsof{\apvar}{\aghostvar}}
	\bnf 		
	\invariantof{\activeof{\apvar}}
	\bnf 		
	\invariantof{\activeof{\aghostvar}}
	\ .
\]
Angels are local variables $\aghostvar$ from the set $\gvars$.
Invariant annotations include allocations of angels with the keyword $\chooseof{\aghostvar}$.  
Intuitively, this will map the angel to a set of addresses. 
Conditionals behave as expected.
The membership assertion $\containsof{\apvar}{\aghostvar}$ checks that the address of $\apvar$ is included in the set of addresses held by the angel $\aghostvar$.
The predicate $\activeof{\apvar}$ expresses that the address pointed to by $\apvar$ currently is neither freed nor retired, and similar for $\activeof{\aghostvar}$.
We use $\apavar$ to uniformly refer to pointers $\apvar$ and angels $\aghostvar$.

In the liberal semantics $\asem[\aprog]{X}{Y}$, the above commands do not lead to memory updates:
\begin{description}
	\item[(Invariant)]
		$\anact=(\athread,\invariantof{\bullet}\:,\emptyset)$.
\end{description}

\begin{figure}
	\begin{tcolorbox}
	\setalignspacingtofiguremode
	\begin{align*}
		\invholdsof{\tau}\; &:=\; \invholdsof[\epsilon]{\tau} \\\smallskip
		\invholdsof[\sigma]{\epsilon}\; &:=\; \mathit{true} \\
		\invholdsof[\sigma]{\anact.\tau}\; &:=\; \exists {\aghostvar}.\invholdsof[\sigma.\anact]{\tau}&&\text{if }\anact=(\athread,\ghostof{\chooseof{\aghostvar}},\emptyset)\\
		\invholdsof[\sigma]{\anact.\tau}\; &:=\; \heapcomputof{\sigma}{\acond}\wedge \invholdsof[\sigma.\anact]{\tau}&&\text{if }\anact=(\athread,\invariantof{\acond},\emptyset)\\
		\invholdsof[\sigma]{\anact.\tau}\; &:=\; \heapcomputof{\sigma}{\apvar}\in\aghostvar \wedge \invholdsof[\sigma.\anact]{\tau} &&\text{if }\anact=(\athread,\ghostof{\containsof{\apvar}{\aghostvar}},\emptyset)\\
		\invholdsof[\sigma]{\anact.\tau}\; &:=\; \heapcomputof{\sigma}{\apvar}\in\activeofcomp{\sigma} \wedge \invholdsof[\sigma.\anact]{\tau}&&\text{if }\anact=(\athread,\invariantof{\activeof{\apvar}},\emptyset)\\
		\invholdsof[\sigma]{\anact.\tau}\; &:=\; \aghostvar\subseteq\activeofcomp{\sigma}\wedge \invholdsof[\sigma.\anact]{\tau} &&\text{if }\anact=(\athread,\ghostof{\activeof{\aghostvar}},\emptyset)\\
		\invholdsof[\sigma]{\anact.\tau}\; &:=\; \invholdsof[\sigma.\anact]{\tau} &&\text{otherwise.}
	\end{align*}
	\caption{Formula capturing the correctness of invariant annotations in a computation $\tau$.}
	\label{Figure:FormulaInvariants}
	\end{tcolorbox}
\end{figure}
Invariant annotations behave like assertions, they do not influence the semantics but it has to be verified that they hold for all computations. 
To make precise what it means for invariant annotations to hold for a computation $\tau$, 
we construct a formula $\invholdsof{\tau}$. 
The invariant annotations are defined to hold for $\tau$ iff $\invholdsof{\tau}$ is valid.
The construction of the formula is given in \Cref{Figure:FormulaInvariants}. 
There, $\activeofcomp{\sigma}$ is the set of addresses that are neither freed nor retired after computation $\sigma$.
We only consider programs leading to closed formulas, meaning every angel is allocated (and hence quantified) before it is used.
The semantics of the formula is as expected: angels evaluate to sets of addresses, equality of addresses is the identity, and membership is as usual for sets.
\Cref{sec:invariant-check} shows how to automatically prove the correctness of invariant annotations for all computations.


\section{Getting Rid of Memory Reclamation}
\label{sec:prf}

Despite the compact formulation of SMR algorithms as SMR automata, analyzing programs in the presence of memory reclamation remains difficult.
Unlike for programs running under garbage collection, ownership guarantees~\cite{DBLP:conf/sas/Boyland03,DBLP:conf/popl/BornatCOP05} and the resulting thread-local reasoning techniques \cite{DBLP:conf/csl/OHearnRY01,DBLP:conf/lics/Reynolds02separationlogic,DBLP:conf/concur/OHearn04,DBLP:conf/concur/Brookes04} do not apply.
\Citet{DBLP:journals/pacmpl/MeyerW19} bridge this gap.
They show that it is sound and complete to conduct the verification under garbage collection provided the program properly manages its memory.
So one can establish this requirement and then perform the actual verification under the simpler semantics.
Their statement is as follows; we give the missing definitions in a moment.
\begin{theorem}[Consequence of Theorem 5.20 in \cite{DBLP:journals/pacmpl/MeyerW19}]
	\label{thm:PRF-guarantee}
	If the semantics $\freesemobs$ is pointer-race-free, then $\allsemobs$ corresponds to $\nosem$.
\end{theorem}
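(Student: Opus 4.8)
The plan is to transport Theorem~5.20 of~\cite{DBLP:journals/pacmpl/MeyerW19} to our setting rather than to reprove the correspondence from scratch. Relative to the program model of~\cite{DBLP:journals/pacmpl/MeyerW19}, the only new ingredients are the angels and invariant annotations, and the SMR automata are now constrained by \Cref{assumption:observers-accepting-state}. Accordingly I would (a)~argue that the new commands are semantically inert, so that erasing them turns every computation into one of the original model; (b)~check that \Cref{assumption:observers-accepting-state} discharges the structural side conditions that Theorem~5.20 imposes on $\anobs$; and (c)~invoke Theorem~5.20 on the hypothesis that $\freesemobs$ is pointer-race-free.

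For~(a), observe that by rule~(Invariant) every angel allocation $\ghostof{\chooseof{\aghostvar}}$ and every annotation $\invariantof{\bullet}$ executes as an action $(\athread,\cdot\,,\emptyset)$ with the empty update, and none of them is an $\enter$, $\exit$, or $\free$ action. Hence for any computation $\tau$ the history $\historyof{\tau}$, the induced memory $\heapcomput{\tau}$, and therefore the valid/retired/freed status of every address are all invariant under inserting or deleting these actions. Consequently membership in $\specof{\anobs}$ and the freeable sets $\freeableof{\ahist}{\anadr}$ are unchanged, and the projection that erases the annotation actions maps $\freesemobs$, $\allsemobs$, and $\nosem$ onto the corresponding semantics of the annotation-free model while preserving exactly the ingredients---memory contents and histories---out of which pointer races and the correspondence relation are built. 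The correctness formula $\invholdsof{\tau}$ is orthogonal: it restricts which computations we will later admit but does not alter the semantics, so it plays no role here.

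For~(b), Theorem~5.20 requires of $\anobs$ precisely that accepting states are reached only by $\free$ and cannot be left, that $\anobs$ is a product with $\baseobs$, that it carries the distinguished variables $\threadvar$ and $\adrvar$, and that it supports elision; these are items (i)--(iv) of \Cref{assumption:observers-accepting-state}. The garbage-collected semantics of~\cite{DBLP:journals/pacmpl/MeyerW19} is $\anobs\nosem$, which by \Cref{Lemma:GC} equals $\nosem$; this is why no observer decorates the conclusion. Feeding the pointer-race-freedom of $\freesemobs$ into Theorem~5.20 then yields that $\allsemobs$ corresponds to $\nosem$.

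I expect the main obstacle to be making step~(a) watertight against the definitions of pointer race and of correspondence, which appear only after this statement: one must confirm that the inert commands can never themselves be the site of a race and never influence control flow, which holds because they emit no update and admit no branching condition, but this has to be rechecked once the definitions are fixed. The substantive content---that reallocation, and hence the ABA phenomenon, is harmless, so that reasoning about the non-reallocating $\freesemobs$ suffices---lives inside the proof of Theorem~5.20 and rests on the elision property, in particular that a fresh address dominates the freeable behavior of a reallocated one, $\freeableof{\ahist}{\anadr}\subseteq\freeableof{\ahistp}{\anadr}$ when $\anadr$ is fresh after $\ahistp$. We inherit that argument and need only certify its premises, which \Cref{Assumption:Elision} supplies; should Theorem~5.20 instead phrase its hypothesis with single-address reallocation as in~\cite{DBLP:journals/pacmpl/MeyerW19}, the same elision property is exactly what reconciles it with our reallocation-free $\freesemobs$.
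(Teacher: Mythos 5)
Your high-level strategy---leaning on \citet{DBLP:journals/pacmpl/MeyerW19} rather than reproving the correspondence---matches the spirit in which the paper states this theorem, but the paper does \emph{not} invoke Theorem 5.20 as a black box, and the step you defer to your last sentence is exactly where the technical content lies. The paper's proof (see \Cref{thm:PRF-guarantee-formal}) composes two explicit reductions: \Cref{thm:elision-computations-new} maps every $\tau\in\allsemobs$ to some $\sigma\in\freesemobs$ with $\tau\computequiv\sigma$ and $\tau\obsrel\sigma$, eliminating \emph{reallocation}; its proof imports the elision lemmas of the extended version of \cite{DBLP:journals/pacmpl/MeyerW19} (Proposition C.14, Lemmas C.16 and D.16), not the end-to-end theorem. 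Then \Cref{thm:deletable-computations} maps every $\sigma\in\freesemobs$ to some $\sigma'\in\nosem$ with identical control \emph{and identical memory}, eliminating \emph{frees}; this lemma is proven natively by induction in the new model, and it is also where annotations are dealt with (including the reflection property $\invholdsof{\sigma'}\implies\invholdsof{\sigma}$, which later results need). Composing the two relations yields $\allsemobs\computrel\nosem$.

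The genuine gap in your plan is the premise reconciliation. The hypothesis here is pointer-race freedom of the \emph{reallocation-free} semantics $\freesemobs$ under this paper's race notion (\Cref{def:prf-brief}), whereas Theorem 5.20 is phrased over the model of \citet{DBLP:journals/pacmpl/MeyerW19}, whose side conditions are checked in a semantics that \emph{does} reallocate (a single address) and include an ABA-harmlessness requirement in place of the blanket clause on invalid-pointer comparisons. Deriving their premise from this paper's premise is not bookkeeping: a putative race or harmful ABA in the reallocating semantics must be replayed inside $\freesemobs$, and that replay \emph{is} the elision argument---the very lemmas the paper imports. So the black box has to be opened, at which point you have reconstructed the paper's first reduction anyway. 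A secondary issue is your step (a): erasing angels and annotations changes the program, so the correspondence $\computrel$, which demands equality of control locations, is obtained only for the erased program; you must re-insert the (always-enabled, update-free) annotation actions into the GC witness to transfer the conclusion back to the annotated program, and the same remark applies to atomic blocks, which your proposal does not mention. None of these steps is unfixable, but carrying them out honestly converges on the paper's two-lemma proof rather than shortcutting it.
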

With the above \namecref{thm:PRF-guarantee}, the only property to be checked in the presence of memory reclamation is the premise of pointer race freedom. 
However, \citet{DBLP:journals/pacmpl/MeyerW19} report on this task as being rather challenging, requiring an intricate state space exploration of a semantics much more complicated than garbage collection.
The contribution of the present paper is a type system to tackle exactly this challenge (cf. \Cref{sec:type_system}).

We elaborate on pointer races and the correspondence between the semantics.

\paragraph{Pointer Race Freedom}
Pointer races generalize the concept of memory errors by taking into account the SMR algorithm \cite{DBLP:conf/vmcai/HazizaHMW16,DBLP:journals/pacmpl/MeyerW19}.
A memory error is an access through a dangling pointer, a pointer to an address that has been freed.
Such accesses are prone to system crashes, for example, due to {\segfault}s.
Indeed, the \cppeleven standard considers programs with memory errors to have an undefined semantics (catch-fire semantics)~\cite{iso-cppeleven}.

To make precise which pointers in a computation are dangling, \citet{DBLP:conf/vmcai/HazizaHMW16} introduce the notion of \emph{validity}.
A pointer is then dangling if it is invalid.
Initially, all pointers are invalid.
A pointer is rendered valid if it receives its value from an allocation or from a valid pointer.
A pointer becomes invalid if its address is freed or it receives its value form an invalid pointer.
It is worth pointing out that $\freeof{\anadr}$ invalidates all pointers to address $\anadr$ but a subsequent reallocation of $\anadr$ validates only the receiving pointer.
We denote the set of valid pointers after a computation~$\tau$~by~$\validof{\tau}$.

We already argued that dereferences of invalid pointers may lead to system crashes.
Consequently, passing invalid pointers to the SMR algorithm may also be unsafe.
Consider a call to $\retireof{\apvar}$ requesting the SMR algorithm to free the address of $\apvar$.
If $\apvar$ is invalid, then its address has already been freed, resulting in a system crash due to a double free.
Yet, we cannot forbid invalid pointers from being passed to SMR functions altogether.
For instance, $\guard$ may be invoked with invalid pointers in \Cref{code:michaelscott:protect-tail,code:michaelscott:protect-head} of Michael\&Scott's queue from \Cref{sec:illustration}.
To support such calls, one deems a command $\enterof{\afuncof{\vecof{\apvar},\vecof{\advar}}}$ unsafe, if replacing the actual values of invalid pointer arguments with arbitrary values may exhibit new (and potentially undesired) SMR behavior.
We inherit the the formal definition from \citet{DBLP:journals/pacmpl/MeyerW19} as it is an integral part of their proof strategy.
\begin{definition}[Definition 5.12 in \citet{DBLP:journals/pacmpl/MeyerW19}]
	\label{def:unsafe-enter}
	Consider a computation $\tau$ with history~$\ahist$. 
	A subsequent action $\anact$ is an \emph{unsafe call} if its command is $\enterof{\afuncof{\vecof{\apvar},\vecof{\advar}}}$ with $\apvar_i\notin\validof{\tau}$ for some $i$, ${\heapcomputof{\tau}{\vecof{\apvar}}=\vecof{\anadr}}$, $\heapcomputof{\tau}{\vecof{\mkern-1mu\advar\mkern+1mu}}=\vecof{\advalue}$, and:
	\begin{align*}
		\exists\,\anadrpp~
		\exists\,\vecof{\anadrp}.~~
		\big(
			\forall i.~ 
			(
			\anadr_i=\anadrpp\vee\apvar_i\in\validof{\tau}
			)
			\implies
			\anadr_i=\anadrp_i
		\big)
		\wedge
		\freeableof{\ahist.\afunc(\athread,\vecof{\anadrp},\vecof{\advalue})}{\anadrpp}
		\not\subseteq
		\freeableof{\ahist.\afunc(\athread,\vecof{\anadr},\vecof{\advalue})}{\anadrpp}
		\ .
	\end{align*}
\end{definition}
\begin{definition}[Following Definition 5.13 in \citet{DBLP:journals/pacmpl/MeyerW19}]
	\label{def:prf-brief}
	A computation $\tau.\anact$ is a \emph{pointer race} if $\anact$
	\begin{inparaenum}[(i)]
		\item dereferences an invalid pointer, 
		\item is an assumption comparing an invalid pointer for equality,
		\item retires an invalid pointer, or
		\item is an unsafe call.
	\end{inparaenum}
\end{definition}

\paragraph{Correspondence}
\Cref{thm:PRF-guarantee} establishes a correspondence between the behavior of full $\anobs\smash{\allsem}$ and the simpler, garbage collected semantics $\nosem$.
It states that we find for every computation $\tau\in\anobs\allsem$ another computation $\sigma\in\nosem$ such that $\sigma$ mimics $\tau$.
We denote this by $\tau\computrel\sigma$.
Relation $\computrel$ requires $\tau$ and $\sigma$ to agree on the control locations of all threads and the valid memory of $\tau$.
Intuitively, this means that any pointer-race-free action after $\tau$ has the same effect after $\sigma$ because the action cannot access the invalid part of the memory without raising a pointer race.
Put differently, threads cannot distinguish whether they execute in $\tau$ or in $\sigma$.
So they cannot distinguish whether or not memory is reclaimed.

Technically, $\tau$ and $\sigma$ agree on the valid memory of $\tau$ if $\restrict{\heapcomput{\tau}}{\validof{\tau}}=\restrict{\heapcomput{\sigma}}{\validof{\tau}}$.
Here, $\restrict{\heapcomput{\tau}}{\validof{\tau}}$ denotes the restriction of $\heapcomput{\tau}$ to its valid part $\validof{\tau}$.
It restricts the domain of $\heapcomput{\tau}$ to $\validof{\tau}$ and to data variables and to data selectors of addresses referenced from $\validof{\tau}$.
It is worth pointing out the asymmetry in the definition of $\tau\computrel\sigma$: $\heapcomput{\sigma}$ is restricted to $\validof{\tau}$.
This is necessary because there are no $\free$ commands in $\sigma$ and thus pointer expressions that are invalidated in $\tau$ are never invalidated in $\sigma$.
The correspondence is precise enough for verification results of safety properties to carry over from one semantics to the other. 


\presection
\section{A Type System to Prove Pointer Race Freedom}
\label{sec:type_system}

We present a type system a successful type check of which entails pointer race freedom as required by \Cref{thm:PRF-guarantee}.
The guiding idea of our types is to under-approximate the validity of pointers.
To achieve this, our types incorporate the SMR algorithm in use and the guarantees it provides.
It~does so in a modular way: a parameter of the type system definition is an SMR automaton specifying the SMR algorithm.

A key design decision of our type system is to track no information about the data structure shape.
Instead, we deduce runtime specific information from automatically dischargeable annotations.
We still achieve the necessary precision because the same SMR algorithm may be used with different data structures.
Hence, shape information should not help tracking its behavior.


\subsection{Overview}
\label{sec:type_system:overview}

Towards a definition of our type system, recall the memory life cycle from \Cref{sec:introduction}.
The transition from the active to the retired stage requires care.
The type system has to detect that a thread is guaranteed safe access to a retired node.
This means finding out that an SMR protection was successful.
Additionally, types need to be stable under interference.
Nodes can be retired without a thread noticing.
Hence, types need to ensure that the guarantees they provide cannot be spoiled by actions of other threads.

To tackle those problems, we use intersection types capturing which \emph{access guarantees} a thread has for each pointer.
We point out that this means we track information about nodes in memory through pointers to them.
We use the following guarantees.

\begin{description}[leftmargin=7mm,labelwidth=*]
	\item[\textnormal{$\glocal$:}]
		Thread-local pointers referencing nodes in the local stage. 
		The guarantee comes with two more properties. 
		There are no valid aliases of the pointer and the referenced node is not retired.
		This gives the thread holding the pointer exclusive access.

	\item[\textnormal{$\gactive$}]
		Pointers to nodes in the active stage.
		Active pointers are guaranteed to be valid, they can be accessed safely.

	\item[\textnormal{$\gsafeaccess$}]
		Pointers to nodes which are protected by the SMR algorithm from being reclaimed.
		Such pointers can be accessed safely although the referenced node might be in the retired stage.

	\item[\textnormal{$\gcustom{\alocset}$}]
		SMR-specific guarantee that depends on a set of locations in the given SMR automaton. 
		The idea is to track the history of SMR calls performed so far. 
		This history is guaranteed to reach a location in $\alocset$. 
		The information about $\alocset$ bridges the (SMR-specific) gap between $\gactive$ and $\gsafeaccess$. 
		Accesses to the pointer are potentially unsafe. 
\end{description}

The interplay among these guarantees tackles the aforementioned challenges as follows.
Consider a thread that just acquired a pointer $\apvar$ to a shared node.
In the case of hazard pointers, this pointer comes without access guarantees.
Hence, the thread issues a protection of $\apvar$.
We denote this with an SMR-specific type $\gcustom{}$.
For the protection to be successful, the programmer has to make sure that $\apvar$ is active during the invocation.
The type system detects this through an annotation that adds guarantee $\gactive$ to $\apvar$.
We then deduce from the SMR automaton that $\apvar$ can be accessed safely because the protection was successful.
This adds guarantee $\gsafeaccess$.
(We have seen this on~an~example~in~\Cref{sec:illustration}.)


\subsection{Types}

Throughout the remainder of the section we fix an SMR automaton $\anobs$ relative to which we describe the type system.
The SMR automaton induces a set of intersection types~\cite{CD78,Pierce} defined by the following grammar:
\begin{align*}
	\atype\; ::=\; \emptyset\bnf \glocal \bnf \gactive \bnf \gsafeaccess\bnf \gcustom{\alocset}\bnf \atype\wedge\atype\ .
\end{align*}
The meaning of the guarantees $\glocal$ to $\gcustom{\alocset}$ is as explained above.
We also write a type $\atype$ as the set of its guarantees where convenient.
We define the predicate $\isvalidof{\atype}$ to hold if $\atype\cap \set{\gsafeaccess, \glocal, \gactive}\neq \emptyset$.
The three guarantees serve as syntactic under-approximations of the semantic notion of validity from the definition of pointer races (cf. \Cref{sec:prf}).

There is a restriction on the sets of locations $\alocset$ for which we provide guarantees $\gcustom{\alocset}$.
To understand it, note that our type system infers guarantees about the protection of pointers thread-locally from the code, that is, as if the code was sequential.
Soundness then shows that these guarantees carry over to the computations of the overall program where threads interfere.
To justify this sequential to concurrent lifting, we rely on the concept of interference freedom due to \citet{DBLP:journals/acta/OwickiG76}.
A set of locations $\alocset$ in the SMR automaton $\anobs$ is \emph{closed under interference from other threads}, if no SMR command issued by a thread different from $\threadvar$ (whose protection we track) can leave the locations.
Formally, we require that for every transition $\alocation\trans{\translab{f(\athreadp, *)}{\aguard}}\alocationp$ with $\alocation\in\alocset$ and $\alocationp\notin\alocset$ we have guard $\aguard$ implying $\athreadp=\threadvar$.
We only introduce guarantees $\gcustom{\alocset}$ for sets of locations $\alocset$ that are closed under interference from other threads.

Type environments $\env$ are total functions that assign a type to every pointer and every angel in the code being typed.
To fix the notation, $\envof{\apavar}=\atype$ or $\typeof{\apavar}{\atype}\in \env$ means $\apavar$ is assigned $\atype$ in environment~$\env$.
We write $\env,\typeof{\apavar}{\atype}$ for $\env\uplus\set{\typeof{\apavar}{\atype}}$.
If the type of $\apavar$ does not matter, we just write $\env,\typeofany{\apavar}$.
The initial type environment  $\envinit$  assigns $\emptyset$ to every pointer and angel.

Our type system will be control-flow sensitive~\cite{CWM99,DBLP:conf/pldi/FosterTA02,HS06}, which means type judgements take the form
\begin{align*}
	\typestmt{\env_\mathit{pre}}{\astmt}{\env_\mathit{post}}\ .
\end{align*}
The thing to note is that the type assigned to a pointer/angel is not constant throughout the program but depends on the commands that have been executed.
So we may have the type assignment $\typeof{\apavar}{\atype}$ in $\env_\mathit{pre}$ but $\typeof{\apavar}{\atypep}$ in the type environment $\env_\mathit{post}$ with $\atype\neq \atypep$.

Control-flow sensitivity requires us to formulate how types change under the execuction of SMR commands.
Towards a definition, we associate with every type a set of locations in $\anobs = \baseobs\times \anobs'$.
Guarantee~$\gcustom{\alocset}$ already comes with a set of locations.
Guarantee~$\gsafeaccess$ grants safe access to the tracked address.
In terms of locations, it should not be possible to free the address stored in $\adrvar$.
We define $\mathit{SafeLoc}(\anobs)$ to be the largest set of locations in the SMR automaton that is closed under interference from other threads and for which there is no transition $\alocation\trans{\translab{\mathtt{free}(\anadr)}{\aguard}}\alocationp$ with $\alocation\in\mathit{SafeLoc}(\anobs)$, $\alocationp$ not accepting, and $\aguard$ implying $\anadr=\adrvar$.
Guarantee~$\gactive$ is characterized by location $\ref{obs:base:init}$ in $\baseobs$.
Indeed, a pointer is active iff  $\baseobs$ is in its initial location.
For $\glocal$ we also use location $\ref{obs:base:init}$.
The~discussion~yields:
\begin{align*}
	\locsof{\emptyset}\; &:=\; \locsof{\anobs}&\locsof{\gcustom{\alocset}}\; &:=\; \alocset \\
	\locsof{\gactive}\; &:=\; \set{\ref{obs:base:init}}\times \locsof{\anobs'}
	&\locsof{\gsafeaccess}\; &:=\; \mathit{SafeLoc}(\anobs)\\
	\locsof{\glocal}\; &:=\; \set{\ref{obs:base:init}}\times \locsof{\anobs'}&
	\locsof{\atype_1\wedge \atype_2}\;&:=\;\locsof{\atype_1}\cap\locsof{\atype_2}\ .
\end{align*}

The set of locations associated with a type is defined to over-approximate the locations reachable in the SMR automaton by the (history of the) current computation.
With this understanding, it should be possible for command $\acom$ to transform $\typeof{\apavar}{\atype}$ into $\typeof{\apavar}{\atypep}$ if the locations associated with~$\atypep$ over-approximate the post-image under $\apavar$ and $\acom$ of the locations associated with $\atype$.
We define the \emph{type transformer relation} $\checkoftype{\atype}{\apavar}{\acom}{\atypep}$ by the following conditions:
\begin{eqnarray*}
	\lpostof{\apavar}{\acom}{\locsof{\atype}} & \subseteq & \locsof{\atypep} \\
	\isvalidof{\atypep} & \Rightarrow & \isvalidof{\atype} \\
	\set{\mkern+1mu\glocal,\gactive\mkern+1mu}\cap\atypep & \subseteq & \set{\mkern+1mu\glocal,\gactive\mkern+1mu}\cap\atype
	\ .
\end{eqnarray*}
The over-approximation of the post-image is the first inclusion.
The implication states that SMR commands cannot validate pointers.
We can, however, deduce from the fact that the address has not been retired ($\gactive$ or $\glocal$) and the SMR command has been executed, that it is safe to access the address~($\gsafeaccess$).
The last inclusion states that SMR commands cannot establish the guarantees $\glocal$ and~$\gactive$.
It is worth pointing out that the relation $\checkoftype{\atype}{\apavar}{\acom}{\atypep}$ only depends on the SMR automaton, up to a choice of variable names.
This means we can tabulate it to guarantee quick access when typing a program.
We also write $\checkof{\env}{\acom}{\envp}$ if we have $\checkoftype{\envof{\apavar}}{\apavar}{\acom}{\envpof{\apavar}}$ for all pointers/angels $\apavar$.
We write $\checknocomof{\env}{\acom}{\envp}$ if we take the post-image to be the identity.
For an example, refer to \Cref{sec:example:type-transformer}.


Guarantees $\glocal$ and $\gactive$ are special in that their sets of locations, $\locsof{\glocal}$ and $\locsof{\gactive}$, are not closed under interference.
For $\glocal$, the type rules ensure interference freedom.
They do so by enforcing that $\retire$ is not invoked with invalid pointers. 
Hence, the fact that $\glocal$-pointers have no valid aliases implies that other threads cannot retire them.
So $\baseobs$ remains in $\ref{obs:base:init}$ no matter the interference.
For~$\gactive$, the type rules account for interference.
We define an operation $\rmtransientof{\env}$ that takes an environment and removes all $\gactive$ guarantees for thread-local pointers and angels:
\begin{align*}
	\rmtransientof{\env}
	\;:=\;
	\setcond{\typeof{\apavar}{\atype\setminus\set{\gactive}}}{\typeof{\apavar}{\atype}\in\env \wedge \apavar\notin\svars}
	\:\cup\:
	\setcond{\typeof{\apavar}{\emptyset}}{\apavar\in\svars}
	\ .
\end{align*}
The operation also has an effect on shared pointers and angels where it removes all guarantees.
The reasoning is as follows.
An interference on a shared pointer or angel may change the address being pointed to.
Guarantees express properties about addresses, indirectly via their pointers.
As we do not have any information about the new address, the pointer receives the empty set of guarantees.


\subsection{Type System}
\label{sec:type_system:rules}

Our type system is given in \Cref{fig:type-rules}.
We write $\typejudge{\envinit}{\astmt}{\env}$ to indicate that $\typestmt{\envinit}{\astmt}{\env}$ is derivable with the given rules.
We write $\typechecks{\astmt}$ if there is an environment $\env$ so that $\typejudge{\envinit}{\astmt}{\env}$.
In this case, we say the program \emph{type~checks}.
Soundness will show that a type check entails pointer race freedom and the absence of double retires.

We distinguish between rules for statements and rules for primitive commands.
We assume that primitive commands $\acom$ are wrapped inside an atomic block, like $\atomicbegin;\,\acom;\,\atomicend$.
With this assumption, the rules for primitive commands need not handle the fact that guarantee $\gactive$ is not closed under interference.
Interference will be taken into account by the rules for statements.
The assumption of atomic blocks can be established by a simple preprocessing of the program.
We do not make it explicit but assume it has been applied.


\begin{figure}
	\begin{tcolorbox}
	\def\MathparLineskip{\lineskip=6.5mm}
	\begin{subfigure}[t]{\textwidth}
		\small
		\begin{mathpar}
			\infrule{assign1}
				{\atypep=\atype\setminus\set{\glocal}}
				{\typecom{\env,\typeofany{\apvar},\typeof{\apvarp}{\atype}}{\apvar:=\apvarp}{\env,\typeof{\apvar}{\atypep},\typeof{\apvarp}{\atypep}}}
			\and
			\infrule{assign2}
				{\envof{\apvarp}={\atype} \\ \isvalidof{\atype}}
				{\typecom{\env,\typeofany{\apvar}}{\apvar:=\psel{\apvarp}}{\env,\typeof{\apvar}{\emptyset}}}
			\and
			\infrule{assign3}
				{\envof{\apvar}={\atype} \\ \isvalidof{\atype} \\ \atypepp=\atypep\setminus\set{\glocal}}
				{\typecom{\env,\typeof{\apvarp}{\atypep}}{\psel{\apvar}:=\apvarp}{\env,\typeof{\apvarp}{\atypepp}}}
			\and
			\infrule{assign4}
				{~}
				{\typecom{\env}{\advar:=\opof{\vecof{\advar}}}{\env}}
			\and
			\infrule{assign5}
				{\envof{\apvarp}={\atype} \\ \isvalidof{\atype}}
				{\typecom{\env}{\advar:=\dsel{\apvarp}}{\env}}
			\and
			\infrule{assign6}
				{\envof{\apvar}={\atype} \\ \isvalidof{\atype}}
				{\typecom{\env}{\dsel{\apvar}:=\advar}{\env}}
			\and
			\infrule{malloc}
				{\apvar\notin\svars \\ \atype=\set{\glocal}}
				{\typecom{\env,\typeofany{\apvar}}{\apvar:=\malloc}{\env,\typeof{\apvar}{\atype}}}
			\and
			\infrule{assume1}
				{\isvalidof{\atype} \\ \isvalidof{\atypep} \\ \atypepp=(\atype\wedge\atypep)\setminus\set{\glocal}}
				{\typecom{\env,\typeof{\apvar}{\atype},\typeof{\apvarp}{\atypep}}{\assumeof{\apvar=\apvarp}}{\env,\typeof{\apvar}{\atypepp},\typeof{\apvarp}{\atypepp}}}
			\and
			\infrule{assume2}
				{\acond\not\equiv\apvar=\apvarp}
				{\typecom{\env}{\assumeof{\acond}}{\env}}
			\and
			\infrule{equal}
				{\atypepp=\atype\wedge\atypep}
				{\typecom{\env,\typeof{\apvar}{\atype},\typeof{\apvarp}{\atypep}}{\invariantof{\apvar=\apvarp}}{\env,\typeof{\apvar}{\atypepp},\typeof{\apvarp}{\atypepp}}}
			\and
			\infrule{active}
				{\atypep=\atype\wedge\set{\gactive}}
				{\typecom{\env,\typeof{\apavar}{\atype}}{\invariantof{\activeof{\apavar}}}{\env,\typeof{\apavar}{\atypep}}}
			\and 
			\infrule{angel}
				{\aghostvar\notin\svars}
				{\typecom{\env,\typeofany{\aghostvar}}{\ghostof{\chooseof{\aghostvar}}}{\env,\typeof{\aghostvar}{\emptyset}}}
			\and
			\infrule{member}
				{\env(\aghostvar)=\atypep \\ \atypepp=\atype\wedge\atypep}
				{\typecom{\env,\typeof{\apvar}{\atype}}{\ghostof{\containsof{\apvar}{\aghostvar}}}{\env,\typeof{\apvar}{\atypepp}}}
			\and 
			\infrule{enter}
				{
					\safecallof{\env}{\afuncof{\vecof{\apvar},\vecof{\advar}}}
					\\
					\checkof{\env}{\enterof{\afuncof{\vecof{\apvar},\vecof{\advar}}}}{\envp}
					\\\\
					\afuncof{\vecof{\apvar},\vecof{\advar}}\equiv\retireof{\apvar}
					\wedge
					\envof{\apvar}={\atype}
					\implies
					\gactive\in\atype
				}
				{\typecom{\env}{\enterof{\afuncof{\vecof{\apvar},\vecof{\advar}}}}{\envp}}
			\and
			\infrule{exit}
				{\checkof{\env}{\exitof{\afunc}}{\envp}}
				{\typecom{\env}{\exitof{\afunc}}{\envp}}
		\end{mathpar}
		\vspace{1mm}
		\subcaption{Type rules for primitive commands.}
		\label{fig:type-rules:com}
	\end{subfigure}
	\\[4mm]
	\begin{subfigure}[t]{\textwidth}
		\small
		\begin{mathpar}
			\infrule{infer}
				{\checknocomof{\env_1}{\epsilon}{\env_2}\\ \typestmt{\env_2}{\astmt}{\env_3} \\ \checknocomof{\env_3}{\epsilon}{\env_4}}
				{\typestmt{\env_1}{\astmt}{\env_4}}
			\and
			\infrule{begin}
				{~}
				{\typestmt{\env}{\atomicbegin}{\env}}
			\and
			\infrule{end}
				{~}
				{\typestmt{\env}{\atomicend}{\rmtransientof{\env}}}
			\and
			\infrule{seq}
				{\typestmt{\env}{\astmt_1}{\envp} \\ \typestmt{\envp}{\astmt_2}{\envpp}}
				{\typestmt{\env}{\astmt_1;\astmt_2}{\envpp}}
			\and
			\infrule{choice}
				{\typestmt{\env}{\astmt_1}{\envp} \\ \typestmt{\env}{\astmt_2}{\envp}}
				{\typestmt{\env}{\astmt_1\oplus\astmt_2}{\envp}}
			\and
			\infrule{loop}
				{\typestmt{\env}{\astmt}{\env}}
				{\typestmt{\env}{\astmt^*}{\env}}
		\end{mathpar}
		\vspace{1mm}
		\subcaption{Type rules for statements.}
		\label{fig:type-rules:stmt}
	\end{subfigure}
	\caption{%
		Type rules.
	}
	\label{fig:type-rules}
	\end{tcolorbox}
\end{figure}

The rules for primitive commands, \Cref{fig:type-rules:com}, that are not related to SMR are straightforward.
Rule~\ref{rule:assign1} copies the type of the right-hand side pointer to the left-hand side pointer of the assignment.
Additionally, both pointers lose their $\glocal$ qualifier since the command creates an alias.
Rule~\ref{rule:assign2} ensures that the dereferenced pointer is valid and then sets the type of the assigned pointer to the empty type.
The assigned pointer does not receive any guarantees since we do not track guarantees for selectors.
Rule~\ref{rule:assign3} checks the dereferenced pointer for validity and removes $\glocal$ from the pointer that is aliased.
Data assignments, Rules \ref{rule:assign4}, \ref{rule:assign5}, and \ref{rule:assign6}, simply check dereferenced pointers for validity.
Allocations grant the target pointer the $\glocal$ guarantee, Rule~\ref{rule:malloc}.
Assumptions of the form $\apvar=\apvarp$ check that both pointers are valid and join the type information, Rule~\ref{rule:assume1}.
Guarantee $\glocal$ is removed due to the alias.
All other assumptions have no effect on the type environment, Rule~\ref{rule:assume2}.
Similarly, Rule~\ref{rule:equal} joins type information in the case of assertions.
However, no validity check is performed and $\glocal$ is not removed.
Rule~\ref{rule:active} adds the $\gactive$ guarantee. Note that $\apavar$ is a pointer or an angel.
Angels are always local variables.
Their allocation does not justify any guarantees, in particular not $\glocal$, as they hold full sets of addresses, Rule~\ref{rule:angel}.
We can also assert membership of an address held by a pointer in a set of addresses held by an angel, Rule~\ref{rule:member}.

SMR-related commands may change the entire type environment, rather than manipulating only the pointers that occur syntactically in the command.
This is because of pointer aliasing on the one hand, and because of the SMR automaton on the other hand (for instance, $\enterQ$ has an effect on all pointers).
The post type environment of Rules~\ref{rule:enter} and \ref{rule:exit} simply infers guarantees wrt. the pre type environment and the emitted event.
Note that this is the only way to infer SMR-specific guarantees $\gcustom{\alocset}$, i.e., these guarantees solely depend on the SMR commands.
Moreover, Rule~\ref{rule:enter} performs a pointer race check as defined in \Cref{sec:prf}.
Predicate $\safecallof{\env}{\afuncof{\vecof{\apvar},\vecof{\advar}}}$ evaluates to \textit{true} iff the command $\enterof{\afuncof{\vecof{\apvar},\vecof{\advar}}}$ is guaranteed to be free from pointer races given the types in $\env$.
The formalization coincides with \Cref{def:unsafe-enter} except that it replaces  $\validof{}$ by the under-approximation $\isvalidof{\cdot}$.
A special case of Rule~\ref{rule:enter} is the invocation of $\retireof{\apvar}$, which requires the argument $\apvar$ to be active.
This will prevent double retires.

The rules for statements are given in \Cref{fig:type-rules:stmt}. Rule~\ref{rule:infer} allows for type transformations at any point, in particular to establish the proper pre/post environments for the Rules~\ref{rule:choice} and \ref{rule:loop}.
Entering an atomic block, Rule~\ref{rule:begin}, has no effect on the type environment.
Exiting an atomic block allows for interference.
Hence, Rule~\ref{rule:exit} removes any type information from the type environment that can be tampered with by other threads.
Sequences of statements are straightforward, Rule~\ref{rule:seq}.
Choices require a common pre and post type environment, Rule~\ref{rule:choice}.
Loops require a type environment that is stable under the loop body, Rule~\ref{rule:loop}.


\subsection{Soundness}
\label{sec:type_system:soundness}

Our goal is to show that a successful type check $\typechecks{\astmt}$ implies pointer race freedom and the absence of double retires.
There are two challenges.
We already commented on the problematic sequential to concurrent lifting that motivated the definition of interference freedom.
The second difficulty is that the type system relies on the program's invariant annotations.
The set of computations ignores these annotations.
To reconcile the assumptions about the program, we have to prove the invariant annotations correct.
Interestingly, we can use garbage collection for this purpose, meaning the invariant annotations only have to hold in $\nosem$, although the following results refer to the computations in $\freesemobs$.
Intuitively, garbage collection is sufficient because we have elision support (cf. \Cref{sec:preliminaries}): it allows us to remove frees from a computation and then apply \Cref{Lemma:GC}.

Pointer race freedom and the absence of double retires will be consequences of a more general soundness result that makes explicit the information tracked by our type system. 
We give some auxiliary definitions that ease the formulation.
We write $\tau\models_{\varphi} \atype$ if there is a location $\alocation\in\locsof{\atype}$ associated with the type $\atype$ so that $(\initlocation, \varphi)\trans{\historyof{\tau}}(\alocation, \varphi)$. 
The definition is parameterized in the valuation~$\varphi$ determining the thread and the address to be tracked. 
We write $\tau, \athread \models \typeof{\apavar}{\atype}$ if for every address $\anadr\in\heapcomputof{\tau}{\apavar}$ we have $\tau\models_{\varphi} \atype$, with $\varphi=\set{\threadvar\mapsto \athread, \adrvar\mapsto \anadr}$. 
The thread is given. 
The address is the one held by the pointer or among the ones held by the angel, as determined by the computation. 
We write $\tau, \athread \models \env$ if we have $\tau, \athread \models \typeof{\apavar}{\atype}$ for all type assignments $\typeof{\apavar}{\atype}\in \env$.

Soundness states that a type environment annotating a program point approximates the history of every computation reaching this point.  
Moreover, $\isvalidof{\cdot}$ approximates validity. 
To make this precise, we define the relation $\models \typestmt{\envinit}{\astmt}{\env}$.
It requires for every $\tau\in\freesemobs$  where thread $\athread$ executes $\astmt$ to completion that
\begin{inparaenum}[(i)]
	\item $\tau, \athread \models \env$ holds, and
	\item for every $\typeof{\apvar}{\atype}\in \env$ with $\isvalidof{\atype}$ we have $\apvar\in\validof{\tau}$.
\end{inparaenum}
The soundness result now lifts the syntactic derivation relation $\typechecks{\,}$ to the semantic soundness relation $\models$.

\begin{theorem}[Soundness]\label{Theorem:Soundness}
	If $\invholdsof{\nosem}$, then $\vdash \typestmt{\envinit}{\astmt}{\env}$ implies $\models \typestmt{\envinit}{\astmt}{\env}$.
\end{theorem}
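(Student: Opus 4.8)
The plan is to strengthen the statement to arbitrary pre/post environments and prove it by induction on the type derivation, separating a thread-local (sequential) argument handled by the induction from a cross-thread (interference) argument handled by a standalone stability lemma, in the style of \citet{DBLP:journals/acta/OwickiG76}. First I would generalize $\models$ from $\typestmt{\envinit}{\astmt}{\env}$ to $\models \typestmt{\env_1}{\astmt}{\env_2}$: for every $\tau = \rho.\pi \in \freesemobs$ in which a thread $\athread$ reaches the entry of $\astmt$ after $\rho$ with $\rho, \athread \models \env_1$ and $\apvar \in \validof{\rho}$ for every $\typeof{\apvar}{\atype} \in \env_1$ with $\isvalidof{\atype}$, and then runs $\astmt$ to completion, we have $\tau, \athread \models \env_2$ and the analogous validity approximation for $\env_2$. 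The theorem is the instance $\env_1 = \envinit$, $\rho = \epsilon$, whose precondition holds because $\envinit$ assigns the empty type $\emptyset$ to every variable, for which $\locsof{\emptyset}$ is the full location set and $\isvalidof{\emptyset}$ is false.

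Second, I would run the induction over the rules of \Cref{fig:type-rules}. The statement rules \ref{rule:seq}, \ref{rule:choice}, and \ref{rule:infer} are immediate compositions, and \ref{rule:loop} follows because its premise supplies a loop-invariant environment stable under the body. For the primitive commands (wrapped in atomic blocks, so no interference occurs inside), each case shows that the command's memory update and emitted event realize the corresponding type rule. For \ref{rule:enter} and \ref{rule:exit} the decisive fact is the first clause $\lpostof{\apavar}{\acom}{\locsof{\atype}} \subseteq \locsof{\atypep}$ of the type-transformer relation $\checkoftype{\atype}{\apavar}{\acom}{\atypep}$: if $\historyof{\rho}$ reaches a location of $\locsof{\atype}$ for the tracked address, then the extended history $\historyof{\rho.\pi}$ reaches one of $\locsof{\atypep}$, giving $\tau, \athread \models \typeof{\apavar}{\atypep}$. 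The remaining two clauses of the transformer encode that SMR calls neither validate pointers nor create $\glocal$ or $\gactive$, which matches the semantics and preserves clause (ii). The pointer-race side conditions of \ref{rule:enter}, namely $\safecallof{\env}{\afuncof{\vecof{\apvar},\vecof{\advar}}}$ and the requirement that a $\retireof{\apvar}$ argument carry $\gactive$, are discharged using that $\isvalidof{\cdot}$ soundly under-approximates $\validof{\rho}$, which is exactly clause (ii) of the inductive hypothesis.

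Third comes the use of the hypothesis $\invholdsof{\nosem}$, which enters through rules \ref{rule:active}, \ref{rule:member}, and \ref{rule:equal}. For \ref{rule:active}, adding $\gactive$ is sound iff the tracked address is active, i.e. $\baseobs$ sits in $\ref{obs:base:init}$ and hence $\historyof{\tau}$ lands in $\locsof{\gactive} = \set{\ref{obs:base:init}} \times \locsof{\anobs'}$; the annotation $\invariantof{\activeof{\apavar}}$ asserts activeness, but only under GC is it assumed. To transfer it to $\tau \in \freesemobs$ I would delete the $\free$ actions of $\tau$ to obtain $\tau'$; since frees carry the empty update and $\freesemobs$ allocates only fresh addresses, $\tau'$ has identical pointer values and identical $\retire$ events and, by \Cref{Lemma:GC}, lies in $\nosem$. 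Then $\invholdsof{\nosem}$ yields that the address is not retired along $\tau'$, hence not retired along $\tau$; and because $\baseobs$, a product factor by part (ii) of \Cref{assumption:observers-accepting-state}, routes a $\free$ of a non-retired address into an accepting state, spec-membership of $\tau$ forces every executed $\free$ to be preceded by a $\retire$, so not-retired implies not-freed along $\tau$ and the address is active. The equality and membership annotations transfer directly, since pointer values and the angelic choices can be kept identical across the deletion of frees.

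The main obstacle is interference, because the derivation is sequential whereas $\tau$ interleaves all threads running the program. I would establish a separate interference-freedom lemma: every atomic step of a thread $\athreadp \neq \athread$ preserves each fact $\tau, \athread \models \typeof{\apavar}{\atype}$. For $\gcustom{\alocset}$ and for $\gsafeaccess$ this is immediate, since $\alocset$ and $\mathit{SafeLoc}(\anobs)$ are required closed under interference, so foreign SMR events cannot leave their location sets. The delicate guarantees are $\gactive$ and $\glocal$, whose location sets are not closed: here rule \ref{rule:end} applies $\rmtransientof{\env}$ at every atomic boundary, discarding $\gactive$ (and all guarantees on shared pointers) so that no $\gactive$ fact is ever relied upon across an interference point, while $\glocal$ survives because a $\glocal$-pointer has no valid alias and \ref{rule:enter} forbids retiring invalid pointers, so no foreign thread can retire its address and move $\baseobs$ out of $\ref{obs:base:init}$. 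Splicing the sequential induction together with this stability lemma yields $\models \typestmt{\envinit}{\astmt}{\env}$; clause (ii) then gives pointer race freedom, and the $\gactive$ check on $\retire$ gives the absence of double retires. I expect the $\glocal$ preservation and the precise statement of the elision bridge to be the most error-prone points.
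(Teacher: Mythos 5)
Your proposal assembles most of the right ingredients (closure under interference for $\gcustom{\alocset}$ and $\gsafeaccess$, the $\rmtransientof{\cdot}$ treatment of $\gactive$ at atomic boundaries, the no-alias argument for $\glocal$, the free-deletion bridge from $\freesemobs$ to $\nosem$ for discharging annotations, and PRF plus absence of double retires as corollaries), but its overall decomposition has a genuine circularity. You propose an induction on the type derivation of thread $\athread$'s statement, supported by a \emph{standalone} interference-freedom lemma. That lemma is not provable standalone. Consider preserving a fact $\tau,\athread\models\typeof{\apvar}{\glocal}$ (no valid aliases, not retired) across an action of a foreign thread $\athreadp$ executing $\enterof{\retireof{\apvarp}}$. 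To conclude that this retire cannot hit $\apvar$'s address, you must know that $\apvarp$ is \emph{semantically} valid after $\tau$ -- but the premise of Rule~\ref{rule:enter} only gives you $\gactive\in\envof{\apvarp}$ in $\athreadp$'s \emph{type environment}, and the bridge from $\isvalidof{\cdot}$ in $\athreadp$'s environment to $\apvarp\in\validof{\tau}$ is exactly the soundness statement you are in the middle of proving, instantiated for the interfering thread. The same issue arises for preserving clause (ii) under foreign frees and retires in general. So the stability lemma depends on soundness for all other threads, and soundness for each thread depends on stability: the two cannot be layered as "derivation induction + separate lemma."

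The paper resolves this by running a \emph{single simultaneous induction on the length of computations} $\tau\in\freesemobs$, with an induction hypothesis that quantifies over \emph{all} threads at once: for every thread $\athread$, the typing of the induced straight-line program $\astmtof{\tau,\athread}$ (obtained by projecting $\tau$ to $\athread$'s commands, and shown to be derivable whenever the whole program type checks) satisfies the semantic properties, strengthened by two extra clauses -- (iii) $\glocal\in\atype$ implies $\apvar$ has no valid aliases, and (iv) $\gactive\in\atype$ implies $\athread$ is inside an atomic block. In the step case where $\athreadp\neq\athread$ appends an action, the hypothesis \emph{for $\athreadp$} supplies precisely the fact your standalone lemma lacks: $\athreadp$'s retired pointer is valid, hence not an alias of $\athread$'s $\glocal$-pointer; and clause (iv) makes the $\gactive$ case vacuous rather than relying informally on "never relying on $\gactive$ across interference points." Your invariant-transfer argument (deleting frees, using \Cref{Lemma:GC}, and pushing activeness back through $\baseobs$) is essentially the paper's elision-based construction, so that part stands; but the core of the proof must be restructured from a derivation induction into the computation-length induction with an all-threads hypothesis, since otherwise the $\glocal$ and validity preservation steps have no sound foundation.
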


\begin{proof}[Proof Sketch]
	We proceed by induction on the length of computations $\tau$ from $\freesemobs$.
	During the proof, we need to access the types encountered by thread $\athread$ along the execution of $\astmt$.
	To make them explicit, we define the straight-line version $\astmtof{\tau, \athread}$ of $\astmt$ induced by $\tau$ and $\athread$.
	It is obtained by projecting $\tau$ to the commands of thread $\athread$.
	One can show that if we can derive a typing for the program then we can derive it for the induced straight-line program: \[
		\vdash \typestmt{\envinit}{\astmt}{\env}\qquad\text{implies}\qquad\vdash \typestmt{\envinit}{\astmtof{\tau, \athread}}{\env}
		\ .
	\]
	The implication should be intuitive. 
	The typing of the overall program can be seen as an intersection over the typings of the induced straight-line programs.

	The induction hypothesis links the current type environment $\env$ derived for the straight-line program to the semantic information carried by the computation. 
	The hypothesis strengthens the requirements (i) and (ii) 
	in the definition of soundness by the following two conditions, where we assume $\envof{\apavar}={\atype}$. 
	\begin{inparaenum}[(i)]
		\setcounter{enumi}{2}
		\item
			If $\glocal\in \atype$, then $\apavar$ is a pointer that does not have valid aliases.
			That is, $\heapcomputof{\tau}{\apavar} = \heapcomputof{\tau}{\apvarp}$ implies $\apvarp\notin\validof{\tau}$.
			Note that angels cannot obtain $\glocal$ according to the type rules.
		\item
			If $\gactive\in\atype$, then thread $\athread$ is in an atomic block. 
	\end{inparaenum}
	The interesting argumentation in the induction step is in the case when another thread appends an action, $\tau.\anact$. 
	It can be summarized as follows.
	Property (i) continues to hold for $\tau.\anact$ because the type $\atype$ of $\apavar$ is closed under interference; for $\glocal$ and $\gactive$ we argue separately in the following.
	If $\glocal\in\atype$, then $\anact$ cannot use a valid alias of $\apavar$.
	In particular, it cannot retire $\apavar$ according to the premise of Rule~\ref{rule:enter}.
	If $\gactive\in\atype$, then thread $\athread$ is in an atomic block and there is no chance to append action $\anact$ of another thread.
	The case does not occur.
	Consider property (ii).
	Assume $\isvalidof{\atype}$ holds.
	That is, $\atype$ contains one of $\gactive,\glocal,\gsafeaccess$.
	If $\glocal\in\atype$ or $\gactive\in\atype$, then the above reasoning for (i) already implies~(ii).
	Otherwise, we have $\gsafeaccess\in\atype$.
	It implies (ii) because $\gsafeaccess$ is closed under interference.
	Property (iii) follows from the fact that $\anact$ cannot contain, and thus cannot create, a valid alias of $\apavar$.
	Lastly, to conclude property (iv), note that another thread cannot append an action while $\athread$ is inside an atomic block.
\end{proof}

The first consequence of soundness is that a successful type check implies pointer race freedom.
Phrased differently, the rules from \Cref{fig:type-rules} allow for a successful typing only if there are no pointer races. 
That is, our type system performs a pointer race freedom check indeed.
\begin{proposition}
	\label{proposition:typecheckprf}
	If $\invholdsof{\nosem}$, then $\typechecks{\aprog}$ implies that $\freesemobs[\aprog]$ is pointer-race-free.
\end{proposition}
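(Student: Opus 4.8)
The plan is to obtain pointer race freedom as a corollary of the soundness result, \Cref{Theorem:Soundness}, by contraposition. Suppose $\typechecks{\aprog}$ and $\invholdsof{\nosem}$ but $\freesemobs[\aprog]$ contains a pointer race, i.e.\ a computation $\tau.\anact\in\freesemobs[\aprog]$ whose last action $\anact$, executed by a thread $\athread$ with command $\acom$, matches one of the four clauses of \Cref{def:prf-brief}; since $\freesemobs[\aprog]$ is prefix closed, $\tau\in\freesemobs[\aprog]$. The first step is to expose the type environment that the derivation assigns to the program point just before $\anact$. For this I would reuse the straight-line projection from the proof of \Cref{Theorem:Soundness}: the code $\athread$ has executed along $\tau$ projects to a straight-line program $\astmtof{\tau,\athread}$, and from $\typechecks{\aprog}$ one derives $\vdash\typestmt{\envinit}{\astmtof{\tau,\athread}}{\env_\mathit{pre}}$, where $\env_\mathit{pre}$ is exactly the pre-environment against which the rule typing $\acom$ fires. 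Instantiating \Cref{Theorem:Soundness} with the statement $\astmtof{\tau,\athread}$ (legitimate because $\invholdsof{\nosem}$) yields $\models\typestmt{\envinit}{\astmtof{\tau,\athread}}{\env_\mathit{pre}}$, and evaluating it at $\tau$, where $\athread$ completes $\astmtof{\tau,\athread}$, gives the crucial fact: for every $\typeof{\apvar}{\atype}\in\env_\mathit{pre}$ with $\isvalidof{\atype}$ we have $\apvar\in\validof{\tau}$. In words, any pointer the types deem valid just before $\anact$ is semantically valid in $\tau$.

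With this in hand, each of the first three clauses of \Cref{def:prf-brief} contradicts a premise of the rule that must have typed $\acom$. If $\anact$ dereferences an invalid pointer, then $\acom$ is one of the memory accesses typed by Rules~\ref{rule:assign2}, \ref{rule:assign3}, \ref{rule:assign5}, and~\ref{rule:assign6}, each of which carries $\isvalidof{\cdot}$ of the dereferenced pointer as a premise; soundness then places that pointer in $\validof{\tau}$, contradicting invalidity. If $\anact$ is an equality assumption over an invalid pointer, then $\acom\equiv\assumeof{\apvar=\apvarp}$ and it was typed by Rule~\ref{rule:assume1}, which checks both operands valid (disequality assumptions are typed by Rule~\ref{rule:assume2}, but these are not races by \Cref{def:prf-brief}). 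If $\anact$ retires an invalid pointer, then $\acom\equiv\retireof{\apvar}$ was typed by the $\retire$ case of Rule~\ref{rule:enter}, whose side condition forces $\gactive\in\env_\mathit{pre}(\apvar)$ and hence $\isvalidof{\env_\mathit{pre}(\apvar)}$, so again $\apvar\in\validof{\tau}$, a contradiction.

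The remaining and, I expect, hardest case is the unsafe call. Here Rule~\ref{rule:enter} establishes $\safecallof{\env_\mathit{pre}}{\afuncof{\vecof{\apvar},\vecof{\advar}}}$, which is \Cref{def:unsafe-enter} with the semantic set $\validof{}$ replaced by the under-approximation $\isvalidof{\cdot}$. The obstacle is to turn this type-level guarantee into genuine safety of $\anact$. The key is a monotonicity in the arguments held fixed: soundness gives $\isvalidof{\env_\mathit{pre}(\apvar_i)}\Rightarrow\apvar_i\in\validof{\tau}$, so the index set pinned by the $\isvalidof{\cdot}$-based test is a subset of the one pinned by \Cref{def:unsafe-enter}. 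Consequently any witness $(\anadrpp,\vecof{\anadrp})$ exhibiting an unsafe call in the true sense already keeps the $\isvalidof{\cdot}$-pinned arguments fixed, hence would equally witness a violation of $\safecallof{\env_\mathit{pre}}{\afuncof{\vecof{\apvar},\vecof{\advar}}}$; since the latter holds, $\anact$ cannot be an unsafe call. As all four clauses are thereby excluded, $\freesemobs[\aprog]$ is pointer-race-free. The two points that need care are precisely this $\safecallof{\cdot}{\cdot}$-to-semantics transfer and the reduction of the completed-execution formulation of \Cref{Theorem:Soundness} to the program point immediately preceding $\anact$.
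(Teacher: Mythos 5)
Your proposal is correct and follows essentially the same route as the paper: project onto the straight-line program $\astmtof{\tau,\athread}$, invoke \Cref{Theorem:Soundness} to turn $\isvalidof{\cdot}$ into membership in $\validof{\tau}$, and then refute each race clause against the premises of Rules~\ref{rule:assign2}--\ref{rule:assign6}, \ref{rule:assume1}, and \ref{rule:enter}. Even your treatment of the unsafe-call case --- observing that the $\isvalidof{\cdot}$-pinned indices form a subset of the semantically pinned ones, so a semantic witness would falsify $\safecallof{\env_\mathit{pre}}{\afuncof{\vecof{\apvar},\vecof{\advar}}}$ --- is precisely the implication chain in the paper's detailed proof.
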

The proposition gives an effective means of checking the premise of \Cref{thm:PRF-guarantee}: determine a typing using the proposed type system (cf. \Cref{sec:type_checking}) and discharge the invariant annotations using an off-the-shelf verification tool (cf. \Cref{sec:invariant-check}).

\begin{proof}[Proof Sketch]
	To see the proposition, consider $\tau.\anact\in\anobs\freesem[\aprog]$.
	We focus on the case where the last action is a dereference, say due to command $\acom$ being $\apvar:=\psel{\apvarp}$.
	The remaining cases in the definition of pointer races are similar.
	We show that the dereference is safe, $\apvarp\in\validof{\tau}$.
	Let thread~$\athread$ perform the dereference.
	Let $\astmtof{\tau.\anact, \athread} = \astmt;\acom$ be the induced straight-line program.
	As observed above, since the program type checks also $\astmt;\acom$ will type check.
	Say we can derive $\typestmt{\envinit}{\astmt;\acom}{\env}$.
	The only way to type a composition $\astmt;\acom$ is Rule~\ref{rule:seq}.
	It requires an environment $\envp$ so that $\typestmt{\envinit}{\astmt}{\envp}$ and $\typestmt{\envp}{\acom}{\env}$ are derivable.
	The only way to type an assignment $\apvar:=\psel{\apvarp}$ is Rule~\ref{rule:assign2}.
	By its premise, $\envp(\apvarp)=\atype$ with~$\isvalidof{\atype}$.
	\Cref{Theorem:Soundness} yields $\apvarp\in \validof{\tau}$.
	The dereference of $\apvarp$ is safe.
\end{proof}

The second consequence of soundness is that a successful type check means the program does not perform double retires.
This is the precondition for a meaningful application of $\baseobs$ (cf. \Cref{sec:preliminaries}).
\begin{proposition}
	\label{proposition:retire-check}
	If $\invholdsof{\nosem}$, then $\typechecks{\aprog}$ implies that $\allsemobs[\aprog]$ does not perform double retires.
\end{proposition}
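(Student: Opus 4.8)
The plan is to argue by contradiction, reusing almost verbatim the machinery of the proof of \Cref{proposition:typecheckprf} and isolating the single new ingredient: the $\retire$-specific premise of Rule~\ref{rule:enter}. Since $\typechecks{\aprog}$ and $\invholdsof{\nosem}$, \Cref{proposition:typecheckprf} tells us that $\freesemobs[\aprog]$ is pointer-race-free, so \Cref{thm:PRF-guarantee} applies and relates $\allsemobs[\aprog]$ to $\nosem[\aprog]$. The strategy is to first establish double-retire-freedom in the garbage-collected world $\nosem[\aprog]$, where \Cref{Theorem:Soundness} is directly available, and then transport it to the full semantics. The key preliminary observation is that $\nosem[\aprog]\subseteq\freesemobs[\aprog]$: a computation without any $\free$ trivially satisfies the SMR specification (by \Cref{assumption:observers-accepting-state}, accepting states are reached only via $\free$), so every $\tau\in\nosem[\aprog]$ is constrained by \Cref{Theorem:Soundness}.

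For the core argument, suppose toward a contradiction that some $\tau.\anact\in\nosem[\aprog]$ is a double retire, with $\anact=(\athread,\enterof{\retireof{\apvar}},\emptyset)$, $\heapcomputof{\tau}{\apvar}=\anadr$, and $\anadr$ retired in $\tau$ without an intervening $\freeof{\anadr}$. As in the proof of \Cref{proposition:typecheckprf}, the induced straight-line program $\astmtof{\tau.\anact,\athread}$ type checks; writing it as $\astmt;\acom$ with $\acom=\enterof{\retireof{\apvar}}$, the only applicable rule is Rule~\ref{rule:seq}, yielding $\envp$ with $\vdash\typestmt{\envinit}{\astmt}{\envp}$ and $\typestmt{\envp}{\acom}{\env}$, where $\acom$ is typed by Rule~\ref{rule:enter}. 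Its retire-specific premise forces $\gactive\in\envpof{\apvar}$. \Cref{Theorem:Soundness} then gives $\tau,\athread\models\envp$, hence $\tau\models_{\varphi}\gactive$ for $\varphi=\set{\threadvar\mapsto\athread,\adrvar\mapsto\anadr}$; that is, $\historyof{\tau}$ drives the automaton into a location of $\locsof{\gactive}$, whose $\baseobs$-component is $\ref{obs:base:init}$. On the other hand, with $\adrvar$ fixed to $\anadr$ the automaton $\baseobs$ is deterministic and leaves $\ref{obs:base:retired}$ only on $\freeof{\anadr}$; since $\historyof{\tau}$ contains a $\retire$ of $\anadr$ with no later $\freeof{\anadr}$, its $\baseobs$-component is $\ref{obs:base:retired}\neq\ref{obs:base:init}$ --- a contradiction. (In $\nosem[\aprog]$ there are no frees at all, so literally any second retire of an address is a double retire, which makes the $\baseobs$-bookkeeping especially transparent.)

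The main obstacle is the \textbf{semantic gap}: the statement is phrased over $\allsemobs$, whereas \Cref{Theorem:Soundness} lives in the reallocation-free world. Closing it requires transporting the double-retire witness across the correspondence $\tau\computrel\sigma$ of \Cref{thm:PRF-guarantee}. The relation $\computrel$ preserves control locations and the valid part of the memory, so the retire \emph{command} and the validity of $\apvar$ carry over; the delicate point is that addresses may be reallocated in $\tau\in\allsemobs$, so one must argue that the $\baseobs$-behaviour on the specific address $\anadr$ between its retire and the offending re-retire is unaffected by frees and renamings of \emph{other} addresses --- precisely what elision support (\Cref{Assumption:Elision}) guarantees. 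A cleaner alternative I would attempt first is to bypass the detour through $\sigma$ entirely and re-run the inductive argument of \Cref{Theorem:Soundness} directly over computations in $\allsemobs$, strengthening the invariant to record that the $\baseobs$-component reached by $\historyof{\tau}$ is $\ref{obs:base:init}$ whenever $\gactive$ is derived. The only genuinely new case in that induction is the $\retire$ action, and it is blocked by the Rule~\ref{rule:enter} premise exactly in the situation where it would otherwise constitute a double retire; every remaining case is routine once the determinism of $\baseobs$ under a fixed $\adrvar$ is recorded.
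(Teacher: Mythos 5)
Your core contradiction is exactly the paper's: the retire is typed by Rule~\ref{rule:enter}, so the pointer must carry $\gactive$; soundness then places $\baseobs$ in location $\ref{obs:base:init}$ for the retired address, which is incompatible with that address lying in $\retiredof{\cdot}$. The genuine gap is in the transport step that makes the statement about $\allsemobs$ rather than about a reclamation-free semantics. You route the witness through \Cref{thm:PRF-guarantee} into $\nosem$, but the relation $\computrel$ that this theorem provides only asserts agreement of control and of $\restrict{\heapcomput{\sigma}}{\validof{\tau}}$; it says nothing about retired sets, so $\anadr\in\retiredof{\tau}$ does not transfer to $\sigma$, and---contrary to your claim---the validity of $\apvar$ does not ``carry over'' either: agreement is restricted to $\validof{\tau}$, and you have no way to establish $\apvar\in\validof{\tau}$, because pointer race freedom is known only for $\freesemobs$ (\Cref{proposition:typecheckprf}), not for $\allsemobs$. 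If $\apvar$ is dangling in $\tau$ (possible once frees and reallocations occur), the address retired after $\sigma$ need not be $\anadr$ at all. The paper closes precisely this gap differently: it maps the witness into $\freesemobs$ (not $\nosem$) using the elision lemma \Cref{thm:elision-computations-new}, whose conclusions are strictly stronger than $\computrel$---it yields $\tau\computequiv\sigma$, hence $\validof{\tau}=\validof{\sigma}$ by \Cref{thm:valideq}, and crucially $\retiredof{\tau}\subseteq\retiredof{\sigma}$. Pointer race freedom of $\freesemobs$ then forces $\apvar\in\validof{\sigma}$, so the retired address agrees, and the $\gactive$-versus-$\ref{obs:base:retired}$ contradiction (via \Cref{thm:type-guarantees-vs-computations-new} and \Cref{thm:retired-state-retired}) is derived in $\freesemobs$.

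Your fallback---re-running the induction of \Cref{Theorem:Soundness} directly over $\allsemobs$---does not repair this. That induction repeatedly uses pointer race freedom of the prefix (e.g.\ in the cases for Rules \ref{rule:assume1} and \ref{rule:enter}), which is unavailable for $\allsemobs$, and it relies on the absence of reallocation: the case for Rule~\ref{rule:malloc} uses freshness of the allocated address, and auxiliary facts such as \Cref{thm:invalid-freed} (invalid pointers point to freed addresses) are simply false once freed addresses can be reallocated. These cases are not ``routine''; avoiding them is what the elision machinery exists for. The fix is to replace your appeal to \Cref{thm:PRF-guarantee} by an appeal to \Cref{thm:elision-computations-new} and derive the contradiction in $\freesemobs$---which is exactly the paper's proof.
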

The argumentation is along the lines of \Cref{proposition:typecheckprf}.
To perform a retire, Rule~\ref{rule:enter} requires the pointer to be active.
This, in turn, means $\baseobs$ is in state \ref{obs:base:init}.
The state, however, can only be reached if there were no earlier retires of the address or the earlier retires have been followed by a free.
In both cases, we do not have a double retire.

The next section gives an in-depth example on how to apply our type system.
The two sections thereafter automate the checks in \Cref{Theorem:Soundness}:
we give an efficient algorithm for type inference $\typechecks{\aprog}$ and show how to discharge the invariants $\invholdsof{\nosem[\aprog]}$ with the help of \mbox{off-the-shelf verification tools.}


\section{Example}
\label{sec:example}

We apply our type system to Michael\&Scott's queue with EBR from \Cref{sec:illustration}.
Here, a single custom guarantee $\gilluebr$ is sufficient.
We define $\locsof{\gilluebr}$ to be those locations where thread $\threadvar$ is guarantee to have returned from a call to $\leaveQ$ but has not yet invoked $\enterQ$.
That is, $\gilluebr$ captures when $\threadvar$ is accessing the data structure.
The sets of locations represented by $\gactive$, $\gsafeaccess$, and $\gilluebr$ can be read of the cross-product SMR automaton $\baseobs\times\ebrobs$ in \Cref{fig:ebr-types}.
It is worth pointing out that $\locsof{\gsafeaccess}$ does not contain location $(\ref*{obs:base:init},\ref*{obs:ebr:init})$.
For a set containing $(\ref*{obs:base:init},\ref*{obs:ebr:init})$ to be closed under interference we would need to have $(\ref*{obs:base:retired},\ref*{obs:ebr:init})$ in that set.
However, $(\ref*{obs:base:retired},\ref*{obs:ebr:init})$ allows for a $\free$ of $\adrvar$ and thus must not belong to $\locsof{\gsafeaccess}$ by definition.

%
\begin{figure}
	\begin{tcolorbox}
	\center
	\definecolor{colorF}{RGB} {227, 39, 119}
	\definecolor{colorR}{RGB} {41, 83, 186}
	\definecolor{colorA}{RGB} {208, 176, 227}
	\definecolor{colorE}{RGB} {214, 216, 145}
	\definecolor{colorS}{RGB} {136, 187, 146}
	\definecolor{colorMix}{RGB} {200, 200, 200}

	\begin{tikzpicture}[->,>=stealth',shorten >=1pt,auto,node distance=5.3cm,thick,initial text={}]
		\node [xshift=.0cm,yshift=4.8cm,draw,thin] {$\baseobs\times\ebrobs$};

		\node[initial,state,text width=8mm,align=center,yshift=-3mm,yshift=.5cm]    (A)              {$\ref*{obs:base:init},\ref*{obs:ebr:init}$};
		\node[state,text width=8mm,align=center]            (E) [right of=A] {$\ref*{obs:base:init},\ref*{obs:ebr:protected}$};
		\node[accepting,state,text width=8mm,align=center,double=colorMix]  (D) [right of=E] {\mkstatename{obs:base-ebr:final}};
		\coordinate             [below of=A, yshift=+4.2cm]  (X)  {};
		\coordinate             [below of=E, yshift=+4.2cm]  (Y)  {};

		\node[state,text width=8mm,align=center]            (F) [above of=A,yshift=-3.2cm]  {$\ref*{obs:base:retired},\ref*{obs:ebr:init}$};
		\node[state,text width=8mm,align=center]            (J) [right of=F] {$\ref*{obs:base:retired},\ref*{obs:ebr:protected}$};
		\node[state,text width=8mm,align=center]            (H) [right of=J] {$\ref*{obs:base:retired},\ref*{obs:ebr:retired}$};
		\coordinate             [above of=F, yshift=-4.3cm]  (U)  {};
		\coordinate             [above of=J, yshift=-4.3cm]  (V)  {};

		\path
			(H) edge[colorF] node[above right,align=center] {F} (D)
			(A.324) edge[colorF, bend right=7] node[align=center,pos=.7] {F} (D.216)
			(E) edge[colorF] node {F} (D)
			(F) edge[colorF, bend right] node[align=center] {F} (A)
			(J) edge[colorF] node[align=center] {F} (E)
			;

		\path
			(E) edge[colorR] node[below right,align=center,pos=.66] {R} (H)
			(J) edge[colorR] node[below,align=center] {R} (H)
			(A) edge[colorR, bend right] node[align=center] {R} (F)
			;

		\path
			(A) edge node[align=center] {\translab{\evt{\exitof{\leaveQ}}{\athread}}{\athread=\threadvar}} (E)
			(E) edge[-,shorten >=0pt] ([yshift=1mm]Y.north)
			([yshift=1mm]Y) edge[-,shorten >=0pt] node {
					\translab{\evt{\enterof{\enterQ}}{\athread}}{\athread=\threadvar}
				} ([xshift=0mm,yshift=1mm]X)
			([xshift=0mm,yshift=1mm]X) edge ([xshift=0mm]A.south)
			;

		\path
			(F) edge node[align=center] {\translab{\evt{\exitof{\leaveQ}}{\athread}}{\athread=\threadvar}} (J)
			(V) edge[-,shorten >=0pt] node[above] {
					\translab{\evt{\enterof{\enterQ}}{\athread}}{\athread=\threadvar}
				} ([xshift=-.75mm]U)
			(J) edge[-,shorten >=0pt] ([yshift=-1mm]V.south)
			(H.north west) edge[-,shorten >=0pt] (V)
			([xshift=-.75mm]U) edge ([xshift=-.75mm]F.north)
			([xshift=.75mm,yshift=-1mm]U) edge[-,shorten >=0pt] ([xshift=0mm,yshift=-1mm]V)
			([xshift=.75mm,yshift=-1mm]U) edge ([xshift=.75mm]F.north)
			;

		\node [xshift=7.5cm,yshift=4.8cm] {
			\(\begin{aligned}
				\color{colorF}{F} ~&:=~ \translab{\freeof{\anadr}}{\anadr=\adrvar}
				&\quad
				\color{colorR}{R} ~&:=~ \translab{\evt{\enterof{\retire}}{\athread,\anadr}}{\anadr=\adrvar}
			\end{aligned}\)
		};

		\begin{scope}[xshift=6.9cm, yshift=4.0cm]
			\node [minimum width=1.4cm, minimum height=.7cm,ellipse,fill=colorA,draw=none,xshift=0.0cm] (TA) {$\gactive$};
			\node [minimum width=1.4cm, minimum height=.7cm,ellipse,fill=colorE,draw=none,xshift=1.7cm] (TE) {$\gilluebr$};
			\node [minimum width=1.4cm, minimum height=.7cm,ellipse,fill=colorS,draw=none,xshift=3.4cm] (TS) {$\gsafeaccess$};
		\end{scope}

		\begin{scope}[-,on background layer]
			\newcommand{\spaceA}{0.7cm}
			\newcommand{\spaceE}{1.1cm}
			\newcommand{\spaceS}{0.9cm}
			\draw[fill=colorE,draw=colorE,line width=.5mm,fill opacity=.4] ([xshift=-\spaceE,yshift=-\spaceE]E) rectangle ([xshift=+\spaceE,yshift=+\spaceE]H);
			\draw[fill=colorS,draw=colorS,line width=.5mm,fill opacity=.4] ([xshift=-\spaceS,yshift=-\spaceS]E.center)
			                -- ([xshift=-\spaceS,yshift=+\spaceS]E.center)
			                -- ([xshift=-\spaceS,yshift=+\spaceS]H.center)
			                -- ([xshift=+\spaceS,yshift=+\spaceS]H.center)
			                -- ([xshift=+\spaceS,yshift=-\spaceS]D.center)
			                -- cycle;
			\draw[fill=colorA,draw=colorA,line width=.5mm,fill opacity=.4] ([xshift=-\spaceA,yshift=-\spaceA]A) rectangle ([xshift=+\spaceA,yshift=+\spaceA]D);

			\draw [-,color=colorA,line width=.7mm,shorten <= -2pt] ([yshift=-0.5mm]TA.south) -- ([yshift=-2.75cm]TA.south);
			\draw [-,color=colorE,line width=.7mm,shorten <= -2pt] ([yshift=-0.5mm]TE.south) -- ([yshift=-2.75mm]TE.south);
			\draw [-,color=colorS,line width=.7mm,shorten <= -2pt] ([yshift=-0.5mm]TS.south) -- ([yshift=-4.95mm]TS.south);
		\end{scope}
	\end{tikzpicture}%
	\caption{%
		Cross-product SMR automaton for $\baseobs\times\ebrobs$ and EBR-specific types.
	}
	\label{fig:ebr-types}
	\end{tcolorbox}
\end{figure}

In the following, we illustrate the type transformer relation, the use of angels, the typing of programs, and explain how to find suitable annotations for the type inference to go through.

\subsection{Type Transformer Relation}
\label{sec:example:type-transformer}

We illustrate the computation of the type transformer relation for $\exitof{\leaveQ}$ and the inference~of~$\gsafeaccess$.

First, we establish the type transformer relation \( \emptyset,\apavar,\exitof{\leaveQ} \,\leadsto\, \gilluebr \).
This boils down to checking \( \lpostof{\apavar}{\exitof{\leaveQ}}{\locsof{\emptyset}} \,\subseteq\, \locsof{\gilluebr} \) because the remaining properties of the type transformer relation are trivially satisfied (we do not add any of $\{\gactive,\glocal,\gsafeaccess\}$).
The empty type corresponds to no knowledge about previously executed SMR commands, which means $\locsof{\emptyset}=L$ with $L$ the set of all locations of $\baseobs\times\ebrobs$.
We compute the post-image of $L$ wrt. $\apavar$ and $\exitof{\leaveQ}$ in the SMR automaton from \Cref{fig:ebr-types}. 
To this end, we consider all transitions labeled with $\exitof{\leaveQ(\athread)}$. The pointer or angel $\apavar$ does not play a role.
We derive the desired inclusion as follows: \[
	\lpostof{\apavar}{\exitof{\leaveQ}}{\locsof{\emptyset}}
	\,=\,
	\lpostof{\apavar}{\exitof{\leaveQ}}{L}
	\,=\,
	L \setminus \{ (\ref*{obs:base:init},\ref*{obs:ebr:init}), (\ref*{obs:base:retired},\ref*{obs:ebr:init}) \}
	\,=\,
	\locsof{\gilluebr}
	\ .
\]

Second, we show how to infer $\gsafeaccess$.
From \Cref{fig:ebr-types} we know that $\gilluebr$ alone does not yield $\gsafeaccess$ because of location $(\ref*{obs:base:retired},\ref*{obs:ebr:protected})$; we also need $\gactive$.
We establish \( \gilluebr\wedge\gactive \,\leadsto\, \gilluebr\wedge\gactive\wedge\gsafeaccess \).
Since $\gilluebr\wedge\gactive$ is valid and we do not add $\glocal$, the key task is to establish \( \locsof{\gilluebr\wedge\gactive} \;\subseteq\; \locsof{\gilluebr\wedge\gactive\wedge\gsafeaccess} \).
As $\locsof{\gilluebr\wedge\gactive}\subseteq\locsof{\gilluebr\wedge\gactive}$ trivially holds, it suffices to show $\locsof{\gilluebr\wedge\gactive}\subseteq\locsof{\gsafeaccess}$: \[
	\locsof{\gilluebr\wedge\gactive}
	\,=\,
	\locsof{\gilluebr}\cap\locsof{\gactive}
	\,=\,
	\{ (\ref*{obs:base:init},\ref*{obs:ebr:protected}), \ref*{obs:base-ebr:final} \}
	\,\subseteq\,
	\{ (\ref*{obs:base:init},\ref*{obs:ebr:protected}), (\ref*{obs:base:retired},\ref*{obs:ebr:retired}), \ref*{obs:base-ebr:final} \}
	\,=\,
	\locsof{\gsafeaccess}
	\ .
\]

\subsection{Angels}
\label{sec:example:angels}

\begin{wrapfigure}{r}{5.7cm}
	\vspace{-1.75mm}
	\begin{tcolorbox}
\begin{lstlisting}[style=condensedsimple]
 @inv angel r;                $\label[line]{fig:msebr:angel}$
 beginAtomic                  $\label[line]{fig:msebr:beginatomic}$
   enter leaveQ();            $\label[line]{fig:msebr:leave}$
   exit leaveQ;               $\label[line]{fig:msebr:exit}$
   @inv active(r);            $\label[line]{fig:msebr:active}$
 endAtomic                    $\label[line]{fig:msebr:endatomic}$
 // ...
 Node* head = Head;           $\label[line]{fig:msebr:readhead}$
 Node* tail = Tail;           $\label[line]{fig:msebr:readtail}$
 @inv head in r;              $\label[line]{fig:msebr:headcovered}$
 Node* next = head->next;     $\label[line]{fig:msebr:readnext}$
 // ...
 @inv next in r;              $\label[line]{fig:msebr:nextcovered}$
 data_t output = next->data;  $\label[line]{fig:msebr:deref}$
 // ...
 enter exitQ();
 exit exitQ;
 // ..
\end{lstlisting}%
	\caption{%
		Excerpt of \code{dequeue} using angel~$\aghostvar$.
	}
	\label{fig:msebr}
	\end{tcolorbox}
\end{wrapfigure}
To illustrate the use of angels, consider the excerpt of the \code{dequeue} operation depicted in \Cref{fig:msebr}. 
Note that calls to SMR functions lead to two consecutive commands. 
The atomic block ensures the commands are executed without interruption by other threads.
To infer it, we rely on standard moverness arguments \cite{Lipton75}: command $\enterof{\leaveQ()}$ is a right-mover because it does not affect the memory nor the observer $\baseobs\times\ebrobs$.
The call to $\leaveQ$ guarantees that no currently active address is reclaimed until $\enterQ$ is called. 
It thus protects an unbounded number of addresses
before a thread acquires a pointer to them. 
Later, when a thread acquired a pointer to such an address in order to access it, the address may no longer be active and thus the type system may not be able to infer $\gsafeaccess$ (cf. \Cref{sec:example:type-transformer}). 
To overcome this problem, we use an angel~$\aghostvar$.
Given its angelic semantics, $\aghostvar$ will capture all addresses that are protected by the \code{leaveQ} call, \Cref{fig:msebr:angel,fig:msebr:beginatomic,fig:msebr:leave,fig:msebr:exit,fig:msebr:active}.
Later, upon accessing/dereferencing a pointer $\apvar$, we make sure that $\aghostvar$ captures the address pointed to by~$\apvar$, \Cref{fig:msebr:headcovered,fig:msebr:nextcovered}.

\subsection{Typing}
\label{sec:example:type-inference}

We give a typing for the code from \Cref{fig:msebr} in \Cref{fig:msebr-typing}.
We start in \Cref{typing:msebr:t:start} with type $\emptyset$ for all pointers and the angel $\aghostvar$.
The allocation of $\aghostvar$ in \Cref{typing:msebr:c:angel} has no effect on the type assignment.
The same holds when entering an atomic block, \Cref{typing:msebr:c:beginatomic}.
\Cref{typing:msebr:c:leave} invokes \code{leaveQ}.
Again, the types are not affected because the SMR automaton has no transitions labeled with $\enterof{\leaveQ}$.
Next, the invocation returns, \Cref{typing:msebr:c:exit}.
Following the discussion from \Cref{sec:example:type-transformer}, we obtain $\gilluebr$ for~$\aghostvar$, \Cref{typing:msebr:t:exit}.
It is worth pointing out that $\aghostvar$ is treated like an ordinary pointer when it comes to the type transformer relation.

%
\begin{figure}
  \begin{tcolorbox}
	\center%
	\begin{minipage}[t]{.47\textwidth}%
\begin{lstlisting}[style=typing]
  `{ Head,head,next,r:$\emptyset$ }`                                   $\label[line]{typing:msebr:t:start}$
    @inv angel r;                                                      $\label[line]{typing:msebr:c:angel}$
  `{ Head,head,next,r:$\emptyset$ }`                                   $\label[line]{typing:msebr:t:angel}$
    beginAtomic                                                        $\label[line]{typing:msebr:c:beginatomic}$
  `{ Head,head,next,r:$\emptyset$ }`                                   $\label[line]{typing:msebr:t:beginatomic}$
    enter leaveQ();                                                    $\label[line]{typing:msebr:c:leave}$
  `{ Head,head,next,r:$\emptyset$ }`                                   $\label[line]{typing:msebr:t:leave}$
    exit leaveQ;                                                       $\label[line]{typing:msebr:c:exit}$
  `{ Head,head,next:$\emptyset$, r:$\gilluebr$ }`                      $\label[line]{typing:msebr:t:exit}$
    @inv active(r);                                                    $\label[line]{typing:msebr:c:activer}$
  `{ Head,head,next:$\emptyset$, r:$\gilluinv\wedge\gactive$ }`        $\label[line]{typing:msebr:t:activer}$
  `{ Head,head,next:$\emptyset$, r:$\gilluinv\wedge\gactive\wedge\gsafeaccess$ }`    $\label[line]{typing:msebr:t:infer}$
    endAtomic                                                          $\label[line]{typing:msebr:c:endatomic}$
  `{ Head,head,next:$\emptyset$, r:$\gilluinv\wedge\gsafeaccess$ }`    $\label[line]{typing:msebr:t:endatomic}$
    // ...
\end{lstlisting}%
	\end{minipage}%
	\hfill%
	\begin{minipage}[t]{.465\textwidth}%
\begin{lstlisting}[style=typing]
    // ...
  `{ Head,head,next:$\emptyset$, r:$\gilluinv\wedge\gsafeaccess$ }`    $\label[line]{typing:msebr:t:endatomicrepeat}$
    Node* head = Head;                                                 $\label[line]{typing:msebr:c:readhead}$
  `{ Head,head,next:$\emptyset$, r:$\gilluinv\wedge\gsafeaccess$ }`    $\label[line]{typing:msebr:t:readheadfirst}$
    // ...
  `{ Head,head,next:$\emptyset$, r:$\gilluinv\wedge\gsafeaccess$ }`    $\label[line]{typing:msebr:t:readhead}$
    @inv head in r;                                                    $\label[line]{typing:msebr:c:headinr}$
  `{ Head,next:$\emptyset$, head,r:$\gilluinv\wedge\gsafeaccess$ }`    $\label[line]{typing:msebr:t:headinr}$
    Node* next = head->next;                                           $\label[line]{typing:msebr:c:readnext}$
    // ...
  `{ Head,next:$\emptyset$, head,r:$\gilluinv\wedge\gsafeaccess$ }`    $\label[line]{typing:msebr:t:readnext}$
    @inv next in r;                                                    $\label[line]{typing:msebr:c:nextinr}$
  `{ Head:$\emptyset$, next,head,r:$\gilluinv\wedge\gsafeaccess$ }`    $\label[line]{typing:msebr:t:nextinr}$
    data_t output = next->data;                                        $\label[line]{typing:msebr:c:deref}$
    // ...
\end{lstlisting}%
	\end{minipage}%
	\caption{Typing for EBR using angels.}
	\label{fig:msebr-typing}
  \end{tcolorbox}
\end{figure}

To capture in the type system the set of addresses that can be safely accessed in the subsequent code, we want to lift $\gilluebr$ of $\aghostvar$ to $\gsafeaccess$.
We annotate $\aghostvar$ to hold a set of active addresses, \Cref{typing:msebr:c:activer}.
This yields type $\gilluebr\wedge\gactive$ for $\aghostvar$, \Cref{typing:msebr:t:activer}.
As explained above, we can now lift this type to $\gilluebr\wedge\gactive\wedge\gsafeaccess$, \Cref{typing:msebr:t:infer}.
Recall that the allocation of $\aghostvar$ in \Cref{typing:msebr:c:angel} is angelic.
That is, the addresses held by $\aghostvar$ will indeed be chosen to be active.

In the subsequent code, we already added annotations (cf. \Cref{sec:example:angels}) ensuring that accessed/dereferenced pointers are captured by the angel $\aghostvar$.
For instance, \Cref{typing:msebr:c:headinr} requires the address of \code{head} to be captured by $\aghostvar$.
That this is the case indeed is established when the annotations are discharged.
For the typing, we can copy $\gilluebr\wedge\gsafeaccess$ from $\aghostvar$ over to \code{head}.
As a consequence, the dereference of \code{head} in \Cref{typing:msebr:c:readnext} is safe.
Similarly, we require \code{next} to be captured by $\aghostvar$ in \Cref{typing:msebr:c:nextinr} such that the dereference in \Cref{typing:msebr:c:deref} is safe.

\subsection{Annotations}
\label{sec:example:annotations}

We explain our algorithm to automatically add to the program in \Cref{fig:michaelscott} the annotations in \Cref{fig:msebr} in order to arrive at the typing in \Cref{fig:msebr-typing}.
We focus on the dereference of \code{head} in \Cref{fig:msebr:readnext}.
Without annotations, the type inference will fail because it cannot conclude that \code{head} is guaranteed to be valid.
To fix this, we implemented a sequence of tactics that we invoke one after the other.
If none of them fixes the issue, we give up the type inference and report the failure to the user.

The first tactic simply adds an $\invariantof{\activeof{\mcode{head}}}$ annotation to \Cref{fig:msebr:readnext}.
This makes \code{head} valid and the type inference go through for \Cref{fig:msebr:readnext}.
However, we should only add the annotation if it actually holds.
To check this, we employ the technique from \Cref{sec:invariant-check}.
In this particular case, we will find that the annotation does not hold; so we try to fix the problem with another tactic.

The second tactic adds an angel $\aghostvar$ to the (syntactically) most recent $\leaveQ$ call.
We use a template to transform the sequence $\enterof{\leaveQ()};\exitof{\leaveQ};$ to the code from Lines~\ref{fig:msebr:angel}-\ref{fig:msebr:endatomic}.
(A subsequent use of this tactic will skip this step and reuse the existing angel.)
Then, we fix \Cref{fig:msebr:readnext} by adding the annotation $\ghostof{\containsof{\mcode{head}}{\mathtt{r}}}$ before it, as shown in \Cref{fig:msebr:headcovered}.
This makes \code{head} valid.
Whether the annotation holds is again checked with the technique from \Cref{sec:invariant-check}.

It is worth pointing out that the second tactic is EBR-specific.
From our experience, every SMR algorithm/automaton comes with a small set of tactics that significantly help finding the right annotations---EBR requires the above tactic and HP requires two specific tactics.
We do not believe that there is a silver bullet of tactics since SMR algorithms may vary greatly, as seen in the cases of EBR and HP.
Theoretically speaking, one could find the annotations by an exhaustive search (finitely many angels will suffice), but this will not scale.

\subsection{Hazard Pointers}

Our approach applies to lock-free data structures with hazard pointers just as well as in the case of EBR.
The main difference is that HP typically does not require angels because pointers are protected after they are acquired.
However, the size of the SMR automaton for HP grows in the number of hazard pointers.
For two hazard pointers it consists of $17$ locations.
Consider \Cref{appendix:hpexample}~for~details.


\section{Type Inference}
\label{sec:type_checking}

We show that type inference is surprisingly efficient, namely quadratic time.
\begin{theorem}
	Given a program $\astmt$, the type inference $\typechecks{\astmt}$ is computable in time $\mathcal{O}(\cardnodistof{\astmt}^2)$.
\end{theorem}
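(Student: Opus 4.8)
The plan is to recast the inference $\typechecks{\astmt}$ as a forward abstract-interpretation fixpoint over the structured syntax of $\astmt$ and then charge its running time against the total size of the annotation it computes. First I would precompute, in a preprocessing step, all the ingredients that depend only on the fixed automaton $\anobs$ and not on $\astmt$: the location sets $\locsof{\atype}$, the predicate $\isvalidof{\cdot}$, the post-image $\lpost$, and above all the type transformer relation $\checkoftype{\atype}{\apavar}{\acom}{\atypep}$. Since the guarantees range over a constant number of location sets, there are only constantly many types, each stored and compared in $O(1)$ time, and each transfer step becomes an $O(1)$ table lookup. Consequently a type environment $\env$ has one entry per pointer or angel occurring in $\astmt$, hence $O(\cardnodistof{\astmt})$ entries of constant size each.

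Next I would define the inference by structural recursion, mirroring the statement rules of \Cref{fig:type-rules:stmt}: sequential composition composes the per-statement transfers (Rule~\ref{rule:seq}); a choice keeps, as its common post-environment, exactly the guarantees present in both branches, i.e.\ the pointwise intersection of the two branch types (Rule~\ref{rule:choice}); and a Kleene star returns the most informative environment that is entailed by the incoming one and left unchanged by the body, computed by repeatedly running the body and deleting any guarantee the body fails to reproduce until a stable environment is reached (Rule~\ref{rule:loop}), with Rule~\ref{rule:infer} supplying the weakening needed to match pre- and post-environments. The primitive-command transfers are read off \Cref{fig:type-rules:com}; the only transfers not confined to $O(1)$ entries are the SMR transfers (Rules~\ref{rule:enter}, \ref{rule:exit}) and $\rmtransientof{\cdot}$ at an atomic-block end (Rule~\ref{rule:end}), but these act on each variable independently, so every output entry still depends on only $O(1)$ input entries. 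This constant-coupling property is what keeps the analysis fine-grained.

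For the bound itself I would run a worklist algorithm at the granularity of single cells, one cell per (program point, variable) pair. There are $O(\cardnodistof{\astmt})$ program points and $O(\cardnodistof{\astmt})$ variables, hence $O(\cardnodistof{\astmt}^2)$ cells, each holding a type from a constant-size, constant-height lattice. During the descending iteration every cell only loses guarantees, so each cell changes value $O(1)$ times and the total number of value-change events is $O(\cardnodistof{\astmt}^2)$. By the constant-coupling property each such event forces re-evaluation of only $O(1)$ downstream cells at $O(1)$ cost apiece, so the standard worklist charging yields total work $O(\cardnodistof{\astmt}^2)$, on top of the $O(\cardnodistof{\astmt}^2)$ spent initialising all cells.

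The main obstacle is precisely this complexity accounting for loops. A naive iteration re-reads the whole $O(\cardnodistof{\astmt})$-sized environment on every round and at every SMR- or atomic-boundary, and a loop may require up to lattice-height-many rounds, which would give only a cubic bound; the quadratic bound hinges on replacing this by the cell-granular, incremental worklist and on verifying that $\rmtransientof{\cdot}$ and the SMR transformer really are pointwise, so that a single cell change triggers only constant work rather than all-to-all recomputation. The secondary task is correctness of the procedure: I must show that the stable environment the algorithm returns is exactly one derivable by Rules~\ref{rule:loop}, \ref{rule:infer}, and \ref{rule:choice}, so that ``the algorithm succeeds'' coincides with ``$\typechecks{\astmt}$ holds'', and in particular that taking intersections at choices and iterating to stability at loops never discards a typing admitted by the declarative system.
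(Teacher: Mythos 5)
Your route is, in substance, the paper's own: it too reduces inference to a least-fixpoint computation over type environments (phrased as a constraint system $\eqsysof{\envinit, \astmt, X}$ solved by Kleene/worklist iteration), it too observes that the guarantees $\glocal, \gactive, \gsafeaccess, \gcustom{\alocset}$ are constantly many because they depend only on the fixed SMR automaton and can be tabulated so that every transfer and premise check is an $\mathcal{O}(1)$ lookup, and it too obtains the bound as (linearly many program points) times (linear chain length per environment, i.e.\ number of variables times a constant guarantee count), with $\mathcal{O}(1)$ cost per unit of progress. Your cell-granular worklist over (program point, variable) pairs with the constant-coupling observation is the same accounting that the paper performs at constraint-variable granularity, where it wraps each primitive command in $\cskip$'s so that every constraint depends on constantly many others; and your deferred correctness obligation is exactly the paper's principal-types result (\Cref{Proposition:TypeCheck}), which states that the least solution is the meet of all derivable environments, so it differs from $\fail$ if and only if $\typechecks{\astmt}$.

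However, one step of your algorithm fails as stated, and it is precisely the step on which your deferred completeness proof would founder. At a choice you keep ``exactly the guarantees present in both branches, i.e.\ the pointwise intersection,'' and at a loop you ``delete any guarantee the body fails to reproduce.'' Intersection of guarantee sets is not the join of the paper's lattice: after quotienting the quasi-order $\leadsto$ (the antichain lattice $\actypes$), the join of two SMR-specific guarantees is $\gcustom{\alocset}\sqcup\gcustom{\alocsetp}=\gcustom{\alocset\cup\alocsetp}$ (a union of interference-closed location sets is again closed), whereas the intersection of the guarantee sets $\set{\gcustom{\alocset}}$ and $\set{\gcustom{\alocsetp}}$ is $\emptyset$, whose associated location set is all of $\locsof{\anobs}$. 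Your post-environment is therefore strictly coarser than the principal one: if the two branches establish $\gcustom{\alocset}$ and $\gcustom{\alocsetp}$ respectively, Rules~\ref{rule:infer} and \ref{rule:choice} let the declarative system continue with $\gcustom{\alocset\cup\alocsetp}$, which together with a later $\gactive$ annotation may satisfy $\locsof{\gcustom{\alocset\cup\alocsetp}\wedge\gactive}\subseteq\locsof{\gsafeaccess}$ and hence yield $\gsafeaccess$ (this is exactly the EBR pattern of the paper's example section), while your $\emptyset$ yields only $\locsof{\gactive}\subseteq\locsof{\gsafeaccess}$, which is false in general; your procedure then rejects a typeable program, so it does not compute $\typechecks{\astmt}$. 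The repair is cheap and restores agreement with the paper: compute choice-joins and loop-stable environments in the quotient lattice---still a constant-size table, since there are constantly many interference-closed location sets---rather than by dropping guarantees; this is the very reason the paper introduces $\actypes$ at all. A second, minor omission: $\fail$ is a property of the whole environment rather than of a single cell, so your cell-granular scheme needs an extra per-program-point flag to propagate failed premises; this does not affect the $\mathcal{O}(\cardnodistof{\astmt}^2)$ bound.
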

As common in type systems~\cite{Pierce}, our algorithm for type inference is constraint-based.
We associate with the program $\astmt$ a constraint system $\eqsysof{\envinit, \astmt, X}$.
The variables $X$ are interpreted over the set of type environments enriched with a value $\fail$ for a failed type inference.
The correspondence between solving the constraint system and type inference will be the following.
An environment $\env$ can be assigned to $X$ in order to solve the constraint system if and only if $\typestmt{\envinit}{\astmt}{\env}$.
As a consequence, a non-trivial solution to $X$ will show $\typechecks{\astmt}$.

Our type inference algorithm will be a fixed-point computation.
The canonical choice for a domain over which to compute would be the set of types ordered by $\leadsto$.
The problem is that types of the form $\gcustom{\alocset}$ and $\gcustom{\alocset}\wedge \gcustom{\alocsetp}$ with $\alocset\subseteq \alocsetp$ are comparable, $\gcustom{\alocset} \leadsto \gcustom{\alocset}\wedge \gcustom{\alocsetp}$ and $\gcustom{\alocset}\wedge \gcustom{\alocsetp}\leadsto \gcustom{\alocset}$.
This is not merely a theoretical issue of the domain being a quasi instead of a partial order.
It means we compute over too large a domain, namely a powerset lattice where we should have used a lattice of antichains~\cite{WDHR06}.
We factorize the set of all types along such equivalences $\leadsto\cap \leadsto^{-1}$.
The resulting $\actypes:=(\factorize{\alltypes}{\leadsto\cap \leadsto^{-1}}, \leadsto)$ is a complete lattice \cite{birkhoff1948lattice}.

Type environments can be understood as total functions into this antichain lattice.
We enrich the set of functions by a value $\fail$ to indicate a failed type inference. 
The result is the complete lattice of enriched type environments \[ \allenvsbot\; :=\; (\actypes^{\allvars} \cup \set{\fail}, \sqsubseteq) \ . \]
Between environments, we define $\env\sqsubseteq \envp$ to hold if for all  $\apavar\in\allvars$ we have $\envof{\apavar}\leadsto \envp(\apavar)$.
This lifts $\leadsto$ to the function domain.
Value~$\fail$ is defined to be the largest element.

\begin{figure}
  \begin{tcolorbox}
	\setalignspacingtofiguremode
	\begin{align*}
		\eqsysof{X, \acom, Y}:\hspace{-1.5cm}&&  \spof{X, \acom}\sqsubseteq Y\\
		\eqsysof{X, \astmt_1;\astmt_2, Y}:\hspace{-1.5cm}&& \eqsysof{X, \astmt_1, Z}&\wedge \eqsysof{Z, \astmt_2, Y},\quad \text{$Z$ fresh}\\
		\eqsysof{X, \astmt_1\choice\astmt_2, Y} :\hspace{-1.5cm}&&\eqsysof{X, \astmt_1, Y}&\wedge \eqsysof{X, \astmt_2, Y}\\
		\eqsysof{X, \astmt^*, Y}:\hspace{-1.5cm}&& \eqsysof{Y, \astmt, Y}&\wedge  X\sqsubseteq Y
	\end{align*}
	\caption{Constraint system $\eqsysof{X, \astmt, Y}$.}
	\label{Figure:ConstraintSystem}
  \end{tcolorbox}
\end{figure}

The constraint system $\eqsysof{\envinit, \astmt, X}$ is defined in Figure~\ref{Figure:ConstraintSystem}.
We proceed by induction over the structure of statements and maintain triples $(X, \astmt, Y)$.
The idea is that statement $\astmt$ will turn the enriched type environment stored in variable $X$ into an environment upper bounded by~$Y$. 
Consider the case of basic commands.
We will define $\spof{X, \acom}$ to be the strongest enriched type environment resulting from the environment in $X$ when applying command $\acom$.
The constraint $\spof{X, \acom}\sqsubseteq Y$ requires $Y$ to be an upper bound.
Note that $Y$ still contains safe type information.
For a sequential composition, we introduce a fresh variable $Z$ for the enriched type environment determined by $\astmt_1$ from $X$.
We then require $\astmt_2$ to transform this environment into $Y$.
For a choice, $Y$ should upper bound the effects of both $\astmt_1$ and $\astmt_2$ on $X$.
This guarantees that the type information is valid independent of which branch is chosen.
For iterations, we have to make sure $Y$ is an upper bound for the effect of arbitrarily many applications of $\astmt$ to $X$.
This means the environment in~$Y$ is at least $X$ because the iteration may be skipped.
Moreover, if we apply $\astmt$ to $Y$ then we should again obtain at most the environment in $Y$.

It remains to define $\spof{X, \acom}$, the strongest enriched type environment resulting from $X$ under command $\acom$.
We refer to the typing rules in~\Cref{fig:type-rules} and extract $\premiseof{\acom}$ and $\updateof{\acom}$.
The former is a predicate on environments capturing the premise of the rule associate with command $\acom$.
To give an example, for Rule~\ref{rule:assign2} the predicate $\premiseof{\acom}(\env)$ is $\isvalidof{\atype}$ with $\atype=\envof{\apvarp}$.
The latter is a function on environments.
It captures the update of the given environment as defined in the consequence of the corresponding rule.
For~\ref{rule:assign2}, the update $\updateof{\acom}(\env)$ is \mbox{$\env[\apvar\mapsto\emptyset]$}.
The strongest enriched environment preserves the information that a type inference has failed,  \mbox{$\spof{\fail,\acom}:=\fail$}, for all commands.
For a given environment, we set \[ \spof{\env, \acom}\; :=\; \premiseof{\acom}(\env) ~\,{?}~\, \updateof{\acom}(\env) ~\,{:}~\, \fail \ . \]
We evaluate the premise of the rule.
If it does not hold, the type inference will fail and return~$\fail$.
Otherwise, we determine the update of the current type environment, $\updateof{\acom}(\env)$.
We rely on the fact that $\spof{\cdot, \acom}$ is monotonic and hence (as the domains are finite) continuous.

We apply a Kleene iteration to obtain the least solution to the constraint system $\eqsysof{\envinit, \astmt, X}$.
The least solution is a function $\lsol$ that assigns to each variable in the system an enriched type environment.
We focus on variable $X$ that captures the effect of the overall program on the initial type environment.
Then $\lsolof{X}$ is the strongest type environment that can be obtained by a successful type inference.
This is the key correspondence.
\begin{proposition}[Principle Types]
	\label{Proposition:TypeCheck}
	Consider $\eqsysof{\envinit, \astmt, X}$. 
	Then  $\lsolof{X} = \bigsqcap_{\typejudge{\,\envinit}{\astmt}{\env}}\env$.
	Hence, $\lsolof{X}\neq \fail$ if and only if $\typechecks{\astmt}$.
\end{proposition}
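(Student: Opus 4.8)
The plan is to prove a statement generalized over the input environment and then instantiate it at $\envinit$. For a statement $\astmt$ and an environment $\eta\in\actypes^{\allvars}$, let $\mathsf{G}(\eta,\astmt)$ denote the value assigned to the output variable $Y$ by the least solution of $\eqsysof{\eta,\astmt,Y}$, and let $\mathsf{F}(\eta,\astmt):=\bigsqcap\{\env\mid\typestmt{\eta}{\astmt}{\env}\}$. The key observation that makes everything tick is that $\checknocomof{\env}{\epsilon}{\envp}$ is exactly $\env\sqsubseteq\envp$, so Rule~\ref{rule:infer} is precisely a subsumption rule that weakens the pre-environment downward and the post-environment upward in $\sqsubseteq$. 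In particular, from $\typestmt{\eta}{\astmt}{\env}$ and $\eta'\sqsubseteq\eta$ one derives $\typestmt{\eta'}{\astmt}{\env}$ (take $\checknocomof{\eta'}{\epsilon}{\eta}$, then the given judgement, then reflexivity), so $\mathsf{F}(\cdot,\astmt)$ is monotone. The proposition is the special case $\eta=\envinit$ of the claim $\mathsf{G}(\eta,\astmt)=\mathsf{F}(\eta,\astmt)$ for all $\eta$, which I would prove by two inclusions.

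Before the inclusions I would record the structural identities for $\mathsf{G}$, read off \Cref{Figure:ConstraintSystem}: $\mathsf{G}(\eta,\acom)=\spof{\eta,\acom}$; $\mathsf{G}(\eta,\astmt_1;\astmt_2)=\mathsf{G}(\mathsf{G}(\eta,\astmt_1),\astmt_2)$; $\mathsf{G}(\eta,\astmt_1\choice\astmt_2)=\mathsf{G}(\eta,\astmt_1)\sqcup\mathsf{G}(\eta,\astmt_2)$; and $\mathsf{G}(\eta,\astmt^*)$ is the least $Y$ with $\eta\sqsubseteq Y$ and $\mathsf{G}(Y,\astmt)\sqsubseteq Y$. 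The composition identity holds because the fresh variable $Z$ introduced for $\astmt_1;\astmt_2$ never occurs on the right-hand side of any constraint of the $\astmt_2$-subsystem (there it is used only to feed inputs), so the global least fixpoint determines $\lsolof{Z}=\mathsf{G}(\eta,\astmt_1)$ independently and then propagates it through the $\astmt_2$-subsystem. Each of these functions is monotone in $\eta$ over the finite complete lattice $\allenvsbot$, using that $\spof{\cdot,\acom}$ is monotone and continuous; monotonicity of $\mathsf{G}(\cdot,\astmt)$ then follows by induction, the loop case being a Knaster--Tarski least pre-fixpoint.

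For $\mathsf{F}\sqsubseteq\mathsf{G}$ I would, by induction on $\astmt$, exhibit a derivation of $\typestmt{\eta}{\astmt}{\mathsf{G}(\eta,\astmt)}$ whenever $\mathsf{G}(\eta,\astmt)\neq\fail$: for a command the matching rule of \Cref{fig:type-rules:com} applies with arbitrary pre-environment once $\premiseof{\acom}(\eta)$ holds, yielding $\updateof{\acom}(\eta)=\spof{\eta,\acom}$; for $\astmt_1;\astmt_2$ I chain the inductive derivations with Rule~\ref{rule:seq}; for $\astmt_1\choice\astmt_2$ I weaken each branch output up to the join with Rule~\ref{rule:infer} and apply Rule~\ref{rule:choice}; for $\astmt^*$ I set $Y^*:=\mathsf{G}(\eta,\astmt^*)$, take $\typestmt{Y^*}{\astmt}{\mathsf{G}(Y^*,\astmt)}$, weaken its output up to $Y^*$ (legitimate since $\mathsf{G}(Y^*,\astmt)\sqsubseteq Y^*$), apply Rule~\ref{rule:loop}, and weaken the input from $\eta\sqsubseteq Y^*$. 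For the converse $\mathsf{G}\sqsubseteq\mathsf{F}$ I would show $\mathsf{G}(\eta,\astmt)\sqsubseteq\env$ for every derivable $\typestmt{\eta}{\astmt}{\env}$ by induction on the derivation: the $\infer$ case uses monotonicity of $\mathsf{G}(\cdot,\astmt)$, and the command, sequence, choice and loop cases reduce to the corresponding identity for $\mathsf{G}$, the loop case using that the premise $\typestmt{\env}{\astmt}{\env}$ makes $\env$ a pre-fixpoint above $\eta$ and hence above the least such. Taking $\eta=\envinit$ gives $\lsolof{X}=\mathsf{G}(\envinit,\astmt)=\mathsf{F}(\envinit,\astmt)=\bigsqcap_{\typejudge{\,\envinit}{\astmt}{\env}}\env$; and since every derivable $\env$ is a genuine environment rather than $\fail$, a nonempty meet stays in $\actypes^{\allvars}$, whence $\lsolof{X}\neq\fail$ iff some typing exists, i.e.\ iff $\typechecks{\astmt}$.

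The main obstacle I anticipate is making the factoring $\mathsf{G}(\eta,\astmt_1;\astmt_2)=\mathsf{G}(\mathsf{G}(\eta,\astmt_1),\astmt_2)$ and the loop-as-least-pre-fixpoint identity fully rigorous: one must verify that every fresh intermediate variable is used strictly as an input downstream, so that the global least fixpoint decomposes into the nested least fixpoints, and that the Kleene iteration in $\allenvsbot$ converges (guaranteed by finiteness and continuity of $\spof{\cdot,\acom}$). The remaining work is bookkeeping, namely checking that each weakening used in the $\mathsf{F}\sqsubseteq\mathsf{G}$ direction is licensed by $\sqsubseteq{}={}\leadsto$ and thus by Rule~\ref{rule:infer}.
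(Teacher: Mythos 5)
Your proposal is correct and follows essentially the same route as the paper: both directions are established by the correspondence between type derivations and solutions of the constraint system --- the least solution itself induces a derivation (your $\mathsf{F}\sqsubseteq\mathsf{G}$ direction), and every derivation induces a solution, hence upper-bounds the least one (your $\mathsf{G}\sqsubseteq\mathsf{F}$ direction). The paper's proof is a two-sentence sketch of exactly this; your generalization over the input environment, the identification of Rule~\ref{rule:infer} with the lattice order $\sqsubseteq$, and the compositional identities for sequence, choice, and loop are precisely the bookkeeping needed to make that sketch rigorous.
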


It remains to check the complexity of the Kleene iteration.
In the lattice of enriched type environments, chains have length at most ${\cardof{\vars\mkern+1mu}}\,{\cdot}\,{\cardof{\set{\gactive, \glocal, \gsafeaccess, \gcustom{1}, \ldots, \gcustom{n}}}}\,{+}\,1$.
This is linear in the size of the program as the guarantees only depend on the SMR algorithm, which is not part of the input.
With one variable for each program point, also the number of variables in the constraint system is linear in the size of the program.
It remains to compute $\spof{\cdot, \acom}$ for the Kleene approximants.
This can be done in constant time.
The premise and the update of a rule only modify a constant number of variables.
Moreover,  we can look-up the effect of commands on a type in constant time.
Combined, we obtain the overall quadratic complexity.


\section{Invariant Checking}
\label{sec:invariant-check}

The type system from \Cref{sec:type_system} relies on invariant annotations in the program under scrutiny in order to incorporate runtime behavior that is typically not available to a type system.
For the soundness of our approach, we require those annotations to be correct.
Recall from \Cref{sec:type_system} that the annotations need only hold in the garbage collected (GC) semantics.
We now show how to use an off-the-shelf GC verifier to discharge the invariant annotations fully automatically.
In our experiments, we rely on \cavetool \cite{DBLP:conf/cav/Vafeiadis10,DBLP:conf/vmcai/Vafeiadis10,DBLP:conf/vmcai/Vafeiadis09}.

Making the link to tools is non-trivial.
Our programs feature programming constructs that are typically not available in off-the-shelf verifiers.
We present a source-to-source translation that replaces those constructs by standard ones.
The constructs to be replaced are SMR commands, invariants guaranteeing pointers to be active (not retired), and invariants centered around angels.
For the translation, we only rely on ordinary assertions $\assertof{\acond}$ and non-deterministic assignments $\havocof{\apvar}$ to pointers.
Both are usually available in verification tools.

The correspondence between the original program $\aprog$ and its translation $\instrumentationof{\aprog}$ is documented in \Cref{Theorem:Instrumentation} and as required.
Predicate $\issafeof{\cdot}$ evaluates to true iff the assertions hold, i.e., verification is successful.
Recall that $\nosem[\aprog]$ is the GC semantics where addresses are neither freed nor reclaimed.
Note that this semantics is the weakest a tool can assume.
Our instrumentation also works if the GC tool collects and subsequently reuses garbage nodes.

\begin{theorem}[Soundness and Completeness]
	\label{Theorem:Instrumentation}
	We have $\invholdsof{\nosem[\aprog]}$ iff $\,\issafeof{\nosem[\instrumentationof{\aprog}]}$. The source-to-source translation is linear in size.
\end{theorem}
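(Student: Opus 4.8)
The plan is to first fix the translation $\instrumentationof{\cdot}$ concretely and then prove each direction of the iff by relating the $\nosem$-computations of $\aprog$ and of $\instrumentationof{\aprog}$. The translation will leave the real code (memory and control flow) untouched and only add ghost state plus assertions, so that every construct expands into a constant number of commands; this yields the linear-size claim at once. Since $\nosem$ performs no frees, addresses are never reused and $\activeofcomp{\sigma}$ coincides with ``not retired'', and $\retiredof{\cdot}$ is monotone along a computation. I would introduce a single \emph{tracked} pointer $\adrvar$ that is $\havoc$-ed once at program start (a demonic guess of the observed address), a ghost flag recording $\retiredof{\adrvar}$, and one ghost snapshot flag per angel. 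The constructs then translate as follows: $\retireof{\apvar}$ sets the retired flag when $\apvar$ equals $\adrvar$; $\invariantof{\apvar=\apvarp}$ becomes $\assertof{\apvar=\apvarp}$; $\invariantof{\activeof{\apvar}}$ becomes an assertion that $\apvar\neq\adrvar$ or the retired flag is unset; $\invariantof{\activeof{\aghostvar}}$ records into the snapshot of $\aghostvar$ whether $\adrvar$ is currently active; $\ghostof{\chooseof{\aghostvar}}$ (re)initializes that snapshot; and $\invariantof{\containsof{\apvar}{\aghostvar}}$ becomes an assertion that $\apvar\neq\adrvar$ or the snapshot of $\aghostvar$ reports $\adrvar$ active at the $\invariantof{\activeof{\aghostvar}}$ point. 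All other $\enter/\exit$ SMR commands become no-ops, justified by \Cref{Lemma:GC}.

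The first technical step is a simulation lemma, proved by induction on computation length: every $\tau\in\nosem[\aprog]$ lifts to a family of computations $\tau_{\anadr}\in\nosem[\instrumentationof{\aprog}]$, one per demonic value $\anadr$ of $\adrvar$, such that in $\tau_{\anadr}$ the retired flag is set exactly when $\anadr\in\retiredof{\tau'}$ at the corresponding prefix $\tau'$, and each angel snapshot faithfully records whether $\anadr\in\activeofcomp{\sigma}$ at the associated $\invariantof{\activeof{\aghostvar}}$ annotation. Because $\adrvar$ is chosen demonically and thus ranges over all addresses, an $\invariantof{\activeof{\apvar}}$ assertion passes in every $\tau_{\anadr}$ iff the actual address $\heapcomputof{\tau}{\apvar}$ is active in $\tau$: instantiating $\anadr=\heapcomputof{\tau}{\apvar}$ shows necessity, and for $\anadr\neq\heapcomputof{\tau}{\apvar}$ the assertion is vacuous. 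Hence first-order assertions over the single tracked address soundly and completely capture the per-address predicate $\heapcomputof{\sigma}{\apvar}\in\activeofcomp{\sigma}$.

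The heart of the argument, and the main obstacle, is compiling away the \emph{angelic}, second-order semantics of angels. In $\invholdsof{\nosem[\aprog]}$ the angel is existentially quantified over sets of addresses, whereas the verifier offers only demonic choice. The reconciling observation is that this existential is constrained only from below by the membership assertions and is downward closed by the active constraint: the minimal witness for $\aghostvar$ is exactly the set of addresses ever membership-tested against it, and if any witness satisfies $\aghostvar\subseteq\activeofcomp{\sigma}$ then so does this minimal one, since subsets of active sets are active. Consequently the angelic choice succeeds iff every address later asserted to lie in $\aghostvar$ was active at the $\invariantof{\activeof{\aghostvar}}$ point — a purely per-address condition, which is precisely what the snapshot-plus-membership assertions verify as $\adrvar$ ranges over each tested address.

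With these pieces the iff follows. For the forward direction, assume $\invholdsof{\nosem[\aprog]}$; given $\tau$ and any tracked $\anadr$, the witness sets supplied by the angelic semantics together with the downward-closure observation discharge every instrumented assertion of $\tau_{\anadr}$, so $\issafeof{\nosem[\instrumentationof{\aprog}]}$. For the converse, assume all assertions hold; for each $\tau$ and angel $\aghostvar$ take as witness the set of membership-tested addresses, and instantiate $\adrvar$ with each such address to conclude from the passing assertions that it was active at the snapshot point, establishing $\aghostvar\subseteq\activeofcomp{\sigma}$ and the membership assertions, hence $\invholdsof{\tau}$. I expect the delicate part of turning this sketch into a full proof to be the bookkeeping for several angels and for repeated $\ghostof{\chooseof{\aghostvar}}$ / $\invariantof{\activeof{\aghostvar}}$ annotations along a single computation, since the snapshot must be tied to the right occurrence of the active annotation in the quantifier structure of $\invholdsof{\cdot}$.
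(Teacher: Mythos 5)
Your overall frame matches the paper's: the translation leaves real commands untouched (so linearity is immediate), pointer invariants $\invariantof{\activeof{\apvar}}$ are reduced to a demonically guessed tracked address plus a retired-flag, and the second-order angelic choice is collapsed to a per-address condition via your minimal-witness observation (the existential $\exists\aghostvar$ is bounded below by the membership annotations and above by the active annotations, so satisfiability is equivalent to: every membership-tested address is active at every $\ghostof{\activeof{\aghostvar}}$ point of that angel instance). That observation is correct and is exactly what underlies the paper's instrumentation; whether one guesses the tracked address at program start or, as the paper does, at the $\retire$ that will cause the failure, is immaterial.

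The genuine gap is your one-directional snapshot mechanism for angels. The constraints contributed by $\ghostof{\containsof{\apvar}{\aghostvar}}$ and $\ghostof{\activeof{\aghostvar}}$ form a single unordered conjunction under the existential, so a membership annotation may occur \emph{before} the active annotation it interacts with, and there may be several active annotations per instance. Your translation writes the snapshot at $\ghostof{\activeof{\aghostvar}}$ and asserts it at $\ghostof{\containsof{\apvar}{\aghostvar}}$, which only checks the order ``active first, member later.'' Concretely, consider $\ghostof{\chooseof{\aghostvar}}$; then $\ghostof{\containsof{\apvar}{\aghostvar}}$; then $\enterof{\retireof{\apvar}}$; then $\ghostof{\activeof{\aghostvar}}$. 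Here $\invholdsof{\nosem[\aprog]}$ fails: the only candidate member of $\aghostvar$ is retired at the active point, and under $\nosem$ retiredness is permanent since nothing is freed. Yet in your instrumentation the membership assertion is evaluated against the initial snapshot and passes, the later active annotation merely overwrites the snapshot, and nothing is asserted afterwards---so $\issafeof{\nosem[\instrumentationof{\aprog}]}$ holds while $\invholdsof{\nosem[\aprog]}$ fails, breaking one direction of the iff. Initializing the snapshot pessimistically instead breaks the other direction on the program with a membership test and no retire at all. This is precisely why the paper keeps \emph{two} flags per angel, $\mymathtt{included\_}\aghostvar$ and $\mymathtt{failed\_}\aghostvar$, with an assertion on \emph{both} sides: the member annotation asserts $\neg\mymathtt{failed\_}\aghostvar$ and raises $\mymathtt{included\_}\aghostvar$, while the active annotation (when the tracked address is retired) asserts $\neg\mymathtt{included\_}\aghostvar$ and raises $\mymathtt{failed\_}\aghostvar$, so the two facts can be established in either order. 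Your closing remark treats this ordering issue as bookkeeping to be sorted out later; it is in fact the essential mechanism, and without it the proposed instrumentation does not satisfy the theorem.
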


\begin{figure}
  \begin{tcolorbox}
	\setalignspacingtofiguremode
	\center
	\setlength\extrarowheight{2.5pt} 
	\newcommand{\isdefinedas}{\;:=\;}
	\newcommand{\codenot}{\mymathtt{!}}
	\newcommand{\codeor}{\:\vee\:}
	\newcommand{\codeand}{\:\wedge\:}
	\newcommand{\codesemi}{;\mkern+2mu}
	\newcolumntype{L}[1]{>{\raggedright\let\newline\\\arraybackslash\hspace{0pt}}p{#1}}
	\newcolumntype{C}[1]{>{\centering\let\newline\\\arraybackslash\hspace{0pt}}p{#1}}
	\newcolumntype{R}[1]{>{\raggedleft\let\newline\\\arraybackslash\hspace{0pt}}p{#1}}
	\begin{tabular}{R{3.2cm}@{\(\isdefinedas\)}L{3.5cm}@{\hspace{.5cm}}R{3.2cm}@{\(\isdefinedas\)}L{3cm}}
		\( \instrumentationof{\astmt^*} \)
			&\( \instrumentationof{\astmt}^* \)
		&
		\( \instrumentationof{\enterof{\afuncof{\vecof{\apvar},\vecof{\advar}}}} \)
			&\( \cskip \)
		\\
		\( \instrumentationof{\astmt_1\choice \astmt_2} \)
			&\( \instrumentationof{\astmt_1}\choice \instrumentationof{\astmt_2} \)
		&
		\( \instrumentationof{\exitof{\afunc}} \)
			&\( \cskip \)
		\\
		\( \instrumentationof{\astmt_1\codesemi \astmt_2} \)
			&\( \instrumentationof{\astmt_1}\codesemi \instrumentationof{\astmt_2} \)
		&
		\( \instrumentationof{\invariantof{\apvar=\apvarp}} \)
			&\( \assertof{\apvar=\apvarp\mkern+2mu} \)
		\\
		\( \instrumentationof{\acom} \)
			&\( \acom \)
		\\
	\end{tabular}\vspace{0.05cm}
	\begin{tabular}{R{3.2cm}@{\(\isdefinedas\)}L{10.3cm}}
		%
		\( \instrumentationof{\enterof{\retireof{\apvarp}}} \)
			&\( \cskip\choice\big(
				\mymathtt{retire\_ptr}:=\apvarp\codesemi
				\mymathtt{retire\_flag}:=\btrue
			\big) \)
		\\
		\( \instrumentationof{\invariantof{\activeof{\apvar}}} \)
			&\( \assertof{\codenot\mymathtt{retire\_flag}\codeor\mymathtt{retire\_ptr}\neq\apvar} \)
		\\[0.1cm]
		%
		\( \instrumentationof{\ghostof{\chooseof{\aghostvar}}} \)
			&\( \havocof{\aghostvar}\codesemi
			\mymathtt{included\_}\aghostvar:=\bfalse\codesemi
			\mymathtt{failed\_}\aghostvar:=\bfalse \)
		\\
		\( \instrumentationof{\ghostof{\containsof{\apvarp}{\aghostvar}}} \)
			&\( \cskip\choice\big(
				\assumeof{\apvarp=\aghostvar}\codesemi
				\assertof{\codenot\mymathtt{failed\_}\aghostvar}\codesemi
				\mymathtt{included\_}\aghostvar:=\btrue
			\big) \)
		\\
		\( \instrumentationof{\ghostof{\activeof{\aghostvar}}} \)
			&\( \cskip\choice\big(
				\assumeof{\mymathtt{retire\_flag}\codeand\mymathtt{retire\_ptr}=\aghostvar}\codesemi
				\)\newline\(\phantom{\cskip\choice\big(}
				\assertof{\codenot\mymathtt{included\_}\aghostvar}\codesemi
				\mymathtt{failed\_}\aghostvar:=\btrue
			\big) \)
	\end{tabular}
	\vspace{-2mm}
	\caption{%
		Source-to-source translation replacing SMR commands and invariant annotations. 
	}
	\label{Figure:Instrumentation}
  \end{tcolorbox}
\end{figure}

The source-to-source translation is defined in \Cref{Figure:Instrumentation}.
It preserves the structure of the program and does not modify ordinary commands.
SMR calls and returns will be taken care of by the type system.
They are ignored, except for retire.
Invariants guaranteeing pointer equality yield~assertions.

The purpose of invariants $\invariantof{\activeof{\apvar}}$ is to guarantee that the address held by the pointer has not been retired since its last allocation.
The idea of our translation is to guess the moment of failure, the retire function after which such an invariant will be checked.
We instrument the program by an additional pointer $\mymathtt{retire\_ptr}$ and a Boolean variable $\mymathtt{retire\_flag}$.
Both are shared.
A retire translates into a non-deterministic choice between skipping the command or being the retire after which an invariant will fail.
In the latter case, the address is stored in $\mymathtt{retire\_ptr}$ and $\mymathtt{retire\_flag}$ is raised.
Note that the instrumentation is tailored towards garbage collection.
As long as $\mymathtt{retire\_ptr}$ points to the address, it will not be reallocated.
Therefore, we do not run the risk of the address becoming active ever again.
The invariant $\invariantof{\activeof{\apvar}}$ now translates into an assertion that checks the address of $\apvar$ for being the retired one and the flag for being raised.
A thing to note is that the instrumentation of the retire function is not atomic.
Hence, there may be an interleaving where a pointer has been stored in $\mymathtt{retire\_ptr}$ but the flag has not yet been raised.
The assertion would consider this interleaving safe.
However, if there is such an interleaving, there is also one where the assertion fails.
Hence, atomicity is not needed.

For invariants involving angels, the idea of the instrumentation is the same as for pointers, guessing the moment of failure.
What makes the task more difficult is the angelic semantics.
We cannot just guess a value for the angel and show that it makes an invariant fail.
Instead, we have to show that, no matter how the value is chosen, it inevitably leads to an invariant failure.
This resembles the idea of having a strategy to win against an opponent in a turn-based game,  a common phenomenon when quantifier alternation is involved~\cite{GTW2500}.
Another source of difficulty is the fact that angels are second-order variables storing sets.
We tackle the problem by guessing an element in the set for which verification fails.

The instrumentation proceeds as follows.
We consider angels $\aghostvar$ to be ordinary pointers.
For each angel, we add two Boolean variables $\mymathtt{included\_}\aghostvar$ and  $\mymathtt{failed\_}\aghostvar$ that are local to the thread.
When we allocate an angel using $\ghostof{\chooseof{\aghostvar}}$, we guess the address that (i) will inevitably belong to the set of addresses held by the angel and (ii) for which an active invariant will fail.
To document that we are sure of (i), we raise flag $\mymathtt{included\_}\aghostvar$.
For (ii), we use $\mymathtt{failed\_}\aghostvar$.
If we are sure of both facts, we let verification fail.
Note that we can derive the facts in arbitrary order.

An invariant $\ghostof{\containsof{\apvarp}{\aghostvar}}$ forces the angel to contain the address of $\apvarp$.
This may establish (i).
The reason it does not establish (i) for sure is that the angel denotes a set of addresses,  and the address of $\apvarp$ could be different from the one for which an active invariant fails.
Hence, we non-deterministically choose between skipping the invariant or comparing $\apvarp$ to $\aghostvar$.
If the comparison succeeds, we raise  $\mymathtt{included\_}\aghostvar$.
Moreover, we check (ii).
If the address has been retired, we report a bug.

Invariant $\ghostof{\activeof{\aghostvar}}$ forces all addresses held by the angel to be active.
In the instrumented program, $\aghostvar$ is a pointer that we compare to $\mymathtt{retire\_ptr}$ introduced above.
If the address has been retired, we are sure about (ii) and document this by raising $\mymathtt{failed\_}\aghostvar$.
If we already know (i), the address inevitably belongs to the set held by the angel, verification fails.


\section{evaluation}
\label{sec:evaluation}

We implemented our approach in a \cpp tool called \thetool.\footnote{%
	Available at: \url{https://wolff09.github.io/seal/}
}.
As stated before, we use the state-of-the-art tool \textsc{cave} \cite{DBLP:conf/cav/Vafeiadis10,DBLP:conf/vmcai/Vafeiadis10,DBLP:conf/vmcai/Vafeiadis09} as a back-end verifier for discharging annotations and checking linearizability.
For the type inference, our tool computes the most precise guarantees $\gcustom{L}$ on-the-fly; there is no need for the user to manually specify them.
To substantiate the claim of usefulness of our approach, we evaluated \thetool on examples from the literature.
We considered the following data structures: Treiber's stack \cite{opac-b1015261,DBLP:conf/podc/Michael02}, Michael\&Scott's lock-free queue \cite{DBLP:conf/podc/MichaelS96,DBLP:conf/podc/Michael02}, the DGLM queue \cite{DBLP:conf/forte/DohertyGLM04}, the Vechev\&Yahav CAS set \cite{DBLP:conf/pldi/VechevY08}, the Vechev\&Yahav DCAS set \cite{DBLP:conf/pldi/VechevY08}, the ORVYY set \cite{DBLP:conf/podc/OHearnRVYY10}, and Michael's set \cite{DBLP:conf/spaa/Michael02}.
Our benchmarks include a version of each data structure for hazard pointers (HP) \cite{DBLP:conf/podc/Michael02} and epoch-based reclamation~(EBR) \cite{DBLP:phd/ethos/Fraser04}.
We adapted the GC implementations of the Vechev\&Yahav DCAS set, the Vechev\&Yahav CAS set, and the ORVYY set given in the literature to use HP/EBR.


\begin{table}%
  \begin{tcolorbox}
	\caption{Experimental results for verifying singly-linked data structures using safe memory reclamation. The experiments were conducted on an Intel i5-8600K@3.6GHz with 16GB of RAM.}%
	\label{table:benchmarks}%
	\begin{minipage}{\columnwidth}
		\center%
		\newcommand{\cellSpacer}{\hspace{3mm}}
		\newcommand{\widthLeft}{1cm}
		\newcommand{\widthRight}{.5cm}
		\newcommand{\cellDunno}[1]{\makebox[\widthLeft][r]{\(#1\)}\cellSpacer\makebox[\widthRight][l]{\rawsymbolTO}}
		\newcommand{\cellTO}{\makebox[\widthLeft][r]{---}\cellSpacer\makebox[\widthRight][l]{\rawsymbolTO}}
		\newcommand{\cellYes}[1]{\makebox[\widthLeft][r]{\(#1\)}\cellSpacer\makebox[\widthRight][l]{\rawsymbolYes}}
		\newcommand{\cellNo}[2][]{\makebox[\widthLeft][r]{\(#2\)}\cellSpacer\makebox[\widthRight][l]{\rawsymbolNo#1}}
		\newcommand{\seprule}{\midrule[.2pt]}
		\begin{tabularx}{\textwidth}{lXccc}%
			\toprule
			SMR
				&Program
				& Type Inference
				& Annotations
				& Linearizability
				\\
			\midrule
			\multirow{8}{*}{HP}
			&Treiber's stack
				& \cellYes{0.7s}
				& \cellYes{12s}
				& \cellYes{1s}
				\\
			&Opt. Treiber's stack
				& \cellYes{0.5s}
				& \cellYes{11s}
				& \cellYes{1s}
				\\
			&Michael\&Scott's queue
				& \cellYes{0.6s}
				& \cellYes{12s}
				& \cellYes{4s}
				\\
			&DGLM queue
				& \cellYes{0.6s}
				& \cellNo[\footnote{False-positive due to imprecision in the back-end verifier.\label{footnote:imprecision}}]{1s}
				& \cellYes{5s}
				\\
			&Vechev\&Yahav DCAS set
				& \cellYes{1.2s}
				& \cellYes{13s}
				& \cellYes{98s}
				\\
			&Vechev\&Yahav CAS set
				& \cellYes{1.2s}
				& \cellYes{3.5h} 
				& \cellYes{42m} 
				\\
			&ORVYY set
				& \cellYes{1.2s}
				& \cellYes{3.2h} 
				& \cellYes{47m} 
				\\
			&Michael's set
				& \cellYes{1.2s}
				& \cellNo[\textsuperscript{\ref{footnote:imprecision}}]{90s}
				& \cellTO
				\\
			\midrule
			\multirow{8}{*}{EBR}
			&Treiber's stack
				& \cellYes{0.6s}
				& \cellYes{10s}
				& \cellYes{1s}
				\\
			&Michael\&Scott's queue
				& \cellYes{0.7s}
				& \cellYes{16s}
				& \cellYes{5s}
				\\
			&DGLM queue
				& \cellYes{0.7s}
				& \cellNo[\textsuperscript{\ref{footnote:imprecision}}]{1s}
				& \cellYes{6s}
				\\
			&Vechev\&Yahav DCAS set
				& \cellYes{0.8s}
				& \cellYes{38s}
				& \cellYes{200s}
				\\
			&Vechev\&Yahav CAS set
				& \cellYes{0.8s}
				& \cellYes{7h}
				& \cellYes{42m}
				\\
			&ORVYY set
				& \cellYes{0.9s}
				& \cellYes{7h}
				& \cellYes{47m}
				\\
			&Michael's set
				& \cellYes{0.2s}
				& \cellNo[\textsuperscript{\ref{footnote:imprecision}}]{22s}
				& \cellTO
				\\
			\bottomrule
		\end{tabularx}
	\end{minipage}
  \end{tcolorbox}
\end{table}

Our findings are listed in \Cref{table:benchmarks}.
The experiments were conducted on an Intel i5-8600K@3.6GHz with 16GB of RAM.
The \namecref{table:benchmarks} includes the time taken
\begin{inparaenum}[(i)]
	\item for the type inference,
	\item for discharging the invariant annotations, and
	\item to check linearizability.
\end{inparaenum}
We mark tasks with~\symbolYes if they were successful, with~\symbolNo if they failed, and with~\symbolTO if they timed out after $12h$ wall time.

Our approach is capable of verifying most of the lock-free data structures we considered.
Comparing the total runtime with our competitors \cite{DBLP:journals/pacmpl/MeyerW19}, the only other approach capable of handling lock-free data structures with general SMR algorithms, we experience a speed-up of over two orders of magnitude on examples like Michael\&Scott's queue.
Besides the speed-up, we are the first to automatically verify lock-free set algorithms that use SMR.

We were not able to discharge the annotations of the DGLM queue and Michael's set.
Imprecision in the thread-modular abstraction of our back-end verifier resulted in false-positives being reported.
Hence, we cannot guarantee the soundness of our analysis in these cases. 
This is no limitation of our approach, it is a shortcoming of the back-end verifier.
\Citet{DBLP:journals/pacmpl/MeyerW19} reported a similar issue that they solved by manually providing hints to improve the precision of their analysis.


The annotation checks for set implementations are interesting.
While the HP version of an implementation is typically more involved than the corresponding version using EBR, the annotation checks for the HP version are more efficient.
The reason for this could be that EBR implementations require angels.
The conjecture suggests that discharging angels is harder for \cavetool than discharging active annotations although our instrumentation uses the same idea for both annotation types.

For the benchmarks from \Cref{table:benchmarks} we preprocessed the implementations by applying mover types~\cite{Lipton75}, a well-known program transformation (cf. \Cref{sec:related_work}).
Intuitively, a command is a mover if it can be reordered with commands of other threads.
This allows for the command to be moved to the next command of the same thread, effectively constructing an atomic block containing the mover and the next command.
What is remarkable in our setting is that SMR commands ($\enter$, $\exit$, $\free$) always move over ordinary memory commands, and vice versa.
(Technically, this requires $\enter$ commands to contain only thread-local variables, a property than be checked/established easily.)
As a result, we can find movers for memory commands using existing techniques.
For SMR commands, movers can be read of the SMR automaton.
Our tool is able to find and apply movers.
Due to space constraints, we omit a thorough discussion of the matter.


\presection
\section{Related Work}
\label{sec:related_work}

\paragraph{Safe Memory Reclamation}
Besides EBR and HP there is another basic SMR technique: reference counting (RC).
RC extends records with an integer field counting the number of pointers to the record.
Safely modifying counters in a lock-free manner, however, requires hazard pointers \cite{DBLP:journals/tocs/HerlihyLMM05} or a mostly unavailable CAS for two arbitrary memory locations \cite{DBLP:conf/podc/DetlefsMMS01}.

Recent efforts in developing SMR algorithms have mostly combined existing SMR techniques.
For example, \emph{DEBRA} \cite{DBLP:conf/podc/Brown15} is an optimized EBR implementation.
\Citet{DBLP:conf/wdag/Harris01} modifies EBR to store epochs inside records.
\emph{Hyaline} \cite{DBLP:conf/podc/0001R19} is used like EBR.
Optimized HP implementations include the work by \citet{DBLP:conf/podc/AghazadehGW13a}, the work by \citet{DBLP:conf/iwmm/DiceHK16}, and \emph{Cadence} \cite{DBLP:conf/spaa/BalmauGHZ16}.
\emph{Dynamic Collect} \cite{DBLP:conf/podc/DragojevicHLM11}, \emph{StackTrack} \cite{DBLP:conf/eurosys/AlistarhEHMS14}, and \emph{ThreadScan} \cite{DBLP:conf/spaa/AlistarhLMS15} are HP-esque implementations exploring the use of operating system and hardware support.
\emph{Drop the Anchor} \cite{DBLP:conf/spaa/BraginskyKP13}, \emph{Optimistic Access} \cite{DBLP:conf/spaa/CohenP15}, \emph{Automatic Optimistic Access} \cite{DBLP:conf/oopsla/CohenP15}, \emph{QSense} \cite{DBLP:conf/spaa/BalmauGHZ16}, \emph{Hazard Eras} \cite{DBLP:conf/spaa/RamalheteC17}, and \emph{Interval-based Reclamation} \cite{DBLP:conf/ppopp/WenICBS18} combine EBR and HP.
\emph{Free Access} \cite{DBLP:journals/pacmpl/Cohen18} automates the application of Automatic Optimistic Access.
While the method promises to be correct by construction, we believe that performance-critical applications choose the SMR technique based on performance rather than ease of use.
The demand for automated verification remains.
\emph{Beware\&Cleanup} \cite{DBLP:conf/ispan/GidenstamPST05} combines HP and RC.
\emph{Isolde} \cite{DBLP:conf/iwmm/YangW17} combines EBR and RC.
We believe our approach can handle other SMR algorithms besides EBR and HP as well.

\paragraph{Memory Safety}
We use our type system to show that a program is free from pointer races, meaning that it is memory safe.
There are a number of related tools that can check pointer programs for memory safety.
For example:
\begin{inparaitem}[]
	\item a combination of \textsc{ccured} \cite{DBLP:conf/popl/NeculaMW02} and \textsc{blast} \cite{DBLP:conf/spin/HenzingerJMS03} due to \citet{DBLP:conf/fase/BeyerHJM05},
	\item \textsc{invader} \cite{DBLP:conf/cav/YangLBCCDO08},
	\item \textsc{xisa} \cite{DBLP:conf/esop/LavironCR10},
	\item \textsc{slayer} \cite{DBLP:conf/cav/BerdineCI11},
	\item \textsc{infer} \cite{DBLP:conf/nfm/CalcagnoD11},
	\item \textsc{forester} \cite{DBLP:conf/cav/HolikLRSV13},
	\item \textsc{predator} \cite{DBLP:conf/sas/DudkaPV13,DBLP:conf/hvc/HolikKPSTV16}, and
	\item \textsc{aprove} \cite{DBLP:journals/jar/StroderGBFFHSA17}.
\end{inparaitem}
These tools can only handle sequential code.
Moreover, unlike our type system, they include memory/shape abstractions to identify unsafe pointer operations.
We delegate this task to a back-end verifier with the help of annotations.
That is, if the related tools were to support concurrent programs, they were candidates for the back-end.
We used \cavetool \cite{DBLP:conf/cav/Vafeiadis10,DBLP:conf/vmcai/Vafeiadis10} as it can also prove linearizability.

Despite the differences, we point out that the combination of \textsc{blast} and \textsc{ccured} \cite{DBLP:conf/fase/BeyerHJM05} is close to our approach in spirit.
\textsc{ccured} performs a type check of the program under scrutiny which checks for unsafe memory operations.
While doing so, it annotates pointer operations in the program with run-time checks in case the type check could not establish the operation to be safe.
The run-time checks are then discharged using \textsc{blast}.
The approach is limited to sequential programs.
Moreover, we incorporate the behavior of the SMR.
Finally, our type system is more lightweight and we discharge the invariants in a simpler semantics without memory deletions.

\citet{DBLP:conf/ecoop/CastegrenW17} give a type system that guarantees the absence of data races.
Types encode a notion of ownership that prevents non-owning threads from accessing a node.
Their method is tailored towards GC and requires to rewrite programs with appropriate type specifiers.
Recently, \citet{DBLP:conf/esop/KuruG19} presented a type system for checking the correct use of RCU.
Unlike our approach, they integrate a fixed shape analysis and a fixed RCU specification.
This makes the type system considerably more complicated and the type check potentially more expensive.
Unfortunately, \citet{DBLP:conf/esop/KuruG19} did not implement their approach.

Besides memory safety, tools like \textsc{invader}, \textsc{slayer}, \textsc{infer}, \textsc{forester}, \textsc{predator}, and the type system by \citet{DBLP:conf/esop/KuruG19} discover memory leaks.
A successful type check with our type system does not imply the absence of memory leaks. 
We believe that the outcome of our analysis could help a leak detection tool.
For example, by performing a linearizability check to find the abstract data type the data structure under consideration implements.
We consider a closer investigation of the matter as future work.

\paragraph{Typestate}
Typestate \cite{DBLP:journals/tse/StromY86} extents an object's type to carry a notion of state.
The methods of an object can be annotated to modify this state and to be available only in a certain state.
Existing analyses checking for methods being called only in the appropriate state include \cite{DBLP:conf/pldi/FosterTA02,DBLP:conf/pldi/FahndrichD02,DBLP:conf/ecoop/DeLineF04,DBLP:conf/issta/FinkYDRG06,DBLP:conf/oopsla/BierhoffA07}.
Our types can be understood as typestates for pointers (and the objects they reference) geared towards SMR.
However, whereas an object's typestate has a global character, our types reflect a thread's local perception.
\citet{DBLP:conf/pldi/DasLS02} give a typestate analysis based on symbolic execution to increase precision.
Similarly, we increase the applicability of our approach by using annotations that are discharged by a back-end verifier.
For a more detailed overview on typestate, refer to \cite{DBLP:journals/ftpl/AnconaBB0CDGGGH16}.

\paragraph{Program Logics}
There are several program logics for verifying concurrent programs with heap.
Examples are:
\begin{inparaitem}[]
	\item \textsc{sagl} \cite{DBLP:conf/esop/FengFS07},
	\item \textsc{rgsep} \cite{DBLP:conf/concur/VafeiadisP07} (used by \textsc{cave} \cite{DBLP:conf/vmcai/Vafeiadis10}),
	\item \textsc{lrg} \cite{DBLP:conf/popl/Feng09},
	\item Deny-Guarantee \cite{DBLP:conf/esop/DoddsFPV09},
	\item \textsc{cap} \cite{DBLP:conf/ecoop/Dinsdale-YoungDGPV10},
	\item \textsc{hlrg} \cite{DBLP:conf/concur/FuLFSZ10}, and
	\item the work by \citet{DBLP:conf/esop/GotsmanRY13}.
\end{inparaitem}
Program logics are conceptually related to our type system.
However, such logics integrate further ingredients to successfully verify intricate lock-free data structures \cite{DBLP:conf/oopsla/TuronVD14}.
Most importantly, they include memory abstractions, like (concurrent) separation logic \cite{DBLP:conf/csl/OHearnRY01,DBLP:conf/lics/Reynolds02separationlogic,DBLP:conf/concur/OHearn04,DBLP:conf/concur/Brookes04}, and mechanisms to reason about thread interference, like rely-guarantee \cite{DBLP:journals/toplas/Jones83}.
This makes them much more complex than our type system.
We deliberately avoid incorporating a memory abstraction into our type system to keep it as flexible as possible.
Instead, we use annotations to delegate the shape analysis to a back-end verifier, achieving modularity in verifying the data structure and its memory management separately.
Moreover, accounting for thread interference in our type system boils down to defining guarantees as closed sets of locations and removing guarantee $\gactive$ upon exiting atomic blocks.

Oftentimes, invariant-based reasoning about interference turns out too restrictive for verification.
To overcome this, program logics like
\begin{inparaitem}[]
	\item \textsc{caresl} \cite{DBLP:conf/icfp/TuronDB13},
	\item \textsc{fcsl} \cite{DBLP:conf/esop/NanevskiLSD14},
	\item \textsc{icap} \cite{DBLP:conf/esop/SvendsenB14},
	\item \textsc{tada} \cite{DBLP:conf/ecoop/PintoDG14},
	\item \textsc{gps} \cite{DBLP:conf/oopsla/TuronVD14}, and
	\item \textsc{iris} \cite{DBLP:conf/popl/JungSSSTBD15}
\end{inparaitem}
make use of protocols.
A protocol captures possible thread interference, for example, using state transition systems.
(Rely-guarantee is a particular instantiation of a protocol \cite{DBLP:conf/icfp/TuronDB13,DBLP:conf/popl/JungSSSTBD15}.)
In our approach, SMR automata are protocols that govern memory deletions and protections, that is, describe the influence of SMR-related actions among threads.
Our types describe a thread's local, per-pointer perception of that global protocol.

Besides protocols, recent logics like \textsc{caresl}, \textsc{tada}, and \textsc{iris} integrate reasoning in the spirit of atomicity abstraction/refinement \cite{Lipton75,Dijkstra1982}.
Intuitively, they allow the client of a fine-grained module to be verified against a coarse-grained specification of the module.
For example, a client of a data structure can be verified against its abstract data type, provided the data structure refines the abstract data type.
Following \cite{DBLP:journals/pacmpl/MeyerW19}, we use the same idea wrt. SMR algorithms: we consider SMR automata instead of the actual SMR implementations.

Some program logics can also unveil memory leaks \cite{DBLP:conf/esop/GotsmanRY13,DBLP:journals/pacmpl/BizjakGKB19}.

\paragraph{Linearizability}
Linearizability testing \cite{DBLP:conf/pldi/VechevY08,DBLP:conf/fm/LiuCLS09,DBLP:conf/pldi/BurckhardtDMT10,DBLP:conf/cav/CernyRZCA10,DBLP:conf/icse/Zhang11a,DBLP:journals/tse/Liu0L0ZD13,DBLP:conf/hvc/TravkinMW13,DBLP:conf/pldi/EmmiEH15,DBLP:conf/forte/HornK15a,DBLP:journals/concurrency/Lowe17,DBLP:journals/corr/YangKLW17,DBLP:journals/pacmpl/EmmiE18} is a bug hunting technique to find non-linearizable executions in large code bases.
Since we focus on verification, we do not go into the details of linearizability testing.
However, it could be worthwhile to use a linearizability tester instead of a verification back-end in our approach to provide faster feedback during the development process and only use a verifier once the development is considered finished.

Verification techniques for linearizability fall into two categories: manual techniques (including tool-supported but not fully automated techniques) and automatic techniques.
Manual approaches require the human checker to have a deep understanding of the proof techniques as well as the program under scrutiny---in our case, this includes a deep understanding of the lock-free data structure as well as the SMR implementation.
This may be the reason why many manual proofs do not consider reclamation \cite{DBLP:journals/entcs/ColvinDG05,DBLP:conf/cav/ColvinGLM06,DBLP:conf/iceccs/Groves07,DBLP:conf/cats/Groves08,DBLP:conf/wdag/DohertyM09,DBLP:conf/tacas/ElmasQSST10,DBLP:conf/podc/OHearnRVYY10,DBLP:journals/fac/BaumlerSTR11,DBLP:journals/toplas/DerrickSW11,DBLP:journals/fac/Jonsson12,DBLP:conf/popl/LiangFF12,DBLP:conf/pldi/LiangF13,DBLP:journals/toplas/LiangFF14,DBLP:conf/wdag/HemedRV15,DBLP:conf/pldi/SergeyNB15,DBLP:conf/esop/SergeyNB15,DBLP:conf/cav/BouajjaniEEM17,DBLP:conf/ecoop/DelbiancoSNB17,DBLP:conf/esop/KhyzhaDGP17,DBLP:conf/cav/SchellhornWD12,DBLP:conf/concur/HenzingerSV13}.
There are fewer works that consider reclamation \cite{DBLP:conf/forte/DohertyGLM04,DBLP:conf/popl/DoddsHK15,DBLP:journals/pacmpl/KrishnaSW18,DBLP:conf/popl/ParkinsonBO07,DBLP:conf/concur/FuLFSZ10,DBLP:conf/ictac/TofanSR11,DBLP:conf/esop/GotsmanRY13}.
(The work by \citet{DBLP:conf/esop/GotsmanRY13} checks memory safety and discovers memory leaks as well.)
For a more detailed overview of manual techniques, we refer to the survey by \mbox{\citet{DBLP:journals/corr/DongolD14}}.

The landscape of related work for automated linearizability proofs is similar to its manual counterpart. 
Most automated approaches ignore memory reclamation, that is, assume a garbage collector \cite{DBLP:conf/cav/AmitRRSY07,DBLP:conf/cav/BerdineLMRS08,DBLP:conf/aplas/SegalovLMGS09,DBLP:conf/spin/VechevYY09,DBLP:conf/vmcai/Vafeiadis10,DBLP:conf/cav/Vafeiadis10,DBLP:conf/spin/SethiTM13,DBLP:conf/cav/ZhuPJ15,DBLP:conf/sas/AbdullaJT16}.
When reclamation is not considered, memory abstractions are simpler and more efficient, they can exploit ownership guarantees~\cite{DBLP:conf/sas/Boyland03,DBLP:conf/popl/BornatCOP05} and the resulting thread-local reasoning techniques~\cite{DBLP:conf/csl/OHearnRY01,DBLP:conf/lics/Reynolds02separationlogic}.
Few works \cite{DBLP:conf/tacas/AbdullaHHJR13,DBLP:conf/vmcai/HazizaHMW16,DBLP:conf/sas/HolikMVW17,DBLP:journals/pacmpl/MeyerW19} address the challenge of verifying lock-free data structures under manual memory management.
Besides \citet{DBLP:journals/pacmpl/MeyerW19}, they use hand-crafted semantics that allow for accessing deleted memory.
The work by \citet{DBLP:journals/pacmpl/MeyerW19} is the closest related.
We build on their programming model and their reduction result as discussed in \Cref{sec:preliminaries,sec:prf}, respectively.
Moreover, we rely to their results for proving an SMR implementation against an SMR automaton.

\paragraph{Moverness}
Movers where first introduced by \citet{Lipton75}.
They were later generalized to arbitrary safety properties~\cite{DBLP:conf/popl/Doeppner77,DBLP:conf/parle/Back89,lamport1989pretending}.
Movers are a widely applied enabling technique for verification.
To ease the verification task, the program is made \emph{more atomic} without cutting away behavior.
Because we use standard moverness arguments, we do not give an extensive overview.
\citet{DBLP:conf/pldi/FlanaganQ03,DBLP:journals/toplas/FlanaganFLQ08}~use a type system to find movers in Java programs.
The \textsc{calvin} tool \cite{DBLP:conf/cav/FlanaganQS02,DBLP:journals/tcs/FlanaganFQS05,DBLP:journals/jot/FreundQ04} applies movers to establish pre/post conditions of functions in concurrent programs using sequential verifiers.
Similarly, \textsc{qed} \cite{DBLP:conf/popl/ElmasQT09} rewrites concurrent code into sequential code based on movers.
These approaches are similar to ours in spirit: they take the verification task to a much simpler semantics.
However, movers are not a key aspect of our approach.
We employ them only to increase the applicability of our tool in case of benign pointer races.
\citet{DBLP:conf/tacas/ElmasQSST10}~extend \textsc{qed} to establish linearizability for simple lock-free data structures.
\textsc{qed} is superseded by \textsc{civl}~\cite{DBLP:conf/cav/HawblitzelPQT15,DBLP:conf/cav/KraglQ18}.
\textsc{civl} proves programs correct by repeatedly applying movers to a program until its specification is obtained.
The approach is semi-automatic, it takes as input a so-called layered program that contains intermediary steps guiding the transformation~\cite{DBLP:conf/cav/KraglQ18}.
Movers were also applied in the context of relaxed memory \cite{BEMT18}.


\begin{acks}
	We thank the POPL'20 reviewers for their valuable feedback and suggestions for improvements.
\end{acks}

\bibliography{bibliography}


\clearpage
\newpage
\appendix


\section{Example for Hazard Pointers}
\label{appendix:hpexample}

We give a brief example on how our type system infers that a pointer can be accessed safely, i.e., how guarantee $\gsafeaccess$ is obtained, when hazard pointers are used.
\Cref{fig:hp-typing-example} depicts a common usage pattern and its typing.
HP is specified by the SMR automaton $\baseobs\times\hpobs$.
For $\hpobs$ and the HP-specific guarantees consider \Cref{fig:hp-observer}.
We use two HP-specific guarantees $\gilluinv$ and $\gilluisu$ that encode the fact that \code{protect} for the $0$-indexed hazard pointer has been invoked and returned, respectively.

\begin{wrapfigure}{r}{5.7cm}
  \begin{tcolorbox}
\begin{lstlisting}[style=typing]
  // ...
`{ ptr:$\emptyset$ }`
  enter protect(ptr, 0);
`{ ptr:$\gilluinv$ }`
  exit protect;
`{ ptr:$\gilluisu$ }`
  // ...
`{ ptr:$\gilluisu$ }`
  @inv active(ptr)
`{ ptr:$\gilluisu\wedge\gactive$ }`
`{ ptr:$\gilluisu\wedge\gactive\wedge\gsafeaccess$ }`
  // ...
\end{lstlisting}%
	\caption{%
		A typing example for a typical usage pattern of hazard pointers.
	}
	\label{fig:hp-typing-example}
  \end{tcolorbox}
\end{wrapfigure}%
In the beginning, the type of \code{ptr} is $\emptyset$.
There are no guarantees.
Next, \code{protect} is invoked with \code{ptr} as a parameter.
Using Rule~\ref{rule:enter} we obtain type $\gilluinv$ for \code{ptr}, stating that we have definitely invoked \code{protect} for \code{ptr}.
We already discussed in \Cref{sec:example:type-inference} how this is computed.
After \code{protect} returns, \code{ptr} obtains type $\gilluisu$ from an application of Rule~\ref{rule:exit}.
It encodes the fact that a protection has been issued.
Accessing \code{ptr}, however, is not safe at this point since we do not know whether \code{ptr} has been retired in the meantime.
To ensure that the protection has been successful, the subsequent code contains an active annotation.
It adds $\gactive$ to the type of \code{ptr}, Rule~\ref{rule:active}.
An application of Rule~\ref{rule:infer} allows us to infer $\gsafeaccess$ for \code{ptr}.
The type system successfully discovered that it is safe to access \code{ptr}.

\begin{figure}
  \begin{tcolorbox}
	\definecolor{colorF}{RGB} {230,40,40}
	\definecolor{colorR}{RGB} {51,34,136}
	\definecolor{colorINV}{RGB} {226,224,135}
	\definecolor{colorISU}{RGB} {157,199,220}
	\definecolor{colorS}{RGB} {112,181,169}
	\definecolor{colorA}{RGB} {52, 235, 52}

	\newcommand{\EnterA}{P0}
	\newcommand{\EnterB}{P1}
	\newcommand{\EnterAo}{!P0}
	\newcommand{\EnterBo}{!P1}
	\newcommand{\ExitA}{E0}
	\newcommand{\ExitB}{E1}
	\newcommand{\Retire}{R}
	\newcommand{\Free}{F}

	\definecolor{colorEnterA}{RGB} {160,160,40}
	\definecolor{colorEnterAo}{RGB} {160,160,40}
	\definecolor{colorExitA}{RGB} {160,160,40}
	\definecolor{colorEnterB}{RGB} {40,170,140}
	\definecolor{colorEnterBo}{RGB} {40,170,140}
	\definecolor{colorExitB}{RGB} {40,170,140}
	\definecolor{colorRetire}{RGB} {0,0,200}
	\definecolor{colorFree}{RGB} {200,0,0}

	\center
	\begin{tikzpicture}[->,>=stealth',shorten >=1pt,auto,node distance=1.7cm and 2.2cm,thick,initial text={}]
		\node [xshift=-.2cm,yshift=2.0cm,draw,thin] {$\hpobs$};

		\node[initial,state] (S1) {\mkstatename{obs:hp:s1}};
		\node[state] (S2) [right=of S1] {\mkstatename{obs:hp:s2}};
		\node[state] (S3) [right=of S2] {\mkstatename{obs:hp:s3}};
		\node[state] (S4) [right=of S3] {\mkstatename{obs:hp:s4}};
		\node[state] (S5) [below=of S3] {\mkstatename{obs:hp:s5}};
		\node[state] (S6) [below=of S4] {\mkstatename{obs:hp:s6}};
		\node[state] (S16) [below=of S2] {\mkstatename{obs:hp:s16}};
		\node[state] (S7) [below=of S5] {\mkstatename{obs:hp:s7}};
		\node[state] (S8) [below=of S6] {\mkstatename{obs:hp:s8}};
		\coordinate [below=of S7,yshift=-4.5mm,xshift=-4.5mm] (XX) {};
		\node[state] (S9) [left=of XX] {\mkstatename{obs:hp:s9}};
		\node[state] (S10) [below=of S7] {\mkstatename{obs:hp:s10}}; 
		\node[state] (S11) [below=of S8] {\mkstatename{obs:hp:s11}};
		\node[state] (S17) [below=of S9] {\mkstatename{obs:hp:s17}};
		\node[state] (S12) [below=of S10] {\mkstatename{obs:hp:s12}};
		\node[state] (S13) [below=of S11] {\mkstatename{obs:hp:s13}};
		\node[state] (S14) [below=of S12] {\mkstatename{obs:hp:s14}};
		\node[state] (S15) [below=of S13] {\mkstatename{obs:hp:s15}};
		\node[accepting,state,double=white] (SF) [right=of S4] {\mkstatename{obs:hp:sf}};

		\draw[colorEnterB] (S2) edge node[anchor=center,fill=colorEnterB,text=black] {\EnterB} (S16);
		\draw[colorExitA] (S16) edge node[anchor=center,fill=colorExitA,text=black] {\ExitA} (S5);
		\draw[colorExitB] (S16) edge[out=270,in=130] node[anchor=center,fill=colorExitB,text=black] {\ExitB} (S12);

		\draw[colorEnterA] (S9) edge node[anchor=center,fill=colorEnterA,text=black] {\EnterA} (S17);
		\draw[colorExitB] (S17) edge node[anchor=center,fill=colorExitB,text=black,pos=.4] {\ExitB} (S12);
		\draw[colorExitA] (S17) edge[out=60,in=250] node[anchor=center,fill=colorExitA,text=black,pos=.57] {\ExitA} (S5);

		\draw[colorEnterA] (S1) edge node[anchor=center,fill=colorEnterA,text=black] {\EnterA} (S2);
		\draw[colorExitA] (S2) edge node[anchor=center,fill=colorExitA,text=black] {\ExitA} (S3);
		\draw[colorRetire] (S3) edge node[anchor=center,fill=colorRetire,text=white] {\Retire} (S4);
		\draw[colorFree] (S4) edge node[anchor=center,fill=colorFree,text=black] {\Free} (SF);
		\draw[colorEnterB] (S3) edge node[anchor=center,fill=colorEnterB,text=black,pos=.7] {\EnterB} (S5);
		\draw[colorEnterB] (S4) edge node[anchor=center,fill=colorEnterB,text=black] {\EnterB} (S6);
		\draw[colorExitB] (S5) edge node[anchor=center,fill=colorExitB,text=black] {\ExitB} (S7);
		\draw[colorExitB] (S6) edge node[anchor=center,fill=colorExitB,text=black] {\ExitB} (S8);
		\draw[colorRetire] (S5) edge node[anchor=center,fill=colorRetire,text=white] {\Retire} (S6);
		\draw[colorRetire] (S7) edge node[anchor=center,fill=colorRetire,text=white] {\Retire} (S8);
		\draw[colorFree] (S6) edge node[anchor=center,fill=colorFree,text=black] {\Free} (SF);
		\draw[colorFree] (S8) edge node[anchor=center,fill=colorFree,text=black] {\Free} (SF);

		\draw[colorEnterAo,densely dashed] (S3) edge[bend right=23,above right] node[anchor=center,fill=colorEnterAo,text=black] {\EnterAo} (S1);
		\draw[colorEnterAo,densely dashed] (S4) edge[bend right,above right] node[anchor=center,fill=colorEnterAo,text=black] {\EnterAo} (S1);
		\draw[colorEnterAo,densely dashed] (S7) edge node[anchor=center,fill=colorEnterAo,text=black,pos=.65] {\EnterAo} (S10);
		\draw[colorEnterAo,densely dashed] (S8) edge node[anchor=center,fill=colorEnterAo,text=black] {\EnterAo} (S11);
		\draw[colorEnterBo,densely dashed] (S7) edge[bend left] node[anchor=center,fill=colorEnterBo,text=black,pos=.4] {\EnterBo} (S3); 
		\draw[colorEnterBo,densely dashed] (S8) edge[bend left] node[anchor=center,fill=colorEnterBo,text=black,pos=.4] {\EnterBo} (S4); 

		\draw[colorEnterB] (S1) edge[bend right=10] node[anchor=center,fill=colorEnterB,text=black] {\EnterB} (S9);
		\draw[colorExitB] (S9) edge node[anchor=center,fill=colorExitB,text=black,pos=.7] {\ExitB} (S10);
		\draw[colorRetire] (S10) edge node[anchor=center,fill=colorRetire,text=white] {\Retire} (S11);
		\draw[colorRetire] (S12) edge node[anchor=center,fill=colorRetire,text=white] {\Retire} (S13);
		\draw[colorRetire] (S14) edge node[anchor=center,fill=colorRetire,text=white] {\Retire} (S15);
		\draw[colorEnterA] (S10) edge node[anchor=center,fill=colorEnterA,text=black] {\EnterA} (S12);
		\draw[colorEnterA] (S11) edge node[anchor=center,fill=colorEnterA,text=black,pos=.6] {\EnterA} (S13);
		\draw[colorExitA] (S12) edge node[anchor=center,fill=colorExitA,text=black] {\ExitA} (S14);
		\draw[colorExitA] (S13) edge node[anchor=center,fill=colorExitA,text=black,pos=.3] {\ExitA} (S15);
		\draw[colorFree] (S11) edge[bend right=12] node[anchor=center,fill=colorFree,text=black] {\Free} (SF);
		\draw[colorFree] (S13) edge[bend right=12] node[anchor=center,fill=colorFree,text=black] {\Free} (SF);
		\draw[colorFree] (S15) edge[bend right=12] node[anchor=center,fill=colorFree,text=black] {\Free} (SF);

		\draw[colorEnterBo,densely dashed] (S10) edge[bend left=15] node[anchor=center,fill=colorEnterBo,text=black] {\EnterBo} (S1); 
		\draw[colorEnterBo,densely dashed] (S11) edge[bend left=15] node[anchor=center,fill=colorEnterBo,text=black] {\EnterBo} (S1); 
		\draw[colorEnterAo,densely dashed] (S14) edge[bend left] node[anchor=center,fill=colorEnterAo,text=black,pos=.6] {\EnterAo} (S10);
		\draw[colorEnterAo,densely dashed] (S15) edge[bend left] node[anchor=center,fill=colorEnterAo,text=black,pos=.875] {\EnterAo} (S11); 
		\draw[colorEnterBo,densely dashed] (S14) edge[bend left] node[anchor=center,fill=colorEnterBo,text=black,pos=.3] {\EnterBo} (S3); 
		\draw[colorEnterBo,densely dashed] (S15) edge node[anchor=center,fill=colorEnterBo,text=black] {\EnterBo} (S3); 

		\newcommand{\drawpartX}[5]{ 
			\draw[fill=#5,draw=none] (#1.center) -- ($(#1.center) + (#2:#4)$) arc (#2:#3:#4) -- cycle;
		}
		\newcommand{\drawtypeEinv}[2][4mm]{\drawpartX{#2}{0}{180}{#1}{colorINV}}
		\newcommand{\drawtypeEisu}[2][4mm]{\drawpartX{#2}{180}{360}{#1}{colorISU}}
		\begin{scope}[on background layer]
			\drawtypeEinv{S2}
			\drawtypeEinv{S3}
			\drawtypeEinv{S4}
			\drawtypeEinv{S5}
			\drawtypeEinv{S6}
			\drawtypeEinv{S7}
			\drawtypeEinv{S8}
			\drawtypeEinv{SF}
			\drawtypeEinv{S16}
			\drawtypeEinv{S17}
			\drawtypeEinv{S12}
			\drawtypeEinv{S13}
			\drawtypeEinv{S14}
			\drawtypeEinv{S15}

			\drawtypeEisu{S3}
			\drawtypeEisu{S4}
			\drawtypeEisu{S5}
			\drawtypeEisu{S6}
			\drawtypeEisu{S7}
			\drawtypeEisu{S8}
			\drawtypeEisu{S14}
			\drawtypeEisu{S15}
			\drawtypeEisu{SF}
		\end{scope}

		\newcommand{\makekey}[3]{ 
			\begin{scope}
				\node[draw,circle,minimum width=5mm] (foo) at (#3,-14.0) {};
				\node[anchor=west] at (#3,-14.0) {$\;\;\,\in\locsof{#1}$};
				\begin{scope}[on background layer]
					#2[2.5mm]{foo}
				\end{scope}
			\end{scope}
		}
		\makekey{\gilluinv}{\drawtypeEinv}{2.5}
		\makekey{\gilluisu}{\drawtypeEisu}{7.0}
	\end{tikzpicture}
	\begin{align*}
		\color{colorEnterA}{\EnterA} ~&:=~ \translab{\evt{\enterof{\guard}}{\athread,\anadr,0}}{\athread=\threadvar\wedge\anadr=\adrvar}
		&
		\color{colorEnterB}{\EnterB} ~&:=~ \translab{\evt{\enterof{\guard}}{\athread,\anadr,1}}{\athread=\threadvar\wedge\anadr=\adrvar}
		\\
		\color{colorEnterAo}{\EnterAo} ~&:=~ \translab{\evt{\enterof{\guard}}{\athread,\anadr,0}}{\athread=\threadvar\wedge\anadr\neq\adrvar}
		&
		\color{colorEnterBo}{\EnterBo} ~&:=~ \translab{\evt{\enterof{\guard}}{\athread,\anadr,1}}{\athread=\threadvar\wedge\anadr\neq\adrvar}
		\\
		\color{colorExitA}{\ExitA} ~&:=~ \translab{\evt{\exitof{\guard}}{\athread}}{\athread=\threadvar}
		&
		\color{colorExitB}{\ExitB} ~&:=~ \translab{\evt{\exitof{\guard}}{\athread}}{\athread=\threadvar}
		\\
		\color{colorRetire}{\Retire} ~&:=~ \translab{\evt{\enterof{\retire}}{\athread,\anadr}}{\anadr=\adrvar}
		&
		\color{colorFree}{\Free} ~&:=~ \translab{\freeof{\anadr}}{\anadr=\adrvar}
	\end{align*}
	\caption{%
		The SMR automaton $\hpobs$ specifies the Hazard Pointer method for two hazard pointers per thread.
	}
	\label{fig:hp-observer}
  \end{tcolorbox}
\end{figure}

\section{Missing Details}
\label{appendix:missing_details}

\subsection{Definitions}
\label{appendix:definitions}

\begin{definition}
	The liberal semantics is defined by the following rules, assuming $\tau\in\asem[\aprog]{X}{Y}$ and that $\anact$ respects the control flow of $\aprog$.
	\begin{description}
		\item[(Assign1)]
			If $\anact=(\athread,\psel{\apvar}:=\apvarp,[\psel{\anadr}\mapsto\anadrp])$ then $\heapcomputof{\tau}{\apvar}=\anadr$ and $\heapcomputof{\tau}{\apvarp}=\anadrp$. 
		\item[(Assign2)]
			If $\anact=(\athread,\apvar:=\apvarp,[\apvar\mapsto\heapcomputof{\tau}{\apvarp}])$.
		\item[(Assign3)]
			If $\anact=(\athread,\apvar:=\psel{\apvarp},[\apvar\mapsto\heapcomputof{\tau}{\psel{\anadr}}])$ with $\heapcomputof{\tau}{\apvarp}=\anadr\in\adr$.
		\item[(Assign4)]
			If $\anact=(\athread,\advar:=\opof{\advarp_1,\dots, \advarp_n},[\advar\mapsto\advalue])$ with $\advalue=\opof{\heapcomputof{\tau}{\advarp_1},\dots, \heapcomputof{\tau}{\advarp_n}}$.
		\item[(Assign5)]
			If $\anact=(\athread,\dsel{\apvar}:=\advarp,[\dsel{\anadr}\mapsto\heapcomputof{\tau}{\advarp}])$ with $\heapcomputof{\tau}{\apvar}=\anadr\in\adr$.
		\item[(Assign6)]
			If $\anact=(\athread,\advar:=\dsel{\apvarp},[\advar\mapsto\heapcomputof{\tau}{\dsel{\anadr}}])$ with $\heapcomputof{\tau}{\apvarp}=\anadr\in\adr$.
		\item[(Assume)]
			If $\anact=(\athread,\assumeof{\lhs\triangleq\rhs},\emptyset)$ then $\heapcomputof{\tau}{\lhs}\triangleq\heapcomputof{\tau}{\rhs}$.
		\item[(Malloc)]
			If $\anact=(\athread,\apvar:=\malloc, \anup)$, then $\anup$ has the form $\apvar\mapsto\anadr,\psel{\anadr}\mapsto\segval,\dsel{\anadr}\mapsto\advalue$ so that $\anadr\in\freshof{\tau}$ or $\anadr\in\freedof{\tau}\cap Y$.
		\item[(Atomic)]
			If $\anact=(\athread,\atomicbegin,\emptyset)$ or $\anact=(\athread,\atomicend,\emptyset)$.
		\item[(Free)]
			If $\anact=(\bot,\freeof{\anadr},\emptyset)$ then $\anadr\in X$.
		\item[(Enter)]
			$\anact=(\athread,\enterof{\afuncof{\vecof{\apvar},\vecof{\advar}}},\emptyset)$, then $\heapcomputof{\tau}{\apvar}\neq\segval$ for every $\apvar$ in $\vecof{\apvar}$.
		\item[(Exit)]
			$\anact=(\athread,\exitof{\afunc},\emptyset)$.
		\item[(Invariant1)]
			$\anact=(\athread,\ghostof{\chooseof{\aghostvar}},\emptyset)$.
		\item[(Invariant2)]
			$\anact=(\athread,\invariantof{\containsof{\apvar}{\aghostvar}},\emptyset)$.
		\item[(Invariant3)]
			$\anact=(\athread,\invariantof{\activeof{\apvar}},\emptyset)$.
		\item[(Invariant4)]
			$\anact=(\athread,\invariantof{\apvar=\apvarp},\emptyset)$.
	\end{description}
\end{definition}

\begin{definition}
	\label{def:historyof}
	The history induced by a computation $\tau$, denoted $\historyof{\tau}$, is defined by:
	\begin{align*}
		\historyof{\epsilon} = &~ \epsilon \\
		\historyof{\tau.(\athread,\freeof{\anadr},\anup,\apc)} = &~ \historyof{\tau} \hconcat \freeof{\anadr} \\
		\historyof{\tau.(\athread,\enterof{\afuncof{\vecof{\apvar},\vecof{\advar}}},\anup,\apc)} = &~ \historyof{\tau} \hconcat \evt{\afunc}{\athread,\heapcomputof{\tau}{\vecof{\apvar}},\heapcomputof{\tau}{\vecof{\advar}}} \\
		\historyof{\tau.(\athread,\exitof{\afunc},\anup,\apc)} = &~ \historyof{\tau} \hconcat \exitof{\afuncof{\athread}} \\
		\historyof{\tau.\anact} = &~  \historyof{\tau} &&\text{otherwise.}
	\end{align*}
\end{definition}

\begin{definition}
	\label{def:freshof}
	The fresh addresses in a computation $\tau$, denoted by $\freshof{\tau}$, are defined by:
	\begin{align*}
		\freshof{\epsilon} = &~ \adr
		\\
		\freshof{\tau.\anact} = &~ \freshof{\tau}\setminus\set{\anadr}
		&&\text{if } \comof{\anact}\equiv\freeof{\anadr}
		\\
		\freshof{\tau.\anact} = &~ \freshof{\tau}\setminus\set{\anadr}
		&&\text{if } \comof{\anact}\equiv\apvar:=\malloc \wedge \heapcomputof{\tau.\anact}{\apvar}=\anadr
		\\
		\freshof{\tau.\anact} = &~ \freshof{\tau}
		&&\text{otherwise.}
	\end{align*}
	The definition carries over naturally to histories.
\end{definition}

\begin{definition}
	\label{def:freedof}
	The freed addresses in a computation $\tau$, denoted by $\freedof{\tau}$, are defined by:
	\begin{align*}
		\freedof{\epsilon} = &~ \emptyset
		\\
		\freedof{\tau.\anact} = &~ \freedof{\tau} \cup \set{\anadr}
		&&\text{if } \comof{\anact}\equiv\freeof{\anadr}
		\\
		\freedof{\tau.\anact} = &~ \freedof{\tau} \setminus \set{\anadr}
		&&\text{if } \comof{\anact}\equiv\apvar:=\malloc \wedge \heapcomputof{\tau.\anact}{\apvar}=\anadr
		\\
		\freedof{\tau.\anact} = &~ \freedof{\tau}
		&&\text{otherwise.}
	\end{align*}
\end{definition}

\begin{definition}
	\label{def:retiredof}
	The retired addresses in a computation $\tau$, denoted by $\retiredof{\tau}$, are defined by:
	\begin{align*}
			\retiredof{\epsilon} = &~ \emptyset
			\\
			\retiredof{\tau.\anact} = &~ \retiredof{\tau} \cup \set{\anadr}
			&&\text{if } \comof{\anact}\equiv\enterof{\retireof{\apvar}}\wedge\anadr=\heapcomputof{\tau}{\apvar}
			\\
			\retiredof{\tau.\anact} = &~ \retiredof{\tau} \setminus \set{\anadr}
			&&\text{if } \comof{\anact}\equiv\freeof{\anadr}
			\\
			\retiredof{\tau.\anact} = &~ \retiredof{\tau}
			&&\text{otherwise.}
	\end{align*}
\end{definition}

\begin{definition}
	\label{def:activeof}
	The active addresses in a computations $\tau$ are: \[ \activeofcomp{\tau}:=\adr\setminus(\freedof{\tau}\cup\retiredof{\tau}) \ .\]
\end{definition}

\begin{definition}[Angel Denotation]
	Consider some $\tau\in\allsem$.
	Let $\invholdsof{\tau}$ have the prenex normal form $\exists \aghostvar_1\ldots \exists \aghostvar_n.\phi$, where $\phi$ is quantifier-free.
	Let $\aghostvar_n$ be the instance of angel $\aghostvar$ resulting from the last allocation in $\tau$.
	The set of addresses possibly represented by angel $\aghostvar$ after computation $\tau$ is
	\begin{align*}
		\denotationof{\tau}{\aghostvar}
		:=
		\setcond{\anadr\in\adr}{
			\exists A_1, \ldots, A_n\subseteq \adr.\anadr\in A_n\wedge (A_1, \ldots, A_n)
			\models
			\phi
		}
		\ .
	\end{align*}
\end{definition}

\begin{definition}[Valid Expressions]
	\label{Definition:Validity}
	The \emph{valid pointer expressions} in a computation $\tau\in\allsemobs$, denoted by $\validof{\tau}\subseteq\pexp$, are defined by:
	\begin{align*}
		\validof{\epsilon}&:=\pvars
			\\
		\validof{\tau.(\athread, \apvar:=\apvarp, \anup)}
			&:= \validof{\tau}\cup\set{\apvar}
			&&\text{if }\apvarp\in \validof{\tau}
			\\
		\validof{\tau.(\athread, \apvar:=\apvarp, \anup)}
			&:= \validof{\tau}\setminus\set{\apvar}
			&&\text{if }\apvarp\notin \validof{\tau}
			\\
		\validof{\tau.(\athread, \psel{\apvar}:=\apvarp, \anup)}
			&:= \validof{\tau}\cup\set{\psel{\anadr}}
			&&\text{if }\heapcomputof{\tau}{\apvar}=\anadr\in\adr \wedge \apvarp\in \validof{\tau}
			\\
		\validof{\tau.(\athread, \psel{\apvar}:=\apvarp, \anup)}
			&:= \validof{\tau}\setminus\set{\psel{\anadr}}
			&&\text{if }\heapcomputof{\tau}{\apvar}=\anadr\in\adr \wedge \apvarp\notin \validof{\tau}
			\\
		\validof{\tau.(\athread, \apvar:=\psel{\apvarp}, \anup)}
			&:= \validof{\tau}\cup\set{\apvar}
			&&\text{if }\heapcomputof{\tau}{\apvarp}=\anadr\in\adr \wedge \psel{\anadr}\in\validof{\tau}
			\\
		\validof{\tau.(\athread, \apvar:=\psel{\apvarp}, \anup)}
			&:= \validof{\tau}\setminus\set{\apvar}
			&&\text{if }\heapcomputof{\tau}{\apvarp}=\anadr\in\adr \wedge \psel{\anadr}\notin\validof{\tau}
			\\
		\validof{\tau.(\athread, \freeof{\anadr}, \anup)}
			&:=\validof{\tau}\setminus \invalidof{\anadr}
			\\
		\validof{\tau.(\athread, \apvar:=\malloc, \anup)}
			&:=\validof{\tau}\cup\set{\apvar,\psel{\anadr}}
			&&\text{if }[\apvar\mapsto\anadr]\in\anup
			\\
		\validof{\tau.(\athread, \assumeof{\apvar=\apvarp}, \anup)}
			&:=\validof{\tau}\cup\set{\apvar,\apvarp}
			&&\text{if }\set{\apvar,\apvarp}\cap\validof{\tau}\neq\emptyset
			\\
		\validof{\tau.(\athread, \anact, \anup)}
			&:=\validof{\tau}
			&&\text{otherwise.}
	\end{align*}
	We have $\invalidof{\anadr}:=\setcond{\apvar}{\heapcomputof{\tau}{\apvar}=\anadr} \cup \setcond{\psel{\anadrp}}{\heapcomputof{\tau}{\psel{\anadrp}}=\anadr} \cup \set{\psel{\anadr}}$.
\end{definition}

\begin{definition}[Observer Behavior]
	The behavior allowed by $\anobs$ on address $\anadr$ after history $\ahist$, denoted by $\freeableof[\anobs]{\ahist}{\anadr}$, is the set
	$\freeableof[\anobs]{\ahist}{\anadr} := \setcond{\ahistp}{\ahist.\ahistp\in\specof{\anobs}\wedge\freesof{\ahistp}\subseteq{\anadr}}$.
	If clear from the context, we just write $\freeableof[]{\ahist}{\anadr}$.
\end{definition}

\begin{definition}[Unsafe Access]
	A computation $\tau.\anact$ raises an \emph{unsafe access} if $\comof{\anact}$ contains $\dsel{\apvar}$ or $\psel{\apvar}$ with $\apvar\notin\validof{\tau}$.
\end{definition}

\begin{definition}[Unsafe Assumption]
	\label{def:unsafe-assumption}
	A computation $\tau.\anact$ raises an \emph{unsafe assumption} if $\comof{\anact}$ is of the form $\assumeof{\apvar=\apvarp}$ with $\set{\apvar,\apvarp}\not\subseteq\validof{\tau}$.
\end{definition}


\begin{definition}[Unsafe Retire]
	A computation $\tau.\anact$ raises an \emph{unsafe retire} if $\comof{\anact}$ is of the form $\enterof{\retireof{\apvar}}$ with $\apvar\notin\validof{\tau}$.
\end{definition}

\begin{definition}[Pointer Race]
	A computation $\tau.\anact$ raises a \emph{pointer race (PR)} if it raises
	\begin{inparaenum}[(i)]
		\item an unsafe access,
		\item an unsafe assumption,
		\item an unsafe call, or
		\item an unsafe retire.
	\end{inparaenum}
	It is \emph{pointer race free (PRF)} if none of its prefixes raises a PR.
\end{definition}

\begin{definition}[Renaming]
	\label{def:renaming}
	A \emph{renaming of address $\anadr$ and $\anadrp$ in a history $\ahist$}, denoted by $\renamingof{\ahist}{\anadr}{\anadrp}$, replaces in $\ahist$ every occurrence of $\anadr$ with $\anadrp$, and vice versa, as follows:
	\begin{align*}
		\renamingof{\epsilon}{\anadr}{\anadrp} = &\;\epsilon
		\\
		\renamingof{\big(\ahist.\afunc(\vecof{\anadrpp},\vecof{\advalue})\big)}{\anadr}{\anadrp} = &\;\big(\renamingof{\ahist}{\anadr}{\anadrp}\big).\big(\afunc(\renamingof{\vecof{\anadrpp}}{\anadr}{\anadrp},\vecof{\advalue})\big)
		\\
		\renamingof{\big(\ahist.\freeof{\anadrpp}\big)}{\anadr}{\anadrp} = &\;\big(\renamingof{\ahist}{\anadr}{\anadrp}\big).\big(\freeof{\renamingof{\anadrpp}{\anadr}{\anadrp}}\big)
		\\
		\renamingof{\ahist.\anevent}{\anadr}{\anadrp} = &\;\renamingof{\ahist}{\anadr}{\anadrp}.\anevent \qquad\text{otherwise.}
	\end{align*}
	where $\renamingof{\anadr}{\anadr}{\anadrp}=\anadrp$, $\renamingof{\anadrp}{\anadr}{\anadrp}=\anadr$, and $\renamingof{\anadrpp}{\anadr}{\anadrp}=\anadrpp$ for all $\anadr\neq\anadrpp\neq\anadrp$.
\end{definition}

\begin{definition}[Elision Support]
	\label{def:elision-support}
	Observer $\anobs$ \emph{supports elision of memory reuse} if
	\begin{compactenum}[(i)]
		\item \label{def:elision-support:frees} $\freeableof{\ahist.\freeof{\anadr}}{\anadrp}=\freeableof{\ahist}{\anadrp}$ for all $\ahist,\anadr,\anadrp$ with $\anadr\neq\anadrp$ and $\ahist.\freeof{\anadr}\in\specof{\anobs}$,
		\item $\freeableof{\ahist}{\anadrpp}=\freeableof{\renamingof{\ahist}{\anadr}{\anadrp}}{\anadrpp}$ for all $\ahist,\anadr,\anadrp,\anadrpp$ with $\anadr\neq\anadrpp\neq\anadrp$,
		\item \label{def:elision-support:fresh-brandnew} $\freeableof{\ahist}{\anadr}\subseteq\freeableof{\renamingof{\ahist}{\anadr}{\anadrp}}{\anadr}$ for all $\ahist,\anadr,\anadrp$ with $\anadr\notin\retiredof{\ahist}$ and $\anadrp\in\freshof{\ahist}$, and
		\item \label{def:elision-support:generosity} $\freeableof{\ahist.\freeof{\anadr}}{\anadr}\subseteq\freeableof{\ahist}{\anadr}$ for all $\ahist,\anadr$.
	\end{compactenum}
\end{definition}





\begin{definition}
	The \emph{domain} of a type environment $\env$ is defined by $\domof{\env}=\setcond{\apavar}{\exists\atype.~\typeof{\apavar}{\atype}\in\env}$.
\end{definition}

\begin{definition}
	Consider some $\env$ and $\vecof{\apvar}=\apvar_1,\dots,\apvar_n$ with $\envof{\apvar_i}=\atype_i$.
	Then we define:
	\begin{align*}
		&\safecallof{\env}{\afuncof{\vecof{\apvar},\vecof{\advar}}} = \mathit{true}
		\\\text{iff}~~~
		&\forall\ahist~\forall\vecof{\anadr},\vecof{\anadrp},\anadrpp,\vecof{\advalue}.
		~~
		\big(
			\forall i.~ 
			(
			\anadr_i=\anadrpp\vee\isvalidof{\atype_i}
			)
			\implies
			\anadr_i=\anadrp_i
		\big)
		\\&\qquad\qquad\quad~\wedge
		\freeableof{\ahist.\afunc(\athread,\vecof{\anadrp},\vecof{\advalue})}{\anadrpp}
		\not\subseteq
		\freeableof{\ahist.\afunc(\athread,\vecof{\anadr},\vecof{\advalue})}{\anadrpp}
		\ .
	\end{align*}
\end{definition}




\begin{definition}[Post Image]
	The post image for pointers $\apvar$ and angles $\aghostvar$ is defined by:
	\begin{align*}
		\lpostof{\apvar}{\acom}{\alocset}
		:=&
		\setcond{\alocationp}{
			\exists\,\alocation~\exists\,\varphi~\exists\,\aheap.~~
			(\alocation,\varphi)\trans{\aheap(\acom_\athread)}(\alocationp,\varphi)
			\:\wedge\:
			\alocation\in\alocset
			\:\wedge\:
			\varphi(\threadvar)=\athread
			\:\wedge\:
			\varphi(\adrvar)=\aheap(\apvar)
		}
		\\
		\lpostof{\aghostvar}{\acom}{\alocset}
		:=&
		\setcond{\alocationp}{
			\exists\,\alocation~\exists\,\varphi~\exists\,\aheap.~~
			(\alocation,\varphi)\trans{\aheap(\acom_\athread)}(\alocationp,\varphi)
			\:\wedge\:
			\alocation\in\alocset
			\:\wedge\:
			\varphi(\threadvar)=\athread
		}
	\end{align*}
	where $\aheap(\acom_\athread)$ means the event that results from thread $\athread$ executing $\acom$ under memory $\aheap$.
\end{definition}

\begin{definition}[Relaxed Unsafe Assumption]
	\label{def:unsafe-assumption}
	A computation $\tau.\anact$ raises a \emph{relaxed unsafe assumption} if $\comof{\anact}$ is $\assumeof{\apvar=\apvarp}$ such that there is $\apavar,\apavarp\in\set{\apvar,\apvarp}$ with $\apvar\not\equiv\apavarp$ and $\apavar\notin\validof{\tau}$ and $\freeof{\heapcomputof{\tau}{\apavarp}}\in\historyof{\tau}$.
\end{definition}

\begin{definition}[Relaxed Pointer Race]
	\label{def:relaxed-pointer-race}
	A computation $\tau.\anact$ is a \emph{relaxed pointer race (RPR)} if $\anact$ is
	\begin{inparaenum}[(i)]
		\item an unsafe access,
		\item a relaxed unsafe assumption,
		\item an unsafe call, or
		\item an unsafe retire.
	\end{inparaenum}
\end{definition}

\begin{theorem}[Generalization of \Cref{thm:PRF-guarantee}]
	\label{thm:generalized-RPRF-guarantee}
	If $\anobs$ supports elision and the semantics $\anobs\freesem$ is relaxed-pointer-race-free, then $\anobs\allsem\computrel\nosem$.
\end{theorem}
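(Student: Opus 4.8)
The plan is to re-run the step-indexed simulation that underlies \Cref{thm:PRF-guarantee} (i.e. Theorem~5.20 of \cite{DBLP:journals/pacmpl/MeyerW19}) essentially verbatim, and to check that the weaker hypothesis of relaxed-pointer-race freedom still carries the single delicate case through. Concretely, I would argue by induction on the length of a computation $\tau\in\anobs\allsem$, building alongside it a computation $\sigma\in\nosem$ and maintaining $\tau\computrel\sigma$, that is, agreement on the per-thread control locations and, modulo a bijection $\beta$ on addresses that elision support renders transparent, on the valid memory $\restrict{\heapcomput{\tau}}{\validof{\tau}}=\restrict{\heapcomput{\sigma}}{\validof{\tau}}$. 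Since every relaxed pointer race is in particular a pointer race (a relaxed unsafe assumption is an unsafe assumption, and the remaining clauses of \Cref{def:relaxed-pointer-race} coincide with the original ones), the present hypothesis is strictly weaker than in \Cref{thm:PRF-guarantee}; hence every case already discharged in the base proof goes through unchanged, and only the place where the two race notions diverge---an assumption comparing a pointer that happens to be invalid---needs fresh attention.

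First I would replay the routine cases. Memory accesses and data operations are guarded by the (unchanged) unsafe-access clause, so they touch only valid memory and are mimicked directly under $\beta$. A free in $\tau$ is matched by the empty step in $\sigma$; the valid part of $\tau$'s memory shrinks and the restriction to the new $\validof{\tau}$ still agrees. A $\malloc$ that reuses a freed address in $\tau$ is matched by a genuinely fresh allocation in $\sigma$ (as $\nosem$ never reclaims), extending $\beta$ accordingly; here properties~(ii)--(iv) of \Cref{def:elision-support} are invoked exactly as in the base proof to justify the replacement. SMR $\enter$/$\exit$ steps are mimicked with the $\beta$-renamed argument values, and the unsafe-call clause (identical in both race notions) ensures invalid arguments cannot perturb the observable behavior. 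Crucially, by \Cref{Lemma:GC} the constructed $\sigma$ is automatically SMR-admissible, $\historyof{\sigma}\in\specof{\anobs}$, because $\sigma$ contains no $\free$; so the burden reduces to producing a control-flow-correct, free-less $\sigma$.

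The crux, and the main obstacle, is the assumption $\assumeof{\apvar=\apvarp}$ when one argument is invalid. If both are valid, agreement on valid memory makes the comparison pass in $\sigma$ exactly as in $\tau$. If, say, $\apvar$ is invalid, the base proof would have forbidden the step, but relaxed-pointer-race freedom only rules out a relaxed unsafe assumption (clause~(ii) of \Cref{def:relaxed-pointer-race}), hence merely guarantees that the partner's address was never freed, $\freeof{\heapcomputof{\tau}{\apvarp}}\notin\historyof{\tau}$. To exploit this I would strengthen the induction invariant with the clause: for every pointer whose current address $\anadr$ is \emph{pristine}, meaning $\freeof{\anadr}\notin\historyof{\tau}$, the pointer holds the same address in $\sigma$ as in $\tau$ and $\beta$ fixes $\anadr$. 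This clause is maintainable because a pristine address is never reallocated, so it is introduced only by fresh $\malloc$s---where $\sigma$ uses the identical address and sets $\beta(\anadr)=\anadr$---and ordinary assignments merely copy such values.

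With the strengthened invariant in hand the delicate case closes: the assumption holding in $\tau$ gives $\heapcomputof{\tau}{\apvar}=\heapcomputof{\tau}{\apvarp}$, so $\apvar$ too points to the pristine address of $\apvarp$; the invariant then yields $\heapcomputof{\sigma}{\apvar}=\heapcomputof{\sigma}{\apvarp}$, and the assumption passes in $\sigma$ as well, re-establishing $\tau.\anact\computrel\sigma.\anact$. The symmetric subcase and the case where both pointers are invalid are handled identically, since relaxed-race freedom forces every invalid argument of the assumption to have a pristine partner. I expect the bulk of the genuinely new work to lie in verifying that this pristine-address clause is preserved by the $\malloc$, free, and assignment cases in the presence of the renaming $\beta$; everything else is an inheritance from the base theorem.
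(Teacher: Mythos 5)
Your overall skeleton (inherit every case of the base simulation and give fresh attention only to the assumption case, where the two race notions diverge) is exactly how the paper proceeds: it factors the theorem through \Cref{thm:elision-computations-new,thm:deletable-computations} and adds a single new case to the elision lemma. But you missed the observation that makes the paper's proof essentially free: the new case is \emph{vacuous}. Suppose $\anact$ is $\assumeof{\apvar=\apvarp}$ with $\apvar\notin\validof{\tau}$ and the step is \emph{not} a relaxed unsafe assumption, i.e. $\freeof{\anadr}\notin\historyof{\tau}$ for $\anadr=\heapcomputof{\tau}{\apvarp}$. Since $\anadr$ was never freed, $\apvarp$ is valid (contrapositive of \Cref{thm:invalid-freed}: in the race-free free semantics an invalid pointer always holds a previously freed address), and by the valid/invalid disjointness of \Cref{thm:invalidpointers-adr} we get $\heapcomputof{\tau}{\apvar}\neq\anadr=\heapcomputof{\tau}{\apvarp}$ --- yet the assume step is only enabled when the two values coincide. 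Contradiction: the configuration you spend your last two paragraphs simulating is unreachable. Accordingly, your ``pristine address'' invariant is redundant for valid pointers (it is subsumed by the agreement on $\restrict{\heapcomput{\tau}}{\validof{\tau}}$) and vacuous for invalid ones (an invalid pointer never holds a pristine address --- that is the same fact again), so no new simulation machinery is needed at all.

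Beyond this misdirection, two steps of your construction would fail as written. First, you match a reallocating $\malloc$ in $\tau$ by a \emph{genuinely fresh} allocation in $\sigma$, recording the mismatch in $\beta$. This breaks the very relation you must establish: $\computrel$ demands literal equality $\restrict{\heapcomput{\tau}}{\validof{\tau}}=\restrict{\heapcomput{\sigma}}{\validof{\tau}}$, and the receiving pointer is valid, so it must hold the \emph{same} address in both computations; nor can a final renaming of $\sigma$ repair this, because after your step a single address of $\tau$ corresponds to two distinct addresses of $\sigma$ (the stale incarnation and your fresh replacement), which no bijection can undo. The paper's replacement lemma (\Cref{thm:replacing-addresses-in-computations-new}) renames in the opposite direction: the stale incarnation --- referenced only by invalid pointers --- is renamed away inside $\sigma$, so that $\sigma$ re-allocates the identical address. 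Second, you apply relaxed-pointer-race freedom directly to steps of $\tau\in\allsemobs$, but the hypothesis speaks only about $\freesemobs$; transferring it to $\tau$ is precisely what the intermediate $\freesemobs$-simulation in the paper's two-lemma factoring provides, and in your one-pass $\allsemobs\to\nosem$ induction the races you dismiss with ``RPRF rules this out'' (assumes over an address that \emph{was} freed and then reused) are exactly those whose mimicking step need not even be enabled on the reclamation-free side, so the contradiction does not come for free.
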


\begin{remark}[\Cref{thm:generalized-RPRF-guarantee}]
	In practice, programs use a \code{null} constant.
	Since \code{null} is not part of our command language, a program needs to define it itself.
	Ensuring that \code{null} is never written to nor retired can be done easily, using a syntactic checks and assertions, respectively.
	Then, according to \Cref{thm:PRF-guarantee}, any pointer may be compared to \code{null} without risking a (relaxed) pointer race.
	This can increase the applicability of the type system and ease the implementation of tools.
\end{remark}


\subsection{Reduction}

\begin{definition}
	We write $\aheap(e)=\bot$ if $e\notin\domof{\aheap}$.
\end{definition}

\begin{definition}[In-Use Addresses]
	An address $\anadr$ is \emph{in-use} in memory $\aheap$ if $\aheap$ contains a pointer to $\anadr$.
	Formally, the addresses in-use are \(\adrof{\aheap}:=(\rangeof{\aheap}\cup\domof{\aheap})\cap\adr\) where we use $\set{\psel{\anadr}}\cap\adr=\anadr$ and likewise for data selectors.
\end{definition}

\begin{definition}[Restrictions]
	A restriction of $\aheap$ to a set $P\subseteq\pexp$, denoted by $\restrict{\aheap}{P}$, is a new $\aheapp$ with
	\(\domof{\aheapp} := P \cup \dvars \cup \setcond{\dsel{\anadr}\in\dexp}{\anadr\in\aheap(P)}\)
	and $\aheap(e) = \aheapp(e)$ for all $e\in\domof{\aheapp}$.
\end{definition}

\begin{definition}[Computation similarity]
	Two computations $\tau$ and $\sigma$ are similar, denoted by $\tau\computequiv\sigma$, if $\controlof{\tau}=\controlof{\sigma}$ and $\restrict{\heapcomput{\tau}}{\validof{\tau}}=\restrict{\heapcomput{\sigma}}{\validof{\sigma}}$.
\end{definition}

\begin{definition}[Observer Behavior Inclusion]
	Consider $\tau,\sigma\in\allsemobs$.
	Then, $\sigma$ \emph{includes the (observer) behavior} of $\tau$, denoted by $\tau\obsrel\sigma$,
	if $\freeableof{\tau}{\anadr}\subseteq\freeableof{\sigma}{\anadr}$ holds for all $\anadr\in\vadrof{\tau}$.
\end{definition}

\begin{definition}[Computation Relation]
	Two computations are in \textit{computation relation}, denoted $\tau\computrel\sigma$, if $\controlof{\tau}=\controlof{\sigma}$ and $\restrict{\heapcomput{\tau}}{\validof{\tau}}=\restrict{\heapcomput{\sigma}}{\validof{\tau}}$.
\end{definition}


\begin{lemma}
	\label{thm:prefix-closure}
	$\asem[\aprog]{X}{Y}$ is prefix closed by \Cref{assumption:observers-accepting-state}.
\end{lemma}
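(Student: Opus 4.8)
The plan is to separate two concerns and to observe that the first is immediate while the second is exactly where \Cref{assumption:observers-accepting-state} enters. Writing $\sigma$ for a prefix of some $\tau\in\asem[\aprog]{X}{Y}$, a trivial induction on the length of the removed suffix reduces the claim to the single step $\tau=\tau'.\anact$, so I would only show that dropping the last action preserves membership.

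For the bare liberal semantics this step is built into the inductive definition: the sole rule producing a nonempty computation admits $\tau'.\anact$ only under the premise $\tau'\in\asem[\aprog]{X}{Y}$. Thus every computation is assembled from its prefixes and each prefix is again a member, and no assumption about SMR automata is needed at this level.

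The content lies in the SMR-restricted semantics $\anobs\asem[\aprog]{X}{Y}=\setcond{\tau}{\tau\in\asem[\aprog]{X}{Y}\wedge\historyof{\tau}\in\specof{\anobs}}$, which is the set actually used in the reduction. Given $\tau'.\anact$ in this set, the previous paragraph yields $\tau'\in\asem[\aprog]{X}{Y}$, so the remaining obligation is $\historyof{\tau'}\in\specof{\anobs}$. By \Cref{def:historyof} the history map is a projection, hence $\historyof{\tau'}$ is a prefix of $\historyof{\tau'.\anact}$, and the whole lemma reduces to prefix closure of $\specof{\anobs}$. I would establish the latter by contradiction: suppose $\ahist\in\specof{\anobs}$ but some prefix $\ahistp$, with $\ahist=\ahistp.\ahistpp$, admits a run $\astate\trans{\ahistp}\astatep$ from an initial state $\astate$ to an accepting state $\astatep$. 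By the first item of \Cref{assumption:observers-accepting-state} the accepting states form a trap, being entered only by $\free$ and never left, so reading the suffix $\ahistpp$ from $\astatep$ keeps the automaton among the accepting states and yields a run of $\ahist$ from $\astate$ to an accepting state. This contradicts $\ahist\in\specof{\anobs}$, so $\ahistp\in\specof{\anobs}$ as required, and therefore $\tau'\in\anobs\asem[\aprog]{X}{Y}$.

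The main obstacle is this last reduction to $\specof{\anobs}$: because $\specof{\anobs}$ is defined by a universal quantification over initial and accepting states, one cannot argue locally but must exhibit, from a hypothetical bad run on the prefix, a bad run on the whole history. The absorbing character of the accepting states supplied by \Cref{assumption:observers-accepting-state} is precisely what licenses extending the run. The one subtlety to record is that this extension presumes every event is enabled from an accepting state, which holds because SMR automata are completed with self-loops for the events not drawn (the convention noted for \Cref{fig:ebrobserver}); together with item~(i) this makes the accepting states a genuine sink.
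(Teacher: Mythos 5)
Your proposal is correct and takes essentially the same route as the paper, whose entire proof is the one-line observation that \Cref{assumption:observers-accepting-state} makes accepting locations absorbing, so that acceptance is preserved under continuation and hence non-accepted histories have non-accepted prefixes (the liberal-semantics part being immediate from the inductive definition). Your write-up merely expands what the paper leaves implicit, including the useful remark that the omitted-self-loop convention is what guarantees runs can always be extended from an accepting state.
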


\begin{lemma}
	\label{thm:adrof-valid-heap-restriction}
	Consider $\tau\in\allsem$.
	Then, $\vadrof{\tau}=(\validof{\tau}\cap\adr)\cup\heapcomputof{\tau}{\validof{\tau}}$.
\end{lemma}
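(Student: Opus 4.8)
The plan is to prove the identity by directly unfolding the three definitions it rests on—$\vadrof{\tau}$, the in-use address operator $\adrof{\cdot}$, and the heap restriction $\restrict{\cdot}{\cdot}$—and then matching the resulting set against the right-hand side. Since both sides are plain subsets of $\adr$, no induction on the structure of $\tau$ is needed; the membership $\tau\in\allsem$ serves only to guarantee that $\heapcomput{\tau}$ and $\validof{\tau}$ are well defined. The statement is thus a set-theoretic identity about the well-typed memory $\heapcomput{\tau}$, which I would establish by mutual inclusion, or equivalently by computing both sides explicitly. Throughout I abbreviate $\aheapp := \restrict{\heapcomput{\tau}}{\validof{\tau}}$, so that $\vadrof{\tau} = \adrof{\aheapp} = (\rangeof{\aheapp}\cup\domof{\aheapp})\cap\adr$.

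First I would pin down $\domof{\aheapp}$. By the definition of restriction, $\domof{\aheapp} = \validof{\tau}\cup\dvars\cup\setcond{\dsel{\anadr}\in\dexp}{\anadr\in\heapcomputof{\tau}{\validof{\tau}}}$. Intersecting with $\adr$ and applying the selector conventions $\set{\psel{\anadr}}\cap\adr=\anadr$ and $\set{\dsel{\anadr}}\cap\adr=\anadr$, the pointer variables inside $\validof{\tau}$ and the data variables $\dvars$ drop out entirely, the valid pointer selectors contribute exactly $\validof{\tau}\cap\adr$, and the data selectors contribute $\heapcomputof{\tau}{\validof{\tau}}\cap\adr$. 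Hence $\domof{\aheapp}\cap\adr = (\validof{\tau}\cap\adr)\cup(\heapcomputof{\tau}{\validof{\tau}}\cap\adr)$.

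Next I would compute $\rangeof{\aheapp}\cap\adr$, where the typing of the memory does the work: since $\heapcomput{\tau}$ maps into the disjoint union $\adr\uplus\set{\segval}\uplus\dom$, only the images of pointer expressions can land in $\adr$, while the images of data variables and data selectors lie in $\dom$ and contribute nothing. Thus $\rangeof{\aheapp}\cap\adr = \heapcomputof{\tau}{\validof{\tau}}\cap\adr$, which is already contained in the address set extracted from $\domof{\aheapp}$. Taking the union of the two parts yields $\vadrof{\tau} = (\validof{\tau}\cap\adr)\cup(\heapcomputof{\tau}{\validof{\tau}}\cap\adr)$, which is the claimed right-hand side once $\heapcomputof{\tau}{\validof{\tau}}$ is read as the set of addresses held by valid pointers.

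The step I expect to require the most care is the bookkeeping around the overloaded selector-to-address convention: one must check that the data selectors appearing in the restricted domain, $\setcond{\dsel{\anadr}}{\anadr\in\heapcomputof{\tau}{\validof{\tau}}}$, contribute precisely the addresses $\heapcomputof{\tau}{\validof{\tau}}\cap\adr$ and no spurious ones, and that this contribution is already subsumed by $\rangeof{\aheapp}$, so the domain's data-selector part is redundant but consistent. The only genuine subtlety is the treatment of $\segval$: a valid pointer may still hold the uninitialized value $\segval$, so $\heapcomputof{\tau}{\validof{\tau}}$ is not literally a subset of $\adr$; since $\segval\notin\adr$, every intersection with $\adr$ discards it, and the identity therefore holds as a statement about in-use addresses.
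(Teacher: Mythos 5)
Your proof is correct, but it takes a genuinely different route from the paper: the paper gives no direct argument at all and simply defers to Lemma~D.5 of the extended version of the prior work \cite{DBLP:journals/corr/MeyerW19}, from which it inherits the memory model, the notion of validity, and the restriction operator. Your argument is the elementary, self-contained alternative: compute $\domof{\restrict{\heapcomput{\tau}}{\validof{\tau}}}\cap\adr$ and $\rangeof{\restrict{\heapcomput{\tau}}{\validof{\tau}}}\cap\adr$ separately, let the typing discipline (the codomain $\adr\uplus\set{\segval}\uplus\dom$ being a disjoint union) eliminate all data-valued entries, and let the selector conventions collapse $\psel{\anadr}$ and $\dsel{\anadr}$ to $\anadr$; no induction on $\tau$ is needed, as you say. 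What the citation buys the paper is brevity and the guarantee that the lemma means exactly what it meant in the inherited framework; what your unfolding buys is visibility into which features of the definitions the identity actually rests on, and in particular it surfaces a point the citation-only proof silently glosses over: valid pointer expressions can hold $\segval$ (all pointer variables in the empty computation, and $\psel{\anadr}$ immediately after an allocation), so $\heapcomputof{\tau}{\validof{\tau}}$ is not literally a subset of $\adr$, and the stated equation only holds under the convention that $\segval$ is discarded on the right-hand side --- exactly the caveat you flag at the end. Your bookkeeping of the three domain components (valid pointer expressions, data variables, data selectors of referenced addresses) and of the range is accurate, so the proposal stands as a complete proof.
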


\begin{lemma}
	\label{thm:valideq}
	Consider $\tau,\sigma\in\allsemobs$.
	If $\tau\computequiv\sigma$, then $\validof{\tau}=\validof{\sigma}$.
\end{lemma}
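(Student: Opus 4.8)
The plan is to unfold the definition of computation similarity and observe that each validity set is literally read off from the domain of the corresponding restricted memory, so the statement reduces to the fact that equal partial functions have equal domains. I would deliberately avoid an induction on the lengths of $\tau$ and $\sigma$: the two computations need not be step-aligned (they may be interspersed with different $\free$ actions issued by the environment), so a direct definitional argument is both simpler and more robust than trying to match them up.

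Concretely, I would argue as follows. By hypothesis $\tau\computequiv\sigma$, so in particular $\restrict{\heapcomput{\tau}}{\validof{\tau}}=\restrict{\heapcomput{\sigma}}{\validof{\sigma}}$. Equal partial functions have equal domains, hence $\domof{\restrict{\heapcomput{\tau}}{\validof{\tau}}}=\domof{\restrict{\heapcomput{\sigma}}{\validof{\sigma}}}$. Now recall the definition of restriction: for $P\subseteq\pexp$ the domain of $\restrict{\heapcomput{\tau}}{P}$ is $P\cup\dvars\cup\setcond{\dsel{\anadr}}{\anadr\in\heapcomputof{\tau}{P}}$. Since $\dvars$ and $\dsels$ are data expressions and are therefore disjoint from the pointer expressions $\pexp$, intersecting this domain with $\pexp$ leaves exactly $P$. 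Applying this with $P=\validof{\tau}$ on the one side and $P=\validof{\sigma}$ on the other, and intersecting the two equal domains with $\pexp$, yields $\validof{\tau}=\validof{\sigma}$, which is the claim. Note that symmetry of $\computequiv$ is not even needed, since the argument produces the equality in a single step rather than two inclusions.

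The only point requiring care — and the step I would single out as the main obstacle — is the interaction between the declared domain of a restriction and the convention $\heapcomput{\tau}(e)=\bot$ for $e\notin\domof{\heapcomput{\tau}}$. The restriction definition declares every element of $P$ to lie in the domain of $\restrict{\heapcomput{\tau}}{P}$, so the identity $\domof{\restrict{\heapcomput{\tau}}{P}}\cap\pexp=P$ is exact only as long as no element of $P$ is silently undefined in the underlying memory. I would discharge this by recalling that validity only ever certifies pointer expressions on which the memory is defined: pointer variables are always mapped by $\heapcomput{\tau}$ (it is total on $\pvars$, with the initial value $\segval$), and a pointer selector $\psel{\anadr}$ can enter $\validof{\tau}$ only through an allocation or a selector write that simultaneously defines $\psel{\anadr}$, and it leaves $\validof{\tau}$ precisely when $\anadr$ is freed. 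Hence $\validof{\tau}\subseteq\domof{\heapcomput{\tau}}$, and likewise $\validof{\sigma}\subseteq\domof{\heapcomput{\sigma}}$, so the domain computation above is exact and the conclusion follows.
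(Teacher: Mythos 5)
Your proof is correct, but it is not the route the paper takes: the paper gives no argument for this lemma at all, deferring instead to Lemma~D.7 of \citet{DBLP:journals/corr/MeyerW19}. Your argument is self-contained and matches this paper's definitions exactly: from $\restrict{\heapcomput{\tau}}{\validof{\tau}}=\restrict{\heapcomput{\sigma}}{\validof{\sigma}}$ you read off equality of domains, and since a restriction to $P\subseteq\pexp$ has domain $P\cup\dvars\cup\setcond{\dsel{\anadr}}{\anadr\in\heapcomputof{\tau}{P}}$, whose last two components lie in $\dexp$ and are therefore disjoint from $\pexp$ by the paper's strict typing, intersecting the domain with $\pexp$ recovers exactly $P$; applying this with $P=\validof{\tau}$ on one side and $P=\validof{\sigma}$ on the other yields the claim. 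The subtlety you single out --- that the declared domain of a restriction and the convention $\aheap(e)=\bot$ for $e\notin\domof{\aheap}$ could disagree on elements of $P$ that are undefined in the underlying memory --- is genuine, and your discharge of it is sound for this paper's semantics: pointer variables are always mapped (initially to $\segval$, which is a value, not $\bot$), a selector $\psel{\anadr}$ only enters $\validof{\tau}$ through an action whose update simultaneously defines it, and, unlike the semantics of the cited prior work, $\free$ actions here carry the empty update, so no memory entry is ever un-defined. What your approach buys is independence from the external reference and from any induction or step-alignment of $\tau$ and $\sigma$ (which, as you note, would be awkward since the two computations need not match action for action); what the paper's citation buys is a single shared source for this and the neighbouring facts (\Cref{thm:adrof-valid-heap-restriction,thm:invalidpointers-adr}, etc.) that are all inherited from the same appendix of the prior work.
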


\begin{lemma}
	\label{thm:move-event-from-freeable}
	Let $\anevent=\afuncof{\athread,\vecof{\anadr},\vecof{\advalue}}$.
	Then, $\ahist_1\in\freeableof{\ahist_2.\anevent}{\anadr}$ iff $\anevent.\ahist_1\in\freeableof{\ahist_2}{\anadr}$.
\end{lemma}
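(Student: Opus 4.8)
The plan is to unfold the definition of the observer behavior $\freeableof{\cdot}{\cdot}$ on both sides of the stated equivalence and to observe that the resulting conditions split into two independent parts: membership in the specification $\specof{\anobs}$, and a constraint bounding the frees of the continuation. I would establish the two parts separately. Recall $\freeableof{\ahist}{\anadr}=\setcond{\ahistp}{\ahist.\ahistp\in\specof{\anobs}\wedge\freesof{\ahistp}\subseteq\set{\anadr}}$, so $\ahist_1\in\freeableof{\ahist_2.\anevent}{\anadr}$ amounts to $(\ahist_2.\anevent).\ahist_1\in\specof{\anobs}$ together with $\freesof{\ahist_1}\subseteq\set{\anadr}$, while $\anevent.\ahist_1\in\freeableof{\ahist_2}{\anadr}$ amounts to $\ahist_2.(\anevent.\ahist_1)\in\specof{\anobs}$ together with $\freesof{\anevent.\ahist_1}\subseteq\set{\anadr}$.

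For the specification part, the key observation is that concatenation of histories is associative, so that $(\ahist_2.\anevent).\ahist_1$ and $\ahist_2.(\anevent.\ahist_1)$ denote one and the same sequence of events. Since membership in $\specof{\anobs}$ depends only on the underlying event sequence --- it asks that no run of $\anobs$ from an initial to an accepting state reads that sequence --- the two specification conditions are literally identical and hence trivially equivalent.

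For the frees part, I would use that $\freesof{\cdot}$ collects the freed addresses of a history and distributes over concatenation, so $\freesof{\anevent.\ahist_1}=\freesof{\anevent}\cup\freesof{\ahist_1}$. It therefore suffices to show $\freesof{\anevent}=\emptyset$, after which the two free constraints $\freesof{\ahist_1}\subseteq\set{\anadr}$ and $\freesof{\anevent.\ahist_1}\subseteq\set{\anadr}$ coincide. Combining this with the specification part yields the biconditional in both directions at once.

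The only real obstacle is establishing $\freesof{\anevent}=\emptyset$, i.e.\ that $\anevent$ contributes no free. This is where the specific shape of $\anevent$ matters: a free event has the form $\freeof{\anadr}$, with the executing thread dropped, whereas $\anevent=\afuncof{\athread,\vecof{\anadr},\vecof{\advalue}}$ explicitly carries a thread identifier $\athread$ and is thus a genuine SMR-function event rather than a free. Consequently $\anevent$ occupies no $\freeof{\cdot}$ position and $\freesof{\anevent}=\emptyset$, completing the argument; everything else is bookkeeping on the definition of $\freeableof{\cdot}{\cdot}$.
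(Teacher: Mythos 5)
Your proof is correct and matches the paper's approach: the paper's proof is simply ``Follows from definition,'' and your argument is exactly that unfolding --- associativity of history concatenation handles the $\specof{\anobs}$ conjunct, and the observation that $\anevent$ carries a thread identifier (hence is an SMR-function event, not a $\free$ event, so $\freesof{\anevent}=\emptyset$) handles the frees conjunct.
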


\begin{lemma}
	\label{thm:invalidpointers-adr}
	Consider $\tau\in\freesem$ PRF.
	Then, $\adrof{\restrict{\heapcomput{\tau}}{\validof{\tau}}}\cap\heapcomputof{\tau}{\pexp\setminus\validof{\tau}}=\emptyset$.
\end{lemma}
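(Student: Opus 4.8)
The plan is to prove the statement by induction on the length of $\tau$, carrying the displayed disjointness as an invariant along every prefix. This is legitimate because, by \Cref{thm:prefix-closure}, $\freesem$ is prefix closed, and pointer-race freedom is by definition a property of \emph{all} prefixes; hence every prefix of a pointer-race-free $\tau\in\freesem$ is itself a pointer-race-free computation in $\freesem$. Throughout, I would rewrite the left-hand set using \Cref{thm:adrof-valid-heap-restriction}, which yields $\adrof{\vmem{\tau}}=(\validof{\tau}\cap\adr)\cup\heapcomputof{\tau}{\validof{\tau}}$. The goal then becomes concrete: no invalid pointer expression $e\in\pexp\setminus\validof{\tau}$ holds an address that is simultaneously referenced by a valid selector (a member of $\validof{\tau}\cap\adr$) or held by some valid expression (a member of $\heapcomputof{\tau}{\validof{\tau}}$). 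The two ingredients that make the invariant stable are built into $\freesem=\asem{\adr}{\emptyset}$: frees are permitted ($X=\adr$) but reallocation is forbidden ($Y=\emptyset$), so a freed address can never re-enter the valid footprint.

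The base case $\tau=\epsilon$ is immediate, since initially every pointer holds $\segval$, no address is allocated, and therefore $\adrof{\vmem{\epsilon}}=\emptyset$. For the inductive step I would case-split on the command of the appended action $\anact$, reading off the evolution of $\validof{}$ from \Cref{Definition:Validity}. The assignment and heap-access cases ($\apvar:=\apvarp$, $\apvar:=\psel{\apvarp}$, $\psel{\apvar}:=\apvarp$ and the data commands) are routine bookkeeping: the memory either is unchanged on addresses or copies an already-present address, and the side conditions needed (e.g.\ that the dereferenced pointer is valid) are exactly what pointer-race freedom grants, so nothing new can enter the valid footprint and no previously valid expression silently becomes an invalid holder of a still-used address. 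The allocation case $\apvar:=\malloc$ is where freshness is essential: because $Y=\emptyset$, the (Malloc) rule can only pick $\anadr\in\freshof{\tau}$, and a fresh address is, by \Cref{def:freshof}, held by \emph{no} expression before the step; the receiving pointer and $\psel{\anadr}$ become valid, so the new address is introduced only on the valid side and the invariant is preserved.

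The main obstacle is the $\freeof{\anadr}$ case. Here the memory itself does not change (the update is $\emptyset$), but $\validof{}$ shrinks by exactly $\invalidof{\anadr}$, which is engineered to remove from the valid set every pointer expression pointing to $\anadr$ together with the selector $\psel{\anadr}$. The first thing to check is that $\anadr$ indeed leaves $\adrof{\vmem{\tau.\anact}}$: all variables and selectors holding $\anadr$ are invalidated, and $\psel{\anadr}$ drops out of $\validof{}\cap\adr$, so $\anadr$ is no longer referenced from the valid memory, while no reallocation can ever bring it back. The delicate part, and the crux of the whole argument, is to verify that the \emph{newly} invalidated expressions do not hold an address that is still present on the valid side; this is exactly where I would lean on the induction hypothesis together with the precise shape of $\invalidof{\anadr}$ and the reallocation-free discipline of $\freesem$, tracking that any address dropping into the invalid image is either $\anadr$ itself (just shown absent from the valid footprint) or an address that was already disjoint from the valid footprint before the step. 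Getting this last bookkeeping airtight — reconciling the dangling references exposed by the free with the restricted valid memory — is the step I expect to require the most care.
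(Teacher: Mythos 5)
Your route is genuinely different from the paper's: the paper does not prove this lemma by induction at all, it simply cites Lemma~D.15 of \citet{DBLP:journals/pacmpl/MeyerW19} and adds the remark that the cited work's different semantics of free (there, the selectors of a freed address are erased from the memory) does not affect the result. Your routine cases (assignments, assumptions, allocation) are fine in outline, but the case you yourself flag as the crux cannot be closed the way you propose. In this paper's semantics a free has update $\emptyset$, so the memory is untouched, while $\invalidof{\anadr}$ moves $\psel{\anadr}$ to the invalid side \emph{with its old value intact}; that value need not be $\anadr$ and need not be ``already disjoint from the valid footprint.'' Concretely: allocate $\anadr$ into $\apvar$ and $\anadrp$ into $\apvarp$, execute $\psel{\apvar}:=\apvarp$, then append $(\bot,\freeof{\anadr},\emptyset)$, which is permitted in $\freesem$ since $X=\adr$. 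This computation is pointer-race free (the only dereference goes through the then-valid $\apvar$, and a free is not a racy action by \Cref{def:prf-brief}). Afterwards $\apvarp$ and $\psel{\anadrp}$ are still valid, so $\anadrp\in\adrof{\restrict{\heapcomput{\tau}}{\validof{\tau}}}$, yet $\psel{\anadr}$ is invalid and still stores $\anadrp$, so $\anadrp\in\heapcomputof{\tau}{\pexp\setminus\validof{\tau}}$. Hence the invariant you are pushing through the free step breaks exactly there; read literally against this paper's definitions, the stated disjointness is not recoverable for selector expressions of freed addresses. This is precisely the semantic discrepancy that the paper's one-line remark absorbs: in the cited semantics $\heapcomputof{\tau}{\psel{\anadr}}$ becomes undefined at the free, so the offending address vanishes from the right-hand set.

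The repair is not more careful bookkeeping of the same invariant but a stronger invariant routed through freed addresses: (i) every invalid pointer \emph{variable} holds an address in $\freedof{\tau}$ (this is \Cref{thm:invalid-freed}), and (ii) no valid expression holds an address in $\freedof{\tau}$, and no valid selector $\psel{\anadrpp}$ has $\anadrpp\in\freedof{\tau}$ (essentially \Cref{thm:pointers-to-freed-are-invalid} together with the fact that $\freesem$ forbids reallocation, $Y=\emptyset$). Both properties are inductive, both survive the free step, and together they yield the disjointness whenever the invalid expression is a pointer variable --- which is the only way the paper ever applies \Cref{thm:invalidpointers-adr} (e.g., in the proof of \Cref{thm:invariant-eq-implies-valid}, where both expressions are variables). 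Alternatively, adopt the erase-on-free convention of \citet{DBLP:journals/pacmpl/MeyerW19}, after which your original induction closes. As written, however, your proof has a genuine gap at the free case, and that gap cannot be filled without changing either the invariant or the semantics.
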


\begin{lemma}
	\label{thm:pointers-to-freed-are-invalid}
	If $\tau\in\allsem$ and $\heapcomputof{\tau}{\apexp}\in\freedof{\tau}$, then $\apexp\notin\validof{\tau}$.
\end{lemma}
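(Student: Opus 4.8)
I would prove the contrapositive, stated as an invariant of the full semantics: for every $\tau\in\allsem$ and every pointer expression $\apexp$, if $\apexp\in\validof{\tau}$ then $\heapcomputof{\tau}{\apexp}\notin\freedof{\tau}$. This is logically equivalent to the lemma, and the invariant form is what makes an induction on the length of $\tau$ go through. The base case $\tau=\epsilon$ is vacuous, since $\freedof{\epsilon}=\emptyset$ (Definition~\ref{def:freedof}), so no pointer can point into the freed set. For the induction step I would append an action $\anact$ to $\tau$ and perform a case analysis on $\comof{\anact}$, using the liberal-semantics rules together with the definitions of $\validof{\cdot}$ (Definition~\ref{Definition:Validity}) and $\freedof{\cdot}$ (Definition~\ref{def:freedof}).

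Most cases are routine and follow directly from the induction hypothesis. Commands that touch neither the pointer memory nor the freed set -- data assignments, assumptions other than pointer equalities, $\atomicbegin$/$\atomicend$, $\enter$/$\exit$, and the invariant annotations -- leave both $\validof{\cdot}$ and the relevant memory entries unchanged, so the hypothesis transfers verbatim. For the genuinely updating commands the pattern is uniform: whenever the validity rule adds a pointer expression to $\validof{\tau.\anact}$, it does so only by inheriting validity from a source expression that is already valid in $\tau$ and whose value is copied into the target by the update; the hypothesis applied to the source then bounds the target, and the freed set is unchanged. Concretely, for $\apvar:=\apvarp$ the new value of $\apvar$ is $\heapcomputof{\tau}{\apvarp}$ and $\apvar$ becomes valid only if $\apvarp\in\validof{\tau}$; for $\apvar:=\psel{\apvarp}$ the source is the selector $\psel{\anadr}$; for $\psel{\apvar}:=\apvarp$ it is $\apvarp$; and for $\assumeof{\apvar=\apvarp}$ the equality guaranteed by the rule lets either side inherit the non-freed value of the other.

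The two interesting cases are $\freeof{\anadr}$ and $\apvar:=\malloc$. For $\freeof{\anadr}$ the freed set grows to $\freedof{\tau}\cup\set{\anadr}$ while $\validof{\cdot}$ shrinks by $\invalidof{\anadr}$. The key is that $\invalidof{\anadr}$ is designed to contain \emph{every} pointer expression whose $\heapcomput{\tau}$-value is $\anadr$ (all variables and selectors pointing to $\anadr$, plus $\psel{\anadr}$ itself). Hence any surviving valid $\apexp$ satisfies $\apexp\notin\invalidof{\anadr}$, which forces $\heapcomputof{\tau}{\apexp}\neq\anadr$; combined with the hypothesis $\heapcomputof{\tau}{\apexp}\notin\freedof{\tau}$ and the fact that the free action leaves memory unchanged, this gives $\heapcomputof{\tau.\anact}{\apexp}\notin\freedof{\tau}\cup\set{\anadr}$. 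For $\apvar:=\malloc$ the freed set instead \emph{shrinks} to $\freedof{\tau}\setminus\set{\anadr}$, so for every expression other than the freshly written $\apvar$ and $\psel{\anadr}$ the hypothesis survives the shrink, while $\apvar$ now holds $\anadr\notin\freedof{\tau.\anact}$ and $\psel{\anadr}$ holds $\segval$, which is not an address.

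I expect the free case to be the main obstacle, since it is the only place where the argument depends on the precise shape of $\invalidof{\anadr}$ rather than on a local copy of validity. The care required is to confirm that this set really captures \emph{all} aliases of $\anadr$ -- in particular the selector $\psel{\anadr}$, which is invalidated unconditionally even though the (empty) free update does not overwrite its entry -- so that no valid pointer can secretly still point to the just-freed address. Everything else is bookkeeping against the two definitions.
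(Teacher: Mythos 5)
Your proof is correct and takes essentially the same route as the paper: the paper phrases the argument as a shortest-counterexample contradiction rather than a direct induction on the contrapositive invariant, but the content is identical, resting on the same two critical observations — that a $\freeof{\anadr}$ step removes every expression aliasing $\anadr$ from the valid set via $\invalidof{\anadr}$, and that a pointer expression only ever becomes valid by inheriting its value from an already-valid source (assignment, selector access, assumption of equality) or via $\malloc$, which removes the allocated address from the freed set. Nothing is missing; your case analysis is, if anything, slightly more explicit than the paper's.
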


\begin{lemma}
	\label{thm:elision-support-lift}
	Let $\tau\in\allsemobs$, $\anadr\in(\freshof{\tau}\cup\freedof{\tau})\setminus\retiredof{\tau}$, and $\anadrp\in\freshof{\tau}$.
	If $\anobs$ supports elision, then $\freeableof{\tau}{\anadr}\subseteq\renamingof{\freeableof{\tau}{\anadrp}}{\anadrp}{\anadr}$.
\end{lemma}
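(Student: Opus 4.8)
The plan is to prove the stated containment element-wise. First I would dispose of the degenerate case $\anadr=\anadrp$, where the renaming is the identity and both sides coincide, so from now on assume $\anadr\neq\anadrp$. Since $\freeableof{\tau}{\cdot}$ depends on $\tau$ only through $\historyof{\tau}$, and since the retire and free events that determine $\retiredof{\cdot}$ are exactly the ones preserved by $\historyof{\cdot}$, I have $\retiredof{\historyof{\tau}}=\retiredof{\tau}$. Moreover, a $\freshof{\tau}$ address is never held by a pointer and, as $\historyof{\tau}$ respects the $\baseobs$-product and hence performs no un-retired free, is never freed; thus $\anadrp$ does not occur in $\historyof{\tau}$ at all, and in particular $\anadrp\in\freshof{\historyof{\tau}}$. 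This translates the computation-level hypotheses into the history-level preconditions $\anadr\notin\retiredof{\historyof{\tau}}$ and $\anadrp\in\freshof{\historyof{\tau}}$ that elision support requires.

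Next I would fix $\ahistp\in\freeableof{\historyof{\tau}}{\anadr}$ and aim to show $\renamingof{\ahistp}{\anadrp}{\anadr}\in\freeableof{\historyof{\tau}}{\anadrp}$, which is exactly $\ahistp\in\renamingof{\freeableof{\historyof{\tau}}{\anadrp}}{\anadrp}{\anadr}$. Applying clause (iii) of elision support (Definition~\ref{def:elision-support}) with the two preconditions above yields $\ahistp\in\freeableof{\renamingof{\historyof{\tau}}{\anadr}{\anadrp}}{\anadr}$, that is, $\renamingof{\historyof{\tau}}{\anadr}{\anadrp}\hconcat\ahistp\in\specof{\anobs}$ together with $\freesof{\ahistp}\subseteq\set{\anadr}$. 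The frees condition is immediate: swapping $\anadr$ and $\anadrp$ turns $\freesof{\ahistp}\subseteq\set{\anadr}$ into $\freesof{\renamingof{\ahistp}{\anadrp}{\anadr}}\subseteq\set{\anadrp}$.

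The remaining obligation, $\historyof{\tau}\hconcat\renamingof{\ahistp}{\anadrp}{\anadr}\in\specof{\anobs}$, I would obtain from a symmetry sub-lemma: $\specof{\anobs}$ is closed under swapping two addresses, $\ahist\in\specof{\anobs}\Leftrightarrow\renamingof{\ahist}{\anadr}{\anadrp}\in\specof{\anobs}$. This holds because an SMR automaton inspects its parameters only through equality guards, so applying the swap simultaneously to the history and to the (once-and-for-all chosen) valuation $\varphi$ induces a bijection between the runs on $\ahist$ and the runs on $\renamingof{\ahist}{\anadr}{\anadrp}$ that preserves the start and end locations; hence one history violates $\anobs$ iff the other does. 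Granting this, I apply the swap $\anadrp\leftrightarrow\anadr$ to the membership $\renamingof{\historyof{\tau}}{\anadr}{\anadrp}\hconcat\ahistp\in\specof{\anobs}$. Since renaming distributes over concatenation and is an involution (and the swap is symmetric in its two address arguments), the left factor renames back to $\historyof{\tau}$ and the right factor to $\renamingof{\ahistp}{\anadrp}{\anadr}$, giving $\historyof{\tau}\hconcat\renamingof{\ahistp}{\anadrp}{\anadr}\in\specof{\anobs}$ and completing the containment.

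I expect the symmetry sub-lemma to be the main obstacle, for two reasons. First, it is genuinely stronger than clause (ii) of elision support, which only sanctions renaming addresses \emph{distinct} from the observed one; here the observed address is itself part of the swapped pair, so I must descend to the automaton semantics and argue the valuation-swap run bijection explicitly, taking care that the non-deterministic, immutable choice of $\varphi$ is handled uniformly at the initial and the accepting state. Second, I must be meticulous about the bookkeeping that lets me read $\freshof{\cdot}$ and $\retiredof{\cdot}$ off the history rather than the computation, since only then do both the elision precondition and the symmetry argument apply directly to $\historyof{\tau}$; the observation that a $\freshof{\tau}$ address is entirely absent from $\historyof{\tau}$ is what makes this step clean.
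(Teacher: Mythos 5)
Your proof is correct and follows the same skeleton as the paper's: both arguments rest on clause (iii) of \Cref{def:elision-support} together with equivariance of observer behavior under swapping two addresses. The difference lies in how that second ingredient is obtained and how the argument is organized. The paper imports the equivariance wholesale as Lemma D.26 of \citet{DBLP:journals/pacmpl/MeyerW19}, in the form $\renamingof{\freeableof{\ahist}{\anadrp}}{\anadrp}{\anadr}=\freeableof{\renamingof{\ahist}{\anadrp}{\anadr}}{\anadr}$, and then case-splits: for $\anadr\in\freedof{\tau}$ it applies clause (iii), while for $\anadr\in\freshof{\tau}$ it observes that the renaming acts as the identity on $\historyof{\tau}$, so both sides are in fact equal. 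You instead prove the equivariance yourself, via the run bijection induced by swapping the two addresses simultaneously in the history and in the valuation $\varphi$ --- which is sound, since guards only test equalities and initial/accepting status depends only on locations --- and you apply clause (iii) uniformly, noting that its preconditions $\anadr\notin\retiredof{\historyof{\tau}}$ and $\anadrp\in\freshof{\historyof{\tau}}$ hold in the fresh case as well as the freed case, so no case split is needed. Your version is thus self-contained where the paper leans on a citation, at the cost of re-deriving that lemma; you also make explicit a transfer the paper leaves implicit, namely that $\freshof{\cdot}$ and $\retiredof{\cdot}$ can be read off $\historyof{\tau}$ rather than $\tau$ (fresh addresses never occur in the history, which follows from \Cref{thm:fresh-notin-range} and the fact that freshness only shrinks along a computation).
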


\begin{lemma}
	\label{thm:replacing-addresses-in-computations-new}
	Assume $\anobs$ supports elision.
	Let $\tau\in\freesemobs[\aprog]$, $\anadr\notin\vadrof{\tau}$, and $A\subseteq\adr$ with $\cardof{A}<\infty$.
	Then there is $\sigma\in\freesemobs[\aprog]$ and $\anadrp\in\freshof{\tau}\setminus A$ with:
	\begin{compactitem}
		\item $\tau\computequiv\sigma$ and $\tau\obsrel\sigma$ and $\retiredof{\tau}\subseteq\retiredof{\sigma}\cup\set{\anadr}$ and $\anadr\in\freshof{\sigma}$,
		\item $\anadrp\in\freshof{\sigma}\iff\anadr\in\freshof{\tau}$ and $\freshof{\sigma}\setminus\set{\anadr,\anadrp}=\freshof{\tau}\setminus\set{\anadr,\anadrp}$,
		\item $\freeableof{\tau}{\anadr}=\renamingof{\freeableof{\sigma}{\anadrp}}{\anadrp}{\anadr}$ and $\renamingof{\freeableof{\tau}{\anadrp}}{\anadrp}{\anadr}=\freeableof{\sigma}{\anadr}$,
		\item $\forall\anadrpp.~\anadr\neq\anadrpp\neq\anadrp\implies\freeableof{\tau}{\anadrpp}=\freeableof{\sigma}{\anadrpp}$, and
		\item $\forall\anexp,\anexpp\in\pvars\cup\setcond{\psel{\anadrpp}}{\anadrpp\in\heapcomputof{\tau}{\validof{\tau}}}.~\heapcomputof{\tau}{\anexp}\neq\heapcomputof{\tau}{\anexpp}\implies\heapcomputof{\sigma}{\anexp}\neq\heapcomputof{\sigma}{\anexpp}$.
	\end{compactitem}
\end{lemma}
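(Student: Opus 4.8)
The plan is to obtain $\sigma$ by globally swapping the name $\anadr$ with a suitably chosen fresh address $\anadrp$ throughout $\tau$, and then to read off every claimed property from the fact that this swap is an automorphism of both the program semantics and the SMR automaton. First I would pick $\anadrp$: since $\tau$ is finite, only finitely many addresses occur in it, so (as $\adr$ is infinite) I can choose $\anadrp\in\freshof{\tau}\setminus A$. I record two observations that make the swap harmless on the valid state. Because $\anadrp$ is never allocated, no pointer ever holds $\anadrp$ and no selector $\psel{\anadrp}$ is ever valid, so $\anadrp\notin\vadrof{\tau}$; together with the hypothesis $\anadr\notin\vadrof{\tau}$ and \Cref{thm:adrof-valid-heap-restriction}, neither $\anadr$ nor $\anadrp$ appears in the domain or range of $\restrict{\heapcomput{\tau}}{\validof{\tau}}$. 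I then lift the history renaming of \Cref{def:renaming} to computations, setting $\sigma := \renamingof{\tau}{\anadr}{\anadrp}$ by applying the swap to every address in an action's update and in $\freeof{\cdot}$/SMR events, while leaving commands, threads, and control flow untouched.

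Next I would verify $\sigma\in\freesemobs[\aprog]$. For membership in the liberal semantics $\freesem[\aprog]$ I go action-wise through the rules: the swap preserves control flow and commutes with every heap lookup $\heapcomputof{\tau}{\cdot}$, so (Assign*), (Assume), (Atomic), (Enter), (Exit), and the (Invariant*) rules transfer verbatim; for (Malloc) the only subtlety is freshness, but since the reallocatable set is empty every allocation in $\tau$ picks a fresh address, and the image of such an allocation under the swap again picks a fresh address (in particular the allocation producing $\anadr$ now produces $\anadrp$, which is globally fresh in $\tau$ hence fresh at that point in $\sigma$), while $\anadr$ is never allocated in $\sigma$, giving $\anadr\in\freshof{\sigma}$; (Free) is immediate as the freeable set is $\adr$. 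For $\historyof{\sigma}\in\specof{\anobs}$ I use $\historyof{\sigma}=\renamingof{\historyof{\tau}}{\anadr}{\anadrp}$ (renaming commutes with history extraction) and a renaming-equivariance sublemma for the SMR automaton: since transition guards only compare for equality, any bijection on $\adr$ is an automorphism of the SMR semantics, whence $\ahist\in\specof{\anobs}\iff\renamingof{\ahist}{\anadr}{\anadrp}\in\specof{\anobs}$ and, more generally, $\freeableof{\renamingof{\ahist}{\anadr}{\anadrp}}{\renamingof{\anadrpp}{\anadr}{\anadrp}}=\renamingof{\freeableof{\ahist}{\anadrpp}}{\anadr}{\anadrp}$.

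With $\sigma$ in hand, the remaining properties fall out. Since the swap is the identity on $\restrict{\heapcomput{\tau}}{\validof{\tau}}$ and, by equivariance of \Cref{Definition:Validity} and $\anadr,\anadrp\notin\vadrof{\tau}$, fixes the set of valid expressions so that $\validof{\tau}=\validof{\sigma}$, we get $\restrict{\heapcomput{\sigma}}{\validof{\sigma}}=\restrict{\heapcomput{\tau}}{\validof{\tau}}$ and identical control, i.e. $\tau\computequiv\sigma$ (the first item; cf. \Cref{thm:valideq}). Because the swap is a permutation of $\adr$, $\freshof{\sigma}=\renamingof{\freshof{\tau}}{\anadr}{\anadrp}$ and $\retiredof{\sigma}=\renamingof{\retiredof{\tau}}{\anadr}{\anadrp}$, which give the second item and $\retiredof{\tau}\subseteq\retiredof{\sigma}\cup\set{\anadr}$. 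The two freeable identities (third item) are the equivariance sublemma instantiated at $\anadrpp\in\set{\anadr,\anadrp}$ together with involutivity of the swap, while the fourth item (for $\anadrpp\neq\anadr,\anadrp$) is exactly clause (ii) of \Cref{def:elision-support}; combined with $\anadr,\anadrp\notin\vadrof{\tau}$ this fourth item yields $\tau\obsrel\sigma$. Finally the last item holds because the swap is a bijection on $\adr$ and hence preserves distinctness of the values $\heapcomputof{\tau}{\anexp}$.

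I expect the main obstacle to be the careful bookkeeping in the second step: making the lifting of the renaming to computations precise enough that every liberal-semantics rule — especially (Malloc) with its freshness side condition — is visibly preserved, and pinning down the renaming-equivariance of the SMR automaton so that it is both strong enough to drive the third and fourth items and manifestly consistent with the already-postulated elision clauses. The conceptual content, by contrast, is light: the whole argument rests on $\anadr\notin\vadrof{\tau}$ and the freshness of $\anadrp$ making the swap invisible to the valid state, plus the equality-only nature of SMR guards making it invisible to the observer up to renaming.
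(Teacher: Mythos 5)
Your proposal is correct and is essentially the paper's own argument: the paper discharges this lemma by citing Lemmas D.26--D.28 of \cite{DBLP:journals/corr/MeyerW19}, which are precisely the address-swap machinery you reconstruct -- equivariance of the liberal semantics and of $\specof{\anobs}$ under the swap $\anadr\leftrightarrow\anadrp$ (the latter because SMR guards only test equality), with $\anadr\notin\vadrof{\tau}$ and the freshness of $\anadrp$ ensuring the swap fixes the valid memory, the validity sets, and the control flow. Your per-item derivations -- spec equivariance for the third item, clause (ii) of \Cref{def:elision-support} for the fourth, injectivity of the swap for the fifth -- match what those cited lemmas provide, so the only difference is that you prove the renaming facts from scratch rather than by reference.
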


\begin{lemma}
	\label{thm:elision-computations-new}
	Assume that $\anobs$ supports elision and that $\freesemobs[\aprog]$ is PRF.
	Then, for every $\tau\in\allsemobs[\aprog]$ there is some $\sigma\in\freesemobs[\aprog]$ with $\tau\computequiv\sigma$, $\tau\obsrel\sigma$, and $\retiredof{\tau}\subseteq\retiredof{\sigma}$.
	Moreover, $\heapcomputof{\tau}{\anexp}\neq\heapcomputof{\tau}{\anexpp}$ implies $\heapcomputof{\sigma}{\anexp}\neq\heapcomputof{\sigma}{\anexpp}$ for all $\anexp,\anexpp\in\pvars\cup\setcond{\psel{\anadrp}}{\anadrp\in\heapcomputof{\tau}{\validof{\tau}}}$.
\end{lemma}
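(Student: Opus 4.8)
The plan is to prove the statement by induction on the length of $\tau\in\allsemobs[\aprog]$, building $\sigma$ incrementally and using \Cref{thm:replacing-addresses-in-computations-new} as the workhorse that removes a single reallocation. Throughout the induction I maintain as a joint invariant exactly the four conclusions: $\tau\computequiv\sigma$, $\tau\obsrel\sigma$, $\retiredof{\tau}\subseteq\retiredof{\sigma}$, and the distinctness-preservation property for the pointer expressions rooted in the valid heap. From $\tau\computequiv\sigma$ and \Cref{thm:valideq} I obtain $\validof{\tau}=\validof{\sigma}$, hence by \Cref{thm:adrof-valid-heap-restriction} also $\vadrof{\tau}=\vadrof{\sigma}$; this identification is what lets the observer-behavior inclusion, quantified over $\vadrof{\tau}$, be transported to the $\sigma$-side. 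The base case $\tau=\epsilon$ is immediate with $\sigma=\epsilon$.

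For the inductive step consider $\tau.\anact\in\allsemobs[\aprog]$. By \Cref{thm:prefix-closure} we have $\tau\in\allsemobs[\aprog]$, so the hypothesis yields a witness $\sigma$, and I case-split on $\comof{\anact}$. For every action other than a reallocating $\malloc$ — assignments, $\assume$, the SMR $\enter$/$\exit$ actions, the invariant actions, a $\free$, and a $\malloc$ that draws a genuinely fresh address — I append the corresponding action to $\sigma$. Because $\freesemobs[\aprog]$ is pointer-race-free and $\tau\computequiv\sigma$ guarantees agreement on the valid heap, each such action reads and writes only the valid fragment, so it is enabled after $\sigma$ and updates the valid heap identically on both sides; control flow is preserved by construction and distinctness is inherited. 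Enabledness of a $\free$ or $\retire$ on the $\sigma$-side, even when the freed/retired address lies outside the valid heap, follows from $\tau\obsrel\sigma$ together with $\retiredof{\tau}\subseteq\retiredof{\sigma}$: whatever the observer permits after $\tau$ it still permits after $\sigma$.

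The single interesting case is a $\malloc$ in $\tau$ that reallocates some $\anadr\in\freedof{\tau}$, which is forbidden in $\freesemobs[\aprog]$ since there $Y=\emptyset$. Here I first argue $\anadr\notin\vadrof{\sigma}$: the selector $\psel{\anadr}$ was invalidated at the last free of $\anadr$ and not revalidated (no reallocation has occurred), and by \Cref{thm:pointers-to-freed-are-invalid} no valid pointer points to $\anadr$; combined with $\vadrof{\tau}=\vadrof{\sigma}$ this yields $\anadr\notin\vadrof{\sigma}$. I then apply \Cref{thm:replacing-addresses-in-computations-new} to $\sigma$ with the address $\anadr$ and a finite set $A$ chosen large enough to avoid clashes with the currently relevant addresses, obtaining $\sigma'\in\freesemobs[\aprog]$ in which $\anadr$ has been renamed to a fresh name so that $\anadr\in\freshof{\sigma'}$, while $\computequiv$, $\obsrel$, the retired inclusion, and distinctness are carried over from $\sigma$ to $\sigma'$ by the bullets of that lemma (and transitivity of $\computequiv$). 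Now $\malloc$ may legally allocate the fresh $\anadr$ in $\sigma'$, reproducing the same heap update as in $\tau$ by choosing the same data value, so $\tau.\anact\computequiv\sigma'.\anact$.

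The main obstacle, and the place needing the most care, is re-establishing $\tau.\anact\obsrel\sigma'.\anact$ after a reallocation, since in $\tau$ the address $\anadr$ is reused whereas in $\sigma'$ it is fresh, so the two induced histories genuinely differ. This is exactly what the elision hypotheses are for: \Cref{def:elision-support} (together with \Cref{thm:elision-support-lift} and \Cref{thm:move-event-from-freeable}) states that the observer's future behavior on an address is insensitive to earlier frees and to renaming of other addresses, and is only enlarged when the address is fresh. Threading these (in)equalities through the renaming delivered by \Cref{thm:replacing-addresses-in-computations-new} gives the inclusion on all of $\vadrof{\tau.\anact}$. The retired-set inclusion follows from the corresponding bullet of the replacement lemma together with $\anadr\notin\retiredof{\tau}$ and the fact that $\malloc$ does not grow the retired set on the $\tau$-side, and distinctness follows from its final bullet. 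The remaining cases are routine once valid-heap agreement and the observer bookkeeping are in place.
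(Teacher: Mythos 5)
Your overall skeleton --- induction over the length of $\tau$, mimicking each action after $\sigma$, and invoking \Cref{thm:replacing-addresses-in-computations-new} to rename a previously freed address to a fresh one whenever $\tau$ performs a reallocating $\malloc$ --- is the same strategy as the proofs this paper defers to: the paper's own ``proof'' of this lemma is nothing but a citation of Proposition~C.14, Lemma~C.16, and Lemma~D.16 of the prior work, and those proofs are organized exactly around this elision-by-renaming induction. Your treatment of the reallocation case, including the freshness argument via \Cref{thm:pointers-to-freed-are-invalid} and the transport of $\obsrel$ through the elision properties, is in the right spirit.

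The genuine gap lies in the cases you declare routine. You justify appending every non-reallocating action to $\sigma$ by claiming that, since $\freesemobs$ is PRF and $\tau\computequiv\sigma$, ``each such action reads and writes only the valid fragment.'' That inference is unsound: the PRF hypothesis concerns $\freesemobs$, \emph{not} $\allsemobs$, so the action $\anact$ appended to $\tau$ may itself be a pointer race. For an unsafe dereference, call, or retire this can be repaired --- the corresponding action is still enabled after $\sigma$ (an invalid pointer in $\sigma$ still holds an address, by \Cref{thm:invalid-freed}), so replaying it would produce a racy computation in $\freesemobs$, contradicting PRF, and the case is ruled out. But for an assumption $\assumeof{\apvar=\apvarp}$ with an invalid operand this repair fails. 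After a reallocation in $\tau$, an invalid pointer can alias a valid one (the ABA situation), so the assume can be enabled after $\tau$; in $\sigma$, where that reallocation was elided by renaming, the two pointers hold \emph{distinct} addresses --- indeed, by \Cref{thm:invalidpointers-adr}, invalid pointers of a PRF computation in $\freesemobs$ never alias the valid heap --- so the assume is not enabled after $\sigma$. You can neither mimic the action nor replay it as a race, and the induction stalls exactly there. Your strengthened invariant does not help: it transports distinctness from $\tau$ to $\sigma$, whereas this case needs the converse direction. This assume/ABA case is precisely where the cited proofs, and the paper's own sketch for the generalization (\Cref{thm:generalized-RPRF-guarantee}), concentrate their effort, via dedicated lemmas classifying which aliasing configurations can actually arise; a complete proof of the present lemma must contain that analysis. (A smaller flaw of the same kind: $\obsrel$ only constrains addresses in the valid heap of $\tau$, so it does not by itself give enabledness of a $\free$ of an address outside it; such frees must instead be skipped in $\sigma$, which is harmless because they invalidate nothing valid.)
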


\begin{lemma}
	\label{thm:fresh-notin-range}
	If $\tau\in\allsem$ and $\anadr\in\freshof{\tau}$, then $\anadr\notin\rangeof{\heapcomput{\tau}}$.
\end{lemma}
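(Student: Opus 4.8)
The plan is to proceed by induction on the length of the computation $\tau$, maintaining the invariant that a fresh address never occurs as a value stored in the memory. The structural fact I would isolate first is that applying an update can only enlarge the range by the values it writes: for a single write $\rangeof{\applyupdate{\aheap}{\anexp\mapsto\aval}}\subseteq\rangeof{\aheap}\cup\set{\aval}$, and for the simultaneous writes used by the malloc rule the range grows by at most the finitely many written values. Since addresses form one summand of the disjoint codomain $\adr\uplus\set{\segval}\uplus\dom$, checking $\anadr\notin\rangeof{\heapcomput{\tau.\anact}}$ for an address $\anadr$ reduces to verifying that $\anadr$ differs from every \emph{address} written by the update $\anup$, as non-address writes (values $\segval$ or elements of $\dom$) can never equal $\anadr$.

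For the base case $\tau=\epsilon$, the initial memory maps every pointer variable to $\segval$ and every data variable to $0$, so $\rangeof{\heapcomput{\epsilon}}\cap\adr=\emptyset$, which settles the claim for every $\anadr\in\freshof{\epsilon}=\adr$. For the inductive step, fix $\anact=(\athread,\acom,\anup)$ and $\anadr\in\freshof{\tau.\anact}$. Inspecting \Cref{def:freshof} shows $\freshof{\tau.\anact}\subseteq\freshof{\tau}$ in every clause, so $\anadr\in\freshof{\tau}$ and the induction hypothesis gives $\anadr\notin\rangeof{\heapcomput{\tau}}$. It then remains to run a case analysis over the rules defining $\anup$ in the liberal semantics.

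The routine cases are those where $\anup$ writes no address, or an address already present in $\rangeof{\heapcomput{\tau}}$. All assume, atomic, enter/exit, invariant, and (Free) actions leave the memory unchanged ($\anup=\emptyset$), so the range is unchanged and the hypothesis applies verbatim; the data assignments write only values in $\dom$, which are not addresses; and the pointer assignments write a value of the form $\heapcomputof{\tau}{\cdot}\in\rangeof{\heapcomput{\tau}}$, hence distinct from $\anadr$ by the hypothesis. The one substantive case is (Malloc), where $\acom=\apvar:=\malloc$ and $\anup=\update{\apvar\mapsto\anadrp,\psel{\anadrp}\mapsto\segval,\dsel{\anadrp}\mapsto\advalue}$: the only freshly written address is the allocated $\anadrp$, and the malloc clause of \Cref{def:freshof} gives $\freshof{\tau.\anact}=\freshof{\tau}\setminus\set{\anadrp}$, so $\anadr\neq\anadrp$ exactly because $\anadr$ remains fresh \emph{after} the step. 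Combining $\anadr\notin\rangeof{\heapcomput{\tau}}$ with the fact that $\anadr$ differs from every written address yields $\anadr\notin\rangeof{\heapcomput{\tau.\anact}}$, closing the induction.

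There is no genuine obstacle here; the only point demanding care is the (Malloc) case, since freshness of $\anadr$ \emph{relative to the extended computation} is precisely what excludes the newly allocated $\anadrp$, so one must invoke the freshness update of \Cref{def:freshof} before comparing $\anadr$ to $\anadrp$. The remaining bookkeeping is the observation about how $\rangeof{\cdot}$ changes under an update together with the disjointness of $\adr$, $\set{\segval}$, and $\dom$, which lets all non-address writes be dismissed outright.
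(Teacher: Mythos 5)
Your proof is correct, and it takes a genuinely different route from the paper, in the sense that the paper contains no in-house argument at all: its entire proof of this lemma is a citation of Lemma~D.9 in \cite{DBLP:journals/corr/MeyerW19}, i.e., it delegates the statement to the extended version of the prior work whose semantics this paper adapts. Your induction over the length of $\tau$ is presumably the argument underlying that cited lemma, and all of its steps check out against this paper's definitions: $\freshof{\tau.\anact}\subseteq\freshof{\tau}$ holds in every clause of \Cref{def:freshof}, so the induction hypothesis applies; all actions with $\anup=\emptyset$ (assume, atomic, enter/exit, invariant, and crucially \textnormal{(Free)}) are trivial; pointer assignments only copy values already in $\rangeof{\heapcomput{\tau}}$ (or an undefined value when a selector lies outside the domain of the partial memory, which is in any case not an address); data writes land in $\dom$, disjoint from $\adr$; and in the \textnormal{(Malloc)} case you correctly extract $\anadr\neq\anadrp$ from the fact that the malloc clause of \Cref{def:freshof} removes the allocated address, so freshness of $\anadr$ \emph{after} the step is exactly what is needed. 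What your approach buys is self-containedness: it argues directly in this paper's liberal semantics, avoiding any reliance on the (implicit) correspondence between this semantics and the slightly different one of \cite{DBLP:journals/corr/MeyerW19} (which, as the paper notes elsewhere, e.g.\ resets selectors on \textnormal{free}). What the paper's citation buys is brevity and consistency with the other reduction lemmas, most of which are likewise imported. The only cosmetic gap in your write-up is the glossed-over partiality of the memory in the selector-read case, which, as noted, cannot produce an address and so does not affect the argument.
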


\begin{lemma}
	\label{thm:disjoint-fresh-retired}
	If $\tau\in\allsem$, then $\freshof{\tau}\cap\retiredof{\tau}=\emptyset$.
\end{lemma}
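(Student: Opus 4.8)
The plan is to proceed by induction on the length of the computation $\tau\in\allsem$, using the recursive definitions of $\freshof{\cdot}$ and $\retiredof{\cdot}$. For the base case $\tau=\epsilon$ we have $\retiredof{\epsilon}=\emptyset$, so the intersection $\freshof{\epsilon}\cap\retiredof{\epsilon}$ is trivially empty. For the inductive step I would assume $\freshof{\tau}\cap\retiredof{\tau}=\emptyset$ and consider an extension $\tau.\anact$, performing a case analysis on the command $\comof{\anact}$ according to how it updates the two sets.

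Three of the cases are immediate because the updates only shrink the relevant sets. If $\comof{\anact}\equiv\freeof{\anadr}$, then both sets merely lose $\anadr$, so $\freshof{\tau.\anact}\cap\retiredof{\tau.\anact}\subseteq\freshof{\tau}\cap\retiredof{\tau}=\emptyset$. If $\comof{\anact}\equiv\apvar:=\malloc$, then $\freshof{\cdot}$ loses the freshly allocated address while $\retiredof{\cdot}$ is unchanged, and since $\freshof{\tau.\anact}\subseteq\freshof{\tau}$ the intersection again stays empty. In every remaining case neither set changes, so the hypothesis carries over verbatim.

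The only genuine obstacle is the retire case $\comof{\anact}\equiv\enterof{\retireof{\apvar}}$, where $\retiredof{\cdot}$ gains the address $\anadr=\heapcomputof{\tau}{\apvar}$ while $\freshof{\cdot}$ is left untouched. Here I must show $\anadr\notin\freshof{\tau}$, for then $\freshof{\tau}\cap(\retiredof{\tau}\cup\set{\anadr})=\emptyset$ follows from the hypothesis. If $\anadr=\segval$ this is trivial, since $\freshof{\cdot}\subseteq\adr$ by construction and $\segval\notin\adr$. Otherwise $\anadr\in\adr$, and because $\apvar$ holds $\anadr$ after $\tau$ we have $\anadr\in\rangeof{\heapcomput{\tau}}$; the contrapositive of \Cref{thm:fresh-notin-range} then yields $\anadr\notin\freshof{\tau}$. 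This last step is where the argument really rests: it encodes the intuition that one can only retire an address that some pointer currently references, and such an address must already have been allocated and is therefore no longer fresh.
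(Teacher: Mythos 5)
Your proof is correct and follows essentially the same route as the paper's: both arguments reduce the problem to the retire case (all other actions only shrink the relevant sets) and close that case by applying the contrapositive of \Cref{thm:fresh-notin-range} to the retired address, which lies in $\rangeof{\heapcomput{\tau}}$. The paper phrases this as a shortest-counterexample contradiction rather than a direct induction, but that is only a cosmetic difference.
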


\begin{lemma}
	\label{thm:invalid-freed-or-fresh}
	If $\tau\in\freesem$ PRF and $\apexp\in\pexp$ with $\apexp\notin\validof{\tau}$, then $\heapcomputof{\tau}{\apexp}\in\freedof{\tau}$ or $\apexp\equiv\psel{\anadr}\wedge\anadr\in\freshof{\tau}\cup\freedof{\tau}$.
\end{lemma}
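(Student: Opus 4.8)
The plan is to prove the statement by induction on the length of $\tau$, carrying along in the hypothesis how the four relevant quantities evolve: the valid expressions $\validof{\tau}$ (\Cref{Definition:Validity}), the freed and fresh addresses $\freedof{\tau}$, $\freshof{\tau}$ (\Cref{def:freedof,def:freshof}), and the heap $\heapcomput{\tau}$. Since $\freesem=\asem{\adr}{\emptyset}\subseteq\allsem$, every lemma stated for $\allsem$ is available here; I will use \Cref{thm:fresh-notin-range} (fresh addresses lie outside the range of the heap) and \Cref{thm:pointers-to-freed-are-invalid} (a valid pointer expression does not hold a freed address). For the base case $\tau=\epsilon$ we have $\validof{\epsilon}=\pvars$, so a pointer variable is never invalid and the hypothesis is vacuous, while for a selector $\psel{\anadr}$ the second disjunct holds outright because $\anadr\in\adr=\freshof{\epsilon}$.

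For the inductive step I would case-split on the appended action $\anact$. A large group of commands—data operations, SMR $\enter$/$\exit$ calls, $\atomicbegin$/$\atomicend$, invariant annotations, and the non-equality assumptions—leaves all four quantities unchanged, so the claim transfers verbatim. An equality assumption $\assumeof{\apvar=\apvarp}$ is pointer-race-free only if both operands are valid, whence $\validof{\tau.\anact}=\validof{\tau}$ and again nothing moves. The pointer copy $\apvar:=\apvarp$ changes only $\apvar$'s validity: if $\apvar$ is invalid afterwards then $\apvarp$ was invalid before, and as $\apvarp$ is a variable the second disjunct of the hypothesis is impossible, forcing $\heapcomputof{\tau}{\apvarp}\in\freedof{\tau}$—which is exactly the value $\apvar$ now holds. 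The store $\psel{\apvar}:=\apvarp$ is symmetric: only $\psel{\anadr}$ with $\anadr=\heapcomputof{\tau}{\apvar}$ changes, and invalidity of $\psel{\anadr}$ afterwards forces $\apvarp$ invalid, hence freed-valued.

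The genuinely substantive remaining cases are $\freeof{\anadr}$ and $\apvar:=\malloc$. For $\freeof{\anadr}$ I would separate the newly invalidated expressions $\invalidof{\anadr}$ from the previously invalid ones: members of $\invalidof{\anadr}$ either now hold the freed $\anadr$ (first disjunct) or are $\psel{\anadr}$ itself (second disjunct, since $\anadr\in\freedof{\tau.\anact}$), while previously invalid expressions carry over because $\freedof{\tau}\subseteq\freedof{\tau.\anact}$ and $\freshof{\tau}\cup\freedof{\tau}\subseteq\freshof{\tau.\anact}\cup\freedof{\tau.\anact}$. For $\apvar:=\malloc$ the allocated $\anadr$ is fresh, hence outside $\rangeof{\heapcomput{\tau}}$ by \Cref{thm:fresh-notin-range}; this justifies removing $\anadr$ from the freed and fresh sets without breaking either disjunct for the unaffected, still-invalid expressions (no such expression holds $\anadr$, and any selector witnessing the second disjunct is rooted at an address $\neq\anadr$).

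The step I expect to be the main obstacle is the load $\apvar:=\psel{\apvarp}$. Here invalidity of the freshly written $\apvar$ must be traced back to invalidity of the selector $\psel{\anadr}$ with $\anadr=\heapcomputof{\tau}{\apvarp}$, and the hypothesis on $\psel{\anadr}$ offers two disjuncts—only the first (its stored value is freed) is useful, whereas the second (the node $\anadr$ is itself fresh or freed) says nothing about the contents of $\psel{\anadr}$. This is precisely where pointer-race-freedom is indispensable: the load dereferences $\apvarp$, so $\apvarp\in\validof{\tau}$, and therefore $\anadr\notin\freedof{\tau}$ by \Cref{thm:pointers-to-freed-are-invalid} and $\anadr\notin\freshof{\tau}$ by \Cref{thm:fresh-notin-range}. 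The second disjunct for $\psel{\anadr}$ is thus impossible, the first must hold, and $\apvar$ inherits a freed value, closing the induction.
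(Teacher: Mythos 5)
Your proposal is correct and takes essentially the same route as the paper's own proof: induction on the length of the computation with a case split on the appended action, where the decisive load case $\apvar:=\psel{\apvarp}$ is resolved exactly as in the paper by using pointer-race-freedom to obtain $\apvarp\in\validof{\tau}$ and then discharging the second disjunct for $\psel{\anadr}$ via \Cref{thm:fresh-notin-range} and \Cref{thm:pointers-to-freed-are-invalid}. The only cosmetic deviations are in the easy cases (you invoke PRF for equality assumptions where the paper just uses that validity only grows, and you justify the $\malloc$ case via fresh-addresses-outside-the-heap-range where the paper uses freed-implies-not-fresh), neither of which affects the argument.
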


\begin{lemma}
	\label{thm:invalid-freed}
	If $\tau\in\freesem$ PRF and $\apvar\in\pvars$ with $\apvar\notin\validof{\tau}$, then $\heapcomputof{\tau}{\apvar}\in\freedof{\tau}$.
\end{lemma}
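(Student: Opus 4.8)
The plan is to obtain the statement as an immediate specialization of \Cref{thm:invalid-freed-or-fresh}, which already covers arbitrary pointer expressions $\apexp\in\pexp$. First I would instantiate that lemma with $\apexp := \apvar$, using the hypotheses that $\tau\in\freesem$ is pointer-race-free and that $\apvar\notin\validof{\tau}$. This yields the disjunction that either $\heapcomputof{\tau}{\apvar}\in\freedof{\tau}$, which is exactly the claim, or $\apvar\equiv\psel{\anadr}$ for some $\anadr\in\freshof{\tau}\cup\freedof{\tau}$.

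The key observation is that the second disjunct cannot occur here. Pointer expressions decompose as $\pexp = \pvars\uplus\psels$, so a pointer variable $\apvar\in\pvars$ is syntactically distinct from every selector $\psel{\anadr}\in\psels$. Hence $\apvar\equiv\psel{\anadr}$ is impossible, which rules out the selector case and forces $\heapcomputof{\tau}{\apvar}\in\freedof{\tau}$. This is the whole argument.

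There is essentially no obstacle at this level: the lemma is a direct corollary, and all the real content sits in \Cref{thm:invalid-freed-or-fresh}. That more general statement is the one proved by induction on the length of $\tau$ against the inductive definition of validity (\Cref{Definition:Validity}), where the assignment-through-dereference case $\apvar:=\psel{\apvarp}$ exploits pointer-race freedom (forcing $\apvarp$ to be valid) together with \Cref{thm:pointers-to-freed-are-invalid} to ensure that a valid pointer never aims at a freed address. Should a self-contained proof be preferred over a citation, one could replay that induction restricted to pointer variables; but since the variable instance of \Cref{thm:invalid-freed-or-fresh} already discharges the goal directly, invoking it and eliminating the selector alternative is the cleanest route.
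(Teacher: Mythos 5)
Your proof is correct and matches the paper's own argument exactly: the paper also derives this lemma as an immediate corollary of \Cref{thm:invalid-freed-or-fresh}, and your explicit observation that the selector disjunct $\apvar\equiv\psel{\anadr}$ is ruled out by $\apvar\in\pvars$ (since $\pexp=\pvars\uplus\psels$) is precisely the implicit step the paper leaves to the reader.
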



\begin{lemma}
	\label{thm:freeable-vs-deletable-new}
	Assume $\anobs$ supports elision.
	Consider $\tau.\anact\in\allsemobs$ with $\comof{\anact}=\freeof{\anadr}$.
	Let $\historyof{\tau}=\ahist$.
	Then, $\specof{\ahist.\freeof{\anadr}}\subseteq\specof{\ahist}$.
\end{lemma}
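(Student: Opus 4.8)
The plan is to first fix the reading of the notation: since $\specof{\cdot}$ was defined on automata, I interpret $\specof{\ahist}$ as the residual language $\setcond{\ahistp}{\ahist.\ahistp\in\specof{\anobs}}$, so the claim becomes ``every continuation that is safe after $\ahist.\freeof{\anadr}$ is already safe after $\ahist$.'' I would start by unfolding the hypothesis: $\tau.\anact\in\allsemobs$ with $\comof{\anact}=\freeof{\anadr}$ means $\historyof{\tau.\anact}\in\specof{\anobs}$, and since $\historyof{\tau.\anact}=\ahist.\freeof{\anadr}$ this is exactly $\ahist.\freeof{\anadr}\in\specof{\anobs}$. By prefix closure of $\specof{\anobs}$ (\Cref{thm:prefix-closure}, which rests on \Cref{assumption:observers-accepting-state}(i)) I also get $\ahist\in\specof{\anobs}$, so the $\free$ in question is a \emph{legitimate} one.

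The second step is to recast membership in terms of reachable automaton states. Writing $S_{\ahist}$ for the set of states reachable from an initial state along $\ahist$ (the valuation $\varphi$ being chosen once and fixed along each run), one has $\ahistp\in\specof{\ahist}$ iff no run of $\ahistp$ from a state of $S_{\ahist}$ ends accepting. The inclusion then follows once I exhibit, for every $\astate_1\in S_{\ahist}$, a $\freeof{\anadr}$-successor $\astate_2\in S_{\ahist.\freeof{\anadr}}$ that \emph{simulates} $\astate_1$, meaning $\astate_1\trans{\ahistp}\text{accepting}$ implies $\astate_2\trans{\ahistp}\text{accepting}$ for all $\ahistp$. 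Indeed, taking the contrapositive: if some $\ahistp\in\specof{\ahist.\freeof{\anadr}}$ drove $\astate_1$ to an accepting state, inserting the free and applying the simulation would drive $\astate_2\in S_{\ahist.\freeof{\anadr}}$ to an accepting state, contradicting $\ahistp\in\specof{\ahist.\freeof{\anadr}}$.

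The core is building this simulation, and here I exploit the product shape $\anobs=\baseobs\times\anobs'$ from \Cref{assumption:observers-accepting-state}(ii). For an address not tracked by $\varphi$ the guard $\anadr=\adrvar$ fails, $\freeof{\anadr}$ is a self-loop, and $\astate_2=\astate_1$ works. For the tracked address, legitimacy forces $\astate_2$ non-accepting, which pins the $\baseobs$-component to move from the retired to the initial location while leaving $\anobs'$ fixed. On $\baseobs$ I verify a finite simulation $\mathcal{R}$ containing the pair (retired, init): it consists of the reflexive pairs together with (retired, init), (init, final) and (retired, final), and I check it against the three kinds of behaviour (a $\retire$ of the tracked address, a $\free$ of it, and all other events as self-loops) and against the condition that the left component being the final location forces the right one to be final. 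Because acceptance of the product is the disjunction of the two component acceptances, $\mathcal{R}$ lifts to $\mathcal{R}\times\mathrm{Id}_{\anobs'}$ on the product, yielding the desired state simulation.

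The hard part will be the single structural fact that a legitimate $\freeof{\anadr}$ leaves the $\anobs'$-component unchanged, since my choice of $\astate_2$ is otherwise forced. To justify it I would lean on \Cref{assumption:observers-accepting-state}(i) — that $\free$ is the only event reaching accepting locations — to argue that any non-self-loop $\free$-transition of $\anobs'$ enabled at $\astate_1$ would target an accepting location and hence be ruled out by $\ahist.\freeof{\anadr}\in\specof{\anobs}$, and I would cross-check this against $\ebrobs$ and $\hpobs$, whose only $\free$-transitions indeed lead to the accepting location. As a sanity check, the single-address instance of the statement, $\freeableof{\ahist.\freeof{\anadr}}{\anadr}\subseteq\freeableof{\ahist}{\anadr}$, is precisely elision property~(iv) of \Cref{def:elision-support}; the simulation argument above is exactly what upgrades that assumed single-address inequality to arbitrary continuations.
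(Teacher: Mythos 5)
Your scaffolding is sound: the residual reading of $\specof{\ahist}$, the use of prefix closure to get $\ahist\in\specof{\anobs}$, the contrapositive set-up, and the simulation relation you verify on $\baseobs$ are all correct. The proof breaks at exactly the point you flag as ``the hard part'': the claim that a legitimate $\freeof{\anadr}$ leaves the $\anobs'$-component of the state unchanged. Your justification misreads \Cref{assumption:observers-accepting-state}\,(i): that assumption says accepting locations can only be \emph{reached by} $\free$-labeled transitions and are never left; it does \emph{not} say that every $\free$ transition \emph{targets} an accepting location. $\baseobs$ itself refutes that reading, since its transition $\ref{obs:base:retired}\trans{\freeof{\anadr}}\ref{obs:base:init}$ is a $\free$ transition into a non-accepting location. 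Nothing in the lemma's hypotheses forbids $\anobs'$ from containing such a transition --- take $\anobs'$ to be a second copy of $\baseobs$; the product still satisfies \Cref{assumption:observers-accepting-state} and supports elision (its specification coincides with $\specof{\baseobs}$). In that automaton a legitimate free of the tracked address moves \emph{both} components from the retired to the initial location, so the state you construct (base component advanced, $\anobs'$ component frozen) is not a $\freeof{\anadr}$-successor of $\astate_1$ at all, given that implicit self-loops exist only for events with no enabled explicit transition. The lemma still holds there, but your argument does not establish it, and cross-checking $\ebrobs$ and $\hpobs$ cannot close the gap because the statement quantifies over all SMR automata satisfying the assumptions.

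The tell-tale symptom is that your core argument never uses elision support, which is the lemma's operative hypothesis --- you invoke it only as a closing ``sanity check''. The paper's proof is entirely history-level and uses nothing but the elision axioms: take a shortest $\ahistp\in\specof{\ahist.\freeof{\anadr}}$ with $\ahistp\notin\specof{\ahist}$; if $\ahistp$ contains no frees it is killed directly by property~(iv) of \Cref{def:elision-support}; otherwise split $\ahistp=\ahist_1.\freeof{\anadrp}.\ahist_2$ around its \emph{last} free, use property~(i) to delete $\freeof{\anadrp}$ on the $\specof{\ahist}$ side (possible since $\ahist_2$ is free-free) and property~(iv) to delete it on the $\specof{\ahist.\freeof{\anadr}}$ side, obtaining the strictly shorter counterexample $\ahist_1.\ahist_2$ --- contradiction. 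If you want to keep your state-based strategy, you would have to derive the needed relationship between $\astate_1$ and its actual free-successors from the elision axioms rather than from the syntax of $\anobs'$, which in effect amounts to reconstructing this history-level induction.
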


\begin{lemma}
	\label{thm:retired-state-retired}
	Consider some $\tau\in\freesemobs$ and some $\anadr\in\adr$.
	Let $\varphi=\set{\adrvar\mapsto\anadr}$.
	Then:
	\begin{align*}
		(\ref{obs:base:init},\varphi)\trans{\historyof{\tau}}(\ref{obs:base:retired},\varphi) &\iff \anadr\in\retiredof{\tau}
		\\
		(\ref{obs:base:init},\varphi)\trans{\historyof{\tau}}(\ref{obs:base:init},\varphi) &\iff \anadr\notin\retiredof{\tau}
		\\
		\anadr\in\activeofcomp{\tau} &\implies (\ref{obs:base:init},\varphi)\trans{\historyof{\tau}}(\ref{obs:base:init},\varphi)
	\end{align*}
\end{lemma}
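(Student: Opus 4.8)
The plan is to prove all three assertions simultaneously by induction on the length of $\tau$, using that $\baseobs$ becomes deterministic once the tracking valuation $\varphi=\set{\adrvar\mapsto\anadr}$ is fixed. First I would record the relevant structure of $\baseobs$ under $\varphi$: the only location-changing transitions are the $\retire$-edge from $\ref{obs:base:init}$ to $\ref{obs:base:retired}$ and the two $\free$-edges, from $\ref{obs:base:retired}$ to $\ref{obs:base:init}$ and from $\ref{obs:base:init}$ to $\ref{obs:base:final}$, all guarded by $\anadr=\adrvar$; every other concrete event (a $\free$ or $\retire$ of an address $\neq\anadr$, or an event of any other type) self-loops. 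Hence for fixed $\varphi$ the automaton is deterministic, so $(\ref{obs:base:init},\varphi)\trans{\historyof{\tau}}(\alocation,\varphi)$ holds for a unique $\alocation$. Moreover, since $\anobs$ is a product with $\baseobs$ (\Cref{Assumption:Product}), $\tau\in\freesemobs$ yields $\historyof{\tau}\in\specof{\baseobs}$, and by prefix-closure (\Cref{thm:prefix-closure}) the same holds for every prefix; combined with the fact that accepting states are reached only by $\free$ and never left (\Cref{assumption:observers-accepting-state}), the unique reached location is never $\ref{obs:base:final}$, i.e.\ it is always $\ref{obs:base:init}$ or $\ref{obs:base:retired}$.

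With these facts in place the three statements collapse to a single invariant to carry through the induction: the unique location reached from $(\ref{obs:base:init},\varphi)$ on $\historyof{\tau}$ is $\ref{obs:base:retired}$ if $\anadr\in\retiredof{\tau}$ and $\ref{obs:base:init}$ otherwise. The first two equivalences then follow at once, since the two cases are mutually exclusive and exhaustive ($\ref{obs:base:final}$ being excluded). The third is a corollary: by \Cref{def:activeof} we have $\activeofcomp{\tau}=\adr\setminus(\freedof{\tau}\cup\retiredof{\tau})\subseteq\adr\setminus\retiredof{\tau}$, so $\anadr\in\activeofcomp{\tau}$ forces $\anadr\notin\retiredof{\tau}$ and hence location $\ref{obs:base:init}$.

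The induction is a case analysis on the appended action $\anact$, matching the history update of \Cref{def:historyof} against the update of $\retiredof{}$ in \Cref{def:retiredof}. In the base case $\tau=\epsilon$ the history is empty, the location is $\ref{obs:base:init}$, and $\retiredof{\epsilon}=\emptyset$. For the step: if $\comof{\anact}$ is neither a $\retire$ nor a $\free$, then the emitted event (if any) self-loops in $\baseobs$ and $\retiredof{}$ is unchanged, so the invariant persists. If $\comof{\anact}\equiv\enterof{\retireof{\apvar}}$ with $\heapcomputof{\tau}{\apvar}=\anadr$, the guard fires, the automaton moves to (or remains in) $\ref{obs:base:retired}$, and $\anadr$ is added to $\retiredof{}$; if instead $\heapcomputof{\tau}{\apvar}\neq\anadr$, the guard fails, and both location and membership are unchanged. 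The $\free$ of an address $\neq\anadr$ is handled by the same split.

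The one case carrying real content, and where I expect the main obstacle, is $\comof{\anact}\equiv\freeof{\anadr}$: here $\retiredof{}$ loses $\anadr$, so I must show the post-location is $\ref{obs:base:init}$. Determinism sends $\ref{obs:base:retired}$ to $\ref{obs:base:init}$ on $\freeof{\anadr}$, but $\ref{obs:base:init}$ to the accepting $\ref{obs:base:final}$; the latter run would violate $\historyof{\tau.\anact}\in\specof{\baseobs}$. I rule it out using membership of $\tau.\anact$ in $\freesemobs$: since accepting states are reached only by $\free$ and cannot be left (\Cref{assumption:observers-accepting-state}), the deterministic run cannot pass through $\ref{obs:base:final}$, so the pre-$\free$ location must have been $\ref{obs:base:retired}$, which by the induction hypothesis is exactly $\anadr\in\retiredof{\tau}$. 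This step is what upgrades the easy forward directions into genuine equivalences, and it is the only place where the specification constraint---rather than mere reachability in $\baseobs$---is essential.
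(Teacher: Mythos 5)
Your proposal is correct and follows essentially the same route as the paper: an induction over the computation (the paper phrases it as a shortest-counterexample contradiction) with a case analysis on the appended event, using determinism of $\baseobs$ under the fixed valuation, the spec membership of $\historyof{\tau}$ together with \Cref{assumption:observers-accepting-state} to exclude $\ref{obs:base:final}$, and deriving the third claim from $\activeofcomp{\tau}\subseteq\adr\setminus\retiredof{\tau}$. The only (cosmetic) difference is that you carry the full biconditional invariant through one induction, whereas the paper proves one implication by minimal counterexample and obtains the rest via contrapositive plus an "analogous" symmetric argument.
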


\begin{lemma}
	\label{thm:retired-before-freed}
	Let $\tau.(\athread,\freeof{\anadr},\anup)\in\freesemobs$.
	If $\anobs=\baseobs\times\implobs$, then $\anadr\in\retiredof{\tau}$.
\end{lemma}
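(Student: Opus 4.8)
The plan is to reduce the statement to the single automaton $\baseobs$ via the product structure of $\anobs$, and then determine the state of $\baseobs$ after $\tau$ using \Cref{thm:retired-state-retired}. First I would unfold what membership in $\freesemobs$ gives: by definition of the SMR semantics, the induced history $\historyof{\tau}.\freeof{\anadr}$ of $\tau.(\athread,\freeof{\anadr},\anup)$ lies in $\specof{\anobs}$. Since $\anobs=\baseobs\times\implobs$ and the product of SMR automata intersects their specifications (a history is accepted by the product iff it is accepted by one of the factors), we have $\specof{\anobs}\subseteq\specof{\baseobs}$, hence $\historyof{\tau}.\freeof{\anadr}\in\specof{\baseobs}$; that is, this history is not accepted by $\baseobs$ from any initial state. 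Prefix closure (\Cref{thm:prefix-closure}) additionally yields $\tau\in\freesemobs$, which is the hypothesis required to invoke \Cref{thm:retired-state-retired} on $\tau$.

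Next comes the core argument, by contradiction. Fix $\varphi=\set{\adrvar\mapsto\anadr}$ and assume $\anadr\notin\retiredof{\tau}$. The second equivalence of \Cref{thm:retired-state-retired} then yields a run $(\ref{obs:base:init},\varphi)\trans{\historyof{\tau}}(\ref{obs:base:init},\varphi)$, i.e. $\baseobs$ is back in its initial location after $\tau$. Appending the final event $\freeof{\anadr}$, I would observe from \Cref{fig:baseobs} that the transition $\ref{obs:base:init}\trans{\translab{\freeof{\anadr}}{\anadr=\adrvar}}\ref{obs:base:final}$ is enabled under $\varphi$, because $\varphi(\adrvar)=\anadr$ makes the guard $\anadr=\adrvar$ hold for the freed address $\anadr$. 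Thus $(\ref{obs:base:init},\varphi)\trans{\historyof{\tau}.\freeof{\anadr}}(\ref{obs:base:final},\varphi)$ with $\ref{obs:base:final}$ accepting, so $\historyof{\tau}.\freeof{\anadr}$ is accepted by $\baseobs$ — contradicting the projection from the first paragraph. Therefore $\anadr\in\retiredof{\tau}$.

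The two steps that need care are the projection onto $\baseobs$ and the acceptance semantics, and I expect this bookkeeping — rather than any deep argument — to be the main obstacle. For the projection I would appeal to the stated fact that products intersect specifications; the only subtlety is that product acceptance is by either factor, which is exactly what yields $\specof{\anobs}\subseteq\specof{\baseobs}$. For the run I must use \Cref{assumption:observers-accepting-state}(i): accepting locations are entered only via $\free$ and are never left. This is what underlies the prefix closure of $\freesemobs$ used above, and it confirms that appending $\freeof{\anadr}$ genuinely reaches an accepting sink. Everything else is a single step in the three-state automaton $\baseobs$.
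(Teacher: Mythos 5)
Your proof is correct and follows essentially the same route as the paper's: project the history onto $\baseobs$ via the product structure, and combine the non-acceptance of $\historyof{\tau}.\freeof{\anadr}$ with \Cref{thm:retired-state-retired} and the enabled $\freeof{\anadr}$-transition out of $\ref{obs:base:init}$. The only cosmetic difference is that you argue by contradiction using the second equivalence of \Cref{thm:retired-state-retired} (not retired implies the automaton sits in $\ref{obs:base:init}$, whence the free would be accepted), while the paper argues directly that the automaton must be in $\ref{obs:base:retired}$ and applies the first equivalence; these are the same argument up to contraposition.
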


\begin{lemma}
	\label{thm:deletable-computations}
	Assume $\anobs$ supports elision.
	Then, for every $\tau\in\freesemobs$ there is $\sigma\in\nosem$ with:
	\begin{inparaenum}[(i)]
		\item $\controlof{\tau}=\controlof{\sigma}$,
		\item $\heapcomput{\tau}=\heapcomput{\sigma}$, and
		\item $\freshof{\tau}\subseteq\freshof{\sigma}$.
	Moreover, if $\anobs=\baseobs\times\implobs$, then 
		\item $\retiredof{\tau}\subseteq\retiredof{\sigma}$,
		\item $\freedof{\tau}\subseteq\retiredof{\sigma}$, and
		\item $\invholdsof{\sigma}\implies\invholdsof{\tau}$.
	\end{inparaenum}
\end{lemma}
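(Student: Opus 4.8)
The plan is to obtain $\sigma$ from $\tau$ by deleting every $\freeof{\anadr}$ action and leaving all other actions untouched. This is the natural candidate because a free is an environment action, executed by the $\bot$ thread so that it advances no program thread's control, and its update is empty (it has the shape $(\bot,\freeof{\anadr},\emptyset)$). Consequently, deleting frees changes neither any thread's control nor any applied memory update, which gives (i) $\controlof{\tau}=\controlof{\sigma}$ and (ii) $\heapcomput{\tau}=\heapcomput{\sigma}$ at once.

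First I would run an induction along the prefixes of $\tau$, proving simultaneously that $\sigma\in\nosem$ and that for every prefix $\tau'$ of $\tau$, with corresponding free-deleted prefix $\sigma'$, we have $\heapcomput{\tau'}=\heapcomput{\sigma'}$ and $\freshof{\tau'}\subseteq\freshof{\sigma'}$. All action kinds except malloc impose conditions that depend only on the current memory (for instance, an assume compares memory values, and an enter requires its pointer arguments to differ from $\segval$), so they remain legal in $\nosem$ by the memory equality. The only genuine case is malloc: since $\freesemobs$ disables reallocation ($Y=\emptyset$), the allocated address lies in $\freshof{\tau'}$, hence by induction in $\freshof{\sigma'}$, so the very same malloc is legal in $\nosem$. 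A deleted free can only shrink the fresh set on the $\tau$ side, so the inclusion is preserved. Taking $\tau'=\tau$ yields (iii). Note the observer never constrains $\sigma$ here, as $\nosem$ contains no frees and $\anobs\nosem=\nosem$ by \Cref{Lemma:GC}.

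For (iv) I would observe from the definitions that a retire only adds to the retired set and a free only removes from it; since $\sigma$ keeps all of $\tau$'s retires and drops all its frees, the retired set in $\sigma$ is monotonically accumulated, so any address currently retired in $\tau$ was retired at some earlier point and therefore lies in $\retiredof{\sigma}$. For (v) I would invoke \Cref{thm:retired-before-freed}: under $\anobs=\baseobs\times\implobs$, every freed address of $\tau$ was retired strictly before it was freed; that retire survives in $\sigma$ and is never undone, so $\freedof{\tau}\subseteq\retiredof{\sigma}$.

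The delicate part, which I expect to be the main obstacle, is (vi). Here I would compare the two formulas $\invholdsof{\sigma}$ and $\invholdsof{\tau}$ produced by the recursion of \Cref{Figure:FormulaInvariants}. Because annotation and angel-allocation actions are program actions preserved by free-deletion, the two formulas share an identical quantifier prefix (one $\exists\aghostvar$ per allocation) and an identical conjunctive matrix; they differ only in the conjuncts arising from $\activeof{\cdot}$ annotations, which test membership in $\activeofcomp{\tau'}$ resp.\ $\activeofcomp{\sigma'}$ at the corresponding prefixes. Applying (iv) and (v) prefix-by-prefix gives $\freedof{\tau'}\cup\retiredof{\tau'}\subseteq\retiredof{\sigma'}$, hence $\activeofcomp{\sigma'}\subseteq\activeofcomp{\tau'}$; together with $\heapcomput{\tau'}=\heapcomput{\sigma'}$ this means every active conjunct holding in $\sigma$ also holds in $\tau$, while the equality and membership conjuncts are literally the same. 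Therefore any valuation of the angels witnessing $\invholdsof{\sigma}$ is also a witness for $\invholdsof{\tau}$, establishing $\invholdsof{\sigma}\implies\invholdsof{\tau}$. The size bound and the prefix correspondence are routine since $\sigma$ is obtained purely by deletions; the only real care needed is threading the per-prefix active-set inclusion through the angelic (existential) witnesses unchanged.
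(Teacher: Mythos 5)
Your proposal is correct and takes essentially the same route as the paper's own proof: both obtain $\sigma$ by deleting the free actions, proceed by induction over prefixes using heap equality and fresh-set inclusion to keep the run inside $\nosem$ (with malloc as the only case needing the fresh-set inclusion), invoke \Cref{thm:retired-before-freed} to establish (v), and derive (vi) from $\activeofcomp{\sigma}\subseteq\activeofcomp{\tau}$ together with heap equality, passing the angel witnesses through unchanged. The only difference is organizational: the paper carries all six properties through one simultaneous induction, while you split (iv)--(vi) into prefix-wise monotonicity arguments, which amounts to the same proof.
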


\begin{theorem}[Formalization of \Cref{thm:PRF-guarantee}]
	\label{thm:PRF-guarantee-formal}
	If $\anobs$ supports elision and the semantics $\anobs\freesem$ is pointer-race-free, then $\anobs\allsem\computrel\nosem$.
\end{theorem}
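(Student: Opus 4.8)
The plan is to derive the correspondence by composing the two reduction lemmas that precede the statement, each of which peels off one layer of the full semantics. The full semantics $\anobs\allsem$ permits both spurious frees and their subsequent reallocation, whereas the target $\nosem$ permits neither; I would bridge the gap through the intermediate frees-only semantics $\anobs\freesem$, in which frees are still emitted but freed addresses are never reused. \Cref{thm:elision-computations-new} performs the first descent, removing reallocation, and \Cref{thm:deletable-computations} performs the second, removing frees entirely. At the level of the theorem the argument is then a careful gluing of the two guarantees, reconciling the slightly different memory-agreement relations they deliver.

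First I would fix an arbitrary $\tau\in\anobs\allsem$. Both hypotheses of the theorem—that $\anobs$ supports elision and that $\anobs\freesem$ is pointer-race-free—are exactly the premises of \Cref{thm:elision-computations-new}, so I may apply it to obtain a companion $\sigma'\in\anobs\freesem$ with $\tau\computequiv\sigma'$ (the lemma additionally yields $\tau\obsrel\sigma'$ and $\retiredof{\tau}\subseteq\retiredof{\sigma'}$, which are not needed here). Unfolding $\computequiv$ gives $\controlof{\tau}=\controlof{\sigma'}$ together with $\restrict{\heapcomput{\tau}}{\validof{\tau}}=\restrict{\heapcomput{\sigma'}}{\validof{\sigma'}}$. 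Invoking \Cref{thm:valideq} on $\tau\computequiv\sigma'$ yields $\validof{\tau}=\validof{\sigma'}$, which lets me rewrite the second equation as $\restrict{\heapcomput{\tau}}{\validof{\tau}}=\restrict{\heapcomput{\sigma'}}{\validof{\tau}}$.

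Next I would apply \Cref{thm:deletable-computations} to $\sigma'\in\anobs\freesem$; only elision support is required, which holds by hypothesis. This produces $\sigma\in\nosem$ with $\controlof{\sigma'}=\controlof{\sigma}$ and, crucially, the full memory equality $\heapcomput{\sigma'}=\heapcomput{\sigma}$. Because the two memories coincide everywhere, their restrictions to any fixed domain agree, in particular $\restrict{\heapcomput{\sigma'}}{\validof{\tau}}=\restrict{\heapcomput{\sigma}}{\validof{\tau}}$. Chaining the equalities from the two steps gives $\controlof{\tau}=\controlof{\sigma}$ and $\restrict{\heapcomput{\tau}}{\validof{\tau}}=\restrict{\heapcomput{\sigma}}{\validof{\tau}}$, which is precisely $\tau\computrel\sigma$ with $\sigma\in\nosem$. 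Since $\tau$ was arbitrary, this establishes $\anobs\allsem\computrel\nosem$.

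I expect the main obstacle to be bookkeeping rather than conceptual, as the genuine difficulty is already discharged inside the two cited lemmas (the elision-based address renaming in \Cref{thm:elision-computations-new} and the free-deletion in \Cref{thm:deletable-computations}, both leaning on pointer-race freedom to ensure that invalidated memory is never observed). The one delicate point at the theorem level is the asymmetry between the two memory relations: $\computequiv$ restricts each heap to its own valid set, whereas $\computrel$ restricts both heaps to $\validof{\tau}$. Reconciling these requires \Cref{thm:valideq} to equate $\validof{\tau}$ with $\validof{\sigma'}$ on the frees-only side, and the total memory equality of \Cref{thm:deletable-computations} to absorb the valid-set mismatch on the garbage-collected side, where no pointer is ever invalidated and hence $\validof{\sigma}$ may properly exceed $\validof{\tau}$. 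I would finally remark that, by \Cref{Lemma:GC}, the produced $\sigma$ already lies in $\anobs\nosem=\nosem$, so no residual check of $\sigma$ against the SMR automaton is needed.
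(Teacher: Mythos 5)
Your proposal is correct and follows essentially the same route as the paper, whose proof of \Cref{thm:PRF-guarantee-formal} is precisely ``follows from \Cref{thm:elision-computations-new,thm:deletable-computations}.'' Your additional bookkeeping---using \Cref{thm:valideq} to equate $\validof{\tau}$ with $\validof{\sigma'}$ and the full heap equality of \Cref{thm:deletable-computations} to absorb the mismatch between $\computequiv$ and $\computrel$---simply spells out the gluing that the paper leaves implicit.
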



\subsection{Type System}

In this section we assume, if not stated otherwise, a fixed program $\aprog$ and a fixed SMR automaton $\anobs$ to avoid notational clutter.
A generalization to arbitrary programs is straight forward.
Recall from \Cref{sec:preliminaries} that we assume that $\anobs$ is of the form $\anobs=\baseobs\times\implobs$ for some SMR automaton $\implobs$.

\begin{definition}[$\cskip$]
	We use the $\cskip$ as syntactic sugar for a command that has no effect, for example, $\assumeof{\advar=\advar}$ where $\advar$ is some data variable.
	We assume that $\TYPESTMT{\env}{\cskip}{\env}$ holds for all~$\env$.
\end{definition}

\begin{definition}
	Indexing a (pointer/angel/data) variable $\avar$ by a thread $\athread$ yields a new variable~$\renamevarof{\avar}{\athread}$.
	Indexing all non-shared variables $\avar\notin\svars$ by $\athread$ in $\aprog$ gives a new program $\renameof{\aprog}{\athread}$.
\end{definition}

\begin{definition}
	The thread-local variables of $\athread$ is the set $\lvars{\athread}=\setcond{\renamevarof{\apvar}{\athread}}{\apvar\notin\svars}$ of non-shared variables indexed by $\athread$.
\end{definition}

\begin{definition}
	The initial program counter is $\pcinit$ with $\pcinit(\athread)=\renameof{\aprog}{\athread}$ for all threads~$\athread$.
\end{definition}

\begin{assumption}
	\label{assumption:ghost-varibles-local}
	We assume that ghost variables $\aghostvar$ are local, that is, $\aghostvar\notin\svars$.
\end{assumption}

\begin{definition}
	The initial type environment for $\aprog$ is $\envinit$.
	For $\renameof{\aprog}{\athread}$ it is $\envinitt{\athread}$.
	Formally:
	\begin{align*}
		\envinit&:=\setcond{\typeof{\apavar}{\emptyset}}{\apavar\in\pvars\cup\gvars}
		\\
		\envinitt{\athread}&:=\setcond{\typeof{\apvar}{\emptyset}}{\apvar\in\pvars\cap\svars}\cup\setcond{\typeof{\renamevarof{\apvar}{\athread}}{\emptyset}}{\apvar\in\pvars\setminus\svars}\cup\setcond{\typeof{\renamevarof{\aghostvar}{\athread}}{\emptyset}}{\aghostvar\in\gvars}
	\end{align*}
\end{definition}

\begin{figure}
	\begin{mathpar}
		\infrule{sos1}
			{\anact=(\athread,\acom,\anup)}
			{(\acom,\tau)\step[\athread](\cskip,\tau.\anact)}
		\and
		\infrule{sos2}
			{~}
			{(\cskip;\astmt,\tau)\step[\athread](\astmt,\tau)}
		\and
		\infrule{sos3}
			{i\in\set{1,2}}
			{(\astmt_1\choice\astmt_2,\tau)\step[\athread](\astmt_i,\tau)}
		\and
		\infrule{sos4}
			{\astmt_1\not\equiv\cskip \\ (\astmt_1,\tau)\step[\athread](\astmt_1',\tau')}
			{(\astmt_1;\astmt_2,\tau)\step[\athread](\astmt_1';\astmt_2,\tau')}
		\and
		\infrule{sos5}
			{\astmt'\in\set{\cskip,\astmt,\astmt;\astmt^*}}
			{(\astmt^*,\tau)\step[\athread](\astmt',\tau)}
		\and
		\infrule{sos6}
			{\apc(\athread)=\astmt \\ (\astmt,\tau)\step[\athread](\astmt',\tau') \\ \locksetof{\tau}=\set{\athread}}
			{(\apc,\tau)\step[\athread](\apc[\athread\mapsto\astmt'],\tau')}
		\and
		\infrule{sos7}
			{\apc(\athread)=\astmt \\ (\astmt,\tau)\step[\athread](\astmt',\tau) \\ \locksetof{\tau}=\emptyset}
			{(\apc,\tau)\step[\athread](\apc[\athread\mapsto\astmt'],\tau)}
		\and
		\infrule{sos8}
			{\apc(\athread)=\astmt \\ (\astmt,\tau)\step[\athread](\astmt',\tau.\tau') \\ \locksetof{\tau}=\emptyset \\ \locksetof{\tau.\tau'}=\set{\athread}}
			{(\apc,\tau)\step[\athread](\apc[\athread\mapsto\astmt'],\tau.\tau')}
		\and
		\infrule{sos9}
			{\anact=(\bot,\freeof{\anadr},\emptyset) \\ \locksetof{\tau}=\emptyset}
			{(\apc,\tau)\step[\bot](\apc,\tau.\anact)}
	\end{mathpar}
	\begin{align*}
		\locksetof{\epsilon} &:= \emptyset
		\\
		\locksetof{\tau.\anact} &:= \locksetof{\tau}\cup\set{\athread} &&\text{if } \anact=(\athread,\atomicbegin,\anup)
		\\
		\locksetof{\tau.\anact} &:= \locksetof{\tau}\setminus\set{\athread} &&\text{if } \anact=(\athread,\atomicend,\anup)
		\\
		\locksetof{\tau.\anact} &:= \locksetof{\tau} &&\text{otherwise}
	\end{align*}
	\caption{%
		The SOS rules for the transition relation $\step$ among configurations.
	}
	\label{fig:sos}
\end{figure}

\begin{definition}
	We define $\controlof{\tau}=\setcond{\apc}{(\pcinit,\epsilon)\step^*(\apc,\tau)}$ where $\step$ is the transition relation among configurations from \Cref{fig:sos}.
	Then, $\controlof[\athread]{\tau}=\setcond{\apc(\athread)}{\apc\in\controlof{\tau}}$.
\end{definition}

\begin{assumption}
	\label{assumption:control-flow}
	We assume that computations adhere to the control flow.
	Formally, this means $\controlof{\tau}\neq\emptyset$ for all $\tau\in\allsem$.
\end{assumption}

\begin{remark}
	\Cref{assumption:control-flow} requires that all primitive commands are wrapped inside atomics, that is, occur somewhere between $\atomicbegin$ and $\atomicend$.
\end{remark}

\begin{definition}
	A computation $\tau\in\allsem$ induces a flat line program for thread $\athread$, denoted by $\flatof{\athread}{\tau}$, as follows:
	\begin{align*}
		\flatof{\athread}{\epsilon} &:= \cskip
		\\
		\flatof{\athread}{\tau.\anact} &:= \flatof{\athread}{\tau};\acom &&\text{if } \anact=(\athread,\acom,\anup)
		\\
		\flatof{\athread}{\tau.\anact} &:= \flatof{\athread}{\tau} &&\text{if } \anact=(\athreadp,\acom,\anup)\wedge\athread\neq\athreadp
	\end{align*}
\end{definition}

\begin{definition}
	A pointer $\apvar$ has no valid alias in a computation $\tau$, denoted by $\noaliasof{\tau}{\apvar}$, if $\segval\neq\heapcomputof{\tau}{\apvar}\notin\heapcomputof{\tau}{\validof{\tau}\setminus\set{\apvar}}$.
\end{definition}

\begin{definition}
	Consider some $\tau\in\allsem$.
	Let $\invholdsof{\tau}$ have the prenex normal form $\exists \aghostvar_1\ldots \exists \aghostvar_n.\phi$, where $\phi$ is quantifier-free.
	Let $\aghostvar_n$ be the instance of angel $\aghostvar$ resulting from the last allocation in $\tau$.
	The set of addresses possibly represented by angel $\aghostvar$ after computation $\tau$ is
	\begin{align*}
		\denotationof{\tau}{\aghostvar}
		:=
		\setcond{\anadr\in\adr}{
			\exists A_1, \ldots, A_n\subseteq \adr.~\anadr\in A_n\wedge (A_1, \ldots, A_n)
			\models
			\phi
		}
		\ .
	\end{align*}
\end{definition}


\begin{definition}
	The locations reached in $\anobs$ by a history $\ahist$ wrt. to some thread $\athread$ and some address $\anadr$ is defined by
	\(
		\lreachof[\anobs]{\athread}{\anadr}{\ahist}
		:=
		\setcond{\alocation}{
			\exists\varphi.~~
			(\alocation_\mathit{init},\varphi)\trans{\ahist}(\alocation,\varphi)
			\:\wedge\:
			\varphi(\threadvar)=\athread
			\:\wedge\:
			\varphi(\adrvar)=\anadr
		}
	\)
	where $\alocation_\mathit{init}$ is the initial location in $\anobs$.
	For $\segval$ we define $\lreachof[\anobs]{\athread}{\anadr}{\ahist}=\top$ to contain all locations of $\anobs$.
	The definition of $\lreach$ extends naturally to sets of histories.
\end{definition}



\begin{lemma}
	\label{thm:check-epsilon-transitive}
	If $\checknocomof{\env_1}{\epsilon}{\env_2}$ and $\checknocomof{\env_2}{\epsilon}{\env_3}$, then $\checknocomof{\env_1}{\epsilon}{\env_3}$.
\end{lemma}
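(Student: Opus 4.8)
The plan is to reduce the statement to the transitivity of each of the three defining conditions of the type transformer relation, verified separately and pointwise over all pointers and angels. First I would unfold the notation. By definition, $\checkof{\env}{\acom}{\envp}$ holds iff $\checkoftype{\envof{\apavar}}{\apavar}{\acom}{\envpof{\apavar}}$ holds for every pointer or angel $\apavar$, and the $\checknocomof{\cdot}{\epsilon}{\cdot}$ variant is obtained by replacing the post-image $\lpostof{\apavar}{\epsilon}{\locsof{\atype}}$ by the identity $\locsof{\atype}$. Thus, writing $\atype=\envof{\apavar}$ and $\atypep=\envpof{\apavar}$, the relation $\checknocomof{\env}{\epsilon}{\envp}$ amounts to requiring, for every $\apavar$, the three conditions $\locsof{\atype}\subseteq\locsof{\atypep}$, $\isvalidof{\atypep}\Rightarrow\isvalidof{\atype}$, and $\set{\glocal,\gactive}\cap\atypep\subseteq\set{\glocal,\gactive}\cap\atype$.

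Next I would fix an arbitrary $\apavar$ and abbreviate $\atype_i := \env_i(\apavar)$ for $i\in\set{1,2,3}$. The first hypothesis $\checknocomof{\env_1}{\epsilon}{\env_2}$ supplies the three conditions relating $\atype_1$ and $\atype_2$, and the second hypothesis $\checknocomof{\env_2}{\epsilon}{\env_3}$ supplies the three conditions relating $\atype_2$ and $\atype_3$. It then only remains to chain each condition through $\atype_2$: we get $\locsof{\atype_1}\subseteq\locsof{\atype_2}\subseteq\locsof{\atype_3}$ by transitivity of $\subseteq$; we get $\isvalidof{\atype_3}\Rightarrow\isvalidof{\atype_2}\Rightarrow\isvalidof{\atype_1}$ by transitivity of implication; and we get $\set{\glocal,\gactive}\cap\atype_3\subseteq\set{\glocal,\gactive}\cap\atype_2\subseteq\set{\glocal,\gactive}\cap\atype_1$ again by transitivity of $\subseteq$. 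These are precisely the three conditions defining $\checkoftype{\atype_1}{\apavar}{\epsilon}{\atype_3}$ with identity post-image, i.e.\ the clause for $\apavar$ in $\checknocomof{\env_1}{\epsilon}{\env_3}$.

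Finally, since $\apavar$ was an arbitrary pointer or angel, universal generalization yields $\checknocomof{\env_1}{\epsilon}{\env_3}$, which is the claim. I do not expect any genuine obstacle here: the lemma is a bookkeeping fact asserting that the identity type transformer composes, which is exactly what is needed so that two successive applications of Rule~\ref{rule:infer} can be merged into one. The only point that requires a little care is to read off the $\checknocomof{\cdot}{\epsilon}{\cdot}$ convention correctly, so that the location condition reads $\locsof{\atype_1}\subseteq\locsof{\atype_3}$ rather than anything involving $\lpost$; once that is in place, all three conditions are manifestly transitive and the argument is immediate.
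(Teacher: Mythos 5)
Your proof is correct and matches the paper's approach: the paper dispatches this lemma with ``Immediately follows from definition,'' and your write-up is precisely that argument made explicit, unfolding $\checknocomof{\env}{\epsilon}{\envp}$ into its three pointwise conditions (location inclusion with identity post-image, the validity implication, and the $\set{\glocal,\gactive}$ inclusion) and chaining each through $\env_2$ by transitivity. No gaps; your care in reading the identity-post-image convention is exactly the only subtlety present.
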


\begin{lemma}
	\label{thm:typing-single-steps}
	Consider $\TYPESTMT{\env_1}{\astmt}{\env_2}$ and $(\astmt,\tau)\step[\athread](\astmtp,\tau.\tau')$.
	Then there is $\env$ such that $\TYPESTMT{\env_1}{\flatof{\athread}{\tau'}}{\env}$ and $\TYPESTMT{\env}{\astmtp}{\env_2}$.
\end{lemma}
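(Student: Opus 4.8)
The plan is to proceed by induction on the derivation of the operational step $(\astmt,\tau)\step[\athread](\astmtp,\tau.\tau')$, i.e.\ by case analysis on the last SOS rule applied (cf.\ \Cref{fig:sos}), the only genuinely inductive case being Rule~\ref{rule:sos4}. The one real difficulty is that the typing derivation of $\astmt$ may interleave the structural rules (\ref{rule:seq}, \ref{rule:choice}, \ref{rule:loop}) with arbitrarily many applications of Rule~\ref{rule:infer}. Hence, before inverting the typing along the operational step, I would first establish generation lemmas that absorb \ref{rule:infer} into the structural rules.

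Concretely, I would prove three inversion statements, each by a subsidiary induction on the height of the typing derivation: (i) if $\TYPESTMT{\env_1}{\astmt_1;\astmt_2}{\env_2}$, then there is $\envp$ with $\TYPESTMT{\env_1}{\astmt_1}{\envp}$ and $\TYPESTMT{\envp}{\astmt_2}{\env_2}$; (ii) if $\TYPESTMT{\env_1}{\astmt_1\choice\astmt_2}{\env_2}$, then $\TYPESTMT{\env_1}{\astmt_i}{\env_2}$ for $i\in\set{1,2}$; and (iii) if $\TYPESTMT{\env_1}{\astmt_0^*}{\env_2}$, then there is $\env'$ with $\checknocomof{\env_1}{\epsilon}{\env'}$, $\TYPESTMT{\env'}{\astmt_0}{\env'}$, and $\checknocomof{\env'}{\epsilon}{\env_2}$. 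Whenever the last rule is \ref{rule:infer}, I would apply the inner induction hypothesis and then re-attach the surrounding pre/post weakenings to the two resulting sub-judgements, using reflexivity and transitivity of the $\epsilon$-transformer (\Cref{thm:check-epsilon-transitive}); the case where the last rule is the structural rule itself is immediate. These lemmas make the weakenings available exactly where the operational step cuts the derivation.

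With the generation lemmas in hand, the cases become routine. For Rule~\ref{rule:sos1} we have $\astmt=\acom$, $\astmtp=\cskip$, and $\tau'=(\athread,\acom,\anup)$, so $\flatof{\athread}{\tau'}=\cskip;\acom$; I take $\env=\env_2$, derive $\TYPESTMT{\env_1}{\cskip;\acom}{\env_2}$ by Rule~\ref{rule:seq} from $\TYPESTMT{\env_1}{\cskip}{\env_1}$ and the hypothesis $\TYPESTMT{\env_1}{\acom}{\env_2}$, and use $\TYPESTMT{\env_2}{\cskip}{\env_2}$. For Rules~\ref{rule:sos2}, \ref{rule:sos3}, and \ref{rule:sos5} the step does not extend the computation, so $\tau'=\epsilon$ and $\flatof{\athread}{\tau'}=\cskip$; the intermediate environment $\env$ is read off the matching inversion statement: for \ref{rule:sos2} the witness $\envp$ of (i), giving $\checknocomof{\env_1}{\epsilon}{\envp}$ and $\TYPESTMT{\envp}{\astmtp}{\env_2}$; for \ref{rule:sos3} simply $\env=\env_1$ with $\TYPESTMT{\env_1}{\astmt_i}{\env_2}$ from (ii); and for \ref{rule:sos5} the witness $\env'$ of (iii), discharging each unfolding $\astmtp\in\set{\cskip,\astmt_0,\astmt_0;\astmt_0^*}$ via Rules~\ref{rule:seq}, \ref{rule:loop}, and \ref{rule:infer}. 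In the inductive Rule~\ref{rule:sos4} we have $\astmt=\astmt_1;\astmt_2$ with $\astmt_1\not\equiv\cskip$ and inner step $(\astmt_1,\tau)\step[\athread](\astmt_1',\tau.\tau')$; inversion (i) yields $\envp$ with $\TYPESTMT{\env_1}{\astmt_1}{\envp}$ and $\TYPESTMT{\envp}{\astmt_2}{\env_2}$, the induction hypothesis applied to the inner step gives $\env$ with $\TYPESTMT{\env_1}{\flatof{\athread}{\tau'}}{\env}$ and $\TYPESTMT{\env}{\astmt_1'}{\envp}$, and Rule~\ref{rule:seq} then produces $\TYPESTMT{\env}{\astmt_1';\astmt_2}{\env_2}$.

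I expect the main obstacle to be the bookkeeping inside the generation lemmas, where Rule~\ref{rule:infer} must be commuted past the structural rules without losing information; once reflexivity and transitivity of $\leadsto$ are available this is mechanical, but it is the part that requires care, in particular the loop case (iii), where the single environment $\env'$ fixed by Rule~\ref{rule:loop} must be threaded uniformly through all three continuations of the Kleene unfolding.
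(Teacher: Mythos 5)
Your proposal is correct, but it is built on a different induction than the paper's. The paper proves the lemma by a single induction on the depth of the typing derivation $\TYPESTMT{\env_1}{\astmt}{\env_2}$, with case analysis on the root rule: when the root is structural (\ref{rule:seq}, \ref{rule:choice}, \ref{rule:loop}) or a primitive-command rule, the matching SOS step is determined by the shape of $\astmt$ and the premises of the root rule are directly at hand, so no inversion lemmas are ever needed; when the root is \ref{rule:infer}, the paper applies the induction hypothesis to the middle judgement and re-wraps both resulting judgements with \ref{rule:infer} (using an identity transformer on one side), which needs only reflexivity of the $\epsilon$-transformer. The one genuinely inductive SOS case, your \ref{rule:sos4}, is handled inside the \ref{rule:seq} case of that induction. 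You instead induct on the SOS derivation, which forces the three generation lemmas that absorb \ref{rule:infer} into the structural rules and additionally requires transitivity (\Cref{thm:check-epsilon-transitive}). The resulting case analyses and witnesses coincide, so the difference is architectural: the paper's route is shorter and self-contained, while yours is the standard subject-reduction organization, and its generation lemmas are independently useful --- for instance, the paper's proof sketch of \Cref{proposition:typecheckprf} asserts that a composition can only be typed by \ref{rule:seq}, which, strictly speaking, is exactly your inversion statement (i) modulo \ref{rule:infer}. Two small points in your favor: your \ref{rule:sos1} case correctly computes $\flatof{\athread}{\tau'}$ as $\cskip;\acom$ where the paper silently identifies it with $\acom$, and your \ref{rule:sos5} case is spelled out in full where the paper's loop case is garbled (it repeats the text of the choice case verbatim).
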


\begin{lemma}
	\label{thm:typing-computations}
	Let $\TYPESTMT{\envinit}{\aprog}{\env}$.
	Consider $(\pcinit,\epsilon)\step^*(\apc,\tau)$ and some thread $\athread$.
	Then there is $\env_1,\env_2$ with $\TYPESTMT{\envinitt{\athread}}{\flatof{\athread}{\tau}}{\env_1}$ and $\TYPESTMT{\env_1}{\apc(\athread)}{\env_2}$.
\end{lemma}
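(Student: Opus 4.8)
The plan is to prove \Cref{thm:typing-computations} by induction on the length $n$ of the computation witnessing $(\pcinit,\epsilon)\step^*(\apc,\tau)$, keeping the thread $\athread$ fixed throughout. The whole argument is an iterated, configuration-level application of \Cref{thm:typing-single-steps}, which already tells us how a single statement-level typing decomposes along one operational step; the present lemma just lifts this to reachable configurations and threads it through the interleaved execution. Note that no SMR assumptions or invariant correctness are needed here, since the claim is purely about derivability of type judgements.

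For the base case $n=0$ I have $\apc=\pcinit$ and $\tau=\epsilon$, so that $\apc(\athread)=\renameof{\aprog}{\athread}$ and $\flatof{\athread}{\epsilon}=\cskip$. I would take $\env_1=\envinitt{\athread}$, for which $\TYPESTMT{\envinitt{\athread}}{\cskip}{\envinitt{\athread}}$ holds by the convention on $\cskip$, and then produce $\env_2$ with $\TYPESTMT{\envinitt{\athread}}{\renameof{\aprog}{\athread}}{\env_2}$ from the hypothesis $\TYPESTMT{\envinit}{\aprog}{\env}$. This last step uses a renaming-invariance property of the type system: the rules never inspect the concrete name of a variable but only whether it lies in $\svars$, and indexing non-shared variables by $\athread$ preserves membership in $\svars$; hence the derivation witnessing the hypothesis transfers verbatim to $\renameof{\aprog}{\athread}$ under $\envinitt{\athread}$.

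For the inductive step I would write the last transition as $(\apc,\tau)\step[\athreadp](\apcp,\tau.\tau')$, where inspection of the SOS rules shows the appended part is $\tau'=\epsilon$ (structural reductions) or $\tau'=\anact$ a single action (a free action carrying the tag $\bot$). The induction hypothesis for $\athread$ at $(\apc,\tau)$ supplies $\env_1,\env_2$ with $\TYPESTMT{\envinitt{\athread}}{\flatof{\athread}{\tau}}{\env_1}$ and $\TYPESTMT{\env_1}{\apc(\athread)}{\env_2}$. If $\athreadp\neq\athread$ (including the $\bot$-tagged free of sos9), then $\apcp(\athread)=\apc(\athread)$ and $\flatof{\athread}{\tau.\tau'}=\flatof{\athread}{\tau}$ by definition of $\flat$, so the same $\env_1,\env_2$ discharge the goal. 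If $\athreadp=\athread$, then by sos6--sos8 the inner step $(\apc(\athread),\tau)\step[\athread](\astmtp,\tau.\tau')$ holds with $\apcp(\athread)=\astmtp$; applying \Cref{thm:typing-single-steps} to $\TYPESTMT{\env_1}{\apc(\athread)}{\env_2}$ and this step yields $\env$ with $\TYPESTMT{\env_1}{\flatof{\athread}{\tau'}}{\env}$ and $\TYPESTMT{\env}{\astmtp}{\env_2}$. Composing $\TYPESTMT{\envinitt{\athread}}{\flatof{\athread}{\tau}}{\env_1}$ with $\TYPESTMT{\env_1}{\flatof{\athread}{\tau'}}{\env}$ through \Cref{rule:seq} then gives $\TYPESTMT{\envinitt{\athread}}{\flatof{\athread}{\tau.\tau'}}{\env}$, and together with $\TYPESTMT{\env}{\apcp(\athread)}{\env_2}$ this is the claim with witnesses $\env$ and $\env_2$.

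The conceptual content is entirely delegated to \Cref{thm:typing-single-steps}, so the remaining difficulty is bookkeeping, and I expect two points to require explicit care. The first is the identification $\flatof{\athread}{\tau.\tau'}\equiv\flatof{\athread}{\tau};\flatof{\athread}{\tau'}$ only up to the insertion of $\cskip$ (since $\flatof{\athread}{\anact}$ expands to $\cskip;\acom$ when $\anact$ is an $\athread$-action and to $\cskip$ otherwise); I would justify that this $\cskip$ is neutral for typing, so that \Cref{rule:seq} really composes the two judgements. The second is the renaming-invariance argument in the base case, where I must check that indexing non-shared variables by $\athread$ commutes with every type rule, in particular with the side conditions $\apvar\notin\svars$ and $\aghostvar\notin\svars$ of \Cref{rule:malloc,rule:angel}. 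Both are routine but must be stated so that the induction goes through cleanly.
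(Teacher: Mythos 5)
Your proposal is correct and follows essentially the same route as the paper: induction over the SOS transition sequence, with the base case discharged by the premise (under thread-indexed renaming) and the inductive step split on whether the stepping thread is $\athread$, delegating the matching-thread case to \Cref{thm:typing-single-steps} and composing the resulting judgements. The two bookkeeping points you flag (renaming-invariance of the type rules and the $\cskip$-neutral composition of flat programs) are left implicit in the paper's proof, so making them explicit only strengthens the argument.
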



\begin{lemma}
	\label{thm:env-inbetween-atomics}
	Let $\tau.\anact\in\freesem$ and $\athread\neq\threadof{\anact}$.
	Let $\TYPESTMT{\envinitt{\athread}}{\flatof{\athread}{\tau}}{\env}$ and $\apavar\in\pvars\cup\gvars$.
	Then $\gactive\notin\envof{\apavar}$ and $\apavar\notin\lvars{\athread}\implies\env(\apavar)\cap\set{\glocal,\gsafeaccess}=\emptyset$.
\end{lemma}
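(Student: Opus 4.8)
The plan is to pair a semantic observation about atomicity with a short structural induction over the type derivation for the projection $\flatof{\athread}{\tau}$. First I would show that thread $\athread$ is not inside an atomic block after $\tau$. Since $\tau.\anact$ respects the control flow and $\threadof{\anact}\neq\athread$, the action $\anact$ is appended by one of the operational rules of \Cref{fig:sos}, each of which requires either $\locksetof{\tau}=\emptyset$ or that the stepping thread already holds the lock, i.e.\ $\locksetof{\tau}=\set{\threadof{\anact}}$. In both cases $\athread\notin\locksetof{\tau}$, so the atomic markers that thread $\athread$ contributes to $\flatof{\athread}{\tau}$ are balanced: thread $\athread$ has closed every atomic block it opened.

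Next I would fix an invariant maintained along the typing $\TYPESTMT{\envinitt{\athread}}{\flatof{\athread}{\tau}}{\env}$. Because $\flatof{\athread}{\tau}$ is a straight-line sequence (no $\choice$, no $*$), its derivation threads through Rule \ref{rule:seq} as a chain of intermediate environments, interspersed with Rule \ref{rule:infer} steps. The invariant to carry is: at every position where thread $\athread$ is outside an atomic block, the environment $\envp$ there satisfies (a) $\gactive\notin\envp(\apavar)$ for all $\apavar$, and (b) $\envp(\apavar)\cap\set{\glocal,\gsafeaccess}=\emptyset$ for every shared $\apavar\in\svars$. The base case $\envinitt{\athread}$ maps everything to $\emptyset$, so (a) and (b) hold trivially.

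The decisive step is $\atomicend$: by Rule \ref{rule:end} it rewrites $\envp$ to $\rmtransientof{\envp}$, which by definition deletes $\gactive$ from every variable and resets every shared variable to $\emptyset$, hence re-establishes (a) and (b) irrespective of what held inside the block. It then remains to check that the commands which can occur between atomic blocks preserve the invariant. Relying on the convention (the Remark after \Cref{assumption:control-flow}) that primitive commands sit inside atomic blocks, the only steps outside a block are $\atomicbegin$ (Rule \ref{rule:begin}, no change), an occasional angel allocation (Rule \ref{rule:angel}, which sets the angel to $\emptyset$), and the identity-post-image transformers of Rule \ref{rule:infer}. The last of these can neither add $\glocal$ nor $\gactive$ by the third inclusion of $\checkoftype{\atype}{\apavar}{\acom}{\atypep}$, and it cannot upgrade a shared variable resting at $\emptyset$ to $\gsafeaccess$, since that would demand $\locsof{\emptyset}\subseteq\locsof{\gsafeaccess}=\mathit{SafeLoc}(\anobs)$, i.e.\ $\mathit{SafeLoc}(\anobs)$ to contain the full location set. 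Finally, a domain analysis of $\envinitt{\athread}$ gives $\domof{\env}=(\pvars\cap\svars)\cup\lvars{\athread}$, so $\apavar\notin\lvars{\athread}$ inside the domain forces $\apavar\in\svars$; evaluating the invariant at the balanced endpoint then yields conclusion (a) for every $\apavar$ and conclusion (b) for the shared ones.

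I expect the main obstacle to be the careful cataloguing of which commands may legitimately appear outside an atomic block and the verification that each is harmless for the three guarantees — especially the point that Rule \ref{rule:infer} cannot resurrect $\gsafeaccess$ on a shared variable that $\rmtransientof{\cdot}$ has just reset to $\emptyset$. This hinges on $\mathit{SafeLoc}(\anobs)$ excluding every location in which the tracked address is retired and a $\free$ of $\adrvar$ is thereby enabled (such as the retired location of $\baseobs$), so that $\locsof{\gsafeaccess}$ is a proper subset of $\locsof{\emptyset}$. Once this is secured, the interplay between $\rmtransientof{\cdot}$ at each $\atomicend$ and the monotonicity of the type-transformer inclusions closes the induction.
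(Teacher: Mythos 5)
Your overall route matches the paper's: from $\threadof{\anact}\neq\athread$ and the SOS rules you conclude $\athread\notin\locksetof{\tau}$, hence $\flatof{\athread}{\tau}$ is either $\cskip$ or ends in $\atomicend$; the decisive step is Rule~\ref{rule:end}, whose output $\rmtransientof{\cdot}$ deletes $\gactive$ from every binding and resets shared variables to $\emptyset$; the trailing Rule~\ref{rule:infer} steps preserve this; and variables local to other threads are dispatched by the observation that they never enter the domain of $\envinitt{\athread}$. The paper does not run an induction over every inter-atomic position — it argues directly on the last $\atomicend$ and the single final inference step — but that difference is cosmetic.

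There is, however, one step whose justification does not hold in the generality the lemma requires. To rule out that an inference step upgrades a shared variable from $\emptyset$ to $\gsafeaccess$, you invoke the location inclusion $\locsof{\emptyset}\subseteq\locsof{\gsafeaccess}$ and argue it must fail because $\mathit{SafeLoc}(\anobs)$ is a proper subset of the full location set. \Cref{assumption:observers-accepting-state} does not guarantee properness: take $\anobs=\baseobs\times\anobs'$ where $\anobs'$ sends every $\freeof{\anadr}$ with $\anadr=\adrvar$ from its (single) non-accepting location to an accepting sink — a ``never free the tracked address'' automaton, which is a product with $\baseobs$ and supports elision. In this product no location has a $\free$-transition guarded by $\anadr=\adrvar$ into a non-accepting location (in particular, the free out of the $\ref{obs:base:retired}$-component lands in an accepting product location), so $\mathit{SafeLoc}(\anobs)$ is the full location set, $\locsof{\gsafeaccess}=\locsof{\emptyset}$, and your inclusion argument collapses even though the lemma still holds. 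The fully general argument — and the one the paper uses — is the second condition of the type transformer relation, $\isvalidof{\atypep}\Rightarrow\isvalidof{\atype}$: since $\neg\isvalidof{\emptyset}$, no sequence of transformer steps starting from $\emptyset$ can produce a valid type, in particular none containing $\gsafeaccess$ (nor $\glocal$ nor $\gactive$), independently of any property of $\mathit{SafeLoc}(\anobs)$. With that one substitution your proof goes through.
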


\begin{lemma}
	\label{thm:validity-locals}
	\label{thm:noalias-locals}
	Let $\tau.\anact\in\freesem$ and $\athread\neq\threadof{\anact}\neq\bot$.
	Let $\apvar\in\pvars\cap\lvars{\athread}$.
	Then, $\apvar\in\validof{\tau}$ implies $\apvar\in\validof{\tau.\anact}$.
	Moreover, $\noaliasof{\tau}{\apvar}$ implies $\noaliasof{\tau.\anact}{\apvar}$.
\end{lemma}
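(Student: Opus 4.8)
The plan is to argue both claims directly from \Cref{Definition:Validity} and the definition of $\noaliasof{\tau}{\apvar}$ by inspecting the single appended action $\anact$; no induction over $\tau$ is needed. Two observations drive everything. First, since $\apvar\in\lvars{\athread}$ is a thread-local variable of $\athread$ and $\threadof{\anact}\neq\athread$, the command $\comof{\anact}$ belongs to the renamed program $\renameof{\aprog}{\threadof{\anact}}$, whose variables are indexed by $\threadof{\anact}$ or are shared—none of which is $\apvar$. Hence $\anact$ neither reads nor writes $\apvar$ as a variable; in particular $\apvar$ is never the left-hand side of $\comof{\anact}$, so $\heapcomputof{\tau.\anact}{\apvar}=\heapcomputof{\tau}{\apvar}$. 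Second, since $\threadof{\anact}\neq\bot$, the action is not a $\freeof{\anadr}$, so validity is never lost through invalidation.

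For the validity part I would note that in \Cref{Definition:Validity} the only rules that remove a pointer \emph{variable} from the valid set are the two assignment rules in which that variable is the left-hand side, and the $\freeof{\anadr}$ rule via $\invalidof{\anadr}$. By the two observations above, neither applies to $\apvar$, while every other rule only adds expressions or leaves $\apvar$ untouched. Therefore $\apvar\in\validof{\tau}$ implies $\apvar\in\validof{\tau.\anact}$.

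For the no-alias part, write $\anadr:=\heapcomputof{\tau}{\apvar}$; by hypothesis $\anadr\neq\segval$ and no valid expression other than $\apvar$ carries $\anadr$ after $\tau$. Since $\heapcomputof{\tau.\anact}{\apvar}=\anadr\neq\segval$, it remains to show $\anact$ creates no new valid alias of $\anadr$. I would prove the contrapositive: if some pointer expression $e\neq\apvar$ lies in $\validof{\tau.\anact}$ with $\heapcomputof{\tau.\anact}{e}=\anadr$, then already after $\tau$ some pointer expression $e'\neq\apvar$ is valid with value $\anadr$, contradicting $\noaliasof{\tau}{\apvar}$. This is a finite case split on $\comof{\anact}$, restricted to the expressions it can newly validate or whose value it can change—all of which are selectors or variables of the executing thread, hence $\neq\apvar$. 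For a pointer copy or a pointer write, a freshly valid target carrying $\anadr$ forces its source to be a pointer expression already valid in $\tau$ with value $\anadr$, supplying $e'$. For a load $\apvar':=\psel{\apvarp}$, the dereferenced selector serves as $e'$. For an equality $\assumeof{\apvar=\apvarp}$, the two compared pointers share a value and one of them is valid in $\tau$, again yielding $e'$. All remaining commands (data operations, atomic/SMR/invariant actions) change no pointer value nor pointer validity, so $e$ is already valid in $\tau$ with value $\anadr$ and $e'=e$ works.

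The only case needing a genuine argument rather than bookkeeping is allocation $\apvar':=\malloc$, where the new valid pointer receives a fresh address. Here I would use that we work in $\freesem=\asem{\adr}{\emptyset}$ with no reallocation ($Y=\emptyset$), so the allocated address lies in $\freshof{\tau}$; by \Cref{thm:fresh-notin-range} it is then outside $\rangeof{\heapcomput{\tau}}$, whereas $\anadr=\heapcomputof{\tau}{\apvar}\in\rangeof{\heapcomput{\tau}}$. Thus the allocated pointer differs from $\anadr$ and its $\psel{}$-field is $\segval$, so no alias of $\anadr$ is produced. I expect this malloc case, together with the $\freesem$-specific absence of reallocation, to be the sole delicate point; the remainder is a routine, exhaustive inspection of the command grammar.
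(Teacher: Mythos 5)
Your proof is correct and follows essentially the same route as the paper's: since $\apvar$ is local to $\athread$ and $\anact$ is executed by a different, non-$\bot$ thread, the command of $\anact$ cannot mention $\apvar$, so its value and validity are preserved, and any valid alias created by $\anact$ would trace back to a valid alias already present in $\tau$, contradicting $\noaliasof{\tau}{\apvar}$. The paper states this in three lines; your only genuine addition is the explicit $\malloc$ case via freshness and \Cref{thm:fresh-notin-range}, which the paper's blanket claim that every newly created valid alias requires a pre-existing one silently subsumes.
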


\begin{lemma}
	\label{thm:invariant-eq-implies-valid}
	Consider $\tau.\anact\in\freesem$ PRF with $\anact=(\athread,\invariantof{\apvar=\apvarp},\anup)$ and $\invholdsof{\tau.\anact}$.
	Then, $\set{\apvar,\apvarp}\cap\validof{\tau}\neq\emptyset$ implies $\set{\apvar,\apvarp}\subseteq\validof{\tau}$.
\end{lemma}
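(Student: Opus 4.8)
The plan is to exploit that correctness of the annotation $\invariantof{\apvar=\apvarp}$ forces $\apvar$ and $\apvarp$ to denote the same address after $\tau$, and then to use the characterization of invalid pointers as pointers whose address has been freed. First I would observe that, since $\freesem$ is prefix closed (\Cref{thm:prefix-closure}) and $\tau.\anact$ is pointer-race-free, the prefix $\tau$ is itself in $\freesem$ and pointer-race-free; moreover $\tau\in\allsem$, because a $\freesem$ computation never reallocates and is therefore a special case of an $\allsem$ computation (the malloc rule for $\freesem$ only draws fresh addresses). Next I would unfold $\invholdsof{\tau.\anact}$ according to \Cref{Figure:FormulaInvariants}: processing the computation from the empty prefix, the accumulated prefix at the moment the final action $\anact$ is consumed is exactly $\tau$, so the formula contains the conjunct $\heapcomputof{\tau}{\apvar}=\heapcomputof{\tau}{\apvarp}$. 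Validity of $\invholdsof{\tau.\anact}$ then yields the key equality $\heapcomputof{\tau}{\apvar}=\heapcomputof{\tau}{\apvarp}$.

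With this equality in hand I would argue by symmetry, assuming without loss of generality that $\apvar\in\validof{\tau}$ and showing $\apvarp\in\validof{\tau}$ (the other case is identical since the equality is symmetric). Suppose, for contradiction, that $\apvarp\notin\validof{\tau}$. Since $\apvarp\in\pvars$ and $\tau$ is a pointer-race-free $\freesem$ computation, \Cref{thm:invalid-freed} gives $\heapcomputof{\tau}{\apvarp}\in\freedof{\tau}$. By the key equality, also $\heapcomputof{\tau}{\apvar}\in\freedof{\tau}$, whence \Cref{thm:pointers-to-freed-are-invalid} (applicable because $\tau\in\allsem$) yields $\apvar\notin\validof{\tau}$, contradicting the assumption. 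Hence $\apvarp\in\validof{\tau}$, and together with $\apvar\in\validof{\tau}$ this gives $\set{\apvar,\apvarp}\subseteq\validof{\tau}$.

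The only subtle point, and the one I expect to be the main obstacle, is the extraction of the ground equality from $\invholdsof{\tau.\anact}$. That formula is existentially quantified over the angels allocated along $\tau$, so I must argue that the conjunct $\heapcomputof{\tau}{\apvar}=\heapcomputof{\tau}{\apvarp}$ is forced to hold independently of the angelic witnesses. This is the case because the conjunct is a ground Boolean fact about $\heapcomput{\tau}$ that mentions no angel: were it false, the entire conjunction would be false under every choice of witnesses, making $\invholdsof{\tau.\anact}$ invalid. Once this extraction is justified, the remainder is a short combinatorial argument that merely chains the two validity lemmas through the address equality.
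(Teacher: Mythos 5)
Your proposal is correct. It follows the same overall skeleton as the paper's proof: extract the ground equality $\heapcomputof{\tau}{\apvar}=\heapcomputof{\tau}{\apvarp}$ from $\invholdsof{\tau.\anact}$ (your careful remark that this conjunct mentions no angel and thus survives the existential quantification is exactly the right justification, which the paper leaves implicit), then assume wlog one pointer valid and the other invalid and derive a contradiction. Where you diverge is in how you rule out that a valid and an invalid pointer alias: the paper invokes \Cref{thm:invalidpointers-adr} once, which directly states that the addresses reachable from valid pointers are disjoint from the addresses held by invalid expressions, giving $\heapcomputof{\tau}{\apvar}\neq\heapcomputof{\tau}{\apvarp}$ in a single step. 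You instead route the argument through the freed set, chaining \Cref{thm:invalid-freed} (invalid pointer variables in PRF $\freesem$ computations hold freed addresses) with \Cref{thm:pointers-to-freed-are-invalid} (pointers to freed addresses are invalid in $\allsem$). Both routes are sound; yours needs the two side conditions you correctly discharge — prefix closure of $\freesem$ via \Cref{thm:prefix-closure} to get $\tau$ PRF, and the inclusion $\freesem\subseteq\allsem$ (monotonicity of the malloc rule in the reallocation set) to apply \Cref{thm:pointers-to-freed-are-invalid} — whereas the paper's single-lemma route avoids mentioning $\allsem$ and the freed set entirely. The paper's version is shorter; yours makes the mechanism (freed addresses are what invalidity means here) more explicit.
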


\begin{lemma}
	\label{thm:invariant-active-implies-active-valid-new}
	Consider $\tau.\anact\in\freesemobs$ PRF with $\anact=(\athread,\invariantof{\activeof{\apvar}},\anup)$ and $\invholdsof{\tau.\anact}$.
	Then, $\apvar\in\validof{\tau.\anact}$ and $\lreachof{\athread}{\anadr}{\historyof{\tau.\anact}}\subseteq\locsof{\gactive}$ for $\anadr=\heapcomputof{\tau.\anact}{\apvar}$.
\end{lemma}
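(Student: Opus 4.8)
The plan is to prove Lemma~\ref{thm:invariant-active-implies-active-valid-new} by unfolding the definition of $\invholdsof{\cdot}$ applied to the relevant prefix of $\tau.\anact$ and then connecting the semantic notion $\activeofcomp{\cdot}$ to the locations of $\baseobs$ via \Cref{thm:retired-state-retired}. Concretely, since $\anact=(\athread,\invariantof{\activeof{\apvar}},\anup)$ and $\invholdsof{\tau.\anact}$ holds, I would appeal to the recursive clause of \Cref{Figure:FormulaInvariants} for the case $\invariantof{\activeof{\apvar}}$, which contributes the conjunct $\heapcomputof{\sigma}{\apvar}\in\activeofcomp{\sigma}$ where $\sigma$ is the prefix of $\tau.\anact$ up to and including $\anact$. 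Because the formula is conjunctive and we assume the whole thing is valid, this conjunct must hold, giving us $\anadr=\heapcomputof{\tau.\anact}{\apvar}\in\activeofcomp{\tau.\anact}$. By \Cref{def:activeof}, this means $\anadr\notin\freedof{\tau.\anact}\cup\retiredof{\tau.\anact}$.

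From $\anadr\notin\freedof{\tau.\anact}$ I would derive validity. First I would argue $\apvar\in\validof{\tau.\anact}$: the contrapositive of \Cref{thm:invalid-freed} (for pointer variables) says that if $\apvar\notin\validof{\tau.\anact}$ then $\heapcomputof{\tau.\anact}{\apvar}\in\freedof{\tau.\anact}$, contradicting activeness. The one subtlety is that \Cref{thm:invalid-freed} is stated for computations in $\freesem$ that are PRF, so I must first establish that the relevant prefix is pointer-race-free; this follows because $\tau.\anact\in\freesemobs$ is PRF by hypothesis and PRF is prefix-closed by \Cref{def:relaxed-pointer-race} (it requires no prefix to raise a race), and the $\invariantof{\cdot}$ action itself produces no memory update and hence cannot invalidate $\apvar$. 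So validity of $\apvar$ transfers.

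For the location claim, I would use $\anadr\notin\retiredof{\tau.\anact}$ together with the third bullet of \Cref{thm:retired-state-retired}, which states precisely that $\anadr\in\activeofcomp{\tau}\implies(\ref{obs:base:init},\varphi)\trans{\historyof{\tau}}(\ref{obs:base:init},\varphi)$ with $\varphi=\set{\adrvar\mapsto\anadr}$. Applying this to the prefix ending in $\anact$ (whose history equals $\historyof{\tau.\anact}$, since $\invariantof{\cdot}$ emits no event per \Cref{def:historyof}), I conclude that every reachable $\baseobs$-location on $\anadr$ is the initial location $\ref{obs:base:init}$. Lifting this across the product $\anobs=\baseobs\times\implobs$, the reached locations lie in $\set{\ref{obs:base:init}}\times\locsof{\implobs}$, which is exactly $\locsof{\gactive}$ by the definition of $\locsof{\cdot}$. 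Hence $\lreachof{\athread}{\anadr}{\historyof{\tau.\anact}}\subseteq\locsof{\gactive}$.

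The main obstacle I anticipate is the bookkeeping around the history projection and the product structure: I must be careful that $\historyof{\tau.\anact}=\historyof{\tau}$ (the invariant action is dropped by \Cref{def:historyof}), and that the per-address reachability set $\lreach$ for the full automaton $\anobs$ factors correctly through the $\baseobs$ component, so that controlling only the $\baseobs$-location to be $\ref{obs:base:init}$ suffices to land inside $\locsof{\gactive}=\set{\ref{obs:base:init}}\times\locsof{\anobs'}$. The reachability-to-retirement correspondence (\Cref{thm:retired-state-retired}) is stated only for $\baseobs$, so I expect to need a short argument that $\implobs$ imposes no further constraint relevant here --- its locations are left unconstrained by the definition of $\locsof{\gactive}$, so any $\implobs$-location is admissible, and the product step lifts componentwise. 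Everything else is a routine unfolding of the invariant-semantics formula and an invocation of the already-proved lemmas on validity and on the correspondence between activeness and the initial $\baseobs$-location.
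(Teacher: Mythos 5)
Your proposal is correct and follows essentially the same route as the paper's own proof: unfold the invariant semantics of \Cref{Figure:FormulaInvariants} to obtain activeness of $\heapcomputof{\tau}{\apvar}$, apply the contrapositive of \Cref{thm:invalid-freed} to get validity, and invoke \Cref{thm:retired-state-retired} (with $\historyof{\tau.\anact}=\historyof{\tau}$) for the location claim. The only, harmless, imprecision is that the active conjunct is evaluated at the prefix $\tau$ \emph{excluding} $\anact$ rather than including it; since the invariant action performs no memory update and neither frees nor retires, $\heapcomputof{\tau}{\apvar}=\heapcomputof{\tau.\anact}{\apvar}$ and $\activeofcomp{\tau}=\activeofcomp{\tau.\anact}$, so your argument goes through unchanged.
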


\begin{lemma}
	\label{thm:ghost-active-implies-active-notretired-new}
	Consider $\tau.\anact\in\freesemobs$ PRF with $\anact=(\athread,\ghostof{\activeof{\aghostvar}},\anup)$ and $\invholdsof{\tau.\anact}$.
	Then, $\denotationof{\tau.\anact}{\aghostvar}\cap\freedof{\tau.\anact}=\emptyset$ and $\lreachof{\athread}{\anadr}{\historyof{\tau.\anact}}\subseteq\locsof{\gactive}$ for all $\anadr\in\denotationof{\tau.\anact}{\aghostvar}$.
\end{lemma}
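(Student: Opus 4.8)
The plan is to prove Lemma~\ref{thm:ghost-active-implies-active-notretired-new} as a natural generalization of the preceding Lemma~\ref{thm:invariant-active-implies-active-valid-new}, lifting the single-pointer reasoning to the set of addresses denoted by the angel. First I would unpack the correctness formula $\invholdsof{\tau.\anact}$ for the case where the last action is $\anact=(\athread,\ghostof{\activeof{\aghostvar}},\emptyset)$. Consulting the definition in \Cref{Figure:FormulaInvariants}, this case contributes the conjunct $\aghostvar\subseteq\activeofcomp{\sigma}$, where $\sigma$ is the prefix of the computation up to this action. Since the action carries an empty update, the memory and the derived sets ($\freedof{}$, $\retiredof{}$, hence $\activeofcomp{}$) are unchanged by $\anact$, so $\activeofcomp{\sigma}=\activeofcomp{\tau.\anact}$. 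The key step is then to connect the syntactic angel $\aghostvar$ appearing in the formula to its semantic denotation $\denotationof{\tau.\anact}{\aghostvar}$: by the definition of $\denotation{}$ via the prenex normal form, every address in $\denotationof{\tau.\anact}{\aghostvar}$ arises from a satisfying valuation of $\invholdsof{\tau.\anact}$, and any such valuation must respect the conjunct $\aghostvar\subseteq\activeofcomp{\tau.\anact}$. Hence $\denotationof{\tau.\anact}{\aghostvar}\subseteq\activeofcomp{\tau.\anact}$.

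From this inclusion both claims follow. For the first claim, recall $\activeofcomp{\tau.\anact}:=\adr\setminus(\freedof{\tau.\anact}\cup\retiredof{\tau.\anact})$ by \Cref{def:activeof}. Thus $\denotationof{\tau.\anact}{\aghostvar}$ is disjoint from $\freedof{\tau.\anact}\cup\retiredof{\tau.\anact}$, which in particular gives the desired $\denotationof{\tau.\anact}{\aghostvar}\cap\freedof{\tau.\anact}=\emptyset$ (and, as a bonus, disjointness from the retired set, which is exactly what I expect to need for the location claim).

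For the second claim, I would fix an arbitrary $\anadr\in\denotationof{\tau.\anact}{\aghostvar}$ and show $\lreachof{\athread}{\anadr}{\historyof{\tau.\anact}}\subseteq\locsof{\gactive}$. By the previous step, $\anadr\notin\retiredof{\tau.\anact}$, i.e.\ $\anadr$ is active. Here I would invoke \Cref{thm:retired-state-retired}: with $\varphi=\set{\adrvar\mapsto\anadr}$, activeness of $\anadr$ implies $(\ref{obs:base:init},\varphi)\trans{\historyof{\tau.\anact}}(\ref{obs:base:init},\varphi)$, meaning the $\baseobs$-component of every reachable SMR state sits in $\ref{obs:base:init}$. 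Since $\locsof{\gactive}=\set{\ref{obs:base:init}}\times\locsof{\anobs'}$ consists precisely of the locations whose $\baseobs$-component is $\ref{obs:base:init}$, any location in $\lreachof{\athread}{\anadr}{\historyof{\tau.\anact}}$ lies in $\locsof{\gactive}$. The argument is essentially identical in spirit to the one already carried out in Lemma~\ref{thm:invariant-active-implies-active-valid-new} for a single pointer $\apvar$; the generalization is simply that it is applied uniformly to every member of the denotation.

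The main obstacle I anticipate is the careful handling of the angelic denotation: unlike an ordinary pointer, $\aghostvar$ stands for a whole set determined by a satisfying assignment to the existentially quantified prenex formula, so I must be precise that \emph{every} element of $\denotationof{\tau.\anact}{\aghostvar}$ (not merely some witness) inherits the $\subseteq\activeofcomp{}$ constraint. This requires reading the definition of $\denotation{}$ correctly and observing that the relevant conjunct $\aghostvar\subseteq\activeofcomp{\tau.\anact}$ is forced in any model of $\phi$, so it bounds the entire set $A_n$ (the valuation of the last-allocated instance of $\aghostvar$) rather than a single address. Everything else is a routine transfer of the single-pointer lemma through \Cref{thm:retired-state-retired} and the definition of $\locsof{\gactive}$.
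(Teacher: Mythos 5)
Your proposal is correct and follows essentially the same route as the paper's proof: both establish $\denotationof{\tau.\anact}{\aghostvar}\subseteq\activeofcomp{\tau.\anact}$ from the definitions of $\invholdsof{\cdot}$ and the angel denotation, read off the disjointness with $\freedof{\tau.\anact}$ from \Cref{def:activeof}, and conclude the location claim for each $\anadr$ in the denotation via \Cref{thm:retired-state-retired}. The paper compresses the unpacking of the invariant formula and the denotation into a terse ``by definition,'' while you spell those steps out explicitly; the mathematical content is identical.
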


\begin{lemma}
	\label{thm:check-preserves-type-guranatees-new-new}
	Let $\tau\in\freesemobs$.
	Let $\env,\envp$ such that $\checknocomof{\env}{\epsilon}{\envp}$.
	Let $\athread$ be some thread.
	Let $\apvar\in\pvars$ and $\anadr=\heapcomputof{\tau}{\apvar}$.
	Let $\aghostvar\in\gvars$ and $\anadrp\in\denotationof{\tau}{\aghostvar}$.
	Then,
	\begin{align*}
		\isvalidof{\env(\apvar)}\implies\apvar\in\validof{\tau}
		&\qquad\text{implies}\qquad
		\isvalidof{\envp(\apvar)}\implies\apvar\in\validof{\tau}
		\\
		\isvalidof{\env(\aghostvar)}\implies\anadrp\notin\freedof{\tau}
		&\qquad\text{implies}\qquad
		\isvalidof{\envp(\aghostvar)}\implies\anadrp\notin\freedof{\tau}
		\\
		\glocal\in\env(\apvar)\implies\noaliasof{\tau}{\apvar}
		&\qquad\text{implies}\qquad
		\glocal\in\envp(\apvar)\implies\noaliasof{\tau}{\apvar}
		\\
		\lreachof{\athread}{\anadr}{\historyof{\tau}}\subseteq\locsof{\envof{\apvar}}
		&\qquad\text{implies}\qquad
		\lreachof{\athread}{\anadr}{\historyof{\tau}}\subseteq\locsof{\envpof{\apvar}}
		\\
		\lreachof{\athread}{\anadrp}{\historyof{\tau}}\subseteq\locsof{\envof{\aghostvar}}
		&\qquad\text{implies}\qquad
		\lreachof{\athread}{\anadrp}{\historyof{\tau}}\subseteq\locsof{\envpof{\aghostvar}}
	\end{align*}
\end{lemma}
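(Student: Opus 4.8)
The plan is to unfold $\checknocomof{\env}{\epsilon}{\envp}$ into its three defining pointwise conditions and then settle each of the five implications in a single step, matching each to exactly one of those conditions.

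First I would recall that $\checknocomof{\env}{\epsilon}{\envp}$ abbreviates $\checkoftype{\envof{\apavar}}{\apavar}{\epsilon}{\envpof{\apavar}}$ for every pointer and angel $\apavar$, with the post-image taken to be the identity. Instantiating the type transformer relation under this convention, the first defining clause, which in general reads $\lpostof{\apavar}{\acom}{\locsof{\atype}}\subseteq\locsof{\atypep}$, collapses to the plain location inclusion $\locsof{\envof{\apavar}}\subseteq\locsof{\envpof{\apavar}}$, since the identity post-image returns $\locsof{\envof{\apavar}}$ unchanged. Together with the remaining two clauses I extract, for every $\apavar$, the three facts: (A) $\locsof{\envof{\apavar}}\subseteq\locsof{\envpof{\apavar}}$; (B) $\isvalidof{\envpof{\apavar}}$ implies $\isvalidof{\envof{\apavar}}$; and (C) $\set{\glocal,\gactive}\cap\envpof{\apavar}\subseteq\set{\glocal,\gactive}\cap\envof{\apavar}$. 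These are precisely the statements that a no-op SMR step cannot enlarge the associated location set, cannot validate a pointer, and cannot create the guarantees $\glocal$ or $\gactive$.

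Then I would discharge the five implications, each as an immediate consequence of one of (A), (B), (C). For the first, assuming the hypothesis and $\isvalidof{\envp(\apvar)}$, fact (B) at $\apvar$ yields $\isvalidof{\env(\apvar)}$, whence the hypothesis gives $\apvar\in\validof{\tau}$; the second is identical using (B) at $\aghostvar$ (reading $\anadrp\notin\freedof{\tau}$ in place of the conclusion). For the third, $\glocal\in\envp(\apvar)$ together with (C) gives $\glocal\in\env(\apvar)$, so the hypothesis produces $\noaliasof{\tau}{\apvar}$. The fourth and fifth are just transitivity of $\subseteq$: from $\lreachof{\athread}{\anadr}{\historyof{\tau}}\subseteq\locsof{\envof{\apvar}}$ and (A) at $\apvar$ I obtain $\lreachof{\athread}{\anadr}{\historyof{\tau}}\subseteq\locsof{\envpof{\apvar}}$, and symmetrically for $\anadrp$ and $\aghostvar$ using (A) at $\aghostvar$.

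There is essentially no hard step here; the lemma is the bookkeeping fact that each semantic invariant tracked by the soundness argument survives a pure type-transformer transition with identity post-image. The only point needing a moment's care is the observation that the identity post-image trivialises the first clause of the transformer relation to (A); without it one might expect to have to reason about $\lpost$ directly. A minor caveat worth stating explicitly is that in the third implication $\apvar$ ranges over pointers only, consistent with angels never acquiring $\glocal$, so (C) alone suffices and no separate angel case arises.
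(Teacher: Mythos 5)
Your proposal is correct and follows essentially the same route as the paper's proof: both unfold $\checknocomof{\env}{\epsilon}{\envp}$ into its pointwise clauses (location-set inclusion under the identity post-image, validity preservation, and preservation of $\glocal$/$\gactive$ membership) and then discharge each of the five implications by matching it to the corresponding clause, the last two by transitivity of $\subseteq$. No gap; this is exactly the bookkeeping argument the paper gives.
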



\begin{lemma}
	\label{thm:type-guarantees-vs-computations-new}
	Assume $\anobs$ supports elision, and $\invholdsof{\nosem}$.
	Consider some thread~$\athread$, some type environments $\env$, and some $\tau\in\freesemobs$ PRF with $\invholdsof{\tau}$ and $\TYPESTMT{\envinitt{\athread}}{\flatof{\athread}{\tau}}{\env}$.
	Then, for every $\apvar\in\pvars$, we have $\lreachof{\athread}{\anadr}{\historyof{\tau}}\subseteq\locsof{\envof\apvar}$ and $\isvalidof{\env(\apvar)}\implies\apvar\in\validof{\tau}$.
\end{lemma}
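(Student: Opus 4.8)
The plan is to argue by induction on the length of $\tau$, proving the two claimed inclusions \emph{simultaneously} with two auxiliary invariants that the bare statement suppresses but that are indispensable for the interference step: whenever $\glocal\in\envof{\apvar}$ then $\apvar$ has no valid alias in $\tau$, i.e. $\noaliasof{\tau}{\apvar}$; and whenever $\gactive\in\envof{\apvar}$ then $\athread$ is currently inside an atomic block. These side conditions are exactly the ones isolated by \Cref{thm:env-inbetween-atomics}, \Cref{thm:noalias-locals}, and \Cref{thm:validity-locals}, which I would invoke rather than reprove. For the base case $\tau=\epsilon$, the projection $\flatof{\athread}{\epsilon}=\cskip$ forces $\env=\envinitt{\athread}$, assigning $\emptyset$ to every pointer; the history is empty, so $\lreachof{\athread}{\anadr}{\epsilon}$ is the singleton initial location, contained in $\locsof{\emptyset}=\locsof{\anobs}$, while $\isvalidof{\emptyset}$ is false and the validity implication holds vacuously. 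In the inductive step I consider $\tau.\anact$ and split on $\threadof{\anact}$, writing $\anadr=\heapcomputof{\tau}{\apvar}$ throughout.

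\emph{Own action} ($\threadof{\anact}=\athread$, command $\acom$): here $\flatof{\athread}{\tau.\anact}=\flatof{\athread}{\tau};\acom$. I would decompose the typing with \Cref{thm:typing-single-steps} into $\TYPESTMT{\envinitt{\athread}}{\flatof{\athread}{\tau}}{\envp}$ followed by $\TYPECOM{\envp}{\acom}{\env}$, apply the induction hypothesis at $\envp,\tau$, and discharge the remaining step by case analysis on the rule of \Cref{fig:type-rules:com} used for $\acom$. The location inclusion follows from the first clause of the type-transformer relation, $\lpostof{\apvar}{\acom}{\locsof{\atype}}\subseteq\locsof{\atypep}$, for the SMR rules \ref{rule:enter} and \ref{rule:exit}, and from the command's effect on the history otherwise; the validity implication uses the second clause $\isvalidof{\atypep}\Rightarrow\isvalidof{\atype}$ together with the allocation/assignment cases, where \ref{rule:malloc} supplies a fresh address (hence $\baseobs$ at $\ref{obs:base:init}$, no valid alias) and \ref{rule:assume1}/\ref{rule:assign1} propagate validity along genuine aliases. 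Rule \ref{rule:infer} is absorbed by \Cref{thm:check-preserves-type-guranatees-new-new} and \Cref{thm:check-epsilon-transitive}. The annotation rules are where the program invariants enter: \Cref{thm:invariant-eq-implies-valid} validates the join of \ref{rule:equal}, \Cref{thm:invariant-active-implies-active-valid-new} forces the reached locations into $\locsof{\gactive}$ under \ref{rule:active}, and \Cref{thm:ghost-active-implies-active-notretired-new} together with the angel denotation $\denotationof{\tau}{\aghostvar}$ does the same through a membership annotation handled by \ref{rule:member}. For \ref{rule:enter} specialised to $\retireof{\apvar}$, the premise $\gactive\in\atype$ and the hypothesis give activeness, so $\baseobs$ sits in $\ref{obs:base:init}$ and no double retire occurs.

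\emph{Interference or free} ($\threadof{\anact}\neq\athread$): now $\flatof{\athread}{\tau.\anact}=\flatof{\athread}{\tau}$, so $\env$ is unchanged and only the semantic side (the history and validity) moves. By \Cref{thm:env-inbetween-atomics}, under interference $\gactive$ is absent from every type and $\glocal,\gsafeaccess$ are absent from the types of non-local variables; hence for shared pointers and angels only $\gcustom{\alocset}$ guarantees survive, and these are closed under interference by the defining restriction on such guarantees, giving the location inclusion at once while $\isvalidof{}$ stays false. For a pointer local to $\athread$, \Cref{thm:validity-locals} and \Cref{thm:noalias-locals} preserve validity and the no-alias property; if $\glocal\in\envof{\apvar}$, no-alias means no other thread holds a valid pointer to $\anadr$, so the interferer cannot retire $\anadr$ (that would be an unsafe retire, contradicting pointer-race freedom) and, by \Cref{thm:retired-before-freed}, cannot free it either, so $\baseobs$ remains in $\ref{obs:base:init}$ and the location stays in $\locsof{\glocal}$. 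The $\gsafeaccess$ case is handled by $\mathit{SafeLoc}(\anobs)$ being closed under interference and excluding $\free$, which keeps the reached location inside $\locsof{\gsafeaccess}$ and $\apvar$ valid.

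\textbf{Main obstacle.} The hard part is the interference step for the non-closed guarantees $\glocal$ and $\gactive$: turning the \emph{syntactic} no-valid-alias invariant into the \emph{semantic} guarantee that no concurrent thread can advance $\baseobs$ out of $\ref{obs:base:init}$ for the tracked address, and dovetailing this with pointer-race freedom and \Cref{thm:retired-before-freed}. A secondary difficulty is the angel reasoning, where the second-order denotation $\denotationof{\tau}{\aghostvar}$ must be reconciled with the per-address location tracking so that a membership annotation transfers the angel's guarantees to an ordinary pointer.
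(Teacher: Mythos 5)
Your proposal follows the paper's own proof in all essentials: the same induction over the length of $\tau$, the same strengthening by the no-valid-alias invariant for $\glocal$, the same case split on the acting thread, and the same supporting lemmas (\Cref{thm:typing-single-steps}, \Cref{thm:check-preserves-type-guranatees-new-new}, \Cref{thm:env-inbetween-atomics}, \Cref{thm:validity-locals}, \Cref{thm:invariant-active-implies-active-valid-new}). Folding the $\freeof{\anadr}$ action into the interference case is tolerable: your route via pointer-race freedom and \Cref{thm:retired-before-freed} reaches the same conclusion as the paper's separate free case, which argues directly from the location-tracking invariant and membership of $\historyof{\tau}.\freeof{\anadr}$ in $\specof{\anobs}$.

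There is, however, one genuine gap: your strengthened induction hypothesis quantifies only over pointers, but the type rules let guarantees flow from \emph{angels} to pointers, and without angel clauses in the hypothesis that flow cannot be justified. Concretely, Rule~\ref{rule:member} types $\ghostof{\containsof{\apvar}{\aghostvar}}$ by intersecting $\env(\apvar)$ with $\env(\aghostvar)$; to conclude the location inclusion and validity for $\apvar$ after this step you need, as part of the induction hypothesis itself, that (a) $\lreachof{\athread}{\anadrp}{\historyof{\tau}}\subseteq\locsof{\envof{\aghostvar}}$ for every $\anadrp\in\denotationof{\tau}{\aghostvar}$, and (b) $\isvalidof{\env(\aghostvar)}$ implies $\denotationof{\tau}{\aghostvar}\cap\freedof{\tau}=\emptyset$, which via the contrapositive of \Cref{thm:invalid-freed} yields $\apvar\in\validof{\tau.\anact}$. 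The paper's proof carries exactly these two angel conjuncts through the entire induction, including the interference and free cases, where \Cref{assumption:ghost-varibles-local} keeps angel denotations stable under other threads' actions. You flag this as a ``secondary difficulty'' but never incorporate the fix, and the lemma you cite for it, \Cref{thm:ghost-active-implies-active-notretired-new}, only discharges Rule~\ref{rule:active} applied to an angel, not Rule~\ref{rule:member}; as stated, your induction cannot close the \ref{rule:member} case. A smaller imprecision: in the base case $\env$ is not forced to equal $\envinitt{\athread}$, since Rule~\ref{rule:infer} allows any $\env$ with $\checknocomof{\envinitt{\athread}}{\epsilon}{\env}$; the conclusion still holds because the type transformer relation cannot create validity, cannot add $\glocal$ or $\gactive$, and cannot shrink the (full) location set of $\emptyset$, but this step needs to be said rather than assumed.
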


\begin{lemma}
	\label{thm:type-check-implies-prf-inv-new}
	Let $\anobs$ supports elision.
	If $\typechecks{\aprog}$ and $\invholdsof{\nosem[\aprog]}$, then $\freesemobs[\aprog]$ PRF and $\invholdsof{\freesemobs[\aprog]}$.
\end{lemma}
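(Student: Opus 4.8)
The plan is to prove the two conclusions separately, since they are of quite different character: the invariant part is a clean transfer from the garbage-collected semantics using elision, whereas pointer-race freedom is an induction over computations that leans on the soundness of the type system. The heavy interference reasoning is already encapsulated in \Cref{thm:type-guarantees-vs-computations-new}, so the work here is mostly assembly plus a careful matching of pointer-race triggers to type-rule premises.

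First I would dispatch $\invholdsof{\freesemobs[\aprog]}$, and crucially I would do this \emph{independently} of pointer-race freedom. Fix an arbitrary $\tau\in\freesemobs[\aprog]$. Because $\anobs$ supports elision and, by \Cref{Assumption:Product}, has the form $\baseobs\times\implobs$, \Cref{thm:deletable-computations} supplies a witness $\sigma\in\nosem[\aprog]$ with $\invholdsof{\sigma}\implies\invholdsof{\tau}$. The hypothesis $\invholdsof{\nosem[\aprog]}$ asserts that every computation of the GC semantics satisfies its annotations, so in particular $\invholdsof{\sigma}$ holds; hence $\invholdsof{\tau}$. As $\tau$ was arbitrary, $\invholdsof{\freesemobs[\aprog]}$. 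This step assumes nothing about $\tau$ beyond membership in $\freesemobs[\aprog]$, so its conclusion is available to feed into the pointer-race argument below.

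Next I would establish pointer-race freedom by induction on the length of computations $\tau.\anact\in\freesemobs[\aprog]$, the hypothesis being that $\tau$ is pointer-race free (and, from the previous step, that $\invholdsof{\tau}$ holds). The base case is immediate. For the step, let $\athread=\threadof{\anact}$ and $\acom=\comof{\anact}$. From $\typechecks{\aprog}$ and \Cref{thm:typing-computations} I obtain environments with $\TYPESTMT{\envinitt{\athread}}{\flatof{\athread}{\tau}}{\env_1}$ and $\TYPESTMT{\env_1}{\apc(\athread)}{\env_2}$, where $\acom$ is the next primitive command of $\apc(\athread)$. Inverting the derivation of $\apc(\athread)$ through Rules~\ref{rule:seq}, \ref{rule:begin}, and \ref{rule:infer} exposes the pre-environment $\envp$ under which $\acom$ is typed; since the intervening steps are $\atomicbegin$ (no change) and type-transformer steps $\env_1\leadsto\envp$, \Cref{thm:check-preserves-type-guranatees-new-new} shows that $\envp$ inherits the soundness guarantees of $\env_1$. \Cref{thm:type-guarantees-vs-computations-new}, applicable because $\tau$ is PRF, $\invholdsof{\tau}$, and $\anobs$ supports elision, then yields $\isvalidof{\envp(\apvar)}\implies\apvar\in\validof{\tau}$ for every pointer. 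A case analysis on the four pointer-race triggers now matches each against the premise of the rule typing $\acom$: an unsafe access is blocked by the $\isvalidof{\cdot}$ premise of Rule~\ref{rule:assign2} (and the analogous dereference rules), an unsafe equality assumption by Rule~\ref{rule:assume1}, an unsafe retire by the $\gactive\in\atype$ side condition of Rule~\ref{rule:enter}, and an unsafe call by the $\safecallof{\env_1}{\afuncof{\vecof{\apvar},\vecof{\advar}}}$ premise of Rule~\ref{rule:enter}. In each case the type-level validity, transported by \Cref{thm:type-guarantees-vs-computations-new}, rules out the corresponding semantic pointer race, so $\tau.\anact$ is PRF.

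The main obstacle is the unsafe-call case. There I must line up the syntactic predicate $\safecallof{\envp}{\afuncof{\vecof{\apvar},\vecof{\advar}}}$, phrased over the under-approximation $\isvalidof{\cdot}$, with the semantic notion of an unsafe call from \Cref{def:unsafe-enter}, which quantifies over replacements of the \emph{invalid} actual arguments. The matching succeeds only because \Cref{thm:type-guarantees-vs-computations-new} guarantees $\isvalidof{\envp(\apvar_i)}\implies\apvar_i\in\validof{\tau}$: any argument deemed valid by the types is genuinely valid and hence exempt from the re-quantification, so the residual arguments subject to replacement are exactly those over which $\safecallof{}{}$ already guards. A secondary, purely bookkeeping difficulty is the bridge from the flat-program environment $\env_1$ to the command's pre-environment $\envp$ across atomic boundaries and \ref{rule:infer} applications, which is precisely what \Cref{thm:check-preserves-type-guranatees-new-new} is for. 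The genuinely hard interference argument (stability of guarantees under actions of other threads) has already been discharged inside \Cref{thm:type-guarantees-vs-computations-new}, so it need not be revisited here.
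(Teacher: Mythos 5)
Your proof is correct and takes essentially the same approach as the paper's: both dispatch the invariant claim via \Cref{thm:deletable-computations} combined with the hypothesis $\invholdsof{\nosem[\aprog]}$, and both establish pointer-race freedom by stepping through \Cref{thm:typing-computations}, inversion of the type rules to expose the pre-environment of the offending command, \Cref{thm:type-guarantees-vs-computations-new}, and the identical four-case analysis (including the same contrapositive matching of $\safecallof{\cdot}{\cdot}$ against \Cref{def:unsafe-enter} via $\isvalidof{\cdot}\implies$ validity). The only difference is organizational: the paper runs one minimal-counterexample argument over both properties jointly, whereas you prove the invariant part up front and feed it into the PRF induction --- a sound decoupling, since \Cref{thm:deletable-computations} requires no pointer-race-freedom hypothesis.
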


\begin{lemma}
	\label{thm:type-check-implies-no-double-retires}
	Let $\anobs$ supports elision.
	If $\typechecks{\aprog}$ and $\invholdsof{\nosem[\aprog]}$, $\allsem$ does not perform double retires.
\end{lemma}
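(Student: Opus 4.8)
The plan is to follow the control-flow argument sketched for \Cref{proposition:retire-check}, but to route it through the reuse-free semantics in which type soundness was actually established. First I would invoke \Cref{thm:type-check-implies-prf-inv-new}: since $\anobs$ supports elision, the hypotheses $\typechecks{\aprog}$ and $\invholdsof{\nosem[\aprog]}$ yield that $\freesemobs[\aprog]$ is pointer-race free and that $\invholdsof{\freesemobs[\aprog]}$ holds. It then suffices to show that no retire in the SMR semantics $\allsemobs[\aprog]$ (abbreviated $\allsem$ under the fixed $\anobs$) targets an already-retired address. By prefix closure (\Cref{thm:prefix-closure}) it is enough to take an arbitrary $\tau.\anact\in\allsemobs[\aprog]$ whose last action $\anact$ runs $\enterof{\retireof{\apvar}}$, set $\anadr=\heapcomputof{\tau}{\apvar}$ (well defined, as the (Enter) rule forbids $\segval$ arguments), and prove $\anadr\notin\retiredof{\tau}$; by \Cref{def:retiredof} this is exactly the absence of a double retire.

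The crux is to transport this goal into $\freesemobs[\aprog]$. Using elision support together with $\freesemobs[\aprog]$ being pointer-race free, \Cref{thm:elision-computations-new} supplies a companion $\sigma\in\freesemobs[\aprog]$ with $\tau\computequiv\sigma$, $\tau\obsrel\sigma$, and, crucially, $\retiredof{\tau}\subseteq\retiredof{\sigma}$. This inclusion runs in the convenient direction: it reduces $\anadr\notin\retiredof{\tau}$ to $\anadr\notin\retiredof{\sigma}$. Similarity $\tau\computequiv\sigma$ gives $\controlof{\tau}=\controlof{\sigma}$ and, via \Cref{thm:valideq}, $\validof{\tau}=\validof{\sigma}$ with the two memories agreeing on this valid part; I will use this to identify $\heapcomputof{\sigma}{\apvar}$ with $\anadr$ once $\apvar$ is known to be valid.

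It remains to derive $\anadr\notin\retiredof{\sigma}$ from the typing. I would pick the configuration $\apc\in\controlof{\tau}=\controlof{\sigma}$ at which thread $\athread$ is poised to execute the retire, and apply \Cref{thm:typing-computations} to $\sigma$ and $\athread$ to obtain $\env_1,\env_2$ with $\TYPESTMT{\envinitt{\athread}}{\flatof{\athread}{\sigma}}{\env_1}$ and $\TYPESTMT{\env_1}{\apc(\athread)}{\env_2}$. Inverting the second derivation down to the retire command—the intervening steps being only atomic-block entry (Rule~\ref{rule:begin}, no effect) and the $\checknocomof{\env_1}{\epsilon}{\cdot}$ transformations inside Rule~\ref{rule:infer}—produces an environment $\env'$ immediately preceding the retire with $\gactive\in\env'(\apvar)$, forced by the retire side-condition of Rule~\ref{rule:enter}. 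Since $\env'$ is reached from $\env_1$ purely by $\epsilon$-transformations, \Cref{thm:check-preserves-type-guranatees-new-new} (chained by \Cref{thm:check-epsilon-transitive}) carries the soundness invariant of \Cref{thm:type-guarantees-vs-computations-new} from $\env_1$ to $\env'$. Hence $\apvar\in\validof{\sigma}$ (so $\heapcomputof{\sigma}{\apvar}=\anadr$) and $\lreachof{\athread}{\anadr}{\historyof{\sigma}}\subseteq\locsof{\env'(\apvar)}\subseteq\locsof{\gactive}$. As $\locsof{\gactive}=\set{\ref{obs:base:init}}\times\locsof{\anobs'}$ and the run of $\anobs=\baseobs\times\anobs'$ on $\historyof{\sigma}$ projects to a run of $\baseobs$, every $\baseobs$-location reachable for $\anadr$ equals $\ref{obs:base:init}$; \Cref{thm:retired-state-retired} then yields $\anadr\notin\retiredof{\sigma}$, and with $\retiredof{\tau}\subseteq\retiredof{\sigma}$ we conclude $\anadr\notin\retiredof{\tau}$.

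I expect the main obstacle to be the bookkeeping between the two semantics rather than any single deep step. Concretely, the delicate part is pairing the reuse-free $\sigma$ with $\tau$ so that the retired-address sets, the control locations, and the value held by the about-to-be-retired pointer all line up simultaneously, and then feeding the pre-retire environment $\env'$—available only up to the $\epsilon$-transformations that Rule~\ref{rule:infer} permits—into \Cref{thm:type-guarantees-vs-computations-new} without dropping the $\gactive$ guarantee. Each ingredient is supplied by a cited lemma, but invoking them in the correct order (elision transport first, typing inversion second, preservation last) is where care is required; everything else is routine projection from $\anobs$ to $\baseobs$.
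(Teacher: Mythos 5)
Your proposal is correct and takes essentially the same route as the paper's own proof: establish PRF and invariant correctness of $\freesemobs$ via \Cref{thm:type-check-implies-prf-inv-new}, transport the offending retire into the reuse-free semantics via \Cref{thm:elision-computations-new}, extract $\gactive$ for the retired pointer from Rule~\ref{rule:enter} in the typing of the induced straight-line program, feed it to the soundness statement \Cref{thm:type-guarantees-vs-computations-new}, and conclude with \Cref{thm:retired-state-retired}. The only differences are bookkeeping: the paper argues by contradiction, applies \Cref{thm:typing-computations} to $\sigma.\anact$ (so the retire sits in the flat program) and absorbs the $\epsilon$-transformations into the typing judgment via Rule~\ref{rule:infer}, whereas you type $\sigma$ and invert the residual program, carrying the invariant with \Cref{thm:check-preserves-type-guranatees-new-new}; these are equivalent.
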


\presection
\section{Proofs}

\subsection{Reduction}

\begin{proof}[Proof of \Cref{thm:prefix-closure}]
	Follows immediately from \Cref{assumption:observers-accepting-state} as it guarantees that continuations of a history not accepted by $\anobs$ are also not accepted.
\end{proof}

\begin{proof}[Proof of \Cref{thm:adrof-valid-heap-restriction}]
	Follows from \cite[Lemma D.5]{DBLP:journals/corr/MeyerW19}. 
\end{proof}

\begin{proof}[Proof of \Cref{thm:valideq}]
	Follows from \cite[Lemma D.7]{DBLP:journals/corr/MeyerW19} 
\end{proof}

\begin{proof}[Proof of \Cref{thm:move-event-from-freeable}]
	Follows from definition.
\end{proof}

\begin{proof}[Proof of \Cref{thm:invalidpointers-adr}]
	Follows from \cite[Lemma D.15]{DBLP:journals/corr/MeyerW19}. 
	That the semantics from \cite{DBLP:journals/corr/MeyerW19} sets selectors to $\bot$ for $\free$ commands does not affect the result.
\end{proof}

\begin{proof}[Proof of \Cref{thm:pointers-to-freed-are-invalid}]
	To the contrary, assume there is a shortest $\tau.\anact\in\allsem$ with some address $\anadr\in\freedof{\tau.\anact}$ and $\anadr\in\heapcomputof{\tau.\anact}{\validof{\tau.\anact}}$.
	Note that $\tau.\anact$ is indeed the shortest such computation since the claim is vacuously true for $\epsilon$.
	
	First, consider the case where we have $\anadr\notin\freedof{\tau}$.
	Then, $\anact$ must execute the command $\freeof{\anadr}$.
	As a consequence, we get $\apexp\notin\validof{\tau.\anact}$ for all $\apexp$ with $\heapcomputof{\tau}{\apexp}=\anadr$.
	So $\anadr\notin\heapcomputof{\tau.\anact}{\validof{\tau.\anact}}$.
	Since this contradicts the assumption, we must have $\anadr\in\freedof{\tau}$.
	
	Now, consider the case where we have $\anadr\in\freedof{\tau}$.
	By definition, there is some $\apexp\in\validof{\tau.\anact}$ with $\heapcomputof{\tau.\anact}{\apexp}=\anadr\neq\segval$.
	We get $\apexp\notin\validof{\tau}$ by minimality of $\tau.\anact$.
	That is, $\anact$ validates $\apexp$.
	To do so, $\anact$ must be an assignment, an allocation, or an assertion:
	\begin{itemize}
		\item
			If $\anact$ is of the form $\anact=(\athread,\apexp:=\apexpp,\anup)$, then $\apexpp\in\validof{\tau}$ and $\heapcomputof{\tau}{\apexpp}=\anadr$ must hold in order to establish the desired properties of $\apexp$.
			However, $\heapcomputof{\tau}{\apexpp}$ leads to $\apexpp\notin\validof{\tau}$ by minimality of $\tau.\anact$.
			Hence, $\anact$ cannot be an assignment.
		\item
			If $\anact$ is of the form $\anact=(\athread,\apexp:=\malloc,\anup)$, then $\anadr\notin\freedof{\tau.\anact}$ by definition.
			Hence, $\anact$ cannot be an allocation targeting $\apexp$.
		\item
			If $\anact$ is of the form $\anact=(\athread,\apvar:=\malloc,\anup)$ with $\heapcomputof{\tau.\anact}{\apvar}=\anadrp$ and $\apexp\equiv\psel{\anadrp}$, then $\heapcomputof{\tau.\anact}{\apexp}=\segval\neg\anadr$.
			Hence, $\anact$ cannot be an allocation.
		\item
			If $\anact$ is of the form $\anact=(\athread,\assumeof{\apvar=\apvarp},\anup)$, then wlog. $\apexp\equiv\apvar$ and $\apvarp\in\validof{\tau}$ and $\anadr=\heapcomputof{\tau.\anact}{\apexp}=\heapcomputof{\tau}{\apexp}=\heapcomputof{\tau}{\apvarp}$ must hold.
			Again by minimality, we get $\apvarp\notin\validof{\tau}$ which contradicts the assumption.
			Hence, $\anact$ cannot be an assertion.
	\end{itemize}
	The above case distinction is complete and thus concludes the claim.
\end{proof}

\begin{proof}[Proof of \Cref{thm:elision-support-lift}]
	Let $\tau\in\allsemobs$, $\anadr\in(\freshof{\tau}\cup\freedof{\tau})\setminus\retiredof{\tau}$, and $\anadrp\in\freshof{\tau}$.
	Let $\historyof{\tau}=\ahist$.
	We have $\renamingof{\freeableof{\ahist}{\anadrp}}{\anadrp}{\anadr}=\freeableof{\renamingof{\ahist}{\anadrp}{\anadr}}{\anadr}$ according to \cite[Lemma D.26]{DBLP:journals/corr/MeyerW19}. 
	If $\anadr\in\freedof{\tau}$, then \Cref{def:elision-support}\ref{def:elision-support:fresh-brandnew} yields $\freeableof{\ahist}{\anadr}\subseteq\freeableof{\renamingof{\ahist}{\anadrp}{\anadr}}{\anadr}$.
	Thus, $\freeableof{\ahist}{\anadr}\subseteq\renamingof{\freeableof{\ahist}{\anadrp}}{\anadrp}{\anadr}$ as desired.
	Otherwise, we have $\anadr\in\freshof{\tau}$.
	This means $\renamingof{\ahist}{\anadrp}{\anadr}=\ahist$.
	So, $\freeableof{\renamingof{\ahist}{\anadrp}{\anadr}}{\anadr}=\freeableof{\ahist}{\anadr}$.
	Hence, $\freeableof{\ahist}{\anadr}=\renamingof{\freeableof{\ahist}{\anadrp}}{\anadrp}{\anadr}$ as desired.
\end{proof}

\begin{proof}[Proof of \Cref{thm:replacing-addresses-in-computations-new}]
	Follows from \cite[Lemmas D.26, D.27 and D.28]{DBLP:journals/corr/MeyerW19}. 
	That the semantics from \cite{DBLP:journals/corr/MeyerW19} sets selectors to $\bot$ for $\free$ commands does not affect the result.
\end{proof}

\begin{proof}[Proof of \Cref{thm:elision-computations-new}]
	Follows from \cite[Proofs of Proposition C.14, Lemma C.16, and Lemma D.16]{DBLP:journals/corr/MeyerW19}. 
\end{proof}


\begin{proof}[Proof of \Cref{thm:fresh-notin-range}]
	Follows from \cite[Lemma D.9]{DBLP:journals/corr/MeyerW19}. 
\end{proof}

\begin{proof}[Proof of \Cref{thm:disjoint-fresh-retired}]
	The claim holds for the empty computation $\epsilon$.
	To the contrary, assume the claim does not hold.
	Then, there must be a shortest computation $\tau.\anact\in\allsem$ such that $\freshof{\tau.\anact}\cap\retiredof{\tau.\anact}\neq\emptyset$.
	Let $\anadr\in\freshof{\tau.\anact}\cap\retiredof{\tau.\anact}$.
	By minimality of $\tau.\anact$, we must have $\anadr\notin\freshof{\tau}$ or $\anadr\notin\retiredof{\tau}$.
	If $\anadr\notin\freshof{\tau}$, then we have $\anadr\notin\freshof{\tau.\anact}$ by definition.
	Since this contradicts the assumption, we must have $\anadr\in\freshof{\tau}$ and $\anadr\notin\retiredof{\tau}$.
	To arrive at $\anadr\in\retiredof{\tau.\anact}$, $\anact$ must be of the form $\anact=(\athread,\enterof{\retireof{\apvar}},\anup)$ with $\heapcomputof{\tau}{\apvar}=\anadr$.
	By the contrapositive of \Cref{thm:fresh-notin-range}, we have $\anadr\notin\freshof{\tau}$.
	As before, this gives $\anadr\notin\freshof{\tau.\anact}$ and contradicts the assumption.
\end{proof}

\begin{proof}[Proof of \Cref{thm:invalid-freed-or-fresh}]
	The claim holds for the empty computation $\epsilon$ by definition.
	Consider some $\tau.\anact\in\freesem$ PRF with $\anact=(\athread,\acom,\anup)$ such that for every $\apexpp\in\pexp$ we have $\apexpp\notin\validof{\tau}$ implies $\heapcomputof{\tau}{\apexpp}\in\freedof{\tau}$ or $\apexpp\notin\pvars\wedge\set{\apexpp}\cap\adr\subseteq\freshof{\tau}\cup\freedof{\tau}$.
	Let $\apexp\in\pexp$ be some pointer expression with $\apexp\notin\validof{\tau.\anact}$.
	We do a case distinction.
	\begin{compactitem}
		\item
			Consider $\acom$ being an assignment.
			If $\acom$ does not contain $\apexp$ at the left-hand side, then $\apexp\notin\validof{\tau}$.
			Hence, we have either $\heapcomputof{\tau.\anact}{\apexp}=\heapcomputof{\tau}{\apexp}\in\freedof{\tau}=\freedof{\tau.\anact}$ or $\apexp\notin\pvars\wedge\set{\apexp}\cap\adr\subseteq\freshof{\tau}\cup\freedof{\tau}=\freshof{\tau.\anact}=\freedof{\tau.\anact}$.
			Otherwise, $\acom\equiv\apexp:=\apexpp$.
			If $\apexpp\in\pvars$, then $\apexpp\notin\validof{\tau}$.
			Thus, $\heapcomputof{\tau}{\apexpp}\in\freedof{\tau}$.
			We get $\heapcomputof{\tau.\anact}{\apexp}\in\freedof{\tau.\anact}$.
			Otherwise, $\apexpp\equiv\psel{\apvar}$ with $\heapcomputof{\tau}{\apvar}=\anadr$ and $\psel{\anadr}\notin\validof{\tau}$.
			By \Cref{thm:fresh-notin-range} we have $\anadr\notin\freshof{\tau}$.
			So we get $\heapcomputof{\tau}{\apexpp}\in\freedof{\tau}$ or $\anadr\in\freedof{\tau}$.
			The latter cannot apply since \Cref{thm:pointers-to-freed-are-invalid} gives $\apvar\notin\validof{\tau}$ which means $\tau.\anact$ raises a pointer race contradicting the assumption.
			In the remaining case we get $\heapcomputof{\tau.\anact}{\apexp}\in\freedof{\tau.\anact}$ as desired.

		\item
			Consider $\acom$ being an $\assume$, an $\invariant$, an $\enter$, or an $\exit$.
			Then, we know that $\apexp\notin\validof{\tau}$, $\heapcomputof{\tau}{\apexp}=\heapcomputof{\tau.\anact}{\apexp}$, $\freshof{\tau}=\freshof{\tau.\anact}$, and $\freedof{\tau}=\freedof{\tau.\anact}$.
			This implies the desired property.

		\item
			Consider $\acom$ being an allocation.
			Let $\acom\equiv\apvar:=\malloc$.
			The update is of the form $\anup=[\apvar\mapsto\anadr,\psel{\anadr}\mapsto\segval,dots]$ for some $\anadr\in\adr$.
			By the semantics, we have $\anadr\in\freshof{\tau}$.
			We get $\freshof{\tau.\anact}=\freshof{\tau}\setminus\set{\anadr}$ and $\freedof{\tau.\anact}=\freedof{\tau}\setminus\set{\anadr}$.
			Since $\apexp\notin\validof{\tau.\anact}$, we have $\apexp\notin\set{\apvar,\psel{\anadr}}$.
			So $\heapcomputof{\tau}{\apexp}=\heapcomputof{\tau.\anact}{\apexp}$.
			If $\heapcomputof{\tau}{\apexp}\in\freedof{\tau}$ we get $\heapcomputof{\tau}{\apexp}\notin\freshof{\tau}$ and thus $\heapcomputof{\tau.\anact}{\apexp}\in\freedof{\tau.\anact}$.
			Otherwise, $\apexp\equiv\psel{\anadrp}$ and $\anadrp\in\freshof{\tau}\cup\freedof{\tau}$ with $\anadr\neq\anadrp$.
			So we get the desired $\anadrp\in\freshof{\tau.\anact}\cup\freedof{\tau.\anact}$.

		\item
			Consider $\acom$ being a free.
			Let $\acom\equiv\freeof{\anadr}$.
			If $\apexp\in\validof{\tau}$, then we have $\heapcomputof{\tau}{\apexp}=\anadr$ or $\apexp\equiv\psel{\anadr}$ since $\anact$ invalidates $\apexp$.
			We get either $\heapcomputof{\tau.\anact}{\apexp}\in\freedof{\tau.\anact}$ or $\apexp\equiv\psel{\anadr}\wedge\anadr\in\freedof{\tau.\anact}$.
			Otherwise, we have $\apexp\notin\validof{\tau}$.
			If $\heapcomputof{\tau}{\apexp}\in\freedof{\tau}$, then $\heapcomputof{\tau.\anact}{\apexp}\in\freedof{\tau.\anact}$.
			Otherwise, $\apexp\equiv\psel{\anadrp}\wedge\anadrp\in\freshof{\tau}\cup\freedof{\tau}$.
			This gives $\anadrp\in\freshof{\tau.\anact}\cup\freedof{\tau.\anact}$ because $\anadr=\anadrp$ yields $\anadrp\in\freedof{\tau.\anact}$ and $\anadr\neq\anadrp$ does not affect the freshness/freedness of $\anadrp$.
	\end{compactitem}
	This concludes the claim.
\end{proof}

\begin{proof}[Proof of \Cref{thm:invalid-freed}]
	Follows from \Cref{thm:invalid-freed-or-fresh}.
\end{proof}

\begin{proof}[Proof of \Cref{thm:freeable-vs-deletable-new}]
	By assumption we have
	\begin{auxiliary}
		\forall\ahist\,\forall\anadr,\anadrp.&~\anadr\neq\anadrp\implies\freeableof{\ahist.\freeof{\anadr}}{\anadrp}=\freeableof{\ahist}{\anadrp}
		\label[aux]{proof:freeable:aux1}
		\\
		\forall\ahist\,\forall\anadr.&~\freeableof{\ahist.\freeof{\anadr}}{\anadr}\subseteq\freeableof{\ahist}{\anadr}
		\label[aux]{proof:freeable:aux2}
	\end{auxiliary}
	Consider some $\ahist.\freeof{\anadr}$.
	We show $\specof{\ahist.\freeof{\anadr}}\subseteq\specof{\ahist}$ as this implies the claim.
	Towards a contradiction, assume the inclusion does not hold.
	That is, there is a shortest $\ahistp\in\specof{\ahist.\freeof{\anadr}}$ with $\ahistp\notin\specof{\ahist}$.
	If $\freesof{\ahistp}=\emptyset$, then we get $\ahistp\in\freeableof{\ahist.\freeof{\anadr}}{\anadr}$.
	By \Cref{proof:freeable:aux2} we have $\ahistp\in\freeableof{\ahist}{\anadr}$.
	This means $\ahistp\in\specof{\ahist}$ by definition.
	This contradicts the choice of $\ahistp$.
	So we must have $\freesof{\ahistp}\neq\emptyset$.

	Consider now $\freesof{\ahistp}\neq\emptyset$.
	That is, there is a decomposition of $\ahistp$ of the form $\ahistp=\ahist_1.\freeof{\anadrp}.\ahist_2$ with $\freesof{\ahist_2}=\emptyset$.
	We derive the following.
	\begin{compactitem}
		\item
		We have $\ahist_1.\freeof{\anadrp}.\ahist_2\notin\specof{\ahist}$.
		That is, $\ahist.\ahist_1.\freeof{\anadrp}.\ahist_2\notin\specof{\anobs}$.
		By definition, this means $\ahist_2\notin\freeableof{\ahist.\ahist_1.\freeof{\anadrp}}{\anadrpp}$ with $\anadrpp\neq\anadrp$.
		Now, \Cref{proof:freeable:aux1} yields $\ahist_2\notin\freeableof{\ahist.\ahist_1}{\anadrpp}$.
		Since $\freesof{\ahist_2}=\emptyset$, we must have $\ahist.\ahist_1.\ahist_2\notin\specof{\anobs}$.
		That is, we get $\ahist_1.\ahist_2\notin\specof{\ahist}$.

		\item
		We have $\ahist_1.\freeof{\anadrp}.\ahist_2\in\specof{\ahist.\freeof{\anadr}}$.
		So, $\ahist.\freeof{\anadr}.\ahist_1.\freeof{\anadrp}.\ahist_2\in\specof{\anobs}$.
		By $\freesof{\ahist_2}$, we get $\ahist_2\in\freeableof{\ahist.\freeof{\anadr}.\ahist_1.\freeof{\anadrp}}{\anadrp}$.
		Then, \Cref{proof:freeable:aux2} yields $\ahist_2\in\freeableof{\ahist.\freeof{\anadr}.\ahist_1}{\anadrp}$.
		So, $\ahist.\freeof{\anadr}.\ahist_1.\ahist_2\in\specof{\anobs}$.
		That is, we get $\ahist_1.\ahist_2\in\specof{\ahist.\freeof{\anadr}}$.
	\end{compactitem}
	Altogether, this means we have $\ahist_1.\ahist_2\in\specof{\ahist.\freeof{\anadr}}$ and $\ahist_1.\ahist_2\notin\specof{\ahist}$ with $\ahist_1.\ahist_2$ being shorter than $\ahist_1.\freeof{\anadrp}.\ahist_2$.
	This contradicts the minimality of $\ahistp$ and thus concludes the claim.
\end{proof}


\begin{proof}[Proof of \Cref{thm:retired-state-retired}]
	Let $\anadr\in\adr$ and $\varphi=\set{\adrvar\mapsto\anadr}$.
	The claim holds for $\epsilon$.
	Towards a contradiction, assume there is a shortest $\tau.\anact\in\freesemobs$ with $(\ref{obs:base:init},\varphi)\trans{\historyof{\tau.\anact}}(\ref{obs:base:retired},\varphi)$ and $\anadr\notin\retiredof{\tau.\anact}$.
	If $\historyof{\tau.\anact}=\historyof{\tau}$, then we have $\retiredof{\tau.\anact}=\retiredof{\tau}$.
	This contradicts the assumption that $\tau.\anact$ is the shortest such computation.
	So $\historyof{\tau.\anact}$ is of the form $\historyof{\tau.\anact}=\ahist.\anevent$ with $\ahist=\historyof{\tau}$.
	We do a case distinction on the state after $\tau$.
	\begin{compactitem}
		\item
			Consider $(\ref{obs:base:init},\varphi)\trans{\ahist}(\ref{obs:base:final},\varphi)$.
			By definition of $\baseobs$, there is not step $(\ref{obs:base:final},\varphi)\trans{\anevent}(\ref{obs:base:retired},\varphi)$.
			Hence, this case cannot apply.
		\item
			Consider $(\ref{obs:base:init},\varphi)\trans{\ahist}(\ref{obs:base:init},\varphi)$.
			Then we must have $(\ref{obs:base:init},\varphi)\trans{\anevent}(\ref{obs:base:retired},\varphi)$.
			This means $\anevent$ is of the form $\anevent=\retireof{\athread,\anadr}$ for some thread $\athread$.
			By definition, $\anact=(\athread,\retireof{\apvar},\emptyset)$ with $\heapcomputof{\tau}{\apvar}=\anadr$.
			Thus, $\anadr\in\retiredof{\tau.\anact}$.
			This contradicts the assumption.
		\item
			Consider $(\ref{obs:base:init},\varphi)\trans{\ahist}(\ref{obs:base:retired},\varphi)$.
			By minimality, we have $\anadr\in\retiredof{\tau}$.
			To arrive at $\anadr\notin\retiredof{\tau}$, we must have $\anevent=\freeof{\anadr}$.
			This, however, leads to $(\ref{obs:base:retired},\varphi)\trans{\anevent}(\ref{obs:base:init},\varphi)$.
			So, $(\ref{obs:base:init},\varphi)\trans{\historyof{\tau.\anact}}(\ref{obs:base:init},\varphi)$.
			This contradicts the assumption.
	\end{compactitem}
	The above case distinction is complete and thus proves that $(\ref{obs:base:init},\varphi)\trans{\historyof{\tau}}(\ref{obs:base:retired},\varphi)$ implies $\anadr\in\retiredof{\tau}$.
	Consider now the reverse direction.
	To that end, consider some $\tau\in\freesemobs$ and some $\anadr\notin\retiredof{\tau}$.
	Using the contrapositive of the above, we get $(\ref{obs:base:init},\varphi)\trans{\historyof{\tau}}(\alocation,\varphi)$ with $\alocation\neq\ref{obs:base:retired}$.
	By \Cref{Assumption:Product}, $\alocation\neq\ref{obs:base:final}$ as for otherwise $\tau\notin\freesemobs$.
	Hence, $\alocation=\ref{obs:base:init}$ must hold.
	This establishes the first equivalence.
	The second equivalence follow analogously.
	The remaining property follows from the second equivalence together with the fact that $\anadr\in\activeofcomp{\tau}$ implies $\anadr\notin\retiredof{\tau}$.
\end{proof}

\begin{proof}[Proof of \Cref{thm:retired-before-freed}]
	Let $\tau.\anact\in\freesemobs$ with $\anact=(\athread,\freeof{\anadr},\anup)$.
	Let $\varphi=\set{\adrvar\mapsto\anadr}$.
	We have $\historyof{\tau.\anact}=\ahist.\freeof{\anadr}$ with $\ahist=\historyof{\tau}$.
	By definition, $\ahist.\freeof{\anadr}\in\specof{\baseobs}$.
	So we must have $(\ref{obs:base:init},\varphi)\trans{\ahist}(\ref{obs:base:retired},\varphi)$ as for otherwise $\freeof{\anadr}$ would take $\baseobs$ to $\ref{obs:base:init}$ and thus give $\tau.\anact\notin\freesemobs$.
	Now \Cref{thm:retired-state-retired} yields the desired $\anadr\in\retiredof{\tau}$.
\end{proof}

\begin{proof}[Proof of \Cref{thm:deletable-computations}]
	For $\tau=\epsilon$ we choose $\epsilon=\sigma\in\nosem$.
	Then, $\sigma$ satisfies the desired properties.
	Consider now $\tau.\anact\in\freesemobs$.
	Assume we already constructed $\sigma\in\nosem$~with:
	\begin{enumerate}[label=({P}\arabic*),leftmargin=1.3cm,parsep=.5ex,topsep=-1ex] 
		\item \label[property]{proof:deletable:control} $\controlof{\tau}=\controlof{\sigma}$,
		\item \label[property]{proof:deletable:heap} $\heapcomput{\tau}=\heapcomput{\sigma}$,
		\item \label[property]{proof:deletable:fresh} $\freshof{\tau}\subseteq\freshof{\sigma}$,
		\item \label[property]{proof:deletable:retired} $\retiredof{\tau}\subseteq\retiredof{\sigma}$, and
		\item \label[property]{proof:deletable:freed} $\freedof{\tau}\subseteq\retiredof{\sigma}$.
		\item \label[property]{proof:deletable:inv} $\invholdsof{\sigma}\implies\invholdsof{\tau}$.
	\end{enumerate}\vspace{2mm}
	First, assume $\comof{\anact}\not\equiv\freeof{\anadr}$.
	We get $\sigma.\anact\in\nosem$.
	The reason for this is that $\sigma$:
	\begin{compactitem}
	 	\item performs the same updates in assignments due to \Cref{proof:deletable:heap},
	 	\item allows for the same $\assume$ commands due to \Cref{proof:deletable:heap},
	 	\item emits the same events due to \Cref{proof:deletable:heap},
	 	\item allows for the same $\malloc$ commands due to \Cref{proof:deletable:fresh}, and
	 	\item nothing can be removed from $\retiredof{\sigma}$.
	\end{compactitem}
	Moreover, $\sigma$ satisfies the desired properties.
	This is because the same update is applied after $\tau$ and $\sigma$.
	To see that $\invholdsof{\sigma.\anact}\implies\invholdsof{\tau.\anact}$, let $\anact$ execute an annotation.
	There are two cases:
	\begin{compactitem}
		\item
			Consider $\comof{\anact}$ is of the form $\ghostof{\chooseof{\aghostvar}}$.
			By definition, $\invholdsof{\tau.\anact}=\exists\aghostvar.~F_\tau$ and $\invholdsof{\sigma.\anact}=\exists\aghostvar.~F_\sigma$.
			By induction, we get $\invholdsof{\sigma.\anact}\implies\invholdsof{\tau.\anact}$.
		\item
			Consider $\comof{\anact}\not\equiv\ghostof{\chooseof{\aghostvar}}$.
			By definition then, $\invholdsof{\tau.\anact}=\invholdsof{\tau}\wedge F_\tau$ and $\invholdsof{\sigma.\anact}=\invholdsof{\sigma}\wedge F_\sigma$.
			If $\comof{\anact}\notin\set{\invariantof{\activeof{\apvar}},\ghostof{\activeof{\aghostvar}}}$, we have $F_\tau\equiv F_\sigma$ by \Cref{proof:deletable:heap}.
			Otherwise, we have $F_\sigma\implies F_\tau$ because \Cref{proof:deletable:retired,proof:deletable:freed} gives $\activeofcomp{\sigma}\subseteq\activeofcomp{\tau}$.
			By induction, we get $\invholdsof{\sigma.\anact}\implies\invholdsof{\tau.\anact}$.
	\end{compactitem}
	The case distinction is complete and thus concludes the claim.

	If $\comof{\anact}\equiv\freeof{\anadr}$, then $\sigma$ already has the desired properties.
	This is the case because $\freeof{\anadr}$ does not affect the memory nor the control.
	The $\freeof{\anadr}$ may remove $\anadr$ from $\freshof{\tau}$, thus maintaining $\freshof{\tau.\anact}\subseteq\freshof{\sigma}$.
	Similarly, we maintain $\retiredof{\tau.\anact}\subseteq\retiredof{\sigma}$.
	Last, we have $\freedof{\tau.\anact}=\freedof{\tau}\cup\set{\anadr}$.
	It remains to show that $\anadr\in\retiredof{\sigma}$.
	This follows from \Cref{thm:retired-before-freed} together with $\retiredof{\tau}\subseteq\retiredof{\sigma}$ from induction.
\end{proof}

\begin{proof}[Proof of \Cref{thm:PRF-guarantee-formal}]
	Follows from \Cref{thm:elision-computations-new,thm:deletable-computations}.
\end{proof}

\begin{proof}[Proof of \Cref{thm:PRF-guarantee}]
	Follows from \Cref{thm:PRF-guarantee-formal}.
\end{proof}

\begin{proof}[Proof of \Cref{thm:generalized-RPRF-guarantee}]
	Follows from \Cref{thm:elision-computations-new,thm:deletable-computations}.
	Here, we rely on a generalization of \Cref{thm:elision-computations-new} that assumes only RPRF in its premise.
	To prove the generalization of that \namecref{thm:elision-computations-new}, we have to consider one additional case, namely the case where $\tau.\anact\in\allsemobs$ is a PR but RPRF.
	(During the proof, we have already constructed $\sigma\in\freesemobs$ with $\tau\computequiv\sigma$.)
	That is, $\anact$ is of the form $\assumeof{\apvar=\apvarp}$ such that, without loss of generality, $\apvar\notin\validof{\tau}$ and $\freeof{\anadr}\in\historyof{\tau}$ where $\anadr=\heapcomputof{\tau}{\apvarp}$.
	This means $\tau.\anact\in\asemobs{\adr\setminus\set{\anadr}}{\adr\setminus\set{\anadr}}{\anobs}$.
	So $\apvarp\in\validof{\tau}$.
	By \cite[Lemma D.15]{DBLP:journals/corr/MeyerW19} we have $\heapcomputof{\tau}{\apvar}\neq\anadr=\heapcomputof{\tau}{\apvarp}$.
	Hence, the case cannot apply.
	(The remaining cases are identical to the non-generalized version, \Cref{thm:elision-computations-new}).
\end{proof}


\subsection{Type System}

\begin{proof}[Proof of \Cref{thm:check-epsilon-transitive}]
	Immediately follows from definition.
\end{proof}

\begin{proof}[Proof of \Cref{thm:typing-single-steps}]
	We do an induction over the derivation depth of $\TYPESTMT{\env_1}{\astmt}{\env_2}$.
	\begin{description}[labelwidth=6mm,leftmargin=8mm,itemindent=0mm]
		\item[IB:]
			The derivation is due to one rule application.
			This means the derivation is not due to one of: \ref{rule:infer}, \ref{rule:seq}, \ref{rule:choice}, or \ref{rule:loop}.
			For the remaining, applicable rules we have $\astmt\equiv\acom$ and $(\acom,\tau)\step[\athread](\cskip,\tau.\tau')$.
			Thus, $\flatof{\athread}{\tau'}=\astmt$.
			We immediately get $\TYPESTMT{\env_1}{\flatof{\athread}{\tau'}}{\env_2}$.
			Moreover, $\TYPESTMT{\env_2}{\cskip}{\env_2}$ holds by definition.
			That is, we choose $\env=\env_2$ as it has the desired properties.

		\item[IH:]
			If $\TYPESTMT{\env_1}{\astmt}{\env_2}$ and $(\astmt,\tau)\step[\athread](\astmtp,\tau.\tau')$ hold, then there is some $\env$ with $\TYPESTMT{\env_1}{\flatof{\athread}{\tau'}}{\env}$ and $\TYPESTMT{\env}{\astmtp}{\env_2}$.

		\item[IS:]
			Consider a composed derivation of $\TYPESTMT{\env_1}{\astmt}{\env_2}$.
			Let $(\astmt,\tau)\step[\athread](\astmtp,\tau.\tau')$.
			We do a case distinction on the first rule of the derivation.
			\begin{compactdesc}
				\item[\textnormal{\textit{Rule \ref{rule:seq}, part 1.}}]
					Consider $\astmt\equiv\cskip;\astmt_2$.
					Then, $\astmtp=\astmt_2$ and $\tau'=\epsilon$.
					Moreover, there exists some $\env$ such that $\TYPESTMT{\env_1}{\cskip}{\env}$ and $\TYPESTMT{\env}{\astmt_2}{\env_2}$ due to the type rules.
					Note that $\flatof{\athread}{\tau'}=\cskip$.
					That is, $\env$ has the desired properties.

				\item[\textnormal{\textit{Rule \ref{rule:seq}, part 2.}}]
					Consider $\astmt\equiv\astmt_1;\astmt_2$.
					Let $(\astmt_1,\tau)\step[\athread](\astmt_1',\tau.\tau'')$.
					By definition, $\astmtp=\astmt_1';\astmt_2$ and $\tau'=\tau''$.
					The type rules give some $\envp$ with $\TYPESTMT{\env_1}{\astmt_1}{\envp}$ and $\TYPESTMT{\envp}{\astmt_2}{\env_2}$.
					By induction, there is $\env$ with $\TYPESTMT{\env_1}{\flatof{\athread}{\tau'}}{\env}$ and $\TYPESTMT{\env}{\astmt_1'}{\envp}$.
					The latter gives $\TYPESTMT{\env}{\astmt_1';\astmt_2}{\env_2}$.
					That is, $\env$ has the desired properties.

				\item[\textnormal{\textit{Rule \ref{rule:choice}.}}]
					We have $\astmt\equiv\astmt_1\choice\astmt_2$.
					Then, $\tau'=\epsilon$ and $\astmtp=\astmt_i$ for some $i\in\set{1,2}$.
					Due to the type rules, we have $\TYPESTMT{\env_1}{\astmt_i}{\env_2}$.
					Moreover, $\flatof{\athread}{\tau'}=\cskip$ gives $\TYPESTMT{\env_1}{\flatof{\athread}{\tau'}}{\env_1}$.
					That is, $\env=\env_1$ is an adequate choice with the desired properties.

				\item[\textnormal{\textit{Rule \ref{rule:loop}.}}]
					We have $\astmt\equiv\astmt_1^*$.
					Then, $\tau'=\epsilon$ and $\astmtp=\astmt_i$ for some $i\in\set{1,2}$.
					Due to the type rules, we have $\TYPESTMT{\env_1}{\astmt_i}{\env_2}$.
					Moreover, $\flatof{\athread}{\tau'}=\cskip$ gives $\TYPESTMT{\env_1}{\flatof{\athread}{\tau'}}{\env_1}$.
					That is, $\env=\env_1$ is an adequate choice with the desired properties.

				\item[\textnormal{\textit{Rule \ref{rule:infer}.}}]
					There are type environments $\env_3$ and $\env_4$ such that $\checknocomof{\env_1}{\epsilon}{\env_3}$, $\TYPESTMT{\env_3}{\astmt}{\env_4}$, and $\checknocomof{\env_4}{\epsilon}{\env_2}$.
					By induction, there is $\env$ with $\TYPESTMT{\env_3}{\flatof{\athread}{\tau'}}{\env}$ and $\TYPESTMT{\env}{\astmtp}{\env_4}$.
					Applying Rule~\ref{rule:infer} we get $\TYPESTMT{\env_1}{\flatof{\athread}{\tau'}}{\env}$ and $\TYPESTMT{\env}{\astmtp}{\env_2}$ as desired.
			\end{compactdesc}
	\end{description}
	The above induction concludes the claim.
\end{proof}

\begin{proof}[Proof of \Cref{thm:typing-computations}]
	Let $\TYPESTMT{\envinit}{\aprog}{\env}$.
	We proceed by induction over the SOS transitions.
	\begin{description}[labelwidth=6mm,leftmargin=8mm,itemindent=0mm]
		\item[IB:]
			We have $(\pcinit,\epsilon)\step^0(\apc,\tau)$.
			That is, $\apc=\pcinit$ and $\tau=\epsilon$.
			Consider some thread $\athread$.
			We have $\flatof{\athread}{\tau}=\cskip$ and $\apc(\athread)=\renameof{\aprog}{\athread}$.
			The former gives $\TYPESTMT{\envinitt{\athread}}{\flatof{\athread}{\tau}}{\envinitt{\athread}}$.
			The latter gives $\TYPESTMT{\envinitt{\athread}}{\apc(\athread)}{\env}$ due to the premise.
			So we can choose $\env_1=\envinitt{\athread}$ and $\env_2=\env$.

		\item[IH:]
			Let the claim hold for sequences of up to $n$ steps.

		\item[IS:]
			Consider now $(\pcinit,\epsilon)\step^n(\apc,\tau)\step[\athreadp](\apcp,\tau.\tau')$.
			Let $\athread\neq\bot$ be some arbitrary thread.
			By induction, there are $\envp_1,\env_2$ with \[
				\TYPESTMT{\envinitt{\athread}}{\flatof{\athread}{\tau}}{\envp_1}
				\qquad\text{and}\qquad
				\TYPESTMT{\envp_1}{\apc(\athread)}{\env_2}
				\ .
			\]
			First, assume $\athread\neq\athreadp$.
			Then, $\flatof{\athread}{\tau.\anact}=\flatof{\athread}{\tau}$ and $\apcp(\athread)=\apc(\athread)$.
			Thus, the claim follows immediately by induction.
			So consider $\athread=\athreadp$ now.
			We have $(\apc(\athread),\tau)\step[\athread](\apcp(\athread),\tau.\tau')$ by definition.
			\Cref{thm:typing-single-steps} yields $\env_1$ with \[
				\TYPESTMT{\envp_1}{\flatof{\athread}{\tau'}}{\env_1}
				\qquad\text{and}\qquad
				\TYPESTMT{\env_1}{\apcp(\athread)}{\env_2}
				\ .
			\]
			Altogether, we get the desired: \[
				\TYPESTMT{\envinitt{\athread}}{\flatof{\athread}{\tau.\anact}}{\env_1}
				\qquad\text{and}\qquad
				\TYPESTMT{\env_1}{\apcp(\athread)}{\env_2}
				\ .
			\]
	\end{description}
	The above induction concludes the claim.
\end{proof}

\begin{proof}[Proof of \Cref{thm:env-inbetween-atomics}]
	Let $\tau.\anact\in\smash{\freesem}$ and $\athread,\athreadp$ threads with $\athread\neq\athreadp=\threadof{\anact}$.
	Consider some $\TYPESTMT{\envinitt{\athread}}{\flatof{\athread}{\tau}}{\env}$ and $\apavar\in\pvars\cup\gvars$.
	We have $\locksetof{\tau}=\set{\athreadp}$ or $\locksetof{\tau.\anact}=\set{\athreadp}$.
	Hence, $\athread$ has not yet contributed any actions to $\tau$ or it has finished an atomic section.
	We do a case distinction.
	\begin{compactitem}
		\item
			Consider the case $\flatof{\athread}{\tau}=\cskip$.
			By the type rules there is $\envp$ with:
			\begin{align*}
				\checknocomof{\envinitt{\athread}}{\epsilon}{\envp}
				\qquad
				\TYPECOM{\envp}{\cskip}{\envp}
				\qquad
				\checknocomof{\envp}{\epsilon}{\env}
			\end{align*}
			By definition, $\envinitt{\athread}(\apavar)=\emptyset$.
			So $\neg\isvalidof{\envinitt{\athread}(\apavar)}$.
			Hence, $\neg\isvalidof{\envp(\apavar)}$ and $\neg\isvalidof{\env(\apavar)}$ follow from the definition of type inference.
			So we conclude $\env(\apavar)\cap\set{\gactive,\glocal,\gsafeaccess}=\emptyset$.

		\item
			Consider the case $\flatof{\athread}{\tau.\anact}=\astmt;\atomicend$ for some statement $\astmt$.
			By the typing rules there are $\env_1,\env_2,\env_3$ with:
			\begin{align*}
				\TYPESTMT{\envinitt{\athread}}{\astmt}{\env_1}
				\qquad
				\checknocomof{\env_1}{\epsilon}{\env_2}
				\qquad
				\TYPESTMT{\env_2}{\atomicend}{\env_3}
				\qquad
				\checknocomof{\env_3}{\epsilon}{\env}
			\end{align*}
			where the derivation $\TYPESTMT{\env_2}{\atomicend}{\env_3}$ is due to Rule~\ref{rule:end}.
			This means, $\env_3=\rmtransientof{\env_2}$.
			By definition, $\gactive\notin\env_3(\apavar)$.
			Hence, type inference provides $\gactive\notin\env(\apvar)$ as desired.
			
			Now, consider the case $\apvar\notin\lvars{\athread}$.
			There are two cases.
			First, assume $\apvar\in\svars$.
			$\env_3(\apvar)=\emptyset$ by definition.
			Second, assume $\apvar\notin\svars$.
			That is, $\apavar$ is local to another thread $\athreadpp\neq\athread$: $\apvar\in\lvars{\athreadpp}$.
			Then, the claim follows because the initial type binding does not contain local pointers of other threads and the type rules never add type bindings.
	\end{compactitem}
	The above case distinction is complete and concludes the claim thus.
\end{proof}

\begin{proof}[Proof of \Cref{thm:validity-locals,thm:noalias-locals}]
	Let $\tau.\anact\in\freesem$ and $\athread,\athreadp$ threads with $\athread\neq\athreadp=\threadof{\anact}\neq\bot$.
	Let $\apvar\in\pvars\cap\lvars{\athread}$.
	By definition, $\lvars{\athread}\cap\lvars{\athreadp}=\emptyset$.
	Due to the semantics, $\apvar$ does not occur in $\comof{\anact}$.
	Hence, $\apvar\in\validof{\tau}\iff\apvar\in\validof{\tau.\anact}$.
	Moreover, $\heapcomputof{\tau}{\apvar}=\heapcomputof{\tau.\anact}{\apvar}$.
	So every valid alias created by $\anact$ requires a valid alias in $\tau$.
	This is not possible since $\noaliasof{\tau}{\apvar}$.
\end{proof}

\begin{proof}[Proof of \Cref{thm:invariant-eq-implies-valid}]
	Let $\tau.\anact\in\freesem$ PRF with $\anact=(\athread,\invariantof{\apvar=\apvarp},\emptyset)$ and $\invholdsof{\tau.\anact}$.
	The latter gives $\heapcomputof{\tau}{\apvar}=\heapcomputof{\tau}{\apvarp}$.
	Towards a contradiction, assume the claim does not hold.
	Wlog. $\apvar\notin\validof{\tau}$ and $\apvarp\in\validof{\tau}$.
	\Cref{thm:invalidpointers-adr} gives $\heapcomputof{\tau}{\apvar}\neq\heapcomputof{\tau}{\apvarp}$.
	This contradicts the premise and thus concludes the claim.
\end{proof}

\begin{proof}[Proof of \Cref{thm:invariant-active-implies-active-valid-new}]
	Let $\tau.\anact\in\freesem$ PRF with $\anact=(\athread,\invariantof{\activeof{\apvar}},\anup)$ and $\invholdsof{\tau.\anact}$.
	By definition, this means $\heapcomputof{\tau}{\apvar}\in\activeofcomp{\tau}$.
	That is, $\heapcomputof{\tau}{\apvar}\notin\freedof{\tau}$.
	By the contrapositive of \Cref{thm:invalid-freed}: $\apvar\in\validof{\tau}$.
	So $\apvar\in\validof{\tau.\anact}$ by definition.
	Moreover, we get $\heapcomputof{\tau.\anact}{\apvar}\in\activeofcomp{\tau.\anact}$.
	Hence, the remaining property follows from \Cref{thm:retired-state-retired}.
\end{proof}

\begin{proof}[Proof of \Cref{thm:ghost-active-implies-active-notretired-new}]
	Let $\tau.\anact\in\freesem$ PRF with $\anact=(\athread,\ghostof{\activeof{\aghostvar}},\anup)$ and $\invholdsof{\tau.\anact}$.
	By definition, we have $\denotationof{\tau}{\aghostvar}\subseteq\activeofcomp{\tau}$.
	Hence, $\denotationof{\tau}{\aghostvar}\cap\freedof{\tau.\anact}=\emptyset$.
	Moreover, we have $\denotationof{\tau.\anact}{\aghostvar}\subseteq\activeofcomp{\tau.\anact}$ by definition.
	Let $\anadr\in\denotationof{\tau.\anact}{\aghostvar}$.
	This means $\anadr\in\activeofcomp{\tau.\anact}$.
	Then, the remaining property follows from \Cref{thm:retired-state-retired}.
\end{proof}

\begin{proof}[Proof of \Cref{thm:check-preserves-type-guranatees-new-new}]
	Let $\tau\in\freesemobs$.
	Let $\env,\envp$ be two type environments with $\checknocomof{\env}{\epsilon}{\envp}$.
	Let $\apvar\in\pvars$ be a pointer with $\anadr=\heapcomputof{\tau}{\apvar}$.
	Let $\aghostvar\in\gvars$ a ghost variable and $\anadrp\in\denotationof{\tau}{\aghostvar}$.
	Consider the possible assumptions:
	\begin{align}
		\isvalidof{\env(\apvar)}&\implies\apvar\in\validof{\tau}
		\label{proof:check-preserves-type-guarantess:assume-valid-pointer}
		\\
		\isvalidof{\env(\aghostvar)}&\implies\anadrp\notin\freedof{\tau}
		\label{proof:check-preserves-type-guarantess:assume-valid-ghost}
		\\
		\glocal\in\env(\apvar)&\implies\noaliasof{\tau}{\apvar}
		\label{proof:check-preserves-type-guarantess:assume-local}
		\\
		\lreachof{\athread}{\anadr}{\historyof{\tau}}&\subseteq\locsof{\envof{\apvar}}
		\label{proof:check-preserves-type-guarantess:assume-history-ptr}
		\\
		\lreachof{\athread}{\anadrp}{\historyof{\tau}}&\subseteq\locsof{\envof{\aghostvar}}
		\label{proof:check-preserves-type-guarantess:assume-history-ghost}
	\end{align}
	The definition of $\checknocomof{\env}{\epsilon}{\envp}$ gives:
	\begin{align}
		\isvalidof{\envpof{\apvar}}&\implies\isvalidof{\envof{\apvar}}
		\label{proof:check-preserves-type-guarantess:check-valid-pointer}
		\\
		\isvalidof{\envpof{\aghostvar}}&\implies\isvalidof{\envof{\aghostvar}}
		\label{proof:check-preserves-type-guarantess:check-valid-ghost}
		\\
		\glocal\in\envpof{\apvar}&\implies\glocal\in\envof{\apvar}
		\label{proof:check-preserves-type-guarantess:check-local}
		\\
		\locsof{\envof{\apvar}}&\subseteq\locsof{\envpof{\apvar}}
		\label{proof:check-preserves-type-guarantess:check-history-ptr}
		\\
		\locsof{\envof{\aghostvar}}&\subseteq\locsof{\envpof{\aghostvar}}
		\label{proof:check-preserves-type-guarantess:check-history-ghost}
	\end{align}
	Then, we combine
	\begin{compactitem}
	 	\item If \eqref{proof:check-preserves-type-guarantess:assume-valid-pointer} holds, then \eqref{proof:check-preserves-type-guarantess:check-valid-pointer} gives $\isvalidof{\envp(\apvar)}\implies\apvar\in\validof{\tau}$,
	 	\item If \eqref{proof:check-preserves-type-guarantess:assume-valid-ghost} holds, then \eqref{proof:check-preserves-type-guarantess:check-valid-ghost} gives $\isvalidof{\envp(\aghostvar)}\implies\anadrp\notin\freedof{\tau}$,
	 	\item If \eqref{proof:check-preserves-type-guarantess:assume-local} holds, then \eqref{proof:check-preserves-type-guarantess:check-local} gives $\glocal\in\envp(\apvar)\implies\noaliasof{\tau}{\apvar}$, and
	 	\item If \eqref{proof:check-preserves-type-guarantess:assume-history-ptr} holds, then \eqref{proof:check-preserves-type-guarantess:check-history-ptr} gives $\lreachof{\athread}{\anadr}{\historyof{\tau}}\subseteq\locsof{\envpof{\apvar}}$.
	 	\item If \eqref{proof:check-preserves-type-guarantess:assume-history-ghost} holds, then \eqref{proof:check-preserves-type-guarantess:check-history-ghost} gives $\lreachof{\athread}{\anadrp}{\historyof{\tau}}\subseteq\locsof{\envpof{\aghostvar}}$.
	\end{compactitem}
	This concludes the claim.
\end{proof}



\begin{proof}[Proof of \Cref{thm:type-guarantees-vs-computations-new}]
	Let $\tau\in\freesemobs$.
	We show:
	\begin{align*}
		&\freedof{\tau}\cap\retiredof{\tau}=\emptyset
		\\
		\text{and}\qquad
		\forall\athread~
		\forall\env
		.~~~
		&\TYPESTMT{\envinitt{\athread}}{\flatof{\athread}{\tau}}{\env}
		\\&\implies
		\forall\apvar,\aghostvar.~~
		\left(
		\begin{aligned}
			&\lreachof{\athread}{\heapcomputof{\tau}{\apvar}}{\historyof{\tau}}\subseteq\locsof{\envof{\apvar}}
			\\\wedge~
			&\isvalidof{\env(\apvar)}\implies\apvar\in\validof{\tau}
			\\\wedge~
			&\glocal\in\env(\apvar)\implies\noaliasof{\tau}{\apvar}
			\\\wedge~
			&\forall\anadr\in\denotationof{\tau}{\aghostvar}.~
			 \lreachof{\athread}{\anadr}{\historyof{\tau}}\subseteq\locsof{\envof{\aghostvar}}
			\\\wedge~
			&\isvalidof{\env(\aghostvar)}\implies\denotationof{\tau}{\aghostvar}\cap\freedof{\tau}=\emptyset
		\end{aligned}
		\right)
	\end{align*}
	We proceed by induction over the structure of $\tau$.
	\begin{description}[labelwidth=6mm,leftmargin=8mm,itemindent=0mm]
		\item[IB:]
			Let $\tau=\epsilon$.
			Let $\athread$ be some thread and let $\env$ be some type environment such that $\TYPESTMT{\envinitt{\athread}}{\flatof{\athread}{\tau}}{\env}$.
			Note that $\flatof{\athread}{\tau}=\cskip$.
			By definition, $\freedof{\tau}\cap\retiredof{\tau}=\emptyset$.
			Consider some thread $\athread$,  some $\apvar\in\pvars$ and some $\aghostvar\in\gvars$.
			By definition, $\apvar\in\validof{\tau}$.
			This gives the desired implication $\isvalidof{\envof{\apvar}}\implies\apvar\in\validof{\tau}$.
			By the type rules we have:
			\begin{align*}
				\checkof{\envinitt{\athread}}{\epsilon}{\env_1}
				\qquad
				\TYPECOM{\env_1}{\cskip}{\env_1}
				\qquad
				\checkof{\env_1}{\epsilon}{\env}
			\end{align*}
			Since $\glocal\notin\envinitt{\athread}(\apvar)$, we get $\glocal\notin\envof{\apvar}$ by the definition of type inference.
			So we satisfy the implication $\glocal\in\env(\apvar)\implies\noaliasof{\tau}{\apvar}$.
			Moreover, $\freedof{\tau}=\emptyset$.
			So $\isvalidof{\env(\aghostvar)}\implies\anadr\notin\freedof{\tau}$ is satisfied for every $\anadr\in\adr$ as well.
			By \Cref{thm:check-epsilon-transitive} we have $\checknocomof{\envinitt{\athread}}{\epsilon}{\env}$.
			That is, $\locsof{\envinitt{\athread}(\apvar)}\subseteq\locsof{\envof{\apvar}}$.
			By definition, $\envinitt{\athread}(\apvar)=\emptyset$.
			That is, $\locsof{\envinitt{\athread}(\apvar)}=\top\times\top$.
			Consequently, $\lreachof{\athread}{\heapcomputof{\tau}{\apvar}}{\historyof{\tau}}\subseteq\locsof{\envinitt{\athread}(\apvar)}$.
			Similarly for $\aghostvar$.
			Altogether, this concludes the base case.


		\item[IH:]
			Let the claim hold for $\tau$.

		\item[IS:]
			Consider $\tau'=\tau.\anact$ with $\anact=(\athreadp,\acom,\anup)$, $\tau.\anact$ PRF, and $\invholdsof{\tau.\anact}$.
			Let $\athread$ be some arbitrary thread we establish the claim for.
			We do a case distinction on $\athreadp$.

			\ad{$\athread=\athreadp\neq\bot$}
			We have \[
				\flatof{\athread}{\tau.\anact} = \flatof{\athread}{\tau};\acom
				\qquad\text{and}\qquad
				\freedof{\tau.\anact}\subseteq\freedof{\tau}
				\ .
			\]
			Assume that $\tau.\anact$ can be typed for $\athread$ as nothing needs to be shown otherwise.
			That is, assume there are $\env_3$ such that \[
				\TYPESTMT{\envinitt{\athread}}{\flatof{\athread}{\tau.\anact}}{\env_3}
				\ .
			\]
			Due to the type rules and the above equality, we know that there are some $\env_1,\env_2$ such that:
			\begin{align*}
				\TYPESTMT{\envinitt{\athread}}{\flatof{\athread}{\tau}}{\env_0}
				\qquad
				\checknocomof{\env_0}{\epsilon}{\env_1}
				\qquad
				\TYPESTMT{\env_1}{\acom}{\env_2}
				\qquad
				\checknocomof{\env_2}{\epsilon}{\env_3}
			\end{align*}
			where $\TYPESTMT{\env_1}{\acom}{\env_2}$ is derived by neither Rule \ref{rule:seq} nor Rule \ref{rule:choice} nor Rule \ref{rule:loop} nor Rule \ref{rule:infer}.
			By induction, the claim holds for $\env_0$.
			So by \Cref{thm:check-preserves-type-guranatees-new-new} the claim also holds for $\env_1$.
			If the claim holds for $\env_2$, then the claim follow for $\env_3$ from \Cref{thm:check-preserves-type-guranatees-new-new} again.
			So it remains to show that the claim holds for $\env_2$ relying on $\env_1$.
			We do a case distinction over the type rules applied for the derivation $\TYPECOM{\env_1}{\acom}{\env_2}$.
			To that end, let $\apvar\in\pvars$ be some arbitrary pointer variable and let $\aghostvar\in\gvars$ be some arbitrary angel.
			\newcommand{\thep}{\anadrpp_\apvar}
			\newcommand{\thea}{\anadrpp_\aghostvar}
			Let $\heapcomputof{\tau}{\apvar}=\thep$ and let $\thea\in\denotationof{\tau}{\aghostvar}$.
			We show
			\begin{enumerate}[label=({G}\arabic*),leftmargin=1.3cm] 
				\item
					\label[property]{proof:type-guarantees-vs-computations:goal-t:histories-ptr}
					$\lreachof{\athread}{\thep}{\historyof{\tau.\anact}}\subseteq\locsof{\env_2(\apvar)}$
				\item
					\label[property]{proof:type-guarantees-vs-computations:goal-t:histories-ghost}
					$\lreachof{\athread}{\thea}{\historyof{\tau.\anact}}\subseteq\locsof{\env_2(\aghostvar)}$
				\item
					\label[property]{proof:type-guarantees-vs-computations:goal-t:valid-ptr}
					$\isvalidof{\env_2(\apvar)}\implies\apvar\in\validof{\tau.\anact}$
				\item
					\label[property]{proof:type-guarantees-vs-computations:goal-t:local-ptr}
					$\glocal\in\env_2(\apvar)\implies\noaliasof{\tau.\anact}{\apvar}$
				\item
					\label[property]{proof:type-guarantees-vs-computations:goal-t:valid-angle}
					$\isvalidof{\env_2(\aghostvar)}\implies\thea\notin\freedof{\tau.\anact}$
			\end{enumerate}
			\medskip

			\begin{casedistinction}
				\item[Rule \ref{rule:begin}]
					By definition, we have $\env_2=\env_1$, $\heapcomput{\tau}=\heapcomput{\tau.\anact}$, $\validof{\tau}=\validof{\tau.\anact}$, $\freedof{\tau}=\freedof{\tau.\anact}$, $\denotation{\tau}=\denotation{\tau.\anact}$, and $\historyof{\tau}=\historyof{\tau.\anact}$.
					Hence, the claim follows by induction.

				\item[Rule \ref{rule:end}]
					By definition, we have $\env_2=\rmtransientof{\env_1}$.
					By definition, $\env_2(\apvar)\subseteq\env_1(\apvar)$ and $\env_2(\aghostvar)\subseteq\env_1(\aghostvar)$.
					This means we have $\locsof{\env_2(\apvar)}\supseteq\locsof{\env_1(\apvar)}$ and $\locsof{\env_2(\aghostvar)}\supseteq\locsof{\env_1(\aghostvar)}$.
					As in the previous case, we have $\heapcomput{\tau}=\heapcomput{\tau.\anact}$, $\validof{\tau}=\validof{\tau.\anact}$, $\freedof{\tau}=\freedof{\tau.\anact}$, $\denotation{\tau}=\denotation{\tau.\anact}$, and $\historyof{\tau}=\historyof{\tau.\anact}$.
					So the claim follows by induction.


				\item[Rule \ref{rule:assign1}]
					We have $\freedof{\tau}=\freedof{\tau.\anact}$ and $\invholdsof{\tau}\equiv\invholdsof{\tau.\anact}$.
					Hence, \Cref{proof:type-guarantees-vs-computations:goal-t:valid-angle} holds by induction.
					If $\glocal\in\env_2(\apvar)$, then $\apvar$ does not appear in $\acom$ by definition of Rule \ref{rule:assign1}.
					Hence, no alias of $\apvar$ is created by $\acom$ and \Cref{proof:type-guarantees-vs-computations:goal-t:local-ptr} continues to hold by induction.
					If $\isvalidof{\env_2}$ then there are two cases.
					First, $\acom\equiv\apvar:=\apvarp$.
					Then, $\env_2(\apvar)=\env_1(\apvarp)\setminus\set{\glocal}$.
					Hence, $\apvarp\in\validof{\tau}$ by induction.
					This leads to $\apvar\in\validof{\tau.\anact}$ as desired.
					Second, $\apvar$ is not assigned to by $\acom$.
					Then, $\env_2(\apvar)\subseteq\env_1(\apvar)$.
					Hence, $\apvar\in\validof{\tau}$ by induction and $\apvar\in\validof{\tau.\anact}$ thus.
					This concludes \Cref{proof:type-guarantees-vs-computations:goal-t:valid-ptr}.
					Note that $\historyof{\tau.\anact}=\historyof{\tau}$.
					\Cref{proof:type-guarantees-vs-computations:goal-t:histories-ghost} remains to hold by induction since $\aghostvar$ is not affected by $\acom$.
					It remains to establish \Cref{proof:type-guarantees-vs-computations:goal-t:histories-ptr}.
					If $\apvar$ does not occur in $\acom$, nothing needs to be show.
					So assume $\apvar$ occurs in $\acom$.
					In the first case, $\apvar$ occurs on the right-hand side of the assignment in $\acom$.
					Then, $\env_2(\apvar)=\env_1(\apvar)\setminus\set{\glocal}$.
					That is, $\locsof{\env_1(\apvar)}\subseteq\locsof{\env_2(\apvar)}$.
					Then, \Cref{proof:type-guarantees-vs-computations:goal-t:histories-ptr} follows by induction.
					Otherwise, $\apvar$ appears on the left-hand side of the assignment in $\acom$.
					So $\acom\equiv\apvar:=\apvarp$ for some $\apvarp$.
					Then, $\env_2(\apvar)=\env_1(\apvarp)\setminus\set{\glocal}$.
					Moreover, $\heapcomputof{\tau.\anact}{\apvar}=\heapcomputof{\tau}{\apvarp}=\thep$.
					By induction, we have $\lreachof{\athread}{\thep}{\historyof{\tau}}\subseteq\locsof{\env_1(\apvarp)}$.
					Hence, $\lreachof{\athread}{\thep}{\historyof{\tau}}\subseteq\locsof{\env_1(\apvarp)\setminus\set{\glocal}}$.
					We get the desired $\lreachof{\athread}{\thep}{\historyof{\tau.\anact}}\subseteq\locsof{\env_2(\apvarp)}$ by definition.

				\item[Rule \ref{rule:assign2}]
					Analogous to the previous case for \ref{rule:assign1}.

				\item[Rule \ref{rule:assign3}]
					Analogous to the previous case for \ref{rule:assign1}.
				

				\item[Rule \ref{rule:assign4},\ref{rule:assign5},\ref{rule:assign6},\ref{rule:assume2}]
					We have $\env_2=\env_1$, $\heapcomput{\tau}=\heapcomput{\tau.\anact}$, $\denotation{\tau}=\denotation{\tau.\anact}$, $\validof{\tau}=\validof{\tau.\anact}$, $\freedof{\tau}=\freedof{\tau.\anact}$, and $\historyof{\tau}=\historyof{\tau.\anact}$.
					Hence, the claim follows by induction.

				\item[Rule \ref{rule:assume1}]
					We have $\freedof{\tau}=\freedof{\tau.\anact}$ and $\invholdsof{\tau}\equiv\invholdsof{\tau.\anact}$.
					Hence, \Cref{proof:type-guarantees-vs-computations:goal-t:valid-angle} holds by induction.
					If $\glocal\in\env_2(\apvar)$, then $\glocal\in\env_1(\apvar)$ due to the type rule.
					This means $\noaliasof{\tau}{\apvar}$.
					Note that $\heapcomput{\tau}=\heapcomput{\tau.\anact}$.
					Moreover, since $\tau.\anact$ is PRF we know that the pointers in $\acom$ are valid.
					Hence, $\validof{\tau}=\validof{\tau.\anact}$.
					So we get $\noaliasof{\tau.\anact}{\apvar}$ by definition.
					This establishes \Cref{proof:type-guarantees-vs-computations:goal-t:local-ptr}.
					If $\isvalidof{\env_2(\apvar)}$, then there are two cases.
					First, $\isvalidof{\env_1(\apvar)}$ holds.
					This means we have $\apvar\in\validof{\tau}$ by induction.
					As stated above, this results in $\apvar\in\validof{\tau.\anact}$.
					Second, $\neg\isvalidof{\env_1(\apvar)}$ holds.
					Then, $\apvar$ is validated by $\acom$.
					For this to happen, $\apvar$ must appear in $\acom$.
					Since $\tau.\anact$ is assumed to be PRF, we know $\apvar\in\validof{\tau}$ must hold.
					So $\apvar\in\validof{\tau.\anact}$ as before.
					This gives \Cref{proof:type-guarantees-vs-computations:goal-t:valid-ptr}.
					Note that we have $\historyof{\tau.\anact}=\historyof{\tau}$ and $\denotation{\tau}=\denotation{\tau.\anact}$.
					So It remains to establish \Cref{proof:type-guarantees-vs-computations:goal-t:histories-ghost} continues to hold by induction since $\aghostvar$ is not affected.
					It remains to establish \Cref{proof:type-guarantees-vs-computations:goal-t:histories-ptr}.
					If $\apvar$ does not occur in $\acom$ nothing needs to be show.
					So assume $\apvar$ appears in $\acom$.
					Wlog. $\acom$ is of the form $\acom\equiv\assumeof{\apvar=\apvarp}$.
					Due to the type rule, we have $\env_2(\apvar)=\env_1(\apvar)\setminus\set{\glocal}$.
					By the semantics, we have $\thep=\heapcomputof{\tau.\anact}{\apvar}=\heapcomputof{\tau}{\apvar}=\heapcomputof{\tau}{\apvarp}=\heapcomputof{\tau.\anact}{\apvarp}$.
					So by induction, $\lreachof{\athread}{\thep}{\historyof{\tau}}\subseteq\locsof{\env_1(\apvar)}\cap\locsof{\env_1(\apvar)}=\locsof{\env_1(\apvar)\wedge\env_2(\apvar)}=\locsof{\env_2(\apvar)}$.
					This concludes \Cref{proof:type-guarantees-vs-computations:goal-t:histories-ptr}.

				\item[Rule \ref{rule:equal}]
					If $\isvalidof{\env_2(\apvar)}$, then there are two cases.
					First, $\isvalidof{\env_1(\apvar)}$ holds.
					This means we have $\apvar\in\validof{\tau}$ by induction.
					Then, $\apvar\in\validof{\tau.\anact}$ because $\validof{\tau}=\validof{\tau.\anact}$ by definition.
					Second, $\neg\isvalidof{\env_1(\apvar)}$ holds.
					Then, $\apvar$ is validated by $\acom$.
					For this to happen, $\acom$ must be of the form $\acom\equiv\invariantof{\apvar=\apvarp}$ with $\isvalidof{\apvarp}$.
					By induction, we have $\apvarp\in\validof{\tau}$.
					Then, \Cref{thm:invariant-eq-implies-valid} gives $\apvar\in\validof{\tau}$.
					Hence, $\apvar\in\validof{\tau.\anact}$ as before.
					This concludes \Cref{proof:type-guarantees-vs-computations:goal-t:valid-ptr}.
					The remaining properties follow analogously to the previous case for \ref{rule:assume1}.
					For \Cref{proof:type-guarantees-vs-computations:goal-t:histories-ptr} note that $\denotation{\tau}\iff\denotation{\tau.\anact}$ by definition together with the fact that $\invholdsof{\tau.\anact}$ holds.

				\item[Rule \ref{rule:active} for pointers]
					If $\isvalidof{\env_2(\apvar)}$, then there are two cases.
					First, $\isvalidof{\env_1(\apvar)}$ holds.
					This means $\apvar\in\validof{\tau}$ by induction.
					Then, $\apvar\in\validof{\tau.\anact}$ because $\validof{\tau}=\validof{\tau.\anact}$.
					Second, $\neg\isvalidof{\env_1(\apvar)}$ holds.
					Then, $\apvar$ is validated by $\acom$.
					So $\acom$ must be of the form $\acom\equiv\invariantof{\activeof{\apvar}}$.
					Since the invariants hold by assumption, $\invholdsof{\tau.\anact}$, we can invoke \Cref{thm:invariant-active-implies-active-valid-new}.
					It gives $\apvar\in\validof{\tau.\anact}$.
					Altogether, this concludes \Cref{proof:type-guarantees-vs-computations:goal-t:valid-ptr}.
					If $\glocal\in\env_2(\apvar)$, then $\glocal\in\env_1(\apvar)$ due to the type rule.
					Hence, the induction hypothesis together with $\heapcomput{\tau}=\heapcomput{\tau.\anact}$ and $\validof{\tau}=\validof{\tau.\anact}$ gives \Cref{proof:type-guarantees-vs-computations:goal-t:local-ptr}.
					For \Cref{proof:type-guarantees-vs-computations:goal-t:valid-angle} note that $\env_2(\aghostvar)=\env_1(\aghostvar)$ and that $\denotationof{\tau}{\aghostvar}=\denotationof{\tau.\anact}{\aghostvar}$ because $\invholdsof{\tau.\anact}$ by assumption.
					So \Cref{proof:type-guarantees-vs-computations:goal-t:valid-angle} follows by induction.
					Note that we have $\historyof{\tau.\anact}=\historyof{\tau}$.
					Since $\aghostvar$ is not affected, \Cref{proof:type-guarantees-vs-computations:goal-t:histories-ghost} follows by induction.
					We show \Cref{proof:type-guarantees-vs-computations:goal-t:histories-ptr}.
					If $\acom$ does not contain $\apvar$, nothing needs to be show.
					Otherwise, $\acom$ is of the form $\acom\equiv\invariantof{\activeof{\apvar}}$.
					From \Cref{thm:invariant-active-implies-active-valid-new} we get $\lreachof{\athread}{\thep}{\historyof{\tau.\anact}}\subseteq\locsof{\gactive}$.
					From induction, we get $\lreachof{\athread}{\thep}{\historyof{\tau.\anact}}\subseteq\locsof{\env_1(\apvar)}$.
					By definition, this establishes the desired $\lreachof{\athread}{\thep}{\historyof{\tau.\anact}}\subseteq\locsof{\env_1(\apvar)\wedge\gactive}=\locsof{\env_2(\apvar)}$ and thus concludes \Cref{proof:type-guarantees-vs-computations:goal-t:histories-ptr}.

				\item[Rule \ref{rule:active} for angels]
					Using \Cref{thm:ghost-active-implies-active-notretired-new}, this case is analogous to the previous, pointer case.

				\item[Rule \ref{rule:malloc}]
					Recall that $\tau.\anact\in\freesemobs$.
					So $\anact$ allocates a fresh address.
					Hence, $\freedof{\tau}=\freedof{\tau.\anact}$ by definition.
					Moreover, $\invholdsof{\tau}\equiv\invholdsof{\tau.\anact}$.
					Then, \Cref{proof:type-guarantees-vs-computations:goal-t:valid-angle} holds by induction.
					\Cref{proof:type-guarantees-vs-computations:goal-t:histories-ghost} remains to hold by induction since $\aghostvar$ is not affected.
					Consider \Cref{proof:type-guarantees-vs-computations:goal-t:histories-ptr}.
					If $\apvar$ does not appear in $\acom$, nothing needs to be shown.
					Otherwise, $\acom\equiv\apvar:=\malloc$.
					From \Cref{thm:disjoint-fresh-retired} we get $\thep\notin\retiredof{\tau}$.
					By definition then, $\thep\notin\retiredof{\tau.\anact}$.
					Then, \Cref{thm:retired-state-retired} gives $\lreach{\athread}{\thep}{\historyof{\tau.\anact}}\in\locsof{\glocal}$.
					And by definition we have $\env_2(\apvar)=\set{\glocal}$.
					This concludes \Cref{proof:type-guarantees-vs-computations:goal-t:histories-ptr}.
					
					For the remaining properties, we do a case distinction on $\apvarp$.
					First, consider the case $\apvar\neq\apvarp$.
					The, $\env_2(\apvar)=\env_1(\apvar)$ and $\apvar\in\validof{\tau.\anact}\iff\apvar\in\validof{\tau}$.
					So \Cref{proof:type-guarantees-vs-computations:goal-t:valid-ptr} follows by induction.
					Also by induction, we have $\noaliasof{\tau}{\apvar}$.
					Due to $\anup$, we have $\heapcomputof{\tau.\anact}{\apvar}=\heapcomputof{\tau}{\apvar}\neq\segval$.
					Towards a contradiction, assume $\neq\noaliasof{\tau.\anact}{\apvar}$.
					By definition, there is some pointer expression $\apexp\in\validof{\tau.\anact}\setminus\set{\apvar}$ with $\heapcomputof{\tau.\anact}{\apvar}=\heapcomputof{\tau.\anact}{\apexp}$.
					Since $\heapcomputof{\tau.\anact}{\apvarp}\in\freshof{\tau}$ we have $\heapcomputof{\tau.\anact}{\apvarp}\notin\rangeof{\heapcomput{\tau}}$ due to \Cref{thm:fresh-notin-range}.
					Hence, $\apexp\neq\apvarp$.
					Moreover, $\apexp\not\equiv\psel{\heapcomputof{\tau.\anact}{\apvarp}}$ because $\heapcomputof{\tau.\anact}{\psel{\heapcomputof{\tau.\anact}{\apvarp}}}=\segval$.
					Consequently, $\apexp$ is not affected by $\anact$.
					This means we have $\heapcomputof{\tau.\anact}{\apexp}=\heapcomputof{\tau}{\apexp}$ and $\apexp\in\validof{\tau}\setminus\set{\apvar}$.
					That is, $\neq\noaliasof{\tau}{\apvar}$.
					Since this contradicts induction, we conclude the desired $\noaliasof{\tau.\anact}{\apvar}$.
					This establishes \Cref{proof:type-guarantees-vs-computations:goal-t:local-ptr}.

					Second, consider the case $\apvar=\apvarp$.
					Then, $\env_2(\apvar)=\set{\glocal}$.
					By \Cref{thm:pointers-to-freed-are-invalid,thm:fresh-notin-range} we have $\anadr\notin\heapcomputof{\tau}{\validof{\tau}}$.
					Hence, we get $\anadr\notin\heapcomputof{\tau.\anact}{\validof{\tau.\anact}\setminus\set{\apvar}}$.
					This means $\noaliasof{\tau.\anact}{\apvar}$ holds by definition.
					This establishes \Cref{proof:type-guarantees-vs-computations:goal-t:local-ptr}.
					And by definition $\apvar\in\validof{\tau.\anact}$.
					So \Cref{proof:type-guarantees-vs-computations:goal-t:valid-ptr} holds as well.

				\item[Rule \ref{rule:enter}]
					Note that we have $\heapcomput{\tau}=\heapcomput{\tau.\anact}$ and $\denotation{\tau}=\denotation{\tau.\anact}$ by definition.
					By induction, we have $\lreachof{\athread}{\thep}{\historyof{\tau}}\subseteq\locsof{\env_1(\apvar)}$.
					Type inference gives $\lpostof{\apvar}{\acom}{\locsof{\env_1(\apvar)}}\subseteq\locsof{\env_2(\apvar)}$.
					Since $\historyof{\tau.\anact}\in\specof{\anobs}$ due to the semantics, we get $\historyof{\tau.\anact}\in\lpostof{\apvar}{\acom}{\locsof{\env_1(\apvar)}}$.
					Hence, $\lpostof{\apvar}{\acom}{\locsof{\env_2(\apvar)}}\subseteq\locsof{\env_2(\apvar)}$ as desired.
					This concludes \Cref{proof:type-guarantees-vs-computations:goal-t:histories-ptr}.
					\Cref{proof:type-guarantees-vs-computations:goal-t:histories-ghost} follows along the same lines.
					If $\isvalidof{\env_2(\apvar)}$, then $\isvalidof{\env_1(\apvar)}$ by the definition of type inference.
					Moreover, $\validof{\tau}=\validof{\tau.\anact}$.
					Hence, \Cref{proof:type-guarantees-vs-computations:goal-t:valid-ptr} follows by induction.
					Similarly, $\isvalidof{\env_2(\aghostvar)}$ implies $\isvalidof{\env_1(\aghostvar)}$.
					So \Cref{proof:type-guarantees-vs-computations:goal-t:valid-angle} follows by induction together with $\freedof{\tau}=\freedof{\tau.\anact}$ and $\invholdsof{\tau}\equiv\invholdsof{\tau.\anact}$.
					If $\glocal\in\env_2(\apvar)$, then $\glocal\in\env_1(\apvar)$ by definition.
					Note that $\heapcomput{\tau}=\heapcomput{\tau.\anact}$.
					Hence, $\noaliasof{\tau.\anact}{\apvar}$ follows by induction.
					This establishes \Cref{proof:type-guarantees-vs-computations:goal-t:local-ptr}.

				\item[Rule \ref{rule:exit}]
					Analogously to the previous case for \ref{rule:enter}.

				\item[Rule \ref{rule:angel}]
					By definition, $\heapcomput{\tau}=\heapcomput{\tau.\anact}$, $\validof{\tau}=\validof{\tau.\anact}$, and $\freedof{\tau}=\freedof{\tau.\anact}$.
					Moreover, the type rule gives $\env_2(\apvar)=\env_1(\apvar)$.
					So \Cref{proof:type-guarantees-vs-computations:goal-t:valid-ptr,proof:type-guarantees-vs-computations:goal-t:local-ptr} follow by induction.
					If $\isvalidof{\env_2(\aghostvar)}$, then $\aghostvar$ does not appear in $\acom$.
					So by the type rule we have $\env_2(\aghostvar)=\env_1(\aghostvar)$.
					Hence, \Cref{proof:type-guarantees-vs-computations:goal-t:valid-angle} follows by induction.
					Since $\apvar$ is not affected, we get \Cref{proof:type-guarantees-vs-computations:goal-t:histories-ptr} from induction.
					It remains to consider \Cref{proof:type-guarantees-vs-computations:goal-t:histories-ghost}.
					If $\aghostvar$ does not appear in $\acom$, nothing needs to be show because $\denotationof{\tau.\anact}{\aghostvar}=\denotationof{\tau}{\aghostvar}$.
					Otherwise, we have $\env_2(\aghostvar)=\emptyset$.
					This concludes \Cref{proof:type-guarantees-vs-computations:goal-t:histories-ghost}.

				\item[Rule \ref{rule:member}]
					By definition, $\heapcomput{\tau}=\heapcomput{\tau.\anact}$, $\validof{\tau}=\validof{\tau.\anact}$, and $\freedof{\tau}=\freedof{\tau.\anact}$.
					If $\isvalidof{\env_2(\aghostvar)}$, then $\isvalidof{\env_1(\aghostvar)}$ holds since $\env_2(\aghostvar)=\env_1(\aghostvar)$.
					So \Cref{proof:type-guarantees-vs-computations:goal-t:valid-angle} follows by induction.
					If $\glocal\in\env_2(\apvar)$, then $\glocal\in\env_1(\apvar)$ since angels cannot acquire guarantee $\glocal$ due to the type rules. 
					Hence, \Cref{proof:type-guarantees-vs-computations:goal-t:local-ptr} follows by induction.
					If $\isvalidof{\env_2(\apvar)}$, then there are two cases.
					First, $\isvalidof{\env_1(\apvar)}$ holds.
					Then, $\apvar\in\validof{\tau}$ by induction and thus $\apvar\in\validof{\tau.\anact}$.
					Second, $\neg\isvalidof{\env_1(\apvar)}$ holds.
					Then, $\apvar$ is validated by $\acom$.
					For this to happen, we must have $\acom\equiv\invariantof{\containsof{\apvar}{\aghostvarp}}$ with $\isvalidof{\env_1(\aghostvarp)}$.
					Since $\invholdsof{\tau.\anact}$ hold, we get $\heapcomputof{\tau}{\apvar}\in\denotationof{\tau.\anact}{\aghostvarp}$.
					Moreover, $\isvalidof{\env_1(\aghostvarp)}$ gives $\isvalidof{\env_2(\aghostvarp)}$.
					So the already established \Cref{proof:type-guarantees-vs-computations:goal-t:valid-angle} yields $\heapcomputof{\tau}{\apvar}\notin\freedof{\tau.\anact}$.
					Hence, $\heapcomputof{\tau.\anact}{\apvar}\notin\freedof{\tau.\anact}$.
					Now, $\apvar\in\validof{\tau.\anact}$ by the contrapositive of \Cref{thm:invalid-freed}.
					This establishes \Cref{proof:type-guarantees-vs-computations:goal-t:valid-ptr}.
					Consider now \Cref{proof:type-guarantees-vs-computations:goal-t:histories-ghost}.
					If $\aghostvarp$ does not appear in $\acom$, nothing needs to be shown.
					Otherwise, $\acom\equiv\invariantof{\containsof{\apvarp}{\aghostvar}}$.
					Due to assumption of $\invholdsof{\tau.\anact}$, we have $\heapcomputof{\tau.\anact}{\apvarp}\in\denotationof{\tau.\anact}{\aghostvar}$.
					By definition, $\heapcomputof{\tau.\anact}{\apvarp}=\heapcomputof{\tau}{\apvarp}$.
					Again by definition, we get $\heapcomputof{\tau}{\apvarp}\in\denotationof{\tau}{\aghostvar}$ because $\denotationof{\tau}{\aghostvar}$ is defined to be the maximal set.
					This means $\denotationof{\tau}{\aghostvar}=\denotationof{\tau.\anact}{\aghostvar}$.
					Hence, we can conclude \Cref{proof:type-guarantees-vs-computations:goal-t:histories-ghost} by induction.
					It remains to show \Cref{proof:type-guarantees-vs-computations:goal-t:histories-ptr}.
					If $\apvar$ does not occur in $\acom$, nothing needs to be shown.
					Otherwise, $\acom\equiv\invariantof{\containsof{\apvar}{\aghostvarp}}$.
					Similarly to the above, we get $\thep\in\denotationof{\tau}{\aghostvarp}$.
					So we get by induction: \[
						\lreachof{\athread}{\thep}{\historyof{\tau}}
						\subseteq\locsof{\env_1(\apvar)}\cap\locsof{\env_1(\aghostvarp)}
						=\locsof{\env_1(\apvar)\wedge\env_1(\aghostvarp)}
						=\locsof{\env_2(\apvar)}
						\ .
					\]
					This concludes \Cref{proof:type-guarantees-vs-computations:goal-t:histories-ptr} by induction together with $\historyof{\tau.\anact}=\historyof{\tau}$.
			\end{casedistinction}
			\smallskip
			The above case distinction is complete and shows that the claim holds for $\athread$ and $\env_2$.
			Hence, the claim holds for $\env_3$ as reasoned above.

			Now, we show that $\freedof{\tau.\anact}\cap\retiredof{\tau.\anact}=\emptyset$ holds.
			We have $\freedof{\tau.\anact}\subseteq\freedof{\tau}$ by the fact that $\anact$ cannot be a free due to the fact that $\athreadp\neq\bot$.
			If $\retiredof{\tau.\anact}\subseteq\retiredof{\tau}$ holds, then the claim follows by induction.
			Otherwise, we have $\retiredof{\tau.\anact}=\retiredof{\tau}\cup\set{\anadr}$ with $\acom\equiv\enterof{\retireof{\apvar}}$ and $\heapcomputof{\tau}{\apvar}=\anadr$.
			Since $\TYPECOM{\env_1}{\acom}{\env_2}$ holds, we know $\apvar\in\validof{\tau}$.
			By the contrapositive of \Cref{thm:pointers-to-freed-are-invalid}, $\anadr\notin\freedof{\tau}$.
			So by induction, $\freedof{\tau.\anact}\cap\retiredof{\tau.\anact}=\emptyset$.

			\medskip
			\ad{$\athread\neq\athreadp\neq\bot$}
			We have \[
				\TYPESTMT{\envinitt{\athread}}{\flatof{\athread}{\tau}}{\env_4}
				\quad\text{and}\quad
				\flatof{\athread}{\tau.\anact} = \flatof{\athread}{\tau}
				\quad\text{and}\quad
				\TYPESTMT{\envinitt{\athread}}{\flatof{\athread}{\tau.\anact}}{\env_4}
				\ .
			\]
			The induction hypothesis applies to $\TYPESTMT{\envinitt{\athread}}{\flatof{\athread}{\tau}}{\env_4}$.
			We have to show that the desired properties are stable under interference.

			Consider some $\apvar\in\domof{\env_4}\cap\pvars$.
			If $\glocal\in\env_4(\apvar)$, then $\apvar\in\lvars{\athread}$ by the contrapositive of \Cref{thm:env-inbetween-atomics}.
			By induction, we have $\noaliasof{\tau}{\apvar}$.
			Then, \Cref{thm:noalias-locals} gives $\noaliasof{\tau.\anact}{\apvar}$.
			If $\isvalidof{\env_4(\apvar)}$, then $\apvar\in\validof{\tau}$ by induction.
			We invoke \Cref{thm:noalias-locals} and get $\apvar\in\validof{\tau.\anact}$.

			Consider some $\aghostvar\in\domof{\env_4}\cap\gvars$.
			If $\isvalidof{\env_4(\aghostvar)}$, then $\aghostvar\in\lvars{\athread}$ by the contrapositive of \Cref{thm:env-inbetween-atomics}.
			By induction we get $\denotationof{\tau}{\aghostvar}\cap\freedof{\tau}=\emptyset$.
			Due to $\aghostvar$ being local to a thread other than the one executing $\anact$, $\aghostvar$ cannot occur in $\acom$.
			Consequently, $\denotationof{\tau.\anact}{\aghostvar}=\denotationof{\tau}{\aghostvar}$.
			So $\freedof{\tau.\anact}\subseteq\freedof{\tau}$ due to $\athreadp\neq\bot$ gives $\denotationof{\tau.\anact}{\aghostvar}\cap\freedof{\tau.\anact}=\emptyset$ as desired.


			It remains to establish $\historyof{\tau.\anact}\subseteq\locsof{\env_4}$.
			If $\historyof{\tau.\anact}=\historyof{\tau}$, then the claim follows by induction.
			So let $\historyof{\tau.\anact}=\ahist.\anevent$ with $\historyof{\tau}=\ahist$.
			We have $\project{\anevent}{\athread}=\epsilon$.
			By definition of closedness under interference, we have for all $\anadr$ and $i$:
			\begin{align*}
				\lreachof{\athread}{\anadr}{\ahist}\subseteq\locsof{\gsafeaccess}
				&\implies
				\lreachof{\athread}{\anadr}{\ahist.\anevent}\subseteq\locsof{\gsafeaccess}
				\\\text{and}\qquad
				\lreachof{\athread}{\anadr}{\ahist}\subseteq\locsof{\gcustom{i}}
				&\implies
				\lreachof{\athread}{\anadr}{\ahist.\anevent}\subseteq\locsof{\gcustom{i}}
				\ .
			\end{align*}
			Consider some $\apavar\in\domof{\env_4}$.
			By \Cref{thm:env-inbetween-atomics} we know $\gactive\notin\env_4(\apvar)$.
			If $\glocal\in\env_4(\apavar)$, then we have $\apavar\in\pvars$ since the type rules do not allow angels to carry $\glocal$.
			Moreover, induction gives $\lreachof{\athread}{\heapcomputof{\tau}{\apavar}}{\ahist}\subseteq\locsof{\glocal}$.
			To the contrary, assume $\lreachof{\athread}{\heapcomputof{\tau}{\apavar}}{\ahist.\anevent}\not\subseteq\locsof{\glocal}$.
			By definition, this means that $\anevent$ makes $\baseobs$ leave its initial location.
			Since $\ahist.\anevent\in\specof{\anobs}$ due to the semantics, we must have $\anevent=\enterof{\retireof{\athreadp,\heapcomputof{\tau}{\apavar}}}$.
			That is, $\acom\equiv\enterof{\retireof{\apvarp}}$ with $\heapcomputof{\tau}{\apvarp}=\heapcomputof{\tau}{\apavar}$.
			Since $\tau.\anact$ is assumed to be PRF, we must have $\apvarp\in\validof{\tau}$.
			This results in $\neg\noaliasof{\tau}{\apavar}$ and resembles a contradiction.
			So we conclude $\lreachof{\athread}{\heapcomputof{\tau}{\apavar}}{\ahist.\anevent}\subseteq\locsof{\glocal}$.
			By the contrapositive of \Cref{thm:env-inbetween-atomics} we also know that $\apavar\in\lvars{\athread}$.
			So $\heapcomputof{\tau}{\apavar}=\heapcomputof{\tau.\anact}{\apavar}$.
			That is, $\lreachof{\athread}{\heapcomputof{\tau.\anact}{\apavar}}{\ahist.\anevent}\not\subseteq\locsof{\glocal}$.
			Moreover, if $\apavar\in\gvars$, then $\apavar\in\lvars{\athread}$ by \Cref{assumption:ghost-varibles-local}.
			This means $\denotationof{\tau}{\apavar}=\denotationof{\tau.\anact}{\apavar}$ since $\apavar$ does not appear in $\acom$.
			So we get $\lreachof{\athread}{\anadr}{\ahist.\anevent}\subseteq\locsof{\env_4(\apavar)}$ for all $\anadr\in\denotationof{\tau.\anact}{\apavar}$ by induction.

			The remaining $\freedof{\tau.\anact}\cap\retiredof{\tau.\anact}=\emptyset$ follows as in the previous case for $\athread=\athreadp\neq\bot$.
			This concludes the case.

			\medskip
			\ad{$\athreadp=\bot$}
			We have $\anact=(\bot,\freeof{\anadr},\emptyset)$.
			Consider some thread $\athread$ and type environment $\env$ with $\TYPESTMT{\envinitt{\athread}}{\flatof{\athread}{\tau}}{\env}$.
			By definition, $\flatof{\athread}{\tau.\anact}=\flatof{\athread}{\tau}$.
			So $\TYPESTMT{\envinitt{\athread}}{\flatof{\athread}{\tau.\anact}}{\env}$.
			We show that $\env$ satisfies the claim.
			Let $\historyof{\tau}=\ahist$.
			Then, $\historyof{\tau.\anact}=\ahist.\freeof{\anadr}$.
			By the semantics, we have $\ahist.\freeof{\anadr}\in\specof{\anobs}$.
			
			Consider some $\apvar\in\domof{\env}\cap\pvars$.
			If $\glocal\in\env(\apvar)$, then $\noaliasof{\tau}{\apvar}$.
			Since $\heapcomput{\tau}=\heapcomput{\tau.\anact}$ and $\validof{\tau.\anact}\subseteq\validof{\tau}$, we get $\noaliasof{\tau.\anact}{\apvar}$.
			If $\isvalidof{\env(\apvar)}$, then $\set{\gactive,\glocal,\gsafeaccess}\cap\env(\apvar)\neq\emptyset$.
			By induction, we have $\apvar\in\validof{\tau}$.
			Note that we have $\lreachof{\athread}{\heapcomputof{\tau}{\apvar}}{\ahist}\subseteq\locsof{\env(\apvar)}$ by induction.
			Hence, $\anadr\neq\heapcomputof{\tau}{\apvar}$ must hold as for otherwise $\ahist.\freeof{\anadr}\notin\specof{\anobs}$ by the definition of $\set{\gactive,\glocal,\gsafeaccess}$.
			So, we get $\apvar\in\validof{\tau.\anact}$ by definition.

			Consider some $\aghostvar\in\domof{\env}\cap\gvars$.
			If $\isvalidof{\env(\aghostvar)}$, then $\denotationof{\tau}{\aghostvar}\cap\freedof{\tau}=\emptyset$ by induction.
			By definition, we have $\denotationof{\tau}{\aghostvar}=\denotationof{\tau.\anact}{\aghostvar}$.
			Moreover, $\freedof{\tau.\anact}=\freedof{\tau}\cup\set{\anadr}$.
			So in order to arrive at $\denotationof{\tau.\anact}{\aghostvar}\cap\freedof{\tau.\anact}=\emptyset$, it suffices to establish $\anadr\notin\denotationof{\tau}{\aghostvar}$.
			To the contrary, assume $\anadr\in\denotationof{\tau}{\aghostvar}$.
			Then, $\lreachof{\athread}{\anadr}{\ahist}\subseteq\locsof{\env(\aghostvar)}$ by induction.
			However, similar to the pointer case above, this means $\ahist.\freeof{\anadr}\notin\specof{\anobs}$ because of $\isvalidof{\env(\aghostvar)}$.
			Hence, $\anadr\notin\denotationof{\tau}{\aghostvar}$ must hold as desired.

			It remains to show that $\lreachof{\athread}{\anadrp}{\ahist.\freeof{\anadr}}\subseteq\locsof{\env(\apavar)}$ for every $\apavar\in\domof{\env_4}$ with $\anadr\neq\heapcomputof{\tau}{\apavar}$ for $\apavar\in\pvars$ and $\anadr\notin\denotationof{\tau}{\apavar}$ for $\apavar\in\gvars$.
			By definition we have:
			\begin{align*}
				\lreachof{\athread}{\anadrp}{\ahist}\subseteq\locsof{\gcustom{i}} &\implies \lreachof{\athread}{\anadrp}{\ahist.\freeof{\anadr}}\subseteq\locsof{\gcustom{i}}
				\\
				\lreachof{\athread}{\anadrp}{\ahist}\subseteq\locsof{\gsafeaccess} &\implies \lreachof{\athread}{\anadrp}{\ahist.\freeof{\anadr}}\subseteq\locsof{\gsafeaccess}
				\\
				\lreachof{\athread}{\anadrp}{\ahist}\subseteq\locsof{\gactive} &\implies \lreachof{\athread}{\anadrp}{\ahist.\freeof{\anadr}}\subseteq\locsof{\gactive} &\text{if }\anadr\neq\anadrp
				\\
				\lreachof{\athread}{\anadrp}{\ahist}\subseteq\locsof{\glocal} &\implies \lreachof{\athread}{\anadrp}{\ahist.\freeof{\anadr}}\subseteq\locsof{\glocal} &\text{if }\anadr\neq\anadrp
			\end{align*}
			Recall from above that $\set{\gactive,\glocal}\cap\env(\apavar)\neq\emptyset$ implies $\anadr\neq\heapcomputof{\tau}{\apavar}$ for $\apavar\in\pvars$ and $\anadr\notin\denotationof{\tau}{\apavar}$ for $\apavar\in\gvars$.
			Hence, we conclude the desired $\lreachof{\athread}{\anadrp}{\ahist.\freeof{\anadr}}\subseteq\locsof{\env(\apavar)}$ by induction together with $\heapcomput{\tau}=\heapcomput{\tau.\anact}$ and $\denotation{\tau}=\denotation{\tau.\anact}$.

			Lastly, we show $\freedof{\tau.\anact}\cap\retiredof{\tau.\anact}=\emptyset$.
			We have $\freedof{\tau.\anact}=\freedof{\tau}\cup\set{\anadr}$ and $\retiredof{\tau.\anact}=\retiredof{\tau}\setminus\set{\anadr}$.
			Then the claim follows by induction.
	\end{description}
\end{proof}


\begin{proof}[Proof of \Cref{thm:type-check-implies-prf-inv-new}]
	Let $\TYPESTMT{\envinit}{\aprog}{\env_\aprog}$ and $\invholdsof{\nosem}$.
	Towards a contradiction, assume the claim does not hold.
	That is, there is a shortest computation $\tau.\anact\in\freesemobs$ such that $\tau.\anact$ raises a pointer race or $\neg\invholdsof{\tau.\anact}$.
	By minimality, $\tau$ is PRF and $\invholdsof{\tau}$.
	Let $\anact=(\athread,\acom,\anup)$.
	(Note that $\acom$ is neither $\atomicbegin$ nor $\atomicend$ due to the assumption.)
	By \Cref{thm:typing-computations}, there is $\env_3$ such that $\TYPESTMT{\envinitt{\athread}}{\flatof{\athread}{\tau.\anact}}{\env_3}$.
	We have $\flatof{\athread}{\tau.\anact}=\flatof{\athread}{\tau};\acom$ by definition.
	So by the type rules there is $\env_0,\env_1,\env_2$
	\begin{align*}
		\TYPESTMT{\envinitt{\athread}}{\flatof{\athread}{\tau}}{\env_0}
		\qquad
		\checknocomof{\env_0}{\epsilon}{\env_1}
		\qquad
		\TYPECOM{\env_1}{\acom}{\env_2}
		\qquad
		\checknocomof{\env_2}{\epsilon}{\env_3}
	\end{align*}
	First, consider the case where $\tau.\anact$ raises a pointer race.
	By definition of pointer races, $\anact$ is on of: an unsafe access, an unsafe assumption, an unsafe $\enter$, or an unsafe retire.
	\begin{compactitem}
		\item
			If $\anact$ is an unsafe access, then $\acom$ contains $\psel{\apvar}$ or $\dsel{\apvar}$ with $\apvar\notin\validof{\tau}$.
			That is, $\TYPECOM{\env_1}{\acom}{\env_2}$ is derived using on of the following rules: \ref{rule:assign2}, \ref{rule:assign3}, \ref{rule:assign5}, or \ref{rule:assign6}.
			Since the derivation is defined, we must have $\env_1(\apvar)=\atype$ with $\isvalidof{\atype}$.
			By \Cref{thm:type-guarantees-vs-computations-new}, this means $\apvar\in\validof{\tau}$.
			Since this contradicts the assumption of $\anact$ raising a pointer race, this case cannot apply.
	
		\item
			If $\anact$ is an unsafe assumption, then $\acom\equiv\assumeof{\apvar=\apvarp}$ with $\set{\apvar,\apvarp}\not\subseteq\validof{\tau}$.
			That is, $\TYPECOM{\env_1}{\acom}{\env_2}$ is derived using rule \ref{rule:assume1}.
			By definition, we have $\env_1(\apvar)=\atype,\env_1(\apvarp)=\atypep$ with $\isvalidof{\atype}$ and $\isvalidof{\atypep}$.
			From \Cref{thm:type-guarantees-vs-computations-new} we get $\set{\apvar,\apvarp}\subseteq\validof{\tau}$.
			Since this contradicts the assumption of $\anact$ raising a pointer race, this case cannot apply.

		\item
			Consider an unsafe $\enter$, i.e., $\acom\equiv\enterof{\afuncof{\vecof{\apvar},\vecof{\advar}}}$.
			That is, $\TYPECOM{\env_1}{\acom}{\env_2}$ is derived using rule \ref{rule:enter}.
			Let $\heapcomputof{\tau}{\vecof{\apvar}}=\vecof{\anadr}$ and $\heapcomputof{\tau}{\vecof{\advar}}=\vecof{\advalue}$.
			That $\anact$ is an unsafe enter means that there are $\vecof{\anadrp}$ and $\anadrpp$ with:
			\begin{align*}
				&\forall i.~ (\anadr_i=\anadrpp\vee\apvar_i\in\validof{\tau}) \implies \anadr_i=\anadrp_i
				\\\text{and}\qquad
				&\freeableof{\ahist.\afunc(\athread,\vecof{\anadrp},\vecof{\advalue})}{\anadrpp}
				\not\subseteq
				\freeableof{\ahist.\afunc(\athread,\vecof{\anadr},\vecof{\advalue})}{\anadrpp}
				\ .
			\end{align*}
			Let $\vecof{\apvar}=\apvar_1,\dots,\apvar_n$ with $\env(\apvar_i)=\atype_i$.
			From \Cref{thm:type-guarantees-vs-computations-new} we get $\apvar_i\notin\validof{\tau}\implies\neg\isvalidof{\atype_i}$.
			Hence, the following holds:
			\begin{align*}
				&\forall i.~ (\anadr_i=\anadrpp\vee\apvar_i\in\validof{\tau}) \implies \anadr_i=\anadrp_i
				\\\text{implies}\quad
				&\forall i.~ \anadr_i\neq\anadrp_i \implies (\anadr_i\neq\anadrpp\wedge\apvar_i\notin\validof{\tau})
				\\\text{implies}\quad
				&\forall i.~ \anadr_i\neq\anadrp_i \implies (\anadr_i\neq\anadrpp\wedge\neg\isvalidof{\atype_i})
				\\\text{implies}\quad
				&\forall i.~ (\anadr_i=\anadrpp\vee\isvalidof{\atype_i}) \implies \anadr_i=\anadrp_i
			\end{align*}
			So $\safecallof{\env_1}{\afuncof{\vecof{\apvar},\vecof{\advar}}}=\mathit{false}$.
			As this contradicts $\TYPECOM{\env_1}{\acom}{\env_2}$, this case cannot apply.

		\item
			Consider an unsafe retire, i.e., $\acom\equiv\enterof{\retireof{\apvar}}$ with $\apvar\notin\validof{\tau}$.
			As in the previous case, $\TYPECOM{\env_1}{\acom}{\env_2}$ is derived using rule \ref{rule:enter}.
			It gives $\gactive\in\env_1(\apvar)$.
			That is, $\isvalidof{\env_1(\apvar)}=\mathit{true}$.
			Then, \Cref{thm:type-guarantees-vs-computations-new} yields $\apvar\in\validof{\tau}$.
			Hence, this case cannot apply.

	\end{compactitem}
	The above case distinction is complete.
	That is, $\tau.\anact$ cannot raise a pointer race.
	Hence, we must have $\neg\invholdsof{\tau.\anact}$.
	Since we have $\invholdsof{\tau}$ as stated before, $\anact$ must be an annotation that does not hold.
	By \Cref{thm:deletable-computations} for $\tau.\anact$ there is $\sigma\in\nosem$ with $\controlof{\tau.\anact}=\controlof{\sigma}$, $\heapcomput{\tau.\anact}=\heapcomput{\sigma}$, and $\invholdsof{\sigma}{\implies}\invholdsof{\tau.\anact}$.
	The contrapositive of the latter, gives $\neg\invholdsof{\tau.\anact}{\implies}\neg\invholdsof{\sigma}$.
	That is, we must have $\neg\invholdsof{\sigma}$.
	This contradicts the assumption of $\invholdsof{\nosem}$, concluding the claim thus.
\end{proof}

\begin{proof}[Proof of \Cref{thm:type-check-implies-no-double-retires}]
	Let $\smrobs$ supports elision.
	Furthermore, let $\typechecks{\aprog}$ and $\invholdsof{\nosem[\aprog]}$.
	By \Cref{thm:type-check-implies-prf-inv-new} we know that $\freesemobs$ is PRF and that $\invholdsof{\freesemobs}$ holds.
	Now, to the contrary, assume the overall claim does not hold.
	Then there is $\tau.\anact\in\allsem$ with $\anadr\in\retiredof{\tau}$, $\anact=(\athread,\acom,\anup)$, $\acom\equiv\enterof{\retireof{\apvar}}$, and $\heapcomputof{\tau}{\apvar}=\anadr$.
	Then, \Cref{thm:elision-computations-new} yields $\sigma\in\freesemobs$ such that $\tau\computequiv\sigma$ and $\retiredof{\tau}\subseteq\retiredof{\sigma}$.
	From the former we get $\sigma.\anact\in\freesemobs$.
	Moreover, together with $\freesemobs$ PRF and thus $\apvar\in\validof{\sigma}$, we get $\heapcomputof{\tau}{\apvar}=\anadr=\heapcomputof{\sigma}{\apvar}$.
	The latter gives $\anadr\in\retiredof{\sigma}$.

	From \Cref{thm:typing-computations} we get some $\env_3$ with $\TYPESTMT{\envinitt{\athread}}{\flatof{\athread}{\sigma.\anact}}{\env_3}$.
	Then, $\flatof{\athread}{\sigma.\anact}=\flatof{\athread}{\sigma};\acom$ by definition.
	So the typing rules give some $\env_0,\env_1,\env_2$ with
	\begin{align*}
		\TYPESTMT{\envinitt{\athread}}{\flatof{\athread}{\sigma}}{\env_0}
		\qquad
		\checknocomof{\env_0}{\epsilon}{\env_1}
		\qquad
		\TYPECOM{\env_1}{\acom}{\env_2}
		\qquad
		\checknocomof{\env_2}{\epsilon}{\env_3}
	\end{align*}
	where the derivation for $\TYPECOM{\env_1}{\acom}{\env_2}$ is due to Rule~\ref{rule:enter}.
	By definition, this means we have $\gactive\in\env_1(\apvar)$.
	Note that $\sigma$ is PRF and $\invholdsof{\sigma}$.
	Moreover, $\TYPESTMT{\envinitt{\athread}}{\flatof{\athread}{\sigma}}{\env_1}$ holds due to Rule~\ref{rule:infer}.
	Now, \Cref{thm:type-guarantees-vs-computations-new} yields $\lreachof{\athread}{\anadr}{\historyof{\sigma}}\subseteq\locsof{\env_1(\apvar)}$.
	In particular, this means $\lreachof{\athread}{\anadr}{\historyof{\sigma}}\subseteq\locsof{\gactive}$.
	Hence, $(\ref{obs:base:init},\varphi)\trans{\ahist}(\ref{obs:base:init},\varphi)$ with $\varphi=\set{\adrvar\mapsto\anadr}$ and $\ahist=\historyof{\sigma}$.
	Then, \Cref{thm:retired-state-retired} gives $\anadr\notin\retiredof{\sigma}$.
	This contradicts the previous $\anadr\in\retiredof{\sigma}$.
\end{proof}

\begin{proof}[Proof of \Cref{Theorem:Soundness}]
	Follows from \Cref{thm:type-check-implies-prf-inv-new,thm:type-check-implies-no-double-retires}.
\end{proof}


\section{Type Checking}
\begin{lemma}\label{Lemma:Lattice}
$\actypes:=(\factorize{\alltypes}{\leadsto\cap \leadsto^{-1}}, \leadsto)$ is a complete lattice. 
\end{lemma}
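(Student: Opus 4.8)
The plan is to recognise $\leadsto$ — taken with the identity post-image, so that it is a binary relation on types alone — as a preorder, pass to the quotient to obtain a poset, observe that the quotient is finite, and then exhibit a meet operation and a top element. Concretely, $\atype \leadsto \atypep$ unfolds into the three conditions $\locsof{\atype} \subseteq \locsof{\atypep}$, $\isvalidof{\atypep} \Rightarrow \isvalidof{\atype}$, and $\set{\glocal,\gactive} \cap \atypep \subseteq \set{\glocal,\gactive} \cap \atype$. First I would check reflexivity (each condition holds with equality) and transitivity (the three conditions compose along $\subseteq$, along implication, and along $\subseteq$ again), so $\leadsto$ is a preorder and $\leadsto \cap \leadsto^{-1}$ is an equivalence. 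Then $\leadsto$ descends to the quotient: it is well defined on classes by transitivity, reflexive and transitive by inheritance, and antisymmetric by the very definition of the quotient (if both $[\atype]\leadsto[\atypep]$ and $[\atypep]\leadsto[\atype]$ then $\atype \equiv \atypep$). This antisymmetry is exactly what fails before factoring — e.g. $\gcustom{\alocset}$ and $\gcustom{\alocset}\wedge\gcustom{\alocsetp}$ with $\alocset\subseteq\alocsetp$ are distinct types that are $\leadsto$-equivalent — which is why the quotient is needed. Hence $\actypes$ is a poset.

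Next I would establish finiteness, which makes the lattice axioms far easier to close. The key device is the characteristic map $\chi(\atype) := (\locsof{\atype},\, \isvalidof{\atype},\, \set{\glocal,\gactive}\cap\atype)$. Since $\atype \leadsto \atypep$ is precisely the conjunction of the three componentwise comparisons above, $\atype$ and $\atypep$ are equivalent exactly when $\chi(\atype)=\chi(\atypep)$. Thus $\chi$ is constant on classes and injective on the quotient, and its codomain is finite — there are finitely many location sets in the fixed automaton $\anobs$, two truth values, and four subsets of $\set{\glocal,\gactive}$ — so $\factorize{\alltypes}{\leadsto\cap\leadsto^{-1}}$ is finite.

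For the lattice structure I would take the top element to be $[\emptyset]$ and the binary meet to be induced by the intersection type $\wedge$. That $\emptyset$ is greatest is immediate: $\locsof{\emptyset}=\locsof{\anobs}$ contains every location, $\isvalidof{\emptyset}$ is false, and $\set{\glocal,\gactive}\cap\emptyset=\emptyset$, so all three conditions of $\atype\leadsto\emptyset$ hold trivially for every $\atype$. For the meet I would first note $\chi(\atype\wedge\atypep)=(\locsof{\atype}\cap\locsof{\atypep},\ \isvalidof{\atype}\vee\isvalidof{\atypep},\ (\set{\glocal,\gactive}\cap\atype)\cup(\set{\glocal,\gactive}\cap\atypep))$, which depends only on $\chi(\atype)$ and $\chi(\atypep)$, so $[\atype]\wedge[\atypep]:=[\atype\wedge\atypep]$ is well defined. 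Then I would check the two halves of "greatest lower bound": $\atype\wedge\atypep\leadsto\atype$ and $\atype\wedge\atypep\leadsto\atypep$ (the three conditions hold because intersection shrinks $\locsof{}$, disjunction can only add validity, and union grows the $\set{\glocal,\gactive}$-part), and conversely that any common lower bound $\atypepp$ satisfies $\atypepp\leadsto\atype\wedge\atypep$ (intersect the two location inclusions, combine the two validity implications, and union the two $\set{\glocal,\gactive}$-inclusions).

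Finally I would conclude: $\actypes$ is a finite poset with all binary meets and a greatest element, hence every subset has an infimum — iterate the binary meet on any finite subset, and the empty subset has infimum $[\emptyset]$. A poset in which every subset has an infimum is a complete lattice, since $\sup S = \inf\setcond{\atype}{\atype \text{ is an upper bound of } S}$ and this upper-bound set is nonempty as it contains $[\emptyset]$; this is the standard fact of \cite{birkhoff1948lattice}. The only real work lies in the meet computation of the third paragraph — in particular tracking how the validity predicate (an $\vee$) and the $\set{\glocal,\gactive}$-component (a $\cup$) sit against the reversed orientation of $\leadsto$ in those two coordinates; everything else is bookkeeping on the three defining inclusions.
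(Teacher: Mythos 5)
Your proof is correct, but it follows a genuinely different route from the paper's. The paper's own proof is a short sketch phrased entirely at the level of atomic guarantees: the least element is the most precise type carrying every guarantee, the join of $\gcustom{\alocset}$ and $\gcustom{\alocsetp}$ is characterized by $\alocset\cup\alocsetp$ (unions of interference-closed sets remain closed), and the meet is obtained by the device of passing to the largest interference-closed set inside $\alocset\cap\alocsetp$, because the guarantee $\gcustom{\alocset\cap\alocsetp}$ need not exist. You instead check that $\leadsto$ is a preorder, pass to the quotient poset, prove it finite via the characteristic triple $\chi(\atype)=(\locsof{\atype},\,\isvalidof{\atype},\,\set{\glocal,\gactive}\cap\atype)$, exhibit the top $[\emptyset]$ and binary meets $[\atype\wedge\atypep]$, and finish with the standard completion argument for finite posets with a greatest element. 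Your meet construction deserves emphasis: since the grammar closes types under $\wedge$ and $\locsof{\atype\wedge\atypep}=\locsof{\atype}\cap\locsof{\atypep}$ is \emph{not} required to be interference-closed, the intersection type is a legitimate element of $\alltypes$ and, as you verify, its class is the greatest lower bound in $\factorize{\alltypes}{\leadsto\cap\leadsto^{-1}}$. Indeed, when $\alocset\cap\alocsetp$ is not interference-closed, the paper's candidate $\gcustom{\alocsetpp}$ (with $\alocsetpp$ the largest closed subset of the intersection) lies strictly below $[\gcustom{\alocset}\wedge\gcustom{\alocsetp}]$, so it is the meet only within the subposet of atomic guarantees, whereas your computation is the one matching the lemma as stated over all of $\alltypes$. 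What each approach buys: yours needs no closure device and delivers finiteness of $\actypes$ explicitly, which dovetails with the chain-length bound used in \Cref{sec:type_checking}; the paper's sketch, in exchange, yields concrete location-set formulas for joins and meets of guarantees, which is what an implementation would tabulate.
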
 
\begin{proof}
The least element is the most precise type containing every guarantee.
The join of two guarantees $\gcustom{\alocset}\sqcup \gcustom{\alocsetp}$ is characterized by the union $\alocset\cup\alocsetp$, which is again closed under interference. 
The meet of two guarantees $\gcustom{\alocset}\sqcap \gcustom{\alocsetp}$ requires care.
The intersection $\alocset\cap\alocsetp$ may not be closed under interference, in which case a corresponding guarantee $\gcustom{\alocset\cap\alocsetp}$ does not exist. 
Instead, we take the largest set of locations that is closed under interference and lives inside the intersection. 
It is guaranteed to exist.
\end{proof}
\begin{lemma}\label{Lemma:Monotonicity}
$\spof{\cdot, \acom}$ is monotonic. 
\end{lemma}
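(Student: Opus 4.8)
The plan is to reduce the claim to two monotonicity facts about the two ingredients out of which $\spof{\cdot,\acom}$ is assembled, namely the rule premise $\premiseof{\acom}$ and the rule update $\updateof{\acom}$ read off the rules in \Cref{fig:type-rules:com}. Concretely, I would establish: (A) if $\env\sqsubseteq\envp$ and $\premiseof{\acom}(\envp)$ holds, then $\premiseof{\acom}(\env)$ holds; and (B) if $\env\sqsubseteq\envp$ and both $\premiseof{\acom}(\env)$ and $\premiseof{\acom}(\envp)$ hold, then $\updateof{\acom}(\env)\sqsubseteq\updateof{\acom}(\envp)$. Given (A) and (B), monotonicity is a short case distinction. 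The cases $\env=\fail$ or $\envp=\fail$ are immediate since $\fail$ is the top element (and $\env=\fail$ forces $\envp=\fail$). When $\env,\envp\neq\fail$: if $\premiseof{\acom}(\env)$ fails then $\spof{\env,\acom}=\fail$, and the contrapositive of (A) gives $\neg\premiseof{\acom}(\envp)$, so $\spof{\envp,\acom}=\fail$ as well; if $\premiseof{\acom}(\env)$ holds while $\premiseof{\acom}(\envp)$ fails then $\spof{\envp,\acom}=\fail$ is the top element and nothing is to prove; and if both premises hold then $\spof{\cdot,\acom}$ coincides with $\updateof{\acom}(\cdot)$ on both arguments, so (B) closes the case.

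For property (A) I would traverse the rules of \Cref{fig:type-rules:com} and observe that every $\env$-dependent premise is a conjunction of atomic checks of the form $\isvalidof{\envof{\apavar}}$, $\gactive\in\envof{\apvar}$, and the safety predicate $\safecallof{\env}{\cdot}$. The decisive ingredients are the defining conditions of the type transformer relation that constitute the order $\leadsto$: from $\env\sqsubseteq\envp$ we have $\envof{\apavar}\leadsto\envpof{\apavar}$, hence $\isvalidof{\envpof{\apavar}}\Rightarrow\isvalidof{\envof{\apavar}}$ and $\set{\glocal,\gactive}\cap\envpof{\apavar}\subseteq\set{\glocal,\gactive}\cap\envof{\apavar}$. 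The first implication propagates all validity checks (Rules \ref{rule:assign2}--\ref{rule:assign6}, \ref{rule:assume1}) from $\envp$ to $\env$, and the second propagates the $\gactive$-membership demanded for $\retire$ in Rule \ref{rule:enter}. For $\safecallof{\env}{\cdot}$ the same validity implication shows that $\env$ deems at least as many arguments valid as $\envp$; since only invalid arguments may be renamed in \Cref{def:unsafe-enter}, having more valid arguments can only rule out unsafe calls, so $\safecallof{\envp}{\cdot}\Rightarrow\safecallof{\env}{\cdot}$.

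For the ordinary (non-SMR) rules, property (B) is routine, because each update is built from operations that are monotone on the antichain lattice $\actypes$: copying a binding, forming a meet $\atype\wedge\atypep$ (which is the lattice meet $\sqcap$, monotone by \Cref{Lemma:Lattice}), deleting the guarantee $\glocal$, and assigning a constant type such as $\emptyset$, $\set{\glocal}$, or a meet with $\set{\gactive}$. Each of these preserves $\sqsubseteq$ componentwise, so $\updateof{\acom}(\env)\sqsubseteq\updateof{\acom}(\envp)$.

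The main obstacle is property (B) for the SMR commands (Rules \ref{rule:enter} and \ref{rule:exit}), whose update is the componentwise strongest environment $\env'$ with $\checkof{\env}{\acom}{\env'}$, i.e. the strongest post under the type transformer relation $\checkoftype{\envof{\apavar}}{\apavar}{\acom}{\cdot}$. Here I would argue monotonicity of this strongest-post operator directly: since $\env\sqsubseteq\envp$ yields $\locsof{\envof{\apavar}}\subseteq\locsof{\envpof{\apavar}}$ and the post-image $\lpostof{\apavar}{\acom}{\cdot}$ is monotone in its location-set argument (being the image under the automaton's transitions), the strongest-post types $\atype'$ for $\env$ and $\atypep'$ for $\envp$ satisfy $\lpostof{\apavar}{\acom}{\locsof{\envof{\apavar}}}\subseteq\lpostof{\apavar}{\acom}{\locsof{\envpof{\apavar}}}\subseteq\locsof{\atypep'}$. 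Combined with the downward transfer of validity and of $\set{\glocal,\gactive}$-membership already used for (A), this delivers $\atype'\leadsto\atypep'$ at every variable, i.e. $\updateof{\acom}(\env)\sqsubseteq\updateof{\acom}(\envp)$. Care is needed to verify that the least type over-approximating a given location set is itself monotone in that set and behaves correctly modulo the $\leadsto\cap\leadsto^{-1}$ quotient, which is precisely where the antichain-lattice structure from \Cref{Lemma:Lattice} is invoked.
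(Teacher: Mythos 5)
Your proposal is correct and follows essentially the same route as the paper's (very terse) proof: both hinge on the observation that Rule~\ref{rule:enter} is the only delicate case, and that a failure of $\safecallof{\envp}{\cdot}$ on the larger environment is harmless precisely because $\spof{\envp,\acom}=\fail$ is the top element. Your points (A) and (B) — downward closure of the rule premises along $\sqsubseteq$ (via the validity and $\set{\glocal,\gactive}$ clauses of $\leadsto$) and monotonicity of the updates, including the $\lpost$-based strongest post for SMR commands — are exactly the routine facts the paper's sketch leaves implicit, and your arguments for them are sound.
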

\begin{proof}
The case that requires care is Rule~\ref{rule:enter}. 
A larger enriched environment has less guarantees, and eventually $\safecallof{\env}{\afuncof{\vecof{\apvar},\vecof{\advar}}}$ may fail.
In this case, however, we return $\fail$. 
\end{proof}
\begin{proof}[Proof of \Cref{Proposition:TypeCheck}]
For $\lsolof{X}\sqsupseteq \bigsqcap_{\typejudge{\,\envinit}{\astmt}{\env}}\env$, observe that the least solution to the constraint system yields a type derivation $\typejudge{\envinit}{\astmt}{\lsolof{X}}$. 
Indeed, as the solution assigns a type environment to every control point, it already gives the intermediary environments we should assume for a Rule~\ref{rule:seq}. 
For the reverse direction, consider $\typejudge{\envinit}{\astmt}{\env}$. 
One can show that by assigning the environments encountered in the type derivation to the variables of the corresponding control points, 
we obtain a solution to the constraint system. 
Since $\lsolof{X}$ is the least solution, $\lsolof{X}\sqsubseteq \env$. 
As the reasoning holds for every derivation, $\lsolof{X}$ will lower bound the meet of the environments.
\end{proof}


\begin{lemma}
	The Kleene iteration takes time $\mathcal{O}(n^2)$ where $n$ is the size of the input program.
\end{lemma}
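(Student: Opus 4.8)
The plan is to compute the least solution of $\eqsysof{\envinit, \astmt, X}$ by a worklist-based (chaotic) Kleene iteration over $\allenvsbot$ and to charge the total work against the number of times individual entries of the constraint variables can strictly increase. Write $n = \cardnodistof{\astmt}$. First I would bound the two parameters that drive the iteration. The constraint system is generated syntactically by the rules of \Cref{Figure:ConstraintSystem}: every sequential composition introduces one fresh variable while every other construct reuses its surrounding variables, so the number of constraint variables is $\mathcal{O}(n)$, and—because the generation is syntax-directed—each variable occurs in only $\mathcal{O}(1)$ constraints. For the height of the domain, recall that $\actypes$ is a complete lattice (\Cref{Lemma:Lattice}) whose chains have length bounded by the number of guarantees $\set{\gactive,\glocal,\gsafeaccess,\gcustom{\cdot}}$; since the interference-closed location sets underlying the $\gcustom{\alocset}$ come from the fixed automaton $\anobs$, this number is a constant $c$. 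Hence chains in $\allenvsbot = \actypes^{\allvars}\cup\set{\fail}$ have length at most $c\cdot\cardof{\allvars}+1 = \mathcal{O}(n)$, using $\cardof{\allvars}=\mathcal{O}(n)$.

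Next I would run the worklist iteration from the bottom element, which by \Cref{Lemma:Monotonicity} and finiteness of the domain converges to $\lsolof{X}$. The central accounting uses the fact that a single constraint variable, holding a full environment in $\actypes^{\allvars}$, can be raised at most $\mathcal{O}(n)$ times before reaching its stable value or jumping to $\fail$. Summing over the $\mathcal{O}(n)$ constraint variables yields at most $\mathcal{O}(n^2)$ strict increases over the entire run. It then remains to argue that each such increase is detected and propagated in amortized constant time: an increase of a variable $Y$ re-enqueues only the $\mathcal{O}(1)$ constraints that read $Y$, and for each of them I recompute only the output entries that the change actually touches. Multiplying the $\mathcal{O}(n^2)$ increases by this $\mathcal{O}(1)$ per-increase cost gives the claimed $\mathcal{O}(n^2)$ bound.

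The step I expect to be the main obstacle is justifying the constant per-increase cost, because an SMR command such as $\enterQ$ rewrites the type of \emph{every} pointer, so a monolithic evaluation of $\spof{\cdot,\acom}$ costs $\Theta(n)$. The resolution is that $\checkof{\env}{\acom}{\envp}$ is defined \emph{pointwise}, through $\checkoftype{\envof{\apavar}}{\apavar}{\acom}{\envpof{\apavar}}$ for each $\apavar$ independently: a change in one input entry therefore influences only the corresponding output entry, and the underlying per-entry type transformer depends solely on $\anobs$ and can be tabulated for constant-time look-up. The non-pointwise ingredients—the premise $\safecallof{\env}{\afuncof{\vecof{\apvar},\vecof{\advar}}}$ of Rule~\ref{rule:enter} and the activeness side condition for $\retire$—inspect only the constant number of argument pointers together with a precomputed automaton predicate, so they too are evaluated in $\mathcal{O}(1)$; when such a premise fails, the whole variable is set to the absorbing top $\fail$ in one $\mathcal{O}(1)$ step and never changes again, which is consistent with the chain-length bound above. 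Some care is still needed to confirm that the assignment rules touch only a constant number of entries and that the worklist retriggers exactly the affected constraints, but these checks are routine once the pointwise structure of the transformer has been isolated.
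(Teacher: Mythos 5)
Your proposal follows essentially the same route as the paper's proof: a worklist-based Kleene iteration, an $\mathcal{O}(n)$ bound on the number of constraint variables, an $\mathcal{O}(n)$ chain-length bound on $\allenvsbot$ (constantly many guarantees, fixed by $\anobs$, times linearly many program variables), tabulated type transformers, and a total step/increase count that is quadratic---the paper's cut-off bound $m$ on the product lattice is exactly your count of strict increases. So this is not a different decomposition, but two of your refinements versus the paper's bookkeeping deserve comment.

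First, your pointwise amortization is a genuinely more careful justification of the per-step cost than what the paper offers. The paper asserts that $\spof{\cdot,\acom}$ is computable in constant time because ``the premise and the update of a rule only modify a constant number of variables''; that claim does not hold naively for Rules~\ref{rule:enter} and \ref{rule:exit}, whose update rewrites \emph{every} entry of the environment via the type transformer relation. Your observation that $\checkof{\env}{\acom}{\envp}$ acts entrywise, so that only entries that actually changed need to be recomputed and propagated, is precisely the repair this needs, and it is the stronger argument. Second, however, your claim that syntax-directed generation gives each constraint variable only $\mathcal{O}(1)$ incident constraints is wrong as stated: the rule $\eqsysof{X,\astmt_1\choice\astmt_2,Y}$ recurses with the \emph{same} $X$ and $Y$, so a nest of $k$ choices over primitive commands produces $k$ constraints all reading $X$ and writing $Y$. (This fan-in problem is exactly why the paper introduces its $\cskip$-wrapping preprocessing to control $\cardnodistof{\IReq(j)}$.) The slip is harmless for the final bound, but your accounting ``$\mathcal{O}(n^2)$ increases times $\mathcal{O}(1)$ per increase'' must be replaced: since the total number of variable--constraint incidences is $\mathcal{O}(n)$ (the system has one constraint per command or composite), the total propagation cost is bounded by the per-variable chain length times the total incidence count, i.e.\ $\mathcal{O}(n)\cdot\mathcal{O}(n)=\mathcal{O}(n^2)$, with each triggered re-evaluation costing $\mathcal{O}(1)$ by your pointwise/tabulation argument. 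With that correction your proof goes through and is, on the delicate point, tighter than the paper's.
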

\begin{proof}
	%
	We denote the set of program variables by $\allvars$ and the set of variables from the constraint system by $\cvars=\set{X_1,\dots,X_k}$.
	The Kleene iteration solving the constraint system works over the product lattice \[\left(\left(\actypes^{\allvars} \cup \set{\fail}\right)^\cvars, \sqsubseteq\right)\]
	where $\sqsubseteq$ is the component-wise $\sqsubseteq$ from the original lattice $\allenvsbot$.
	Following the discussion in \Cref{sec:type_checking}, chains in this lattice have a maximal length of $m$ which is quadratic in $n$.
	To compute the Kleene iteration, we employ a worklist algorithm.

	We introduce notations to ease the formal development.
	We use $V_i=(V_i^1,\dots,V_1^k)$ where $V_i^j$ denotes the value of variable $X_j$ in step $i$.
	Moreover, we maintain a worklist $W_i$ the entries of which are variable indices that need to be recomputed after step $i$.
	That is, $j\in W_i$ states that the value $V_j$ of $X_j$ is no longer up to date since the variables that $X_j$ depends on have received new values previously.
	We write $W.W'$ and mean worklist $W$ concatenated with worklist $W'$.
	We define a total function $\IDep:\set{1,\dots,k}\to\powersetof{\set{1,\dots,k}}$ that computes dependencies among the variables from $\cvars$.
	That is, $\IDep(j)$ yields the indices of variables that depend on $X_j$.
	We use $\IDep(j)$ during the Kleene iteration to query those variables that we need to compute a new value for after updating the value of $X_j$.
	For example, the constraint $\spof{X_1, \acom}\sqsubseteq X_2$ implies $2\in \IDep(1)$.
	Similarly, we define $\IReq:\set{1,\dots,k}\to\powersetof{\set{1,\dots,k}}$ that computes requirements.
	That is, $\IReq(j)$ computes the indices of variables that are required to compute $X_j$.
	For example, the constraint $\spof{X_1, \acom}\sqsubseteq X_2$ implies $1\in \IReq(2)$.
	Both $\IDep$ and $\IReq$ can be tabulated (in quadratic time) prior to the Kleene iteration.

	In the beginning $V_0^j=\bot$ and $W_0$ is empty.
	In the first step, a new value $V_{1}^{i}\in\allenvsbot$ for all $X_i\in\cvars$ is computed.
	This is linear in $n$.
	Next, we perform steps until the worklist is empty again or we have performed $m$ steps---in both cases we are guaranteed to have found a fixed point.
	To perform a step $i\to i+1$, let $W_i$ be of the form $W_i=j.W_i^\mathit{tail}$.
	Then, we
	\begin{inparaenum}[(i)]
		\item compute a new value $V_{i+1}^j$ for $X_j$, and
		\item set $W_{i+1}=W_i^\mathit{tail}.\IDep(j)$.
	\end{inparaenum}
	Removing the first element from a list, looking up $\IDep(j)$, and appending two list can be done in constant time.
	Hence, (ii) can be done in constant time.
	It remains to consider (i).
	The updated value is: \[V_{i+1}^j=\bigsqcup_{l\in \IReq(j)} g_{j,l}(V_i^l) \ \] where $g_{j,l}(V)=V$ or $g_{j,l}(V)=\spof{V, \acom}$ for some $\acom$, depending on the constraints collected from the input program.
	Recall from \Cref{sec:type_checking} that $g_{j,l}$ can be computed in constant time.
	For $\IReq(j)$ to yield a set of constant size, we rely on a simple preprocessing of the input program: every primitive command is wrapped inside $\cskip$ commands, that is, command $\acom$ is preprocessed to $(\cskip;\acom);\cskip$.
	This way we introduce fresh variables into the constraint system such that no two branches of the program share constraint system variables and that each nested statement does not share constraint system variables with its parents.
	The program resulting from the preprocessing is linear in the size of the original program, hence our reasoning so far remains valid for the preprocessed program and carries over to the original one.
	Altogether, we arrive at (ii) taking constant time.
	This concludes the claim.
\end{proof}

\end{document}